\newcommand{\cmark}{\ding{51}}%
\newcommand{\xmark}{\ding{55}}
\def\spacingset#1{\renewcommand{\baselinestretch}%
{#1}\small\normalsize} \spacingset{1}
\newcommand\munderbar[1]{%
  \underaccent{\bar}{#1}}
\newcommand{\F}{\mathcal{F}}
\newcommand{\B}{\mathcal{B}}
\newcommand{\X}{\mathcal{X}}
\newcommand{\dimF}{m_\F}
\newcommand{\dimB}{m_\B}
\newcommand{\XXF}{\X_\F}
\newcommand{\XXB}{\X_\B}
\newcommand{\comment}[1]{\textcolor{magenta}{\textit{JM: #1}}}
\DeclareMathOperator*{\argmin}{argmin}
\DeclareMathOperator*{\argmax}{argmax}
\DeclareMathOperator{\Tr}{trace}
\newcommand{\beginsupplement}{
\setcounter{page}{1}
\setcounter{section}{0}
\setcounter{table}{0}
\setcounter{figure}{0}
\setcounter{equation}{0}
\renewcommand{\theHsection}{SIsection.\arabic{section}}
\renewcommand{\theHtable}{SItable.\arabic{table}}
\renewcommand{\theHfigure}{SIfigure.\arabic{figure}}
\renewcommand{\theHequation}{SIequation.\arabic{equation}}
\renewcommand{\thepage}{S\arabic{page}}  
\renewcommand{\thesection}{S\arabic{section}}   
\renewcommand{\thetable}{S\arabic{table}}   
\renewcommand{\thefigure}{S\arabic{figure}}
\renewcommand{\theequation}{S\arabic{equation}}

}
\newtheorem{theorem}{Theorem}[section]
\newtheorem{proposition}[theorem]{Proposition}
\newtheorem{condition}[theorem]{Condition}
\newtheorem{definition}[theorem]{Definition}
\title{Bayesian data selection}
\author[1]{Eli N.\ Weinstein}
\author[2]{Jeffrey W.\ Miller}
\affil[1]{Program in Biophysics, Harvard University, eweinstein@g.harvard.edu}
\affil[2]{Department of Biostatistics, Harvard T.H. Chan School of Public Health, jwmiller@hsph.harvard.edu}
\begin{document}

\maketitle

\begin{abstract}
Insights into complex, high-dimensional data can be obtained by discovering features of the data that match or do not match a model of interest. To formalize this task, we introduce the ``data selection'' problem: finding a lower-dimensional statistic---such as a subset of variables---that is well fit by a given parametric model of interest. A fully Bayesian approach to data selection would be to parametrically model the value of the statistic, nonparametrically model the remaining ``background'' components of the data, and perform standard Bayesian model selection for the choice of statistic. However, fitting a nonparametric model to high-dimensional data tends to be highly inefficient, statistically and computationally. We propose a novel score for performing both data selection and model selection, the ``Stein volume criterion", that takes the form of a generalized marginal likelihood with a kernelized Stein discrepancy in place of the Kullback--Leibler divergence. The Stein volume criterion does not require one to fit or even specify a nonparametric background model, making it straightforward to compute --- in many cases it is as simple as fitting the parametric model of interest with an alternative objective function. We prove that the Stein volume criterion is consistent for both data selection and model selection, and we establish consistency and asymptotic normality (Bernstein--von Mises) of the corresponding generalized posterior on parameters. We validate our method in simulation and apply it to the analysis of single-cell RNA sequencing datasets using probabilistic principal components analysis and a spin glass model of gene regulation.
\end{abstract}

\section{Introduction}

Scientists often seek to understand complex phenomena by developing working models for various special cases and subsets.
Thus, when faced with a large complex dataset, a natural question to ask is where and when a given working model applies.
We formalize this question statistically by saying that given a high-dimensional dataset, we want to identify a lower-dimensional statistic---such as a subset of variables---that follows a parametric model of interest (the working model). We refer to this problem as ``data selection'', in counterpoint to model selection, since it requires selecting the aspect of the data to which a given model applies.

For example, early studies of single-cell RNA expression showed that the expression of individual genes was often bistable, which suggests that the system of cellular gene expression might be described with the theory of interacting bistable systems, or spin glasses, with each gene a separate spin and each cell a separate observation. While it seems implausible that such a model would hold in full generality, it is quite possible that there are subsets of genes for which the spin glass model is a reasonable approximation to reality. Finding such subsets of genes is a data selection problem. In general, a good data selection method would enable one to (a) discover interesting phenomena in complex datasets, (b) identify precisely where naive application of the working model to the full dataset goes wrong, and (c) evaluate the robustness of inferences made with the working model.

Perhaps the most natural Bayesian approach to data selection is to employ a semi-parametric joint model,
using the parametric model of interest for the low-dimensional statistic (the ``foreground'') 
and using a flexible nonparametric model to explain all other aspects of the data (the ``background'').
Then, to infer where the foreground model applies, one would perform standard Bayesian model selection across different choices of the foreground statistic. 
However, this is computationally challenging due to the need to integrate over the nonparametric model for each choice of foreground statistic, making this approach quite difficult in practice.
A natural frequentist approach to data selection would be to perform a goodness-of-fit test for each choice of foreground statistic. However, this still requires specifying an alternative hypothesis, even if the alternative is nonparametric, and ensuring comparability between alternatives used for different choices of foreground statistics is nontrivial. Moreover, developing goodness-of-fit tests for composite hypotheses or hierarchical models is often difficult in practice.

In this article, we propose a new score---for both data selection and model selection---that is similar to the marginal likelihood of a semi-parametric model but does not require one to specify a background model, let alone integrate over it.
The basic idea is to employ a generalized marginal likelihood where we replace the foreground model likelihood by an exponentiated divergence with nice properties,
and replace the background model's marginal likelihood with a simple volume correction factor.
For the choice of divergence, we use a kernelized Stein discrepancy (KSD) since it enables us to provide statistical guarantees
and is easy to estimate compared to other divergences --- for instance, the Kullback--Leibler divergence involves a problematic entropy term that cannot simply be dropped.
The background model volume correction arises roughly as follows: if the background model is well-specified, then asymptotically, its divergence from the empirical distribution converges to zero and all that remains of the background model's contribution is the volume of its effective parameter space.
Consequently, it is not necessary to specify the background model, only its effective dimension.
To facilitate computation further, we develop a Laplace approximation for the foreground model's contribution to our proposed score. 

This article makes a number of novel contributions. We introduce the data selection problem in broad generality, and provide a thorough asymptotic analysis. We propose a novel model/data selection score, which we refer to as the \textit{Stein volume criterion}, that takes the form of a generalized marginal likelihood using a KSD. We provide new theoretical results for this generalized marginal likelihood and its associated posterior, complementing and building upon recent work on the frequentist properties of minimum KSD estimators~\citep{Barp2019-ut}. Finally, we provide first-of-a-kind empirical data selection analyses with two models that are frequently used in single-cell RNA sequencing analysis.

The article is organized as follows. In Section~\ref{sec:method}, we introduce the data selection problem and our proposed method. 
In Section~\ref{sec:method_asymptotics} we study the asymptotic properties of Bayesian data selection methods and compare to model selection.
Section~\ref{sec:related_work} provides a review of related work and Section~\ref{sec:toy_example} illustrates the method on a toy example. In Section~\ref{sec:theory}, we prove (a) consistency results for both data selection and model selection, (b) a Laplace approximation for the proposed score, and (c) a Bernstein--von Mises theorem for the corresponding generalized posterior. 
In Section~\ref{sec:ppca}, we apply our method to probabilistic principal components analysis (pPCA), assess its performance in simulations, and demonstrate it on single-cell RNA sequencing (scRNAseq) data. In Section~\ref{sec:on_off}, we apply our method to a spin glass model of gene expression, also demonstrated on an scRNAseq dataset. Section~\ref{sec:discussion} concludes with a brief discussion.

\section{Method} \label{sec:method}

Suppose the data $X^{(1)}, \ldots, X^{(N)} \in \mathcal{X}$ are independent and identically distributed (i.i.d.), where $\mathcal{X} \subseteq \mathbb{R}^d$. Suppose the true data-generating distribution $P_0$ has density $p_0(x)$ with respect to Lebesgue measure, and let $\{q(x|\theta): \theta \in \Theta \}$ be a parametric model of interest, where $\Theta\subseteq\mathbb{R}^m$.
We are interested in evaluating this model when applied to a projection of the data onto a subspace, $\mathcal{X}_\F \subseteq \X$ (the ``foreground'' space). Specifically, let $X_\F := V^\top X$ be a linear projection of $X\in\mathcal{X}$ onto $\mathcal{X}_\F$, where $V$ is a matrix with orthonormal columns. Let $q(x_\F|\theta)$ denote the distribution of $X_\F$ when $X\sim q(x|\theta)$, and likewise, let $p_0(x_\F)$ be the distribution of $X_\F$ when $X\sim p_0(x)$.
Even when the complete model $q(x|\theta)$ is misspecified with respect to $p_0(x)$, it may be that $q(x_\F|\theta)$ is well-specified with respect to $p_0(x_\F)$; see Figure~\ref{fig:illustrate} for a toy example. In such cases, the parametric model is only partially misspecified --- specifically, it is misspecified on the ``background'' space $\XXB$, defined as the orthogonal complement of $\XXF$. 
Our goal is to find subspaces $\XXF$ for which $q(x_\F|\theta)$ is correctly specified.

\begin{figure}[t!]
    \centering
    \begin{minipage}[t]{0.49\textwidth}\vspace{0pt}%
    \begin{subfigure}[t]{0.9\textwidth}
        \centering
        \includegraphics[height=2.5in]{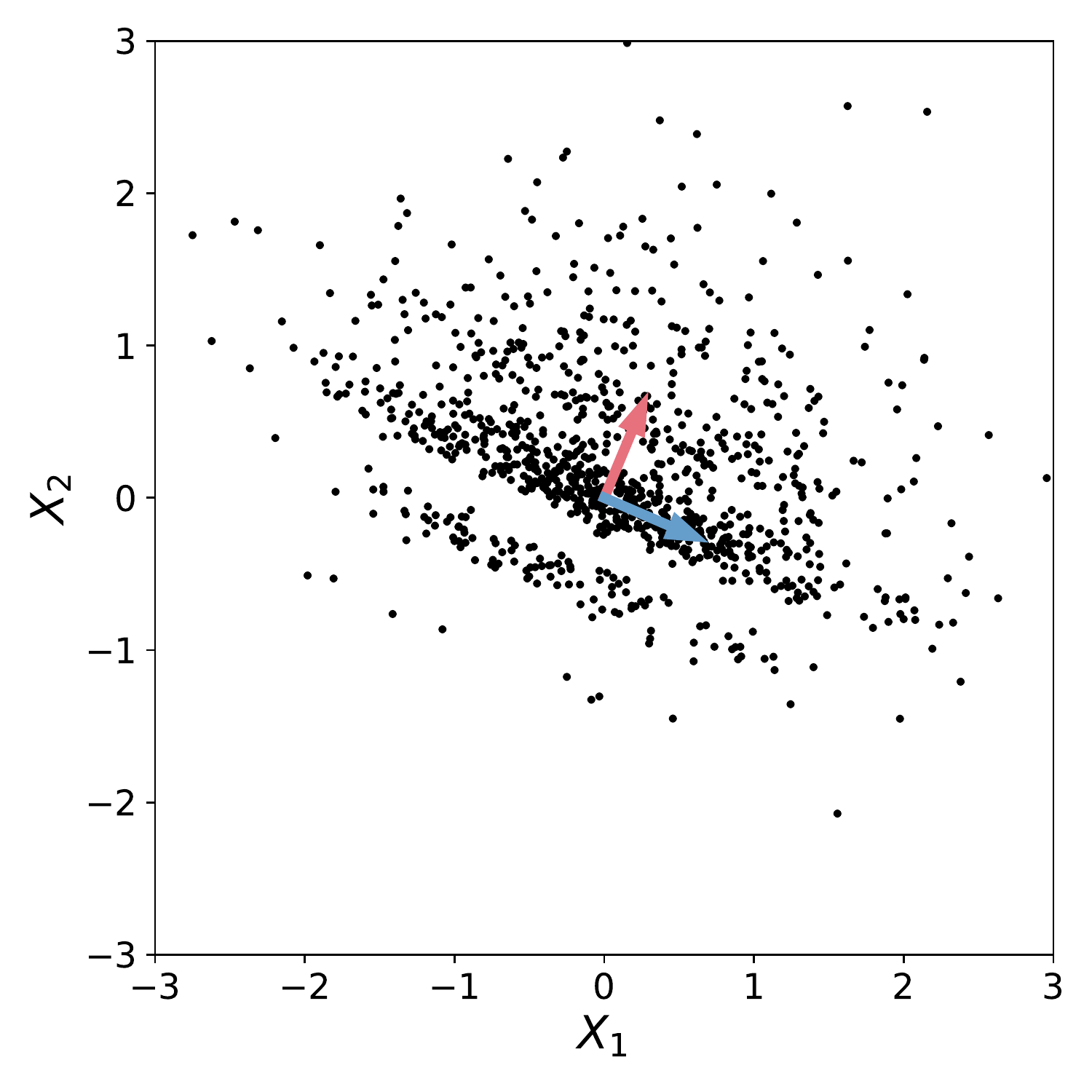}
        \caption{An example for which a bivariate normal model is partially misspecified. Basis vectors for $\X_\F$ (foreground) and $\X_\B$  (background) are blue and red, respectively.}
        \label{fig:illustrate_scat}
    \end{subfigure}
    \end{minipage}%
    \begin{minipage}[t]{0.49\textwidth}\vspace{0pt}%
    \begin{subfigure}[t]{0.9\textwidth}
        \centering
        \includegraphics[height=1.2in]{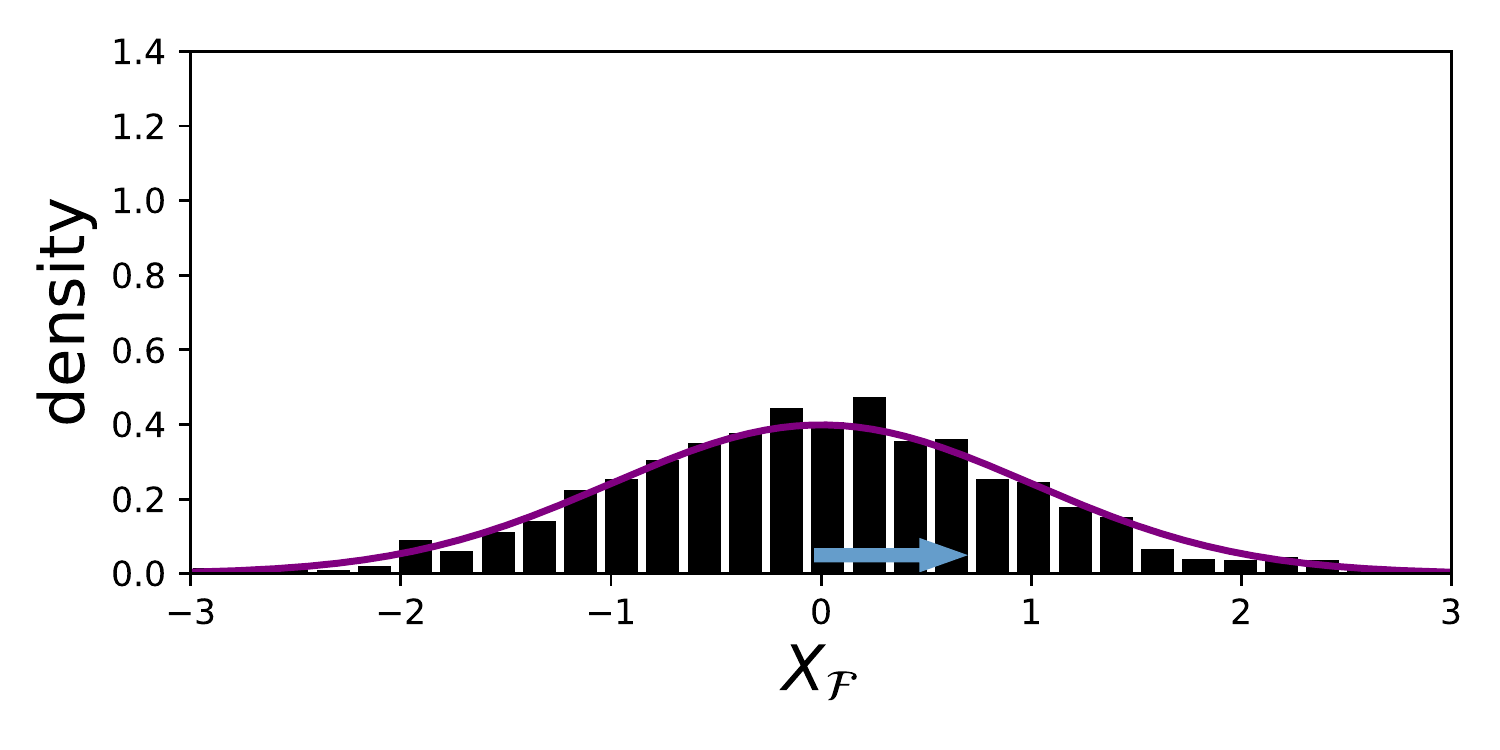}
        \caption{A univariate normal model is well-specified for the data projection onto $\XXF$.}
        \label{fig:illustrate_hist}
    \end{subfigure}
    \\
    \begin{subfigure}[t]{0.9\textwidth}
        \centering
        \includegraphics[height=1.2in]{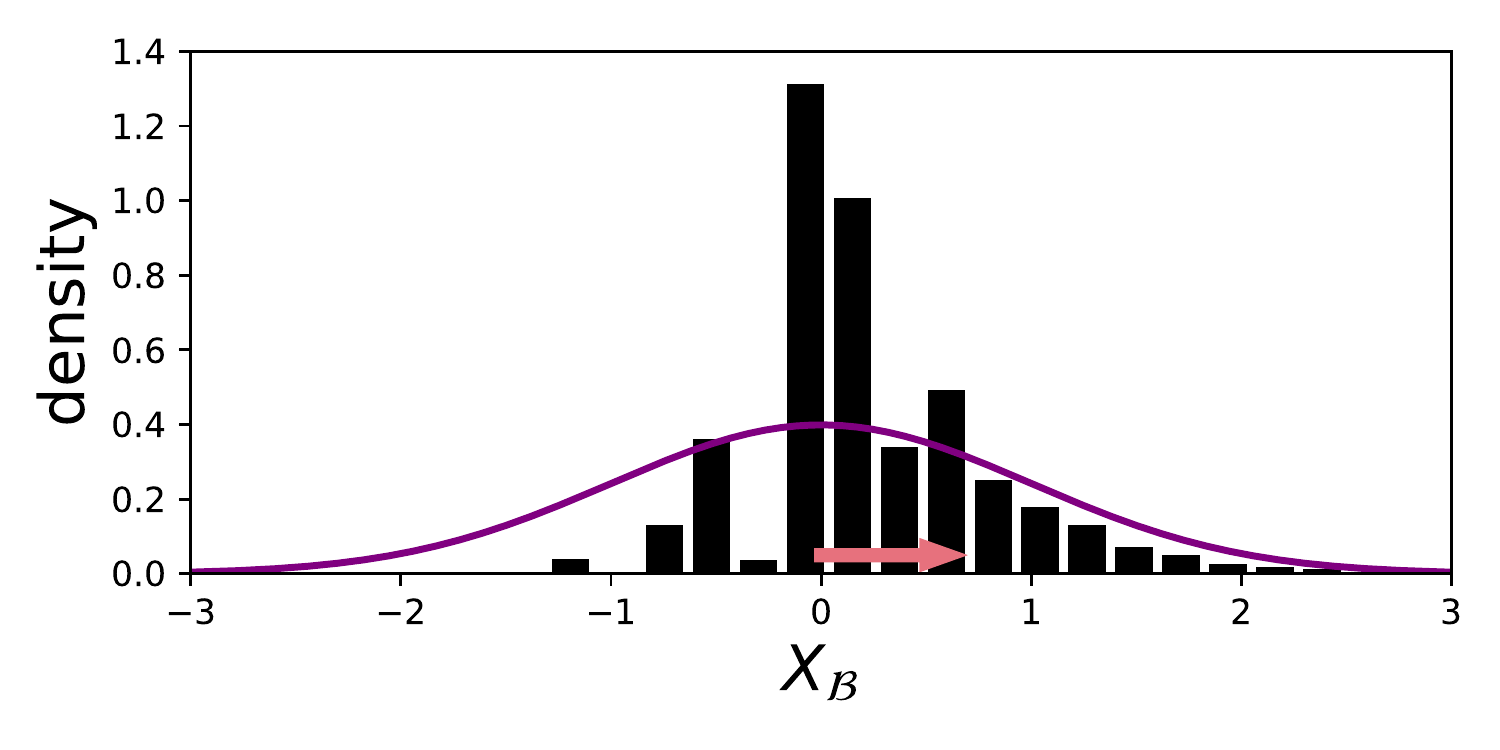}
        \caption{A univariate normal model is misspecified for the data projection onto $\XXB$.}
        \label{fig:illustrate_histB}
    \end{subfigure}
    \end{minipage}%
    \caption{A simple example illustrating the data selection problem.}
\label{fig:illustrate}
\end{figure}

A natural Bayesian solution would be to replace the background component of the assumed model, $q(x_\B | x_\F, \theta)$, with a more flexible component $\tilde{q} (x_\B | x_\F, \phi_\B)$ that is guaranteed to be well-specified with respect to $p_0(x_\B | x_\F)$, such as a nonparametric model. 
The resulting joint model, which we refer to as the ``augmented model", is then
\begin{equation}
\begin{split}
\theta \sim \pi(\theta),\ \ \ & \ \ \ X^{(i)}_{\F} \mid \theta \,\overset{\mathrm{iid}}{\sim}\, q(x_\F \mid \theta),\\
\phi_\B \sim \pi_\mathcal{B}(\phi_\B),\ \ \ & \ \ \  X^{(i)}_{\mathcal{B}} \mid X^{(i)}_{\mathcal{F}}, \phi_\B \,\sim\, \tilde{q}(x_\B \mid X^{(i)}_{\F}, \phi_\B).
\end{split}
\label{eqn:augmented_model}
\end{equation}
The standard Bayesian approach would be to put a prior on the choice of foreground space $\X_\F$, and compute the posterior over the choice of $\X_\F$.
Computing this posterior boils down to computing the Bayes factor $\tilde{q}(X^{(1:N)}|\F)/\tilde{q}(X^{(1:N)}|\F')$
for any given pair of foregrounds $\F$ and $\F'$, where $\tilde{q}(X^{(1:N)}|\F)$ denotes the marginal likelihood of $\F$ under the augmented model,
that is, $\tilde{q}(X^{(1:N)}|\F) = \int \int q(X^{(1:N)}_{\F}|\theta)\, \tilde{q}(X^{(1:N)}_{\B} | X^{(1:N)}_{\F}, \phi_\B) \pi(\theta) \pi_\mathcal{B}(\phi_\B) d\theta d\phi_\B$. 

However, in general, it is difficult to find a background model that (a) is guaranteed to be well-specified with respect to $p_0(x_\B|x_\F)$ and (b) can be integrated over in a computationally tractable way to obtain the posterior on the choice of $\F$.
Our proposed method, which we introduce next, sidesteps these difficulties while still exhibiting similar guarantees.

\subsection{Proposed score for data selection and model selection} \label{sec:proposed_score}

In this section, we propose a model/data selection score that is simpler to compute than the marginal likelihood of the augmented model and has similar theoretical guarantees. This score takes the form of a generalized marginal likelihood with a normalized kernelized Stein discrepancy ($\textsc{nksd}$) estimate taking the place of the log likelihood. Specifically, our proposed model/data selection score, termed the ``Stein volume criterion" (SVC), is 
\begin{equation}
\mathcal{K} := \left(\frac{2\pi}{N}\right)^{\dimB/2} \int \exp\!\Big(-\frac{N}{T} \widehat{\textsc{nksd}}(p_0(x_\F)\| q(x_\F|\theta))\Big) \pi(\theta) d\theta
\label{eqn:est_marg_nksd}
\end{equation}
where the ``temperature'' $T > 0$ is a hyperparameter and $m_\B$ is the effective dimension of the background model parameter space. $\widehat{\textsc{nksd}}(\cdot\|\cdot)$ is an empirical estimate of the $\textsc{nksd}$; see Equations~\ref{eqn:nksd} and \ref{eqn:est_nksd}. 
The integral in Equation~\ref{eqn:est_marg_nksd} can be approximated using techniques discussed in Section~\ref{sec:computation}.
The hyperparameter $T$ can be calibrated by comparing the coverage of the standard Bayesian posterior to the coverage of the \textsc{nksd} generalized posterior (Section~\ref{sec:calibrate_t}).
The $(2\pi/N)^{\dimB/2}$ factor penalizes higher-complexity background models. In general, we allow $m_\B$ to grow with $N$, particularly when the background model is nonparametric. Crucially, the likelihood of the background model does not appear in our proposed score, sidestepping the need to fit or even specify the background model --- indeed, the only place that the background model enters into the SVC is through $m_\B$. 

Thus, rather than specify a background model and then derive $m_\B$, one can simply specify an appropriate value of $m_\B$.
Reasonable choices of $m_\B$ can be derived by considering the asymptotic behavior of a Pitman-Yor process mixture model, a common nonparametric model that is a natural choice for a background model. 
A Pitman-Yor process mixture model with discount parameter $\alpha \in (0,1)$, concentration parameter $\theta > -\alpha$, and $D$-dimensional component parameters will asymptotically have expected effective dimension
\begin{equation} \label{eqn:pymm}
	m_\B \sim D\frac{\Gamma(\theta + 1)}{\alpha\Gamma(\theta + \alpha)} N^{\alpha}
\end{equation}
under the prior, where $a_N \sim b_N$ means that $a_N/b_N \to 1$ as $N \to \infty$ and $\Gamma(\cdot)$ is the gamma function (\citealp{Pitman2002-ma}, \S 3.3).
As a default, we recommend setting $m_\B = c_\B\, r_\B \sqrt{N}$, where $r_\B$ is the dimension of $\X_\B$ and $c_\B$ is a constant chosen to match Equation~\ref{eqn:pymm} with $\alpha = 1/2$. The $\sqrt{N}$ scaling is particularly nice in terms of asymptotic guarantees; see Section~\ref{sec:nested_ds}. 

The SVC uses a novel, normalized version of the $\textsc{ksd}$ between densities $p(x)$ and $q(x)$:
\begin{equation}
    \textsc{nksd}(p(x)\| q(x)) := \frac{\mathbb{E}_{X,Y \sim p} \big[  (s_{q}(X) - s_{p}(X))^\top (s_{q}(Y) - s_{p}(Y)) k(X, Y) \big]}{\mathbb{E}_{X,Y \sim p} [ k(X, Y) ]}
\label{eqn:nksd}
\end{equation}
where $k(x,y)\in\mathbb{R}$ is an integrally strictly positive definite kernel, $s_{q}(x) := \nabla_x \log q(x)$, and $s_{p}(x) := \nabla_x \log p(x)$; see Section~\ref{sec:nksd_properties} for details. The numerator corresponds to the standard $\textsc{ksd}$~\citep{Liu2016-bp}.
The denominator, which is strictly positive and independent of $q(x)$, is a normalization factor that we have introduced to make the divergence comparable across spaces of different dimension. 
See Section~\ref{sec:si_kernel_choice} for kernel recommendations. 
Extending the technique of \citet{Liu2016-bp}, we propose to estimate the normalized KSD using U-statistics:
\begin{equation}
\widehat{\textsc{nksd}}(p(x)\| q(x)) = \frac{\sum_{i \neq j} u(X^{(i)}, X^{(j)})}{\sum_{i \neq j} k(X^{(i)}, X^{(j)})}
\label{eqn:est_nksd}
\end{equation}
where $X^{(i)}\sim p(x)$ i.i.d., the sums are over all $i,j\in\{1,\ldots,N\}$ such that $i\neq j$, and 
\begin{equation*}
	u(x, y) := 
    s_{q}(x)^\top s_{q}(y) k(x, y) + s_{q}(x)^\top \nabla_{y} k(x, y) + s_{q}(y)^\top \nabla_{x} k(x, y) + \Tr(\nabla_{x} \nabla_{y}^\top k(x, y)).
\end{equation*}
Importantly, Equation~\ref{eqn:est_nksd} does not require knowledge of $s_{p}(x)$, which is unknown in practice.

\subsection{Comparison with the standard marginal likelihood} \label{sec:approx_standard}

It is instructive to compare our proposed model/data selection score, the Stein volume criterion, to the standard marginal likelihood $\tilde{q}(X^{(1:N)}|\F)$. In particular, we show that the SVC approximates a generalized version of the marginal likelihood.
To see this, first define $H := -\int p_0(x) \log p_0(x) dx$, the entropy of the complete data distribution, and note that if were $H$ somehow known, then the Kullback-Leibler (\textsc{kl}) divergence between the augmented model and the data distribution could be approximated as
\begin{equation*}
	\widehat{\textsc{kl}}(p_0(x)\|q(x_{\F}|\theta)\, \tilde{q}(x_{\B} | x_{\F}, \phi_\B)) := -\frac{1}{N} \sum_{i=1}^N\log q(X^{(i)}_{\F}|\theta)\, \tilde{q}(X^{(i)}_{\B} | X^{(i)}_{\F}, \phi_\B)-H.
\end{equation*}
Since multiplying the marginal likelihoods by a fixed constant does not affect the Bayes factors, the following expression could be used instead of the marginal likelihood $\tilde{q}(X^{(1:N)}|\F)$ to decide among foreground subspaces:
\begin{equation}
	\frac{\tilde{q}(X^{(1:N)}|\F)}{\exp(-N H)} = \int \int \exp\!\Big( -N\, \widehat{\textsc{kl}}(p_0(x)\|q(x_{\F}|\theta)\, \tilde{q}(x_{\B} | x_{\F}, \phi_\B))\Big) \pi(\theta) \pi_\mathcal{B}(\phi_\B) d\theta d\phi_\B.
\end{equation}
Now, consider a generalized marginal likelihood where the \textsc{nksd} replaces the \textsc{kl}:
\begin{equation}
\tilde{\mathcal{K}} := \int \int \exp\!\Big( -N\frac{1}{T} \widehat{\textsc{nksd}}\big(p_0(x)\| q(x_{\F}|\theta)\, \tilde{q}(x_{\B} | x_{\F}, \phi_\B)\big) \Big) \pi(\theta)\pi_\mathcal{B}(\phi_\B) d\theta d\phi_\B.
\end{equation}
We refer to $\tilde{\mathcal{K}}$ as the ``\textsc{nksd} marginal likelihood" of the augmented model. 
Intuitively, we expect it to behave similarly to the standard marginal likelihood, except that it quantifies the divergence between the model and data distributions using the \textsc{nksd} instead of the \textsc{kl}.

However, a key advantage of the \textsc{nksd} marginal likelihood is that it admits a simple approximation via the SVC when the background model is well-specified, unlike the standard marginal likelihood.  
For instance, if the foreground and background are independent, that is, $p_0(x) = p_0(x_\F) p_0(x_\B)$ and $\tilde{q}(x_\B|x_\F, \phi_\B) = \tilde{q}(x_\B|\phi_\B)$, then the theory in Section~\ref{sec:theory} can be extended to the full augmented model to show that
\begin{equation} \label{eqn:full_svc_approx}
\begin{split}
	\frac{\log \tilde{\mathcal{K}}}{\log \mathcal{K}} \xrightarrow[N \to \infty]{P_0} 1,
\end{split}
\end{equation}
where $\mathcal{K}$ is the SVC (Equation~\ref{eqn:est_marg_nksd}).
Thus, the SVC approximates the \textsc{nksd} marginal likelihood of the augmented model, 
suggesting that the SVC may be a convenient alternative to the standard marginal likelihood.
Formally, Section~\ref{sec:method_asymptotics} shows that the SVC exhibits consistency properties similar to the standard marginal likelihood, even when $p_0(x) \neq p_0(x_\F) p_0(x_\B)$.

\subsection{Computation} \label{sec:computation}

Next, we discuss methods for computing the SVC including exact solutions, Laplace/BIC approximation, variational approximation, and comparing many possible choices of $\F$.


\subsubsection{Exact solution for exponential families}

When the foreground model is an exponential family, the SVC can be computed analytically.
Specifically, in Section~\ref{sec:si_conjugacy}, we show if $q(x_\F|\theta) = \lambda(x_\F) \exp(\theta^\top t(x_\F) - \kappa(\theta))$, then
\begin{equation}
	\widehat{\textsc{nksd}}(p_0(x_\F)\| q(x_\F|\theta)) = \theta^\top A\, \theta + B^\top \theta + C
\end{equation}
where $A$, $B$, and $C$ depend on the data $X^{(1:N)}$ but not on $\theta$. Therefore, we can place a multivariate Gaussian prior on $\theta$ and compute the SVC in closed form; see Section~\ref{sec:si_conjugacy}.

\subsubsection{Laplace and BIC approximations} \label{sec:lapl_bic}

The Laplace approximation is a widely-used technique for computing marginal likelihoods. In Theorem~\ref{thm:marginal_ksd}, we establish regularity conditions under which a Laplace approximation to the SVC is justified by being asymptotically correct. The resulting approximation is
\begin{equation}
\mathcal{K} \approx \frac{\exp\left(-\frac{N}{T} \widehat{\textsc{nksd}}(p_0(x_\F)\| q(x_\F|\theta_N))\right)\pi(\theta_N)}{|\det \frac{1}{T}\nabla^2_\theta\, \widehat{\textsc{nksd}}(p_0(x_\F)\| q(x_\F|\theta_N))|^{1/2}}\left(\frac{2\pi}{N}\right)^{(\dimF + \dimB)/2}
\label{eqn:ksd_laplace}
\end{equation}
where $\theta_N := \argmin_\theta \widehat{\textsc{nksd}}(p_0(x_\F)\| q(x_\F|\theta))$ is the point at which the estimated \textsc{nksd} is minimized, the ``minimum Stein discrepancy estimator" as defined by \citet{Barp2019-ut}.

We can also make a rougher approximation, analogous to the Bayesian information criterion (BIC), which does not require one to compute second derivatives of $\widehat{\textsc{nksd}}$:
\begin{equation} \label{eqn:bic_approx}
\mathcal{K} \approx \exp\!\Big(-\frac{N}{T} \widehat{\textsc{nksd}}(p_0(x_\F)\| q(x_\F|\theta_N))\Big)\left(\frac{2\pi}{N}\right)^{(\dimF + \dimB)/2}.
\end{equation}
This approximation is easy to compute, given a minimum Stein discrepancy estimator $\theta_N$. Like the SVC, it satisfies all of our consistency desiderata (Section~\ref{sec:asymptotics}).
However, we expect it to perform worse than the SVC when there is not yet enough data for the $\textsc{nksd}$ posterior to be highly concentrated,
that is, when a range of $\theta$ values can plausibly explain the data.

\subsubsection{Comparing many foregrounds using approximate optima} \label{sec:approximate_optima}

Often, we would like to evaluate many possible subspaces $\X_\F$ when performing data selection. Even when using the Laplace or BIC approximation to the SVC, this can get computationally prohibitive since we need to re-optimize to find $\theta_N$ for every $\F$ under consideration.  Here, we propose a way to reduce this cost by making a fast linear approximation. 
Define $\ell_j(\theta) := \widehat{\textsc{nksd}}(p_0(x_{\F_j})\| q(x_{\F_j}|\theta))$ for $j \in \{1, 2\}$. For $w \in [0,1]$, we can linearly interpolate
\begin{equation}
	\theta_N(w) := \argmin_\theta \ell_1(\theta) + w (\ell_2(\theta) - \ell_1(\theta)).
\end{equation}
Now, $\theta_N(0)$ and $\theta_N(1)$ are the minimum Stein discrepancy estimators for $\F_1$ and $\F_2$, respectively. Given $\theta_N(0)$, we can approximate $\theta_N(1)$ by applying the implicit function theorem and a first-order Taylor expansion (Section~\ref{sec:si_approx_optima}):
\begin{equation}
	\theta_N(1) \approx \theta_N(0) - \nabla_\theta^2 \ell_1(\theta_N(0))^{-1} \nabla_\theta \ell_2(\theta_N(0)).
\label{eqn:IJ_grad}
\end{equation}
Note that the derivatives of $\ell_j$ are often easy to compute with automatic differentiation~\citep{Baydin2018-qp}. 
Note also that when we are comparing one foreground subspace, such as $\X_{\F_1} = \X$, to many other foreground subspaces $\X_{\F_2}$, the inverse Hessian $\nabla_\theta^2 \ell_1(\theta_N(0))^{-1}$ only needs to be computed once.
Thus, Equation~\ref{eqn:IJ_grad} provides a fast method for computing Laplace or BIC approximations to the SVC for a large number of candidate foregrounds $\F$.

\subsubsection{Variational approximation} \label{sec:vi}

Variational inference is a method for approximating both the posterior distribution and the marginal likelihood of a probabilistic model. Since the SVC takes the form of a generalized marginal likelihood, we can derive a variational approximation to the SVC. Let $r_\zeta(\theta)$ be an approximating distribution parameterized by $\zeta$. By Jensen's inequality, we have
\begin{equation} \label{eqn:svc_vi}
\begin{split}
	\log \int &\exp\!\Big(-\frac{N}{T} \widehat{\textsc{nksd}}(p_0(x_\F)\| q(x_\F|\theta))\Big) \pi(\theta) d\theta\\
	 & = \log \int \frac{\exp\left(-\frac{N}{T} \widehat{\textsc{nksd}}(p_0(x_\F)\| q(x_\F|\theta))\right) \pi(\theta)}{r_\zeta(\theta)} r_\zeta(\theta) d\theta\\
	& \ge \mathbb{E}_{r_\zeta}\left[\log\!\bigg(\frac{\exp\left(-\frac{N}{T} \widehat{\textsc{nksd}}(p_0(x_\F)\| q(x_\F|\theta))\right) \pi(\theta)}{r_\zeta(\theta)} \bigg)\right]\\
	&= -\frac{N}{T} \mathbb{E}_{r_\zeta}\left[\widehat{\textsc{nksd}}(p_0(x_\F)\| q(x_\F|\theta))\right] + \mathbb{E}_{r_\zeta}[\log \pi(\theta)] - \mathbb{E}_{r_\zeta}[\log r_\zeta(\theta)].
\end{split}
\end{equation}
Maximizing this lower bound with respect to the variational parameters $\zeta$ provides an approximation to the SVC,
or more precisely, to $\log \mathcal{K} - (m_\B/2)\log(2\pi/N)$.
Note that this variational approximation falls within the framework of generalized variational inference proposed by \citet{Knoblauch2019-nc}.

\section{Data selection and model selection consistency} \label{sec:method_asymptotics}

This section presents our consistency results when comparing two different foreground subspaces (data selection) or two different foreground models (model selection).
The theory supporting these results is in Sections~\ref{sec:theory} and \ref{sec:asymptotics}.
We consider four distinct properties that a procedure would ideally exhibit:
data selection consistency, nested data selection consistency, model selection consistency, and nested model selection consistency;
see Section~\ref{sec:theory_ds_ms} for precise definitions.
We consider six possible model/data selection scores, and we establish which scores satisfy which properties;
see Table~\ref{tbl:consistency}.
The SVC and the full marginal likelihood are the only two of the six scores that satisfy all four consistency properties.

The intuition behind Bayesian model selection is often explained in terms of Occam's razor: a theory should be as simple as possible but no simpler. Data selection and nested data selection encapsulate a complementary intuition: a theory should explain as much of the data as possible but no more. In other words, when choosing between foreground spaces, a consistent data selection score will asymptotically prefer the highest-dimensional space on which the model is correctly specified.


As in standard model selection, a practical concern in data selection is robustness. For instance, if the foreground model is even slightly misspecified on $\X_{\F_2}$, then the empty foreground $\X_{\F_1} = \varnothing$ will be asymptotically preferred over $\X_{\F_2}$.
Since the SVC takes the form of a generalized marginal likelihood, techniques for improving robustness with the standard marginal likelihood---such as coarsened posteriors, power posteriors, and BayesBag---could potentially be extended to address this issue~\citep{Miller2019-zj,Huggins2020-bu}. We leave exploration of such approaches to future work.

\begin{table}[]
\centering
\begin{tabular}{lc|c|c|c|}
                                                    & \multicolumn{4}{c}{Consistency property} \\ \cline{2-5}
\multicolumn{1}{l|}{Score}                          & d.s. & nested d.s. & m.s. & nested m.s. \\ \hline
\multicolumn{1}{|l|}{$\tilde{q}(X^{(1:N)}|\F)$ full marginal likelihood} & \cmark & \cmark & \cmark  &   \cmark \\ \hline
\multicolumn{1}{|l|}{$\mathcal{K}^{(\mathrm{a})}$ foreground marg lik, background volume} & \xmark & \xmark & \cmark  &   \cmark \\ \hline
\multicolumn{1}{|l|}{$\mathcal{K}^{(\mathrm{b})}$ foreground marg NKSD} & \cmark & \xmark & \cmark  &   \cmark \\ \hline
\multicolumn{1}{|l|}{$\mathcal{K}^{(\mathrm{c})}$ foreground marg KL, background volume} & \cmark & \xmark & \cmark  &   \cmark \\ \hline
\multicolumn{1}{|l|}{$\mathcal{K}^{(\mathrm{d})}$ foreground NKSD, background volume} & \cmark & \cmark & \cmark  &   \xmark \\ \hline
\multicolumn{1}{|l|}{$\mathcal{K}$ foreground marg NKSD, background volume} & \cmark & \cmark & \cmark  &   \cmark \\ \hline
\end{tabular}
\caption{Consistency properties satisfied by various model/data selection scores.
Only the Stein volume criterion $\mathcal{K}$ and the full marginal likelihood $\tilde{q}(X^{(1:N)}|\F)$ satisfy all four desiderata.
(d.s.\ = data selection, m.s.\ = model selection, marg = marginal, lik = likelihood.)}
\label{tbl:consistency}
\end{table}

\subsection{Data selection consistency} \label{sec:data_select_consistency}

First, consider comparisons between different choices of foreground, $\F_1$ and $\F_2$. 
When the model is correctly specified over $\F_1$ but not $\F_2$, we refer to asymptotic concentration on $\F_1$ as ``data selection consistency"
(and vice versa if $\F_2$ is correct but not $\F_1$).
For the standard marginal likelihood of the augmented model, we have (see Section~\ref{sec:si_ds})
\begin{equation}
	\frac{1}{N}\log\frac{\tilde{q}(X^{(1:N)}|\F_1)}{\tilde{q}(X^{(1:N)}|\F_2)} \xrightarrow[N \to \infty]{P_0} \textsc{kl}(p_0(x_{\F_2}) \| q(x_{\F_2}|\theta^{\textsc{kl}}_{2,*})) - \textsc{kl}(p_0(x_{\F_1})\| q(x_{\F_1}|\theta^{\textsc{kl}}_{1,*}))
\label{eqn:data_select_kl}
\end{equation}
where $\theta^{\textsc{kl}}_{j,*} := \argmin \textsc{kl}(p_0(x_{\F_j}) \| q(x_{\F_j}|\theta))$ for $j\in\{1,2\}$, that is, $\theta^{\textsc{kl}}_{j,*}$ is the parameter value that minimizes the \textsc{kl} divergence between the projected data distribution $p_0(x_{\F_j})$ and the projected model $q(x_{\F_j}|\theta)$.
Thus, $\tilde{q}(X^{(1:N)}|\F_j)$ asymptotically concentrates on the $\F_j$ on which the projected model can most closely match the data distribution in terms of $\textsc{kl}$. 

In Theorem~\ref{thm:selection_consistency}, we show that under mild regularity conditions, the Stein volume criterion behaves precisely the same way but with the $\textsc{nksd}$ in place of the $\textsc{kl}$:
\begin{equation}
	\frac{1}{N}\log\frac{\mathcal{K}_1}{\mathcal{K}_2} \xrightarrow[N \to \infty]{P_0} \frac{1}{T}\textsc{nksd}(p_0(x_{\F_2})\| q(x_{\F_2}|\theta^{\textsc{nksd}}_{2,*})) - \frac{1}{T}\textsc{nksd}(p_0(x_{\F_1})\| q(x_{\F_1}|\theta^{\textsc{nksd}}_{1,*}))
\label{eqn:data_select_nksd}
\end{equation}
where $\theta^{\textsc{nksd}}_{j,*} := \argmin \textsc{nksd}(p_0(x_{\F_j})\| q(x_{\F_j}|\theta))$ for $j\in\{1,2\}$. 
Therefore, $\tilde{q}(X^{(1:N)}|\F)$ and $\mathcal{K}$ both yield data selection consistency.
It is important here that the SVC uses a true divergence, rather than a divergence up to a data-dependent constant. If we instead used
\begin{equation}\label{eqn:A1}
\mathcal{K}^{(\mathrm{a})} := \left(\frac{2\pi}{N}\right)^{m_\B/2} q(X_\F^{(1:N)}),
\end{equation}
which employs the foreground marginal likelihood $q(X_\F^{(1:N)}) = \int q(X_\F^{(1:N)}|\theta)\pi(\theta)d\theta$ and a background volume correction, we would get qualitatively different behavior (Section~\ref{sec:si_ds}):
\begin{equation}
	\frac{1}{N}\log\frac{\mathcal{K}^{(\mathrm{a})}_1}{\mathcal{K}^{(\mathrm{a})}_2} \xrightarrow[N \to \infty]{P_0} \textsc{kl}(p_0(x_{\F_2})\|q(x_{\F_2}|\theta^{\textsc{kl}}_{2,*})) - \textsc{kl}(p_0(x_{\F_1}) \| q(x_{\F_1}|\theta^{\textsc{kl}}_{1,*})) + H_{\F_2} - H_{\F_1}
\label{eqn:K1_asymptotic}
\end{equation}
where $H_{\F_j} := - \int p_0(x_{\F_j}) \log p_0(x_{\F_j}) dx_{\F_j}$ is the entropy of $p_0(x_{\F_j})$ for $j \in \{1,2\}$. In short, the naive score $\mathcal{K}^{(\mathrm{a})}$ is a bad choice: it decides between data subspaces based not just on how well the parametric foreground model performs, but also on the entropy of the data distribution in each space. As a result, $\mathcal{K}^{(\mathrm{a})}$ does not exhibit data selection consistency.

\subsection{Nested data selection consistency} \label{sec:nested_ds}

When $\X_{\F_2} \subset \X_{\F_1}$, we refer to the problem of deciding between subspaces $\F_1$ and $\F_2$ as nested data selection, in counterpoint to nested model selection, where one model is a subset of another~\citep{Vuong1989-gj}. 
If the model $q(x|\theta)$ is well-specified over $\X_{\F_1}$, then it is guaranteed to be well-specified over any lower-dimensional sub-subspace $\X_{\F_2} \subset \X_{\F_1}$; in this case, we refer to asymptotic concentration on $\F_1$ as ``nested data selection consistency". 
In this situation, $\textsc{kl}(p_0(x_{\F_j})\|q(x_{\F_j}|\theta^{\textsc{kl}}_{j,*}))$ and $\textsc{nksd}(p_0(x_{\F_j}), q(x_{\F_j}|\theta^{\textsc{nksd}}_{j,*}))$ are both zero for $j\in\{1,2\}$, making it necessary to look at higher-order terms in Equations~\ref{eqn:data_select_kl} and \ref{eqn:data_select_nksd}.
In Section~\ref{sec:si_nds}, we show that 
if $\X_{\F_2} \subset \X_{\F_1}$, $q(x|\theta)$ is well-specified over $\X_{\F_1}$, 
the background models are well-specified, and their dimensions $m_{\B_1}$ and $m_{\B_2}$ are constant with respect to $N$, then
\begin{equation}
		\frac{1}{\log N}\log \frac{\tilde{q}(X^{(1:N)}|\F_1)}{\tilde{q}(X^{(1:N)}|\F_2)} \xrightarrow[N \to \infty]{P_0} \frac{1}{2}(m_{\F_2} + m_{\B_2} - m_{\F_1} - m_{\B_1})
\label{eqn:nested_data_kl}
\end{equation}
where $m_{\F_j}$ is the effective dimension of the parameter space of $q(x_{\F_j}|\theta)$.
In Theorem~\ref{thm:selection_consistency}, we show that under mild regularity conditions, the SVC behaves the same way:
\begin{equation}
		\frac{1}{\log N}\log \frac{\mathcal{K}_1}{\mathcal{K}_2} \xrightarrow[N \to \infty]{P_0} \frac{1}{2}(m_{\F_2} + m_{\B_2} - m_{\F_1} - m_{\B_1}).
\label{eqn:nested_data_nksd}
\end{equation}
Thus, so long as $m_{\F_2} + m_{\B_2} > m_{\F_1} + m_{\B_1}$ whenever $\X_{\F_2} \subset \X_{\F_1}$, the marginal likelihood and the SVC asymptotically concentrate on the larger foreground $\F_1$; hence, they both exhibit nested data selection consistency.  This is a natural assumption since the background model is generally more flexible---on a per dimension basis---than the foreground model.

The volume correction $(2\pi/N)^{m_\B/2}$ in the definition of the SVC is 
important for nested data selection consistency (Equation~\ref{eqn:nested_data_nksd}).
An alternative score without that correction,
\begin{equation}\label{eqn:A2}
	\mathcal{K}^{(\mathrm{b})} := \int \exp\!\Big(-\frac{N}{T} \widehat{\textsc{nksd}}(p_0(x_\F)\| q(x_\F|\theta))\Big) \pi(\theta) d\theta,
\end{equation}
exhibits data selection consistency (Equation~\ref{eqn:data_select_nksd} holds for $\mathcal{K}^{(\mathrm{b})}$), but not nested data selection consistency; see Sections~\ref{sec:si_ds} and \ref{sec:si_nds}.
More subtly, the asymptotics of the SVC in the case of nested data selection also depend on the variance of U-statistics. To illustrate, consider a score that is similar to the SVC but uses $\widehat{\textsc{kl}}$ instead of $\widehat{\textsc{nksd}}$:
\begin{equation}
	\mathcal{K}^{(\mathrm{c})} := \left(\frac{2\pi}{N}\right)^{\dimB/2} \int \exp\!\Big(-N \widehat{\textsc{kl}}(p_0(x_\F)\| q(x_\F|\theta))\Big)\pi(\theta)d\theta
\end{equation}
where $\widehat{\textsc{kl}}(p_0(x_\F)\| q(x_\F|\theta)) := -\frac{1}{N} \sum_{i=1}^N \log q(X_\F^{(i)}|\theta) - H_\F$ and $H_\F$ is required to be known. The score $\mathcal{K}^{(\mathrm{c})}$ exhibits data selection consistency, but not nested data selection consistency.
The reason is that the error in estimating the \textsc{kl} is of order $1/\sqrt{N}$ by the central limit theorem,
and this source of error dominates the $\log N$ term contributed by the volume correction; see Section~\ref{sec:si_nds}.
Meanwhile, the error in estimating the \textsc{nksd} is of order $1/N$ when the model is well-specified, due to the rapid convergence rate of the U-statistic estimator.
Thus, in the SVC, this source of error is dominated by the volume correction; see Theorem~\ref{prop:laplace_scaling}.

The nested data selection results we have described so far assume $m_\B$ does not depend on $N$, or at least 
$m_{\B_2} - m_{\B_1}$ does not depend on $N$ (Theorem~\ref{thm:selection_consistency}).
However, in Section~\ref{sec:proposed_score}, we suggest setting $m_\B = c_\B \, r_\B \sqrt{N}$ where $c_\B$ is a constant and $r_\B$ is the dimension of $\X_\B$. With this choice, the asymptotics of the SVC for nested data selection become (Theorem~\ref{thm:selection_consistency})
\begin{equation}
		\frac{1}{\sqrt{N}\log N}\log \frac{\mathcal{K}_1}{\mathcal{K}_2} \xrightarrow[N \to \infty]{P_0} \frac{1}{2} c_\B \, (r_{\B_2} - r_{\B_1}).
\label{eqn:nested_data_nksd_np}
\end{equation}
Since $r_{\B_1} < r_{\B_2}$ when $\X_{\F_2} \subset \X_{\F_1}$, the SVC concentrates on the larger foreground $\F_1$, yielding nested data selection consistency.
Going beyond the well-specified case, Theorem~\ref{thm:selection_consistency} shows that Equation~\ref{eqn:nested_data_nksd_np} holds when $\textsc{nksd}(p_0(x_{\F_1}) \| q(x_{\F_1}|\theta^{\textsc{nksd}}_{1,*})) = \textsc{nksd}(p_0(x_{\F_2}) \| q(x_{\F_2}|\theta^{\textsc{nksd}}_{2,*})) \neq 0$, that is, when the models are misspecified by the same amount as measured by the $\textsc{nksd}$.
Equation~\ref{eqn:nested_data_nksd_np} holds regardless of whether $m_{\F_1}$ is equal to $m_{\F_2}$.

\subsection{Model selection and nested model selection consistency} \label{sec:model_selection_asymptotics}

Consider comparing different foreground models $q_1(x_\F|\theta_1)$ and $q_2(x_\F|\theta_2)$ over the same subspace $\X_\F$,
while using the same background model.
We say that a score exhibits ``model selection consistency'' if it concentrates on the correct model,
when one of the models is correctly specified and the other is not.
When the two models are nested and both are correct, a score exhibits ``nested model selection consistency'' if it concentrates on the simpler model.

Like the standard marginal likelihood, the SVC exhibits both types of model selection consistency. 
The standard marginal likelihood satisfies (Section~\ref{sec:si_ms})
\begin{equation}
	\frac{1}{N}\log \frac{\tilde{q}_1(X^{(1:N)}|\F)}{\tilde{q}_2(X^{(1:N)}|\F)} \xrightarrow[N \to \infty]{P_0} \textsc{kl}(p_0(x_{\F})\| q_2(x_{\F}|\theta^{\textsc{kl}}_{2,*})) - \textsc{kl}(p_0(x_{\F})\| q_1(x_{\F}|\theta^{\textsc{kl}}_{1,*}))
\label{eqn:model_selection}
\end{equation}
where $\theta^{\textsc{kl}}_{j,*} := \argmin \textsc{kl}(p_0(x_{\F}) \| q_j(x_{\F}|\theta_j))$ for $j\in\{1,2\}$. Analogously, by Theorem~\ref{thm:selection_consistency},
\begin{equation}
	\frac{1}{N}\log \frac{\mathcal{K}_1}{\mathcal{K}_2} \xrightarrow[N \to \infty]{P_0} \frac{1}{T}\textsc{nksd}(p_0(x_{\F})\| q_2(x_{\F}|\theta^{\textsc{nksd}}_{2,*})) - \frac{1}{T}\textsc{nksd}(p_0(x_{\F})\| q_1(x_{\F}|\theta^{\textsc{nksd}}_{1,*}))
\label{eqn:model_selection_nksd}
\end{equation}
where $\theta^{\textsc{nksd}}_{j,*} := \argmin \textsc{nksd}(p_0(x_{\F}) \| q_j(x_{\F}|\theta_j))$ for $j\in\{1,2\}$. 
Thus, for both scores, concentration occurs on the model that comes closer to the data distribution in terms of the corresponding divergence ($\textsc{kl}$ or $\textsc{nksd}$).

For nested model selection, suppose both foreground models are well-specified and $m_{\B_1} = m_{\B_2}$. 
Letting $m_{\F,j}$ be the parameter dimension of $q_j(x_{\F}|\theta_j)$, we have (Section~\ref{sec:si_nms})
\begin{equation}
		\frac{1}{\log N}\log \frac{\tilde{q}_1(X^{(1:N)}|\F)}{\tilde{q}_2(X^{(1:N)}|\F)} \xrightarrow[N \to \infty]{P_0} \frac{1}{2}(m_{\F,2} - m_{\F,1}).
\label{eqn:nested_model_selection}
\end{equation}
In Theorem~\ref{thm:selection_consistency}, we show that the SVC behaves identically:
\begin{equation}
		\frac{1}{\log N}\log \frac{\mathcal{K}_1}{\mathcal{K}_2} \xrightarrow[N \to \infty]{P_0} \frac{1}{2}(m_{\F,2} - m_{\F,1}).
\label{eqn:nested_model_nksd}
\end{equation}
Here, a key role is played by the volume of the foreground parameter space, which quantifies the foreground model complexity.
The SVC accounts for this by integrating over foreground parameter space.
Meanwhile, a naive alternative that ignores the foreground volume, 
\begin{equation}
	\mathcal{K}^{(\mathrm{d})} := \left(\frac{2\pi}{N}\right)^{m_\B/2} \exp\!\Big(-\frac{N}{T} \min_\theta \widehat{\textsc{nksd}}(p_0(x_{\F})\| q(x_{\F}|\theta))\Big),
\end{equation}
exhibits model selection consistency (Equation~\ref{eqn:model_selection_nksd} holds for $\mathcal{K}^{(\mathrm{d})}$) but not nested model selection consistency (Section~\ref{sec:si_nms}).
The Laplace and BIC approximations to the SVC (Equations~\ref{eqn:ksd_laplace} and \ref{eqn:bic_approx}) explicitly correct for the foreground parameter volume without integrating.

\section{Related work} \label{sec:related_work}

Projection pursuit methods are closely related to data selection in that they attempt to identify ``interesting'' subspaces of the data.  However, projection pursuit uses certain pre-specified objective functions to optimize over projections, whereas our method allows one to specify a model of interest~\citep{Huber1985-xl}.

Another related line of research is on Bayesian goodness-of-fit (GOF) tests, which compute the posterior probability that the data comes from a given parametric model versus a flexible alternative such as a nonparametric model. Our setup differs in that it aims to compare among different semiparametric models. Nonetheless, in an effort to address the GOF problem, a number of authors have developed nonparametric models with tractable marginals \citep{Verdinelli1998-dx,Berger2001-li}, and using these models as the background component in an augmented model could in theory solve data selection problems. In practice, however, such models can only be applied to one-dimensional or few-dimensional data spaces. In Section~\ref{sec:ppca}, we show that naively extending the method of \citet{Berger2001-li} to the multi-dimensional setting has fundamental limitations.

There is a sizeable frequentist literature on GOF testing using discrepancies \citep{Gretton2012-do,Barron1989-jg,Gyorfi1991-ki}. Our proposed method builds directly on the KSD-based GOF test proposed by \citet{Liu2016-bp} and~\citet{Chwialkowski2016-tk}. 
However, using these methods to draw comparisons between different foreground subspaces is non-trivial, since the set of alternative models considered by the GOF test, though nonparametric, will be different over data spaces with different dimensionality.
Moreover, the Bayesian aspect of the SVC makes it more straightforward to integrate prior information and employ hierarchical models. 

In composite likelihood methods, instead of the standard likelihood, one uses the product of the conditional likelihoods of selected statistics~\citep{Lindsay1988-vi,Varin2011-sy}. Composite likelihoods have seen widespread use, often for robustness or computational purposes.
However, in composite likelihood methods, the choice of statistics is fixed before performing inference. 
In contrast, in data selection the choice of statistics is a central quantity to be inferred.

Relatedly, our work connects with the literature on robust Bayesian methods. \citet{Doksum1990-pd} propose conditioning on the value of an insufficient statistic, rather than the complete dataset, when performing inference; also see~\citet{Lewis2018-ri}. However, making an appropriate choice of statistic requires one to know which aspects of the model are correct; in contrast, our procedure infers the choice of statistic.
The $\textsc{nksd}$ posterior also falls within the general class of Gibbs posteriors, which have been studied in the context of robustness, randomized estimators, and generalized belief updating ~\citep{Zhang2006-sq,Zhang2006-wo,Jiang2008-zw,Bissiri2016-gk,Jewson2018-mw,Miller2019-zj}.

Our theoretical results also contribute to the emerging literature on Stein discrepancies~\citep{Anastasiou2021-kf}. \citet{Barp2019-ut} recently proposed minimum kernelized Stein discrepancy estimators and established their consistency and asymptotic normality. 
In Section~\ref{sec:theory}, we establish a Bayesian counterpart to these results, showing that the $\textsc{nksd}$ posterior is asymptotically normal (in the sense of Bernstein--von Mises) and admits a Laplace approximation. To prove this result, we rely on the recent work of \citet{Miller2019-ur} on the asymptotics of generalized posteriors. 
Since \citet{Barp2019-ut} show that the kernelized Stein discrepancy is related to the Hyv\"{a}rinen divergence in that both are Stein discrepancies, our work bears an interesting relationship to that of \citet{Shao2018-ef}, who use a Bayesian version of the Hyv\"{a}rinen divergence to perform model selection with improper priors. They derive a consistency result analogous to Equation~\ref{eqn:data_select_nksd},
however, their model selection score takes the form of a prequential score, not a Gibbs marginal likelihood as in the SVC, and cannot be used for data selection.

In independent recent work, \citet{Matsubara2021-kz} propose a Gibbs posterior based on the KSD and derive a Bernstein-von Mises theorem similar to Theorem~\ref{thm:marginal_ksd} using the results of \citet{Miller2019-ur}. 
Their method is not motivated by the Bayesian data selection problem but rather by (1) inference for energy-based models with intractable normalizing constants and (2) robustness to $\epsilon$-contamination. 
Their Bernstein-von Mises theorem differs from ours in that it applies to a V-statistic estimator of the KSD rather than a U-statistic estimator of the NKSD.

Our linear approximation to the minimum Stein discrepancy estimator (Section~\ref{sec:approximate_optima}) is directly inspired by the Swiss Army infinitesimal jackknife of \citet{Giordano2018-xh}, which similarly computes the linear response of an extremum estimator with respect to perturbations of the dataset.

\section{Toy example} \label{sec:toy_example}

The purpose of this toy example is to illustrate the behavior of the Stein volume criterion, and compare it to some of the defective alternatives listed in Table~\ref{tbl:consistency}, in a simple setting where all computations can be done analytically (Section~\ref{sec:si_conjugacy}). In all of the following experiments, we simulated data from a bivariate normal distribution: $X^{(1)}, \ldots, X^{(N)} \text{ i.i.d.} \sim \mathcal{N}((0,0)^\top, \Sigma_0)$.

To set up the Stein volume criterion, we set $T = 5$ and we choose a radial basis function kernel, $k(x, y) = \exp(-\frac{1}{2}\|x - y\|_2^2)$, which factors across dimensions.
We considered both dataset size-independent values of $m_\B$ (in particular, $m_\B = 5\, r_\B$) and dataset size-dependent values of $m_\B$ (in particular, Equation~\ref{eqn:pymm} with $\alpha=0.5$, $\theta=1$, and $D=0.2$, where fractional values of $D$ correspond to shared parameters across components in the Pitman-Yor mixture model), obtaining very similar results in each case (shown in Figures~\ref{fig:toy} and \ref{fig:toy_py}, respectively).
These choices of $m_\B$ ensure that, except for at very small $N$, the background model has more parameters per data dimension than each of the foreground models considered below, which have just one. In particular, $m_\B > 1\, r_\B$ for all $N$ (in the size-independent case) and for $N \ge 5$ (in the size-dependent case).

\subsubsection*{Data selection consistency}
First, we set $\Sigma_0$ to be a diagonal matrix with entries $(1, 1/2)$, that is, $\Sigma_0 = \mathrm{diag}(1, 1/2)$, and for $x\in\mathbb{R}^2$, we consider the model
\begin{equation}
\begin{split}
	q(x|\theta) &= \mathcal{N}(x \mid \theta, I)\\
	\pi(\theta) &= \mathcal{N}(\theta \mid (0, 0)^\top, 10 I)
\end{split}
\label{eqn:toy_model}
\end{equation} 
where $I$ denotes the identity matrix.
This parametric model is misspecified, owing to the incorrect choice of covariance matrix. We consider two choices of foreground subspace: the first dimension (defined by the projection matrix $V_{\F_1} = (1, 0)^\top$) or the second dimension (projection matrix $V_{\F_2} = (0, 1)^\top$). 
The model is only well-specified for $\F_1$ (not $\F_2$), so a successful data selection procedure would asymptotically select $\F_1$. 

In Figure~\ref{fig:data_select}, we see that the SVC correctly concentrates on $\F_1$ as the number of datapoints $N$ increases, with the log SVC ratio growing linearly in $N$, as predicted by Equation~\ref{eqn:data_select_nksd}. Meanwhile, the naive alternative score $\mathcal{K}^{(\mathrm{a})}$ (Equation~\ref{eqn:A1}) fails since it depends on the foreground entropies, while $\mathcal{K}^{(\mathrm{b})}$ (Equation~\ref{eqn:A2}) succeeds since the volume correction is negligible in this case; see Section~\ref{sec:data_select_consistency} and Table~\ref{tbl:consistency}.


\begin{figure}[t!]
    \centering
    \begin{subfigure}[t!]{0.48\textwidth}
        \centering
        \includegraphics[height=2.5in]{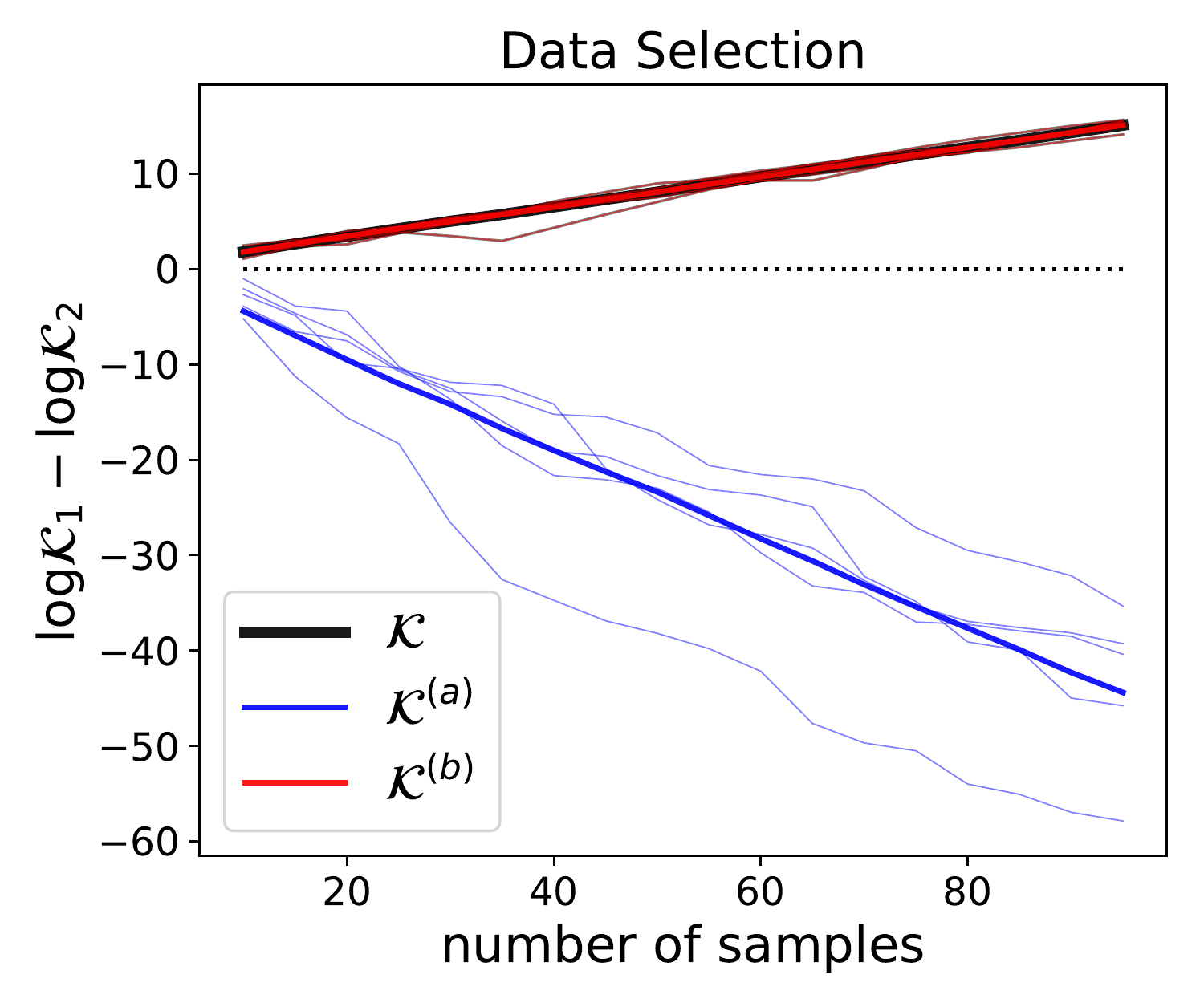}
        \caption{}
        \label{fig:data_select}
    \end{subfigure}%
    ~ 
    \begin{subfigure}[t!]{0.48\textwidth}
        \centering
        \includegraphics[height=2.5in]{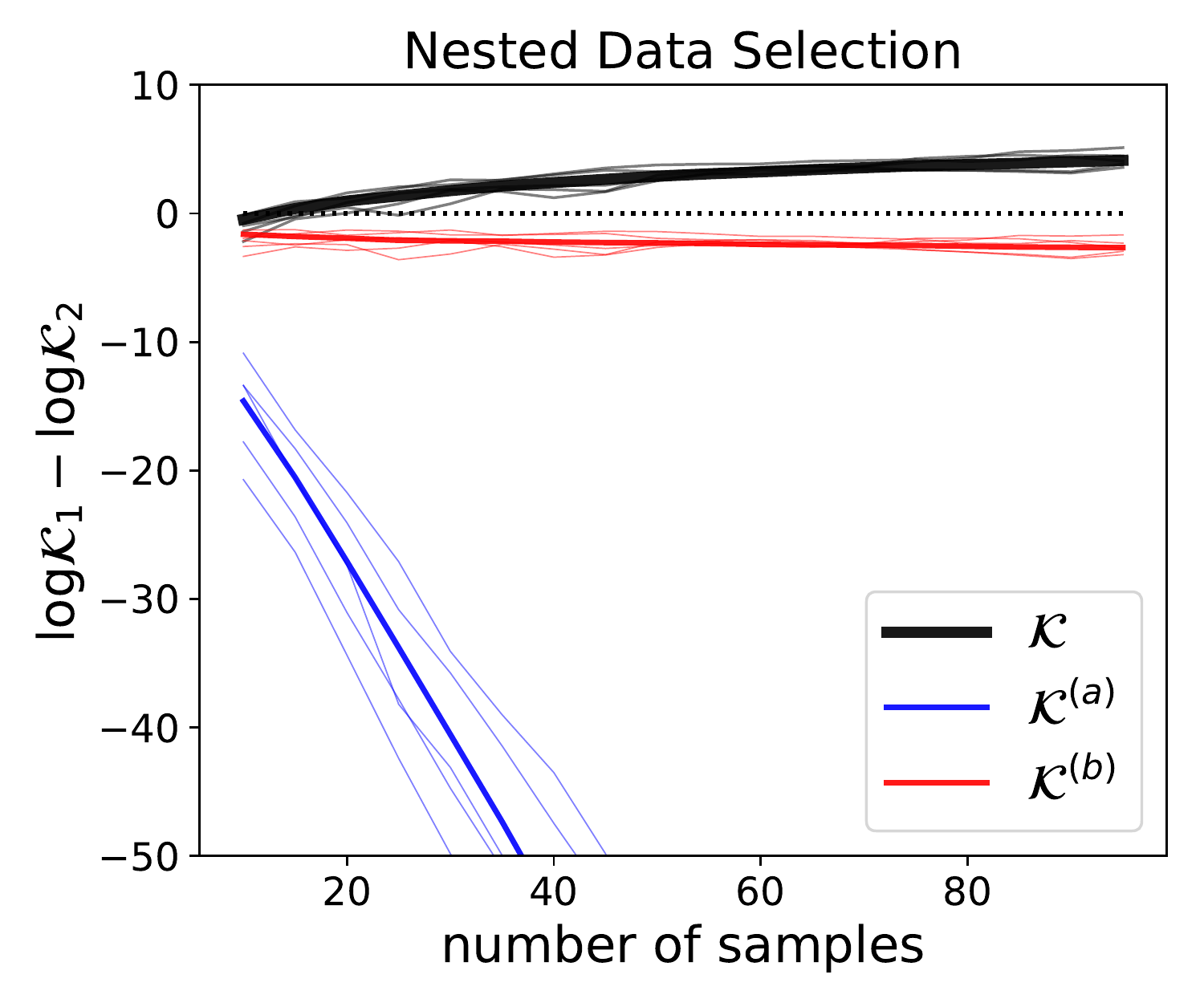}
        \caption{}
        \label{fig:nest_data_select}
    \end{subfigure}
    \\
    \begin{subfigure}[t!]{0.48\textwidth}
        \centering
        \includegraphics[height=2.5in]{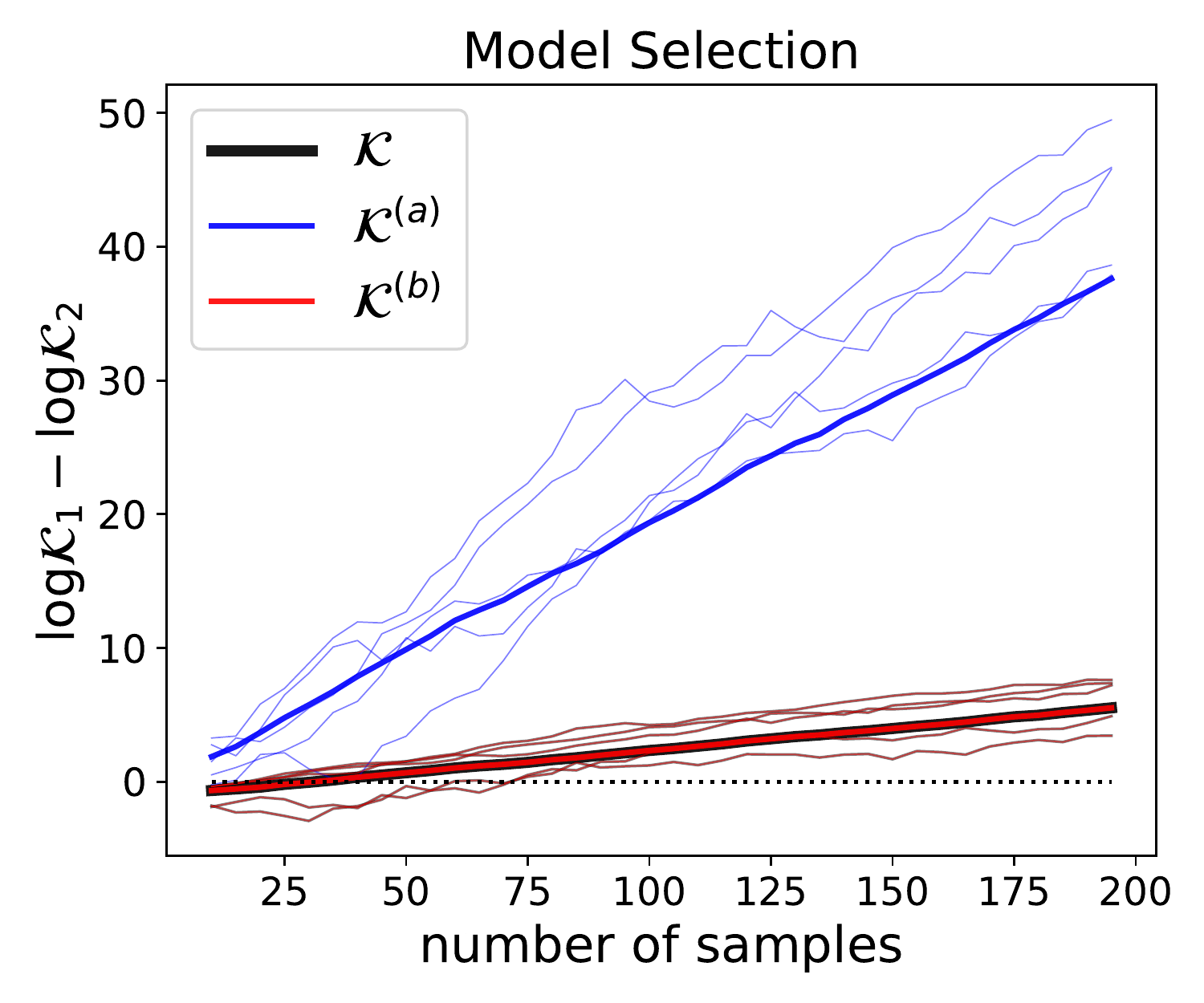}
        \caption{}
        \label{fig:model_select}
    \end{subfigure}
    ~ 
    \begin{subfigure}[t!]{0.48\textwidth}
        \centering
        \includegraphics[height=2.5in]{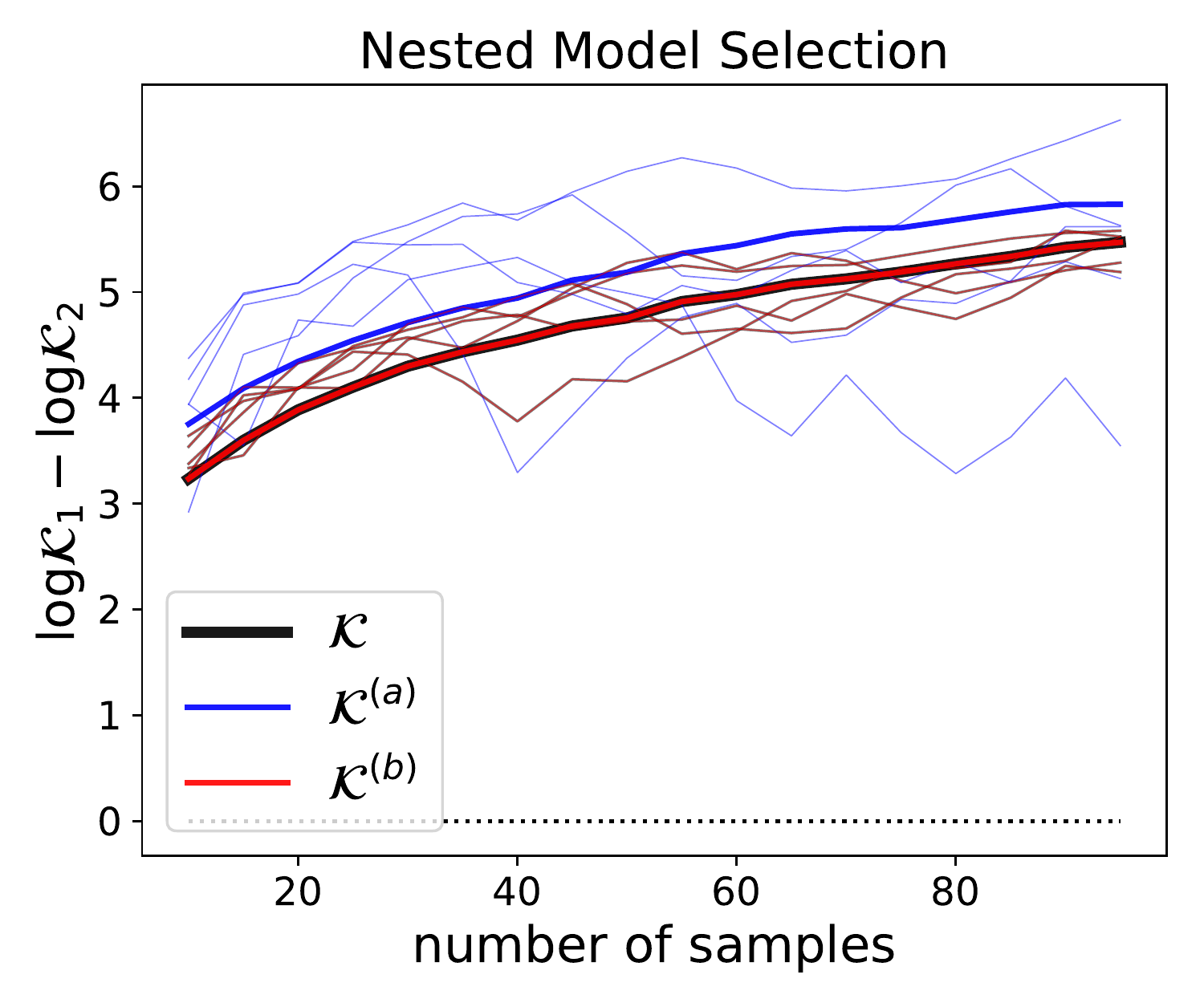}
        \caption{}
        \label{fig:nest_model_select}
    \end{subfigure}
    \caption{Behavior of the Stein volume criterion $\mathcal{K}$, the foreground marginal likelihood with a background volume correction $\mathcal{K}^{(\mathrm{a})}$, and the foreground marginal \textsc{nksd} $\mathcal{K}^{(\mathrm{b})}$ on toy examples. Here, we set $m_\B = 5\, r_\B$. The plots show the results for 5 randomly generated datasets (thin lines) and the average over 100 random datasets (bold lines).}
    \label{fig:toy}
\end{figure}

\subsubsection*{Nested data selection consistency}
Next, we examine the nested data selection case. We use the same model (Equation~\ref{eqn:toy_model}),
but we set $\Sigma_0 = I$ so that the model is well-specified even without being projected.
We compare the complete data space ($\X_{\F_1} = \X$, projection matrix $V_{\F_1} = I$) to the first dimension alone (projection matrix $V_{\F_1} = (1, 0)^\top$). 
Nested data selection consistency demands that the higher-dimensional data space $\X_{\F_1}$ be preferred asymptotically, since the model is well-specified for both $\X_{\F_1}$ and $\X_{\F_2}$. 
Figure~\ref{fig:nest_data_select} shows that this is indeed the case for the Stein volume criterion, with the log SVC ratio growing at a $\log N$ rate when $m_\B$ is independent of $N$, as predicted by Equation~\ref{eqn:nested_data_nksd}. 
When $m_\B$ depends on $N$ via the Pitman-Yor expression, the log SVC ratio grows at a $N^{\alpha} \log N$ rate (Figure~\ref{fig:py_nest_data_select}).
Meanwhile, Figure~\ref{fig:nest_data_select} shows that $\mathcal{K}^{(\mathrm{a})}$ and $\mathcal{K}^{(\mathrm{b})}$ both fail to exhibit nested data selection consistency, in accordance with our theory (Section~\ref{sec:nested_ds} and Table~\ref{tbl:consistency}).

\subsubsection*{Model selection consistency (nested and non-nested)}
Finally, we examine model selection and nested model selection consistency. We again set $\Sigma_0 = I$. 
We first compare the (well-specified) model $q(x|\theta) = \mathcal{N}(x\mid \theta, I)$ to the (misspecified) model $q(x|\theta) = \mathcal{N}(x \mid \theta, 2 I)$, using the prior $\pi(\theta) = \mathcal{N}(\theta\mid (0, 0)^\top, 10 I)$ for both models. As shown in Figure~\ref{fig:model_select}, the SVC correctly concentrates on the first model, with the log SVC ratio growing linearly in $N$, as predicted by Equation~\ref{eqn:model_selection_nksd}. The same asymptotic behavior is exhibited by $\mathcal{K}^{(\mathrm{a})}$, which is equivalent to the standard Bayesian marginal likelihood in this setting (Section~\ref{sec:model_selection_asymptotics}). 
Finally, to check nested model selection consistency, we compare two well-specified nested models: $q(x) = \mathcal{N}(x \mid (0, 0)^\top, I)$ and $q(x|\theta) = \mathcal{N}(x \mid \theta, I)$.  Figure~\ref{fig:nest_model_select} shows that the SVC correctly selects the simpler model (that is, the model with smaller parameter dimension) and the log SVC ratio grows as $\log N$ (Equation~\ref{eqn:nested_model_nksd}). This, too, matches the behavior of the standard Bayesian marginal likelihood, seen in the plot of $\mathcal{K}^{(\mathrm{a})}$.

\section{Theory} \label{sec:theory}

\subsection{Properties of the NKSD} \label{sec:nksd_properties}


Suppose $X^{(1)},\ldots,X^{(N)}$ are i.i.d.\ samples from a probability measure $P$ on $\mathcal{X} \subseteq \mathbb{R}^d$ having density $p(x)$ with respect to the Lebesgue measure. Let $L^1(P)$ denote the set of measurable functions $f$ such that $\int \|f(x)\| p(x) d x < \infty$ where $\|\cdot\|$ is the Euclidean norm.
We impose the following regularity conditions to use the \textsc{nksd} to compare $P$ with another probability measure $Q$ having density $q(x)$ with respect to the Lebesgue measure; these are similar to conditions used for the standard \textsc{ksd} in previous work \citep{Liu2016-bp,Barp2019-ut}.
\begin{condition}[Restrictions on $p$ and $q$] \label{condition:restrict_p_q}
Assume $s_{p}(x) := \nabla_x \log p(x)$ and $s_{q}(x) := \nabla_x \log q(x)$ exist and are continuous for all $x\in\mathcal{X}$,
and assume $\mathcal{X}$ is connected and open.  Further, assume $s_{p},s_{q}\in L^1(P)$.
\end{condition}
We refer to $s_{p}$ as the Stein score function of $p$.
Note that existence of $s_{p}(x)$ implies $p(x) > 0$.
Now, consider a kernel $k:\mathcal{X}\times\mathcal{X}\to\mathbb{R}$.
The kernel $k$ is said to be \textit{integrally strictly positive definite} if for any $g: \mathcal{X} \to \mathbb{R}$ such that $0 < \int_\mathcal{X} |g(x)| dx < \infty$, we have $\int_\mathcal{X} \int_\mathcal{X} g(x) k(x,y) g(y) dx dy > 0$.
The kernel $k$ is said to \textit{belong to the Stein class of $P$} if $\int_\mathcal{X} \nabla_x (k(x,y) p(x)) dx = 0$ for all $y\in\mathcal{X}$.
\begin{condition}[Restrictions on $k$] \label{condition:restrict_k}
Assume the kernel $k$ is symmetric, bounded, integrally strictly positive definite, and belongs to the Stein class of $P$.
\end{condition}

The following result shows that the \textsc{nksd} can be written in a way that does not involve $s_{p}$; this is particularly useful for estimating the \textsc{nksd} when $P$ is unknown.
\begin{proposition} \label{proposition:u_statistic_nksd}
If Conditions~\ref{condition:restrict_p_q} and \ref{condition:restrict_k} hold, then the \textsc{nksd} is finite and 
\begin{equation} \label{eqn:nksd_u_rep}
\textsc{nksd}(p(x)\| q(x)) := \frac{\mathbb{E}_{X,Y \sim p}[u(X,Y)]}{\mathbb{E}_{X,Y \sim p}[k(X,Y)]}
\end{equation}
where
\begin{equation}
    u(x, y) = s_{q}(x)^\top s_{q}(y) k(x, y) + s_{q}(x)^\top \nabla_y k(x, y) + s_{q}(y)^\top \nabla_x k(x, y) + \Tr(\nabla_x \nabla_y^\top k(x, y)).
\end{equation}
\end{proposition}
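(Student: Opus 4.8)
The plan is to reduce the finiteness and the representation claim to a single algebraic identity, namely that the Stein operator applied to the kernel "integrates away" the unknown score $s_p$. Concretely, I would first verify that the numerator and denominator of the \textsc{nksd} in Equation~\ref{eqn:nksd} are both finite. The denominator $\mathbb{E}_{X,Y\sim p}[k(X,Y)]$ is finite and strictly positive: finiteness is immediate since $k$ is bounded (Condition~\ref{condition:restrict_k}), and strict positivity follows from integral strict positive definiteness applied to $g = p$, which satisfies $0 < \int p < \infty$. For the numerator, write the integrand of the standard \textsc{ksd} as $(s_q(x)-s_p(x))^\top(s_q(y)-s_p(y))k(x,y)$ and bound its absolute value using boundedness of $k$ together with $s_p, s_q \in L^1(P)$ (Condition~\ref{condition:restrict_p_q}); since the integrand factors as a product of a function of $x$ and a function of $y$ (up to the bounded symmetric factor $k$), a Cauchy--Schwarz / Tonelli argument gives an $L^1(P\times P)$ bound of the form $\|k\|_\infty\big(\mathbb{E}_P\|s_q - s_p\|\big)^2 < \infty$, so the numerator is well-defined and finite.

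The heart of the argument is the identity $\mathbb{E}_{X,Y\sim p}[u(X,Y)] = \mathbb{E}_{X,Y\sim p}\big[(s_q(X)-s_p(X))^\top(s_q(Y)-s_p(Y))k(X,Y)\big]$, which is exactly the known reproducing-Stein-kernel identity underlying \citet{Liu2016-bp}. I would prove it by expanding the right-hand side into four terms and integrating by parts in each variable. The key input is that $k$ belongs to the Stein class of $P$: $\int_\mathcal{X}\nabla_x\big(k(x,y)p(x)\big)dx = 0$ for all $y$. Writing $\nabla_x\log p(x) = \nabla_x p(x)/p(x)$, this lets me replace every occurrence of $s_p(x)$ (resp.\ $s_p(y)$) paired with $k$ by $-\nabla_x k(x,y)$ (resp.\ $-\nabla_y k(x,y)$) under the expectation: for instance, $\mathbb{E}_{X,Y}[-s_p(X)^\top s_q(Y)k(X,Y)] = \mathbb{E}_Y\big[s_q(Y)^\top \mathbb{E}_X[-s_p(X)k(X,Y)]\big] = \mathbb{E}_Y\big[s_q(Y)^\top \mathbb{E}_X[\nabla_X k(X,Y)]\big] = \mathbb{E}_{X,Y}[s_q(Y)^\top\nabla_X k(X,Y)]$, using $\int \nabla_x k(x,y)p(x)dx = -\int k(x,y)\nabla_x p(x)dx = -\int k(x,y)s_p(x)p(x)dx$. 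Applying this substitution to all three terms involving $s_p$ (the $-s_p(x)^\top s_q(y)$ term, the $-s_q(x)^\top s_p(y)$ term, and the $s_p(x)^\top s_p(y)$ term, the last requiring a double integration by parts yielding the $\mathrm{trace}(\nabla_x\nabla_y^\top k)$ term) collapses the right-hand side into $\mathbb{E}_{X,Y}[u(X,Y)]$. Dividing by $\mathbb{E}_{X,Y\sim p}[k(X,Y)]$ gives Equation~\ref{eqn:nksd_u_rep}.

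The main obstacle is making the integration-by-parts steps rigorous — in particular, justifying Fubini/Tonelli to interchange the order of integration and justifying that the Stein-class hypothesis can be applied \emph{after} multiplying by $s_q(y)$ or $\nabla_y k(x,y)$, i.e.\ controlling the tails so that the boundary terms genuinely vanish rather than merely vanishing formally. This is handled by the integrability conditions: boundedness of $k$ and of its relevant derivatives (implicit in the Stein-class condition, which presupposes $\nabla_x(k(x,y)p(x))$ is integrable), combined with $s_p,s_q\in L^1(P)$, give dominating functions of product form that legitimize each interchange; the Stein-class identity then disposes of each boundary term. Since the paper notes these conditions mirror those of \citet{Liu2016-bp} and \citet{Barp2019-ut}, I would cite the analogous lemma there for the technical interchange details and focus the written proof on the algebraic substitution that produces $u$ and on tracking the normalization denominator, which is the only genuinely new ingredient relative to the standard \textsc{ksd}.
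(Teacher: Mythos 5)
Your proposal is correct and matches the paper's argument: the finiteness of the numerator via boundedness of $k$ plus $s_p,s_q\in L^1(P)$, and the positivity and finiteness of the denominator via integral strict positive definiteness, are exactly the steps in the paper's proof. For the representation in terms of $u$, the paper simply invokes Theorem 3.6 of \citet{Liu2016-bp}; your integration-by-parts sketch using the Stein-class condition is the standard derivation of that same result, and you correctly identify that the normalization denominator is the only new ingredient.
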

\noindent The proof is in Section~\ref{sec:proofs_nksd}.
Next, we show the \textsc{nksd} satisfies the properties of a divergence.
\begin{proposition} \label{thm:nksd_divergence}
If Conditions~\ref{condition:restrict_p_q} and \ref{condition:restrict_k} hold, then
\begin{equation}\textsc{nksd}(p(x)\| q(x)) \geq 0,\end{equation} with equality if and only if $p(x) = q(x)$ almost everywhere.
\end{proposition}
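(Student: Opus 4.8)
The plan is to exploit the quadratic-form structure of the $\textsc{nksd}$ in Equation~\ref{eqn:nksd} together with the integral strict positive definiteness of $k$, mirroring the divergence argument for the standard $\textsc{ksd}$ in \citet{Liu2016-bp}. Write $\delta(x) := s_{q}(x) - s_{p}(x) \in \mathbb{R}^d$ and $g_\ell(x) := \delta_\ell(x)\, p(x)$ for each coordinate $\ell = 1,\ldots,d$. Since $s_{p},s_{q}\in L^1(P)$ by Condition~\ref{condition:restrict_p_q}, each $g_\ell$ is integrable with respect to Lebesgue measure, because $\int_\mathcal{X} |g_\ell(x)|\, dx \le \int_\mathcal{X}\|\delta(x)\| p(x)\, dx < \infty$. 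Expanding the numerator of Equation~\ref{eqn:nksd} coordinate-wise gives
\begin{equation*}
\mathbb{E}_{X,Y \sim p}\big[(s_{q}(X)-s_{p}(X))^\top (s_{q}(Y)-s_{p}(Y))\, k(X,Y)\big] = \sum_{\ell=1}^d \int_\mathcal{X}\int_\mathcal{X} g_\ell(x)\, k(x,y)\, g_\ell(y)\, dx\, dy .
\end{equation*}
Each summand is nonnegative: if $g_\ell = 0$ Lebesgue-a.e.\ the double integral vanishes, and otherwise $0 < \int_\mathcal{X} |g_\ell| < \infty$, so the integral is strictly positive by integral strict positive definiteness of $k$ (Condition~\ref{condition:restrict_k}). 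The denominator $\mathbb{E}_{X,Y \sim p}[k(X,Y)] = \int_\mathcal{X}\int_\mathcal{X} p(x)\,k(x,y)\,p(y)\,dx\,dy$ is strictly positive by the same property, taking $g = p$ (which satisfies $0 < \int p = 1 < \infty$), and it is finite since $k$ is bounded. Hence, using that the $\textsc{nksd}$ is finite (Proposition~\ref{proposition:u_statistic_nksd}), $\textsc{nksd}(p(x)\| q(x))$ is a ratio of a nonnegative quantity over a strictly positive one, so $\textsc{nksd}(p(x)\| q(x)) \ge 0$.

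For the equality case, $\textsc{nksd}(p(x)\| q(x)) = 0$ forces the numerator to vanish, hence each nonnegative summand $\int\int g_\ell\, k\, g_\ell$ to vanish, which by the contrapositive of integral strict positive definiteness forces $g_\ell = 0$ Lebesgue-a.e.\ for every $\ell$. Since existence of $s_{p}$ implies $p(x) > 0$ for all $x\in\mathcal{X}$, we get $\delta_\ell(x) = 0$ a.e., i.e.\ $s_{q}(x) = s_{p}(x)$ for Lebesgue-a.e.\ $x$; and because $s_{p},s_{q}$ are continuous on $\mathcal{X}$, in fact $\nabla_x \log q(x) = \nabla_x \log p(x)$ for \emph{all} $x\in\mathcal{X}$. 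Since $\mathcal{X}$ is open and connected, $\log q - \log p$ has zero gradient throughout and is therefore constant, so $q(x) = c\, p(x)$ for some $c > 0$; integrating both sides over $\mathcal{X}$ and using that $p$ and $q$ are probability densities yields $c = 1$, hence $p(x) = q(x)$ on $\mathcal{X}$, and thus almost everywhere. The converse is immediate: if $p(x) = q(x)$ a.e., then $s_{q} = s_{p}$ wherever both exist, so $\delta \equiv 0$, the numerator vanishes, and $\textsc{nksd}(p(x)\| q(x)) = 0$.

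The only genuinely delicate step is the passage from ``$\nabla_x \log p = \nabla_x \log q$ almost everywhere'' to ``$p = q$'': this requires both the continuity of the Stein score functions (to upgrade the almost-everywhere statement to an everywhere statement) and the connectedness of $\mathcal{X}$ (so that a function with everywhere-vanishing gradient is globally constant), and both are supplied by Condition~\ref{condition:restrict_p_q}. Everything else---nonnegativity of the quadratic form, strict positivity and finiteness of the denominator---follows by a direct application of Conditions~\ref{condition:restrict_p_q} and~\ref{condition:restrict_k}, with the normalization factor in the denominator playing no essential role beyond being a fixed positive finite constant.
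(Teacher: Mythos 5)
Your proof is correct and follows essentially the same route as the paper's: coordinate-wise decomposition of the numerator into quadratic forms $\int\!\int g_\ell\, k\, g_\ell$ with $g_\ell = \delta_\ell\, p$, strict positivity via the integrally strictly positive definite kernel, and the passage from equal score functions to equal densities via continuity, connectedness of $\mathcal{X}$, and the gradient theorem. The only difference is that you spell out the normalization step ($c=1$) slightly more explicitly than the paper does.
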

\noindent The proof is in Section~\ref{sec:proofs_nksd}.
Unlike the standard \textsc{ksd}, but like the \textsc{kl} divergence, the \textsc{nksd} exhibits subsystem independence \citep{Caticha2003-gb,Caticha2010-fm,Rezende2018-rp}: if two distributions $P$ and $Q$ have the same independence structure, then the total \textsc{nksd} separates into a sum of individual \textsc{nksd} terms. This is formalized in Proposition~\ref{prop:subsystem_indep}.
\begin{condition}[Shared independence structure] \label{condition:independence}
	Let $x = (x_1^\top, x_2^\top)^\top$ be a decomposition of a vector $x \in \mathbb{R}^d$ into two subvectors, $x_1$ and $x_2$.
	Assume $p(x)$ and $q(x)$ factor as $p(x) = p(x_1)p(x_2)$ and $q(x) = q(x_1)q(x_2)$, and that the kernel $k$ factors as $k(x,y) = k_1(x_1, y_1) k_2(x_2, y_2)$ where $k_1$ and $k_2$ both satisfy Condition~\ref{condition:restrict_k}.
\end{condition}
\begin{proposition}[Subsystem independence] \label{prop:subsystem_indep}
If Conditions~\ref{condition:restrict_p_q}, \ref{condition:restrict_k}, and \ref{condition:independence} hold, then
\begin{equation}
\textsc{nksd}(p(x)\|q(x)) = \textsc{nksd}(p(x_1)\|q(x_1)) + \textsc{nksd}(p(x_2)\|q(x_2))
\label{eqn:nksd_decomp}
\end{equation}
where the first term on the right-hand side uses kernel $k_1$ and the second term uses $k_2$.
\end{proposition}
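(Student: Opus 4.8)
The plan is to expand $\textsc{nksd}(p(x)\|q(x))$ straight from its definition in Equation~\ref{eqn:nksd} and show that, under the product structure of Condition~\ref{condition:independence}, both the numerator and the denominator factor across the two coordinate blocks; the claimed additive decomposition then drops out by elementary algebra.

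First I would record what Condition~\ref{condition:independence} buys us. Since $\log p(x) = \log p(x_1) + \log p(x_2)$, the Stein score splits as $s_p(x) = \big(s_{p,1}(x_1)^\top, s_{p,2}(x_2)^\top\big)^\top$ with $s_{p,\ell}(x_\ell) := \nabla_{x_\ell}\log p(x_\ell)$, and likewise $s_q(x) = \big(s_{q,1}(x_1)^\top, s_{q,2}(x_2)^\top\big)^\top$. I would also note that each marginal triple $\big(p(x_\ell), q(x_\ell), k_\ell\big)$ inherits Conditions~\ref{condition:restrict_p_q} and~\ref{condition:restrict_k}: continuity of $s_{p,\ell}, s_{q,\ell}$ follows from continuity of $s_p, s_q$; the $\ell$th factor space inherits connectedness and openness from $\mathcal{X}$; $s_{p,\ell}, s_{q,\ell}$ lie in $L^1$ of the $\ell$th marginal because $\|(a_1,a_2)\| \ge \|a_\ell\|$ (so $\mathbb{E}\|s_{p,\ell}\| \le \mathbb{E}\|s_p\| < \infty$, etc.); and $k_\ell$ satisfies Condition~\ref{condition:restrict_k} by hypothesis. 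Hence, by Proposition~\ref{proposition:u_statistic_nksd}, each $\textsc{nksd}(p(x_\ell)\|q(x_\ell))$ is well-defined and finite; write $N_\ell$ and $D_\ell$ for the numerator and denominator it takes in Equation~\ref{eqn:nksd}.

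Next comes the computation. Because the Euclidean inner product is block-diagonal, $\big(s_q(X)-s_p(X)\big)^\top\big(s_q(Y)-s_p(Y)\big) = \sum_{\ell=1}^{2}\big(s_{q,\ell}(X_\ell)-s_{p,\ell}(X_\ell)\big)^\top\big(s_{q,\ell}(Y_\ell)-s_{p,\ell}(Y_\ell)\big)$ with no cross terms. Multiplying by $k(X,Y) = k_1(X_1,Y_1)k_2(X_2,Y_2)$ and taking expectations, I would use that $X$ and $Y$ are i.i.d.\ draws from the product $p = p(x_1)\,p(x_2)$, so the four subvectors $X_1, X_2, Y_1, Y_2$ are mutually independent and in particular $(X_1,Y_1) \perp (X_2,Y_2)$. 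The $\ell=1$ term then factors as $N_1\,\mathbb{E}[k_2(X_2,Y_2)] = N_1 D_2$, the $\ell=2$ term as $D_1 N_2$, and the denominator as $\mathbb{E}[k(X,Y)] = D_1 D_2$, so $\textsc{nksd}(p(x)\|q(x)) = (N_1 D_2 + D_1 N_2)/(D_1 D_2) = N_1/D_1 + N_2/D_2 = \textsc{nksd}(p(x_1)\|q(x_1)) + \textsc{nksd}(p(x_2)\|q(x_2))$. Equivalently, one can work from the $s_p$-free representation of Proposition~\ref{proposition:u_statistic_nksd}: block-expanding the kernel and its derivatives shows $u(x,y) = u_1(x_1,y_1)k_2(x_2,y_2) + k_1(x_1,y_1)u_2(x_2,y_2)$, where $u_\ell$ is the $u$-function for the $\ell$th subsystem with kernel $k_\ell$, and the same independence argument finishes it.

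The one place that needs real care is the integrability bookkeeping that licenses interchanging expectation with the finite sum and applying Fubini to the factorizations: the scalar $\big(s_{q,\ell}-s_{p,\ell}\big)^\top\big(s_{q,\ell}-s_{p,\ell}\big)$ (or $s_q^\top s_q$ in the $u$-route) is not sign-definite, so finiteness of the whole \textsc{nksd} does not by itself justify splitting it. I would handle this with Cauchy--Schwarz and boundedness of $k$: $\bigl|\big(s_{q,\ell}(X_\ell)-s_{p,\ell}(X_\ell)\big)^\top\big(s_{q,\ell}(Y_\ell)-s_{p,\ell}(Y_\ell)\big)\,k(X,Y)\bigr| \le \|k\|_\infty\,\|s_{q,\ell}(X_\ell)-s_{p,\ell}(X_\ell)\|\,\|s_{q,\ell}(Y_\ell)-s_{p,\ell}(Y_\ell)\|$, whose expectation, using independence of $X$ and $Y$, equals $\|k\|_\infty\big(\mathbb{E}\|s_{q,\ell}-s_{p,\ell}\|\big)^2 < \infty$ by the $L^1$ property noted above (for the $u$-route, Proposition~\ref{proposition:u_statistic_nksd} applied to each subsystem already supplies $\mathbb{E}|u_\ell|<\infty$). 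Thus each summand is absolutely integrable, all interchanges are valid, and the proof is complete.
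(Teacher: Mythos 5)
Your proof is correct and follows essentially the same route as the paper's: split the score difference into its two blocks, factor the kernel and the expectations using the product structure and independence, and cancel the common factors $\mathbb{E}[k_1]$ and $\mathbb{E}[k_2]$ (which are positive by integral strict positive definiteness). The extra Cauchy--Schwarz integrability bookkeeping you add is a sound refinement of a step the paper leaves implicit, but it does not change the argument.
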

\noindent See Section~\ref{sec:proofs_nksd} for the proof.
Subsystem independence is powerful since it separates the problem of evaluating the foreground model from that of evaluating the background model.
A modified version applies to the estimator $\widehat{\textsc{nksd}}(p\|q)$ (Equation~\ref{eqn:est_nksd});
see Proposition~\ref{prop:approx_subsystem_indep}.

\subsection{Bernstein--von Mises theorem for the NKSD posterior} \label{sec:nksd_bvm}

In this section, we establish asymptotic properties of the SVC and, more broadly, of its corresponding generalized posterior, which we refer to as the \textsc{nksd} posterior, defined as
\begin{equation}
	\pi_N(\theta) \propto \exp\!\Big(-\frac{N}{T} \widehat{\textsc{nksd}}(p_0(x_\F)\| q(x_\F|\theta))\Big) \pi(\theta).
\end{equation}
In particular, in Theorem~\ref{thm:marginal_ksd}, we show that the \textsc{nksd} posterior concentrates and is asymptotically normal, and we establish that the Laplace approximation to the SVC (Equation~\ref{eqn:ksd_laplace}) is asymptotically correct. These results form a Bayesian counterpart to those of \citet{Barp2019-ut}, who establish the consistency and asymptotic normality of minimum $\textsc{ksd}$ estimators. Thus, in both the frequentist and Bayesian contexts, we can replace the average log likelihood with the negative $\textsc{ksd}$ and obtain similar key properties.
Our results in this section do not depend on whether or not we are working with a foreground subspace, so we suppress the $x_\F$ notation.

Let $\Theta\subseteq\mathbb{R}^m$, and let $\{Q_\theta : \theta \in \Theta\}$ be a family of probability measures on $\mathcal{X} \subseteq\mathbb{R}^d$ having densities $q_\theta(x)$ with respect to Lebesgue measure. For notational convenience, we sometimes write $q(x|\theta)$ instead of $q_\theta(x)$.
Suppose the data $X^{(1)},\ldots,X^{(N)}$ are i.i.d.\ samples from some probability measure $P_0$ on $\mathcal{X}$ having density $p_0(x)$ with respect to Lebesgue measure.
To ensure the \textsc{nksd} satisfies the properties of a divergence for all $q_\theta$, and that convergence of $\widehat{\textsc{nksd}}$ is uniform on compact subsets of $\Theta$ (Proposition~\ref{proposition:uniform_convergence}), we require the following.
\begin{condition} \label{condition:apply_nksd}
	Assume Conditions~\ref{condition:restrict_p_q} and \ref{condition:restrict_k} hold for $p_0$, $k$, and $q_\theta$ for all $\theta \in \Theta$. Further, assume that the kernel $k$ has continuous and bounded partial derivatives up to and including second order, and $k(x, y) > 0$ for all $x, y \in \mathcal{X}$.
\end{condition}

Now we can set up the generalized posterior. First define
\begin{equation}
\label{eqn:f_N}
f_N(\theta) := \frac{1}{T}\widehat{\textsc{nksd}}(p_0(x)\| q(x|\theta)) = \frac{1}{T}\frac{\sum_{i \neq j} u_\theta(X^{(i)}, X^{(j)})}{\sum_{i \neq j} k(X^{(i)}, X^{(j)})},
\end{equation}
where $u_\theta(x,y)$ is the $u(x,y)$ function from Equation~\ref{eqn:est_nksd} with $q_\theta$ in place of $q$.
For the case of $N = 1$, we define $f_1(\theta) = 0$ by convention.
Note that $-N f_N(\theta)$ plays the role of the log likelihood.
Also define
\begin{align} \label{eqn:f}
	f(\theta) &:= \frac{1}{T}\textsc{nksd}(p_0(x)\| q(x|\theta)), \\
	z_N &:= \int_\Theta \exp(-N f_N(\theta))\pi(\theta)d\theta, \notag\\
	\pi_N(\theta) &:= \frac{1}{z_N}\exp(-N f_N(\theta))\pi(\theta), \notag
\end{align}
where $\pi(\theta)$ is a prior density on $\Theta$.  Note that $\pi_N(\theta)d\theta$ is the \textsc{NKSD} posterior and $z_N$ is the corresponding generalized marginal likelihood employed in the SVC.
Denote the gradient and Hessian of $f$ by $f'(\theta) = \nabla_\theta f(\theta)$ and $f''(\theta) = \nabla^2_\theta f(\theta)$, respectively. To ensure that the \textsc{nksd} posterior is well defined and has an isolated maximum, we assume the following condition.
\begin{condition} \label{condition:posterior_restrictions}
Suppose $\Theta \subseteq \mathbb{R}^m$ is a convex set and (a) $\Theta$ is compact or (b) $\Theta$ is open and $f_N$ is convex on $\Theta$ with probability 1 for all $N$.
Assume $z_N < \infty$ a.s.\ for all $N$.
Assume $f$ has a unique minimizer $\theta_{*}\in\Theta$,
$f''(\theta_{*})$ is invertible, $\pi$ is continuous at $\theta_{*}$, and $\pi(\theta_{*}) > 0$.
\end{condition}
\noindent By Proposition~\ref{thm:nksd_divergence}, $f$ has a unique minimizer whenever $\{Q_\theta : \theta\in\Theta\}$ is well-specified and identifiable, that is, when $Q_\theta = P_0$ for some $\theta$ and $\theta \mapsto Q_\theta$ is injective.

In Theorem~\ref{thm:marginal_ksd} below, we establish the following results: (1) the minimum $\widehat{\textsc{nksd}}$ converges to the minimum $\textsc{nksd}$; (2) $\pi_N$ concentrates around the minimizer of the $\textsc{nksd}$; (3) the Laplace approximation to $z_N$ is asymptotically correct; and (4) $\pi_N$ is asymptotically normal in the sense of Bernstein--von Mises. The primary regularity conditions we need for this theorem are restraints on the derivatives of $s_{q_\theta}$ with respect to $\theta$ (Condition~\ref{condition:marginal_ksd}). 
Our proof of Theorem~\ref{thm:marginal_ksd} relies on the theory of generalized posteriors developed by \citet{Miller2019-ur}. 
We use $\|\cdot \|$ for the Euclidean--Frobenius norms: for vectors $A \in \mathbb{R}^D$, $\|A\| = (\sum_i A_i^2)^{1/2}$; for matrices $A \in \mathbb{R}^{D \times D}$, $\|A\| = (\sum_{i,j} A^2_{i,j})^{1/2}$; for tensors $A \in \mathbb{R}^{D \times D \times D}$, $\|A\| = (\sum_{i,j,k} A^2_{i,j,k})^{1/2}$; and so on.

\begin{theorem}
\label{thm:marginal_ksd}

If Conditions~\ref{condition:apply_nksd}, \ref{condition:posterior_restrictions}, and \ref{condition:marginal_ksd} hold, then there is a sequence $\theta_N \to \theta_{*}$ a.s.\ such that:
\begin{enumerate}
    \item $f_N(\theta_N) \to f(\theta_{*})$, $f'_N(\theta_N) = 0$ for all $N$ sufficiently large, and $f_N''(\theta_N) \to f''(\theta_{*})$ a.s.,
    \item letting $B_\epsilon(\theta_{*}) := \{\theta \in \mathbb{R}^m : \|\theta - \theta_{*}\| < \epsilon\}$, we have
    \begin{equation}
        \int_{B_\epsilon(\theta_{*})} \pi_N(\theta) d\theta \xrightarrow[N \to \infty]{\textup{a.s.}} 1 \text{ for all } \epsilon > 0,
    \end{equation}
    
    \item 
    \begin{equation}
        z_N \sim \frac{\exp(-N f_N(\theta_N)) \pi(\theta_{*})}{|\det f''(\theta_{*}) |^{1/2}} \left(\frac{2\pi}{N}\right)^{m/2}
    \end{equation}
    almost surely, where $a_N \sim b_N$ means that $a_N/b_N \to 1$ as $N \to \infty$, and 
    
    \item letting $h_N$ denote the density of $\sqrt{N}(\theta - \theta_N)$ when $\theta$ is sampled from $\pi_N$, we have that $h_N$ converges to $\mathcal{N}(0, f''(\theta_{*})^{-1})$ in total variation, that is,
    \begin{equation}
    \int_{\mathbb{R}^m} \Big| h_N(\tilde{\theta}) - \mathcal{N}(\tilde{\theta} \mid 0, f''(\theta_{*})^{-1}) \Big| d\tilde{\theta} \xrightarrow[N \to \infty]{\textup{a.s.}} 0.
    \end{equation}
\end{enumerate}
\end{theorem}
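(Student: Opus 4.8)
The plan is to verify the hypotheses of the general theory of generalized posteriors in \citet{Miller2019-ur}, applied with $-N f_N(\theta)$ playing the role of the log-likelihood and with $f(\theta) = \frac{1}{T}\textsc{nksd}(p_0(x)\|q(x|\theta))$ as the limiting objective, and then read off statements (1)--(4) as the conclusions of that theory. The whole argument is conducted on the almost-sure event on which the relevant strong laws of large numbers for $U$-statistics hold simultaneously; on that event $f_N$, $\nabla_\theta f_N$, $\nabla_\theta^2 f_N$, $\nabla_\theta^3 f_N$ are deterministic functions and the deterministic conditions of \citet{Miller2019-ur} can be checked directly. This mirrors, on the Bayesian side, the way \citet{Barp2019-ut} handle the minimum Stein discrepancy estimator on the frequentist side.

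First I would set up the $U$-statistic representation. Writing $K_N := \binom{N}{2}^{-1}\sum_{i<j} k(X^{(i)},X^{(j)})$ and $U_N(\theta) := \binom{N}{2}^{-1}\sum_{i<j} u_\theta(X^{(i)},X^{(j)})$, we have $f_N(\theta) = \frac{1}{T}\,U_N(\theta)/K_N$. Because $K_N$ does not depend on $\theta$, the $\theta$-derivatives of $f_N$ are $U$-statistics with kernels $\nabla_\theta u_\theta$, $\nabla_\theta^2 u_\theta$, $\nabla_\theta^3 u_\theta$, divided by the single positive scalar $K_N$; and $K_N \to \kappa := \mathbb{E}_{X,Y\sim p_0}[k(X,Y)] > 0$ by the $U$-statistic SLLN, using that $k>0$ and bounded under Condition~\ref{condition:apply_nksd}. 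Condition~\ref{condition:marginal_ksd} supplies $P_0$-integrable envelopes for $u_\theta$ and its first three $\theta$-derivatives, uniformly over $\theta$ in compact sets, so the $U$-statistic SLLN and its uniform version give uniform almost-sure convergence of $f_N$ and its first three derivatives to $f$ and its derivatives on compact subsets of $\Theta$; this is precisely the content of Proposition~\ref{proposition:uniform_convergence}, which I would invoke rather than re-prove. Statement (1) then follows: by Proposition~\ref{thm:nksd_divergence} and Condition~\ref{condition:posterior_restrictions}, $f$ has a unique minimizer $\theta_*$ with $f''(\theta_*)$ invertible, hence positive definite; in case (a) ($\Theta$ compact) take $\theta_N \in \argmin_\Theta f_N$, so uniform convergence forces $\theta_N\to\theta_*$, while interiority gives $\nabla_\theta f_N(\theta_N)=0$ eventually and uniform convergence of Hessians gives $f_N''(\theta_N)\to f''(\theta_*)$ and $f_N(\theta_N)\to f(\theta_*)$; in case (b) ($\Theta$ open, $f_N$ convex) the same conclusions follow from stability of minimizers of convex functions under locally uniform convergence, together with invertibility of the limiting Hessian.

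Statements (2)--(4) are then obtained by feeding the above into \citet{Miller2019-ur}. Concentration (statement 2) additionally needs a well-separation bound: for every $\epsilon>0$, $\liminf_N \inf_{\theta\in\Theta,\ \|\theta-\theta_*\|\ge\epsilon}\big(f_N(\theta) - f_N(\theta_N)\big) > 0$ almost surely. On a compact $\Theta$ this is immediate from uniform convergence and uniqueness of the minimizer of $f$; on an open $\Theta$ it follows from convexity of $f_N$ together with the local quadratic lower bound at $\theta_*$, and $z_N<\infty$ a.s.\ controls the prior mass far from $\theta_*$. The Laplace approximation (statement 3) and Bernstein--von Mises conclusion (statement 4) then follow from the standard argument inside \citet{Miller2019-ur}: Taylor-expand $N f_N(\theta) = N f_N(\theta_N) + \tfrac{N}{2}(\theta-\theta_N)^\top f_N''(\theta_N)(\theta-\theta_N) + O(N\|\theta-\theta_N\|^3)$ on a neighborhood $B_\epsilon(\theta_*)$, rescale $\tilde\theta = \sqrt{N}(\theta-\theta_N)$, use the third-derivative envelope from Condition~\ref{condition:marginal_ksd} together with statement (2) and dominated convergence to kill the cubic remainder and the prior's deviation from $\pi(\theta_*)$, and identify the resulting Gaussian integral; I would cite that theorem rather than redo the computation.

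The main obstacle is the work in the second paragraph: obtaining the uniform-in-$\theta$, almost-sure control of the ratio-of-$U$-statistics $f_N$ and of its first three $\theta$-derivatives, with $P_0\times P_0$-integrable envelopes, from Condition~\ref{condition:marginal_ksd} --- in particular, checking that differentiating $u_\theta(x,y)$ in $\theta$, which introduces $\nabla_\theta s_{q_\theta}$, $\nabla_\theta^2 s_{q_\theta}$, $\nabla_\theta^3 s_{q_\theta}$ contracted against $s_{q_\theta}$, $k$, $\nabla k$, and $\nabla\nabla^\top k$, yields kernels dominated by functions in $L^1(P_0\times P_0)$, so the (uniform) $U$-statistic SLLN applies and division by $K_N\to\kappa>0$ is harmless. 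The secondary subtlety is the non-compact case: since $f_N$ is itself random, ruling out escape of mass of $\pi_N$ toward the boundary or infinity cannot be done pointwise and instead leans on the convexity hypothesis in Condition~\ref{condition:posterior_restrictions}(b), which pins the global shape of $f_N$ from its behavior near $\theta_*$, together with $z_N<\infty$; treating cases (a) and (b) uniformly is where most of the care is needed. Once these are in place, the remainder is bookkeeping within the \citet{Miller2019-ur} framework.
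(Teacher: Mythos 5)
Your overall architecture is the same as the paper's: verify the hypotheses of Theorem 3.2 of \citet{Miller2019-ur} with $-Nf_N$ as the log-likelihood surrogate, treat the compact and convex cases separately, and read off (1)--(4). The separation bound in case (a), the dense-subset-plus-convexity upgrade in case (b), and the deferral of the Laplace/BvM computations to \citet{Miller2019-ur} all match the paper. However, there is a concrete gap in your second paragraph: you propose to obtain uniform almost-sure convergence of $f_N$, $f_N'$, $f_N''$, \emph{and} $f_N'''$ by applying a uniform $U$-statistic SLLN to each derivative kernel, which requires $P_0\times P_0$-integrable, $\theta$-uniform envelopes for $\nabla_\theta^2 u_\theta$ and $\nabla_\theta^3 u_\theta$. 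Condition~\ref{condition:marginal_ksd} does not supply these: it gives $L^1(P_0)$ envelopes $g_{0,C},g_{1,C}$ only for $s_{q_\theta}$ and $\nabla_\theta s_{q_\theta}$, while for $\nabla_\theta^2 s_{q_\theta}$ and $\nabla_\theta^3 s_{q_\theta}$ it assumes only that the \emph{empirical} averages $\frac{1}{N}\sum_i\|\nabla_\theta^j s_{q_\theta}(X^{(i)})\|$ are a.s.\ bounded over $E$, with no integrability under $P_0$. So the SLLN step for the second and third derivatives cannot be executed under the stated hypotheses, and you also misattribute this content to Proposition~\ref{proposition:uniform_convergence}, which only asserts uniform convergence of $f_N$ itself.

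The paper's route around this is worth noting because it is exactly the repair you need: it proves only (i) $\sup_C|f_N-f|\to 0$ a.s.\ (Proposition~\ref{proposition:uniform_convergence}, using just $g_{0,C},g_{1,C}$ and equicontinuity via the mean value theorem) and (ii) uniform \emph{boundedness} of $(f_N''')$ on $E$ (Proposition~\ref{proposition:nksd_uniform_bound}, which is where the a.s.-bounded empirical averages are used), and then invokes Theorem 3.4 of \citet{Miller2019-ur} to deduce that $f''$ exists and $f_N''\to f''$ uniformly on $E$ --- an Arzel\`a--Ascoli-type conclusion that never requires $f_N''$ or $f_N'''$ to converge via an SLLN. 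If you replace your ``envelopes for all three derivatives'' step with this boundedness-plus-interpolation argument, the rest of your plan goes through as written.
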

\noindent The proof is in Section~\ref{sec:si_bvm_proof}. We write $\nabla_\theta^2 s_{q_\theta}$ to denote the tensor in $\mathbb{R}^{d\times m\times m}$ in which entry $(i,j,k)$ is $\partial^2 s_{q_\theta}(x)_i / \partial \theta_j \partial \theta_k$.
Likewise, $\nabla_\theta^3 s_{q_\theta}$ denotes the tensor in $\mathbb{R}^{d\times m\times m\times m}$ in which entry $(i,j,k,\ell)$ is $\partial^3 s_{q_\theta}(x)_i / \partial \theta_j \partial \theta_k \partial \theta_\ell$.
We write $\mathbb{N}$ to denote the set of natural numbers.

\begin{condition}[Stein score regularity]
\label{condition:marginal_ksd}
Assume $s_{q_\theta}(x)$ has continuous third-order partial derivatives with respect to the entries of $\theta$ on $\Theta$. 
Suppose that for any compact, convex subset $C \subseteq \Theta$, there exist continuous functions $g_{0,C}, g_{1,C} \in L^1(P_0)$ such that for all $\theta \in C$, $x\in\mathcal{X}$,
\begin{equation}
\label{eqn:score_bounds}
\begin{split}
    \|s_{q_\theta}(x)\| &\le g_{0,C}(x),\\
    \|\nabla_\theta s_{q_\theta}(x)\| &\le g_{1,C}(x).
\end{split}
\end{equation}
Further, assume there is an open, convex, bounded set $E\subseteq\Theta$ such that $\theta_{*}\in E$, $\bar{E} \subseteq \Theta$, and the sets
\begin{equation}
\label{eqn:score_d2_bound}
\Big\{\frac{1}{N} \sum_{i=1}^N \|\nabla_\theta^2 s_{q_\theta}(X^{(i)})\| : N \in \mathbb{N}, \theta \in E\Big\},
\end{equation}
\begin{equation}
\label{eqn:score_d3_bound}
\Big\{\frac{1}{N} \sum_{i=1}^N \|\nabla_\theta^3 s_{q_\theta}(X^{(i)})\| : N \in \mathbb{N}, \theta \in E\Big\}
\end{equation}
are bounded with probability 1.
\end{condition}
Next, Theorem~\ref{thm:expo_marginal_ksd} shows that in the special case where $q_\theta(x)$ is an exponential family, many of the conditions of Theorem~\ref{thm:marginal_ksd} are automatically satisfied.

\begin{theorem}
\label{thm:expo_marginal_ksd}
Suppose $\{Q_\theta : \theta\in\Theta\}$ is an exponential family with densities of the form $q_\theta(x) = \lambda(x) \exp(\theta^\top t(x) - \kappa(\theta))$ for $x\in\mathcal{X}\subseteq\mathbb{R}^d$. Assume $\Theta = \{\theta \in \mathbb{R}^m : |\kappa(\theta)| < \infty\}$, and assume $\Theta$ is convex, open, and nonempty. 
Assume $\log\lambda(x)$ and $t(x)$ are continuously differentiable on $\mathcal{X}$, 
$\|\nabla_x \log \lambda(x)\|$ and $\|\nabla_x t(x)\|$ are in $L^1(P_0)$, 
and the rows of the Jacobian matrix $\nabla_x t(x) \in \mathbb{R}^{m \times d}$ are linearly independent with positive probability under $P_0$.
Suppose Condition~\ref{condition:apply_nksd} holds, $f$ has a unique minimizer $\theta_{*}\in\Theta$, 
the prior $\pi$ is continuous at $\theta_{*}$, and $\pi(\theta_{*}) > 0$.
Then the assumptions of Theorem~\ref{thm:marginal_ksd} are satisfied for all $N$ sufficiently large.
\end{theorem}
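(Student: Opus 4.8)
The plan is to verify, one by one, the hypotheses of Theorem~\ref{thm:marginal_ksd} --- namely Conditions~\ref{condition:apply_nksd}, \ref{condition:posterior_restrictions}, and \ref{condition:marginal_ksd} --- using the exponential family structure, since Condition~\ref{condition:apply_nksd} and the statements about the unique minimizer and the prior are assumed outright. The key observation is that for $q_\theta(x) = \lambda(x)\exp(\theta^\top t(x) - \kappa(\theta))$, the Stein score is
\begin{equation*}
	s_{q_\theta}(x) = \nabla_x \log \lambda(x) + (\nabla_x t(x))^\top \theta,
\end{equation*}
which is \emph{affine} in $\theta$. Consequently $\nabla_\theta s_{q_\theta}(x) = \nabla_x t(x)$ is independent of $\theta$, and $\nabla_\theta^2 s_{q_\theta} = \nabla_\theta^3 s_{q_\theta} = 0$ identically. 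This immediately trivializes the derivative bounds in Condition~\ref{condition:marginal_ksd}: the third-order partials exist and are continuous (they vanish); the bound $\|s_{q_\theta}(x)\| \le g_{0,C}(x)$ holds with $g_{0,C}(x) = \|\nabla_x\log\lambda(x)\| + \big(\sup_{\theta\in C}\|\theta\|\big)\|\nabla_x t(x)\|$, which lies in $L^1(P_0)$ by the assumed integrability of $\|\nabla_x\log\lambda(x)\|$ and $\|\nabla_x t(x)\|$ and compactness of $C$; the bound $\|\nabla_\theta s_{q_\theta}(x)\| \le g_{1,C}(x) = \|\nabla_x t(x)\|$ holds for the same reason; and the sets in \eqref{eqn:score_d2_bound} and \eqref{eqn:score_d3_bound} are the singleton $\{0\}$, hence trivially bounded, for any choice of bounded open $E\ni\theta_*$ with $\bar E\subseteq\Theta$ (such $E$ exists because $\Theta$ is open).

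Next I would handle Condition~\ref{condition:posterior_restrictions}. Convexity of $\Theta$ is assumed. Since $\Theta$ need not be compact, I would verify alternative (b): that $f_N$ is convex on $\Theta$ with probability $1$ for all $N$. Because $s_{q_\theta}(x)$ is affine in $\theta$, the integrand $u_\theta(x,y)$ appearing in the numerator of $\widehat{\textsc{nksd}}$ --- which is quadratic in the $s_q$ terms --- is a quadratic form in $\theta$; indeed this is exactly the computation referenced in Section~\ref{sec:si_conjugacy}, giving $\widehat{\textsc{nksd}} = \theta^\top A\,\theta + B^\top\theta + C$ with $A = \frac{1}{\sum_{i\neq j}k}\sum_{i\neq j}(\nabla_x t(X^{(i)}))^\top (\nabla_x t(X^{(j)}))\,k(X^{(i)},X^{(j)})\big/(\cdot)$, or more precisely the U-statistic average of $\nabla_x t(x)\,s\,\nabla_x t(y)^\top$ weighted by $k$. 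The matrix $A$ is positive semidefinite: it is (up to the positive normalizing denominator) the U-statistic version of a kernel Gram-type expression, and in fact $A \succeq 0$ follows from the reproducing-kernel representation of the KSD numerator as a squared RKHS norm (as in \citealp{Liu2016-bp,Barp2019-ut}), with the affine-in-$\theta$ score making that squared norm a genuine quadratic form in $\theta$. Hence $\theta\mapsto f_N(\theta) = \frac1T(\theta^\top A\theta + B^\top\theta + C)$ is convex, giving (b). Finiteness $z_N<\infty$ a.s.: if $A\succ 0$ this is automatic for any proper prior (Gaussian tails dominate), and if $A$ is merely positive semidefinite one argues that the assumed uniqueness of the minimizer $\theta_*$ of the population $f$ forces, asymptotically, enough curvature; alternatively one restricts attention to $N$ large enough that $A$ is a.s.\ positive definite, which is where the hypothesis that the rows of $\nabla_x t(x)$ are linearly independent with positive probability enters --- it guarantees that the U-statistic $A$ is strictly positive definite almost surely once $N$ is large (by a law-of-large-numbers / non-degeneracy argument on the U-statistic). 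Finally, $f''(\theta_*) = \frac2T A_\infty$ where $A_\infty$ is the population limit of $A$; invertibility of $f''(\theta_*)$ is assumed, and continuity/positivity of $\pi$ at $\theta_*$ is assumed.

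For Condition~\ref{condition:apply_nksd} itself --- which is listed as a hypothesis of Theorem~\ref{thm:expo_marginal_ksd} --- there is nothing to prove; similarly the ``unique minimizer, $\pi$ continuous and positive at $\theta_*$'' clauses are assumed. So the only real content is: (i) reducing the Stein-score regularity condition to the affine structure, (ii) establishing convexity of $f_N$ via the quadratic-form representation, and (iii) the positive-definiteness / finiteness argument for large $N$. I expect step (iii) --- specifically, showing that the data-dependent matrix $A$ is almost surely invertible for all sufficiently large $N$, so that $z_N<\infty$ and the ``$N$ sufficiently large'' qualifier in the conclusion is justified --- to be the main obstacle: it requires combining the linear-independence-with-positive-probability hypothesis on $\nabla_x t(x)$ with a U-statistic concentration argument to rule out asymptotic degeneracy of the quadratic form, and to reconcile this with the possibility that $A$ is only positive semidefinite at small $N$. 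Everything else is either assumed or follows immediately from $\nabla_\theta^2 s_{q_\theta}\equiv 0$.
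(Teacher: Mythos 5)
Your overall strategy matches the paper's: exploit the affine dependence of $s_{q_\theta}$ on $\theta$ to trivialize Condition~\ref{condition:marginal_ksd}, and use the quadratic-in-$\theta$ form of $f_N$ to verify alternative (b) of Condition~\ref{condition:posterior_restrictions}. However, there are two genuine gaps. First, you assert that ``invertibility of $f''(\theta_*)$ is assumed.'' It is not: Theorem~\ref{thm:expo_marginal_ksd} assumes only that $f$ has a unique minimizer, while Condition~\ref{condition:posterior_restrictions} separately requires $f''(\theta_*)$ to be invertible, and this must be \emph{proved}. This is exactly where the hypothesis that the rows of $\nabla_x t(x)$ are linearly independent with positive probability does its main work: writing $f''(\theta) = \tfrac{2}{TK}\int\!\!\int (\nabla_x t(x))(\nabla_y t(y))^\top k(x,y)\,P_0(dx)P_0(dy)$ (constant in $\theta$), one takes $v\neq 0$, sets $g(x) = (\nabla_x t(x))^\top v\,p_0(x)$, notes that some coordinate $g_i$ has $0 < \int |g_i| < \infty$ by the linear-independence and $L^1$ hypotheses, and concludes $v^\top f''(\theta)v > 0$ from the integrally strictly positive definite property of $k$. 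Without this step the verification of Condition~\ref{condition:posterior_restrictions} is incomplete.

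Second, your claim that the empirical matrix $A$ is positive semidefinite for \emph{all} $N$ ``from the reproducing-kernel representation of the KSD numerator as a squared RKHS norm'' does not hold for the U-statistic estimator: the sum $\sum_{i\neq j}$ excludes the diagonal, so it equals a squared RKHS norm minus the nonnegative diagonal contribution and can fail to be positive semidefinite at finite $N$. This is precisely why the conclusion of the theorem is qualified by ``for all $N$ sufficiently large.'' The clean resolution of what you call the ``main obstacle'' is the paper's route: having established that the population Hessian $f''$ is positive definite (step one above), apply the strong law of large numbers for U-statistics to get $f_N''(\theta) \to f''(\theta)$ a.s., and use continuity of the smallest eigenvalue to conclude that $f_N''$ is positive definite --- hence $f_N$ convex and $z_N \leq \exp(-N \inf_\theta f_N(\theta)) < \infty$ --- for all $N$ sufficiently large. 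Your proposed direct ``U-statistic concentration / non-degeneracy'' argument on the empirical $A$ is both harder and unnecessary; the population-then-limit argument sidesteps it entirely.
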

\noindent The proof is in Section~\ref{sec:si_bvm_proof}.


\subsection{Asymptotics of the Stein volume criterion} \label{sec:theory_detailed_asym}

The Laplace approximation to the SVC uses the estimate $\widehat{\textsc{nksd}}$ and its minimizer $\theta_N$, rather than the true \textsc{nksd} and its minimizer $\theta_*$. 
To establish the consistency properties of the SVC, we need to understand the relationship between the two. 
To do so, we adapt a standard approach to performing such an analysis of the marginal likelihood, for instance, as in Theorem 1 of \citet{Dawid2011-kb}.
\begin{theorem} \label{prop:laplace_scaling} 
Assume the conditions of Theorem~\ref{thm:marginal_ksd} hold, and assume $s_{q_{\theta_{*}}}$ and $\nabla_\theta \big\vert_{\theta=\theta_{*}} s_{q_\theta}$ are in $L^2(P_0)$. Then as $N\to\infty$,
\begin{equation} \label{eqn:fN_O1}
	f_N(\theta_N) - f_N(\theta_{*}) = O_{P_0}(N^{-1}).
\end{equation}
Further, if $\textsc{nksd}(p_0(x)\| q(x|\theta_{*})) > 0$ then
\begin{equation} \label{eqn:fN_OsN}
	f_N(\theta_{*}) - f(\theta_{*}) = O_{P_0}(N^{-1/2}),
\end{equation}
whereas if $\textsc{nksd}(p_0(x)\| q(x|\theta_{*})) = 0$ then
\begin{equation} \label{eqn:fN_O1_thetastar}
	f_N(\theta_{*}) - f(\theta_{*}) = O_{P_0}(N^{-1}).
\end{equation}
\end{theorem}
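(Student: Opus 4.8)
The plan is to reduce everything to fluctuation bounds for the two U-statistics underlying the estimator. Write $U_N(\theta) := \binom{N}{2}^{-1}\sum_{i<j} u_\theta(X^{(i)},X^{(j)})$ and $K_N := \binom{N}{2}^{-1}\sum_{i<j} k(X^{(i)},X^{(j)})$, so that $f_N(\theta) = \tfrac{1}{T} U_N(\theta)/K_N$, with population counterparts $U(\theta) := \mathbb{E}_{X,Y\sim P_0}[u_\theta(X,Y)] = T K\, f(\theta)$ and $K := \mathbb{E}_{X,Y\sim P_0}[k(X,Y)] > 0$ (positive since $k > 0$ by Condition~\ref{condition:apply_nksd}). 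Boundedness of $k$ and of its first two partial derivatives (Condition~\ref{condition:apply_nksd}), together with the hypothesis $s_{q_{\theta_*}}\in L^2(P_0)$ and $\nabla_\theta s_{q_\theta}|_{\theta=\theta_*}\in L^2(P_0)$, imply via independence of $X$ and $Y$ that the kernels $k$, $u_{\theta_*}$, and $\nabla_\theta u_\theta|_{\theta=\theta_*}$ all lie in $L^2(P_0\times P_0)$; hence the corresponding U-statistics each have variance $O(N^{-1})$, so by Chebyshev, $K_N - K = O_{P_0}(N^{-1/2})$, $U_N(\theta_*) - U(\theta_*) = O_{P_0}(N^{-1/2})$, and the gradient U-statistic $\binom{N}{2}^{-1}\sum_{i<j}\nabla_\theta u_\theta(X^{(i)},X^{(j)})|_{\theta=\theta_*}$ is $O_{P_0}(N^{-1/2})$, while $K_N \to K > 0$ a.s. Note this last gradient U-statistic has mean $\nabla_\theta U(\theta)|_{\theta_*} = T K\, f'(\theta_*) = 0$ (differentiation under the integral being justified by the domination in Condition~\ref{condition:marginal_ksd}).

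For \eqref{eqn:fN_O1}, I would first upgrade the a.s.\ convergence $\theta_N\to\theta_*$ given by Theorem~\ref{thm:marginal_ksd} to the rate $\|\theta_N - \theta_*\| = O_{P_0}(N^{-1/2})$. Since $f'_N(\theta_N) = 0$ for all large $N$ and $f'(\theta_*) = 0$, the fundamental theorem of calculus along the segment $[\theta_*,\theta_N]$ (contained in $\Theta$ by convexity, Condition~\ref{condition:posterior_restrictions}) gives $\theta_N - \theta_* = -H_N^{-1} f'_N(\theta_*)$ with $H_N := \int_0^1 f''_N(\theta_* + t(\theta_N-\theta_*))\,dt$; by $\theta_N\to\theta_*$, continuity of $f''$, and the uniform convergence $f''_N\to f''$ on a neighborhood of $\theta_*$ supplied in the proof of Theorem~\ref{thm:marginal_ksd}, $H_N\to f''(\theta_*)$ a.s., which is invertible, so $\|H_N^{-1}\| = O_{P_0}(1)$; and $f'_N(\theta_*) = \tfrac{1}{T K_N}\nabla_\theta U_N(\theta)|_{\theta_*}$ is $O_{P_0}(N^{-1/2})$ by the previous paragraph. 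Then Taylor-expanding $f_N$ about $\theta_N$ and using $f'_N(\theta_N) = 0$ yields $f_N(\theta_*) - f_N(\theta_N) = \tfrac12(\theta_*-\theta_N)^\top f''_N(\xi_N)(\theta_*-\theta_N) = O_{P_0}(N^{-1})$ for some $\xi_N\in[\theta_*,\theta_N]$, since $\|f''_N(\xi_N)\| = O_{P_0}(1)$; combined with $f_N(\theta_N) - f_N(\theta_*)\le 0$ (as $\theta_N$ minimizes $f_N$), this gives \eqref{eqn:fN_O1}.

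For \eqref{eqn:fN_OsN} and \eqref{eqn:fN_O1_thetastar}, I would write
\[
f_N(\theta_*) - f(\theta_*) = \frac{1}{T}\Big(\frac{U_N(\theta_*)}{K_N} - \frac{U(\theta_*)}{K}\Big) = \frac{1}{T}\Big(\frac{U_N(\theta_*) - U(\theta_*)}{K_N} - \frac{U(\theta_*)\,(K_N - K)}{K_N K}\Big).
\]
If $\textsc{nksd}(p_0(x)\|q(x|\theta_*)) > 0$, then $U(\theta_*) = T K\,\textsc{nksd}(p_0(x)\|q(x|\theta_*)) \neq 0$, and since $U_N(\theta_*) - U(\theta_*)$ and $K_N - K$ are both $O_{P_0}(N^{-1/2})$ with $1/K_N = O_{P_0}(1)$, we get \eqref{eqn:fN_OsN}. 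If $\textsc{nksd}(p_0(x)\|q(x|\theta_*)) = 0$, then Proposition~\ref{thm:nksd_divergence} forces $p_0 = q_{\theta_*}$ a.e., hence $s_{q_{\theta_*}} = s_{p_0}$ and $U(\theta_*) = 0$, so the second term vanishes identically. The crucial point is that $u_{\theta_*}$ is then a first-order degenerate U-statistic kernel: the Stein-class property of $k$ (Condition~\ref{condition:restrict_k}), together with differentiation under the integral, gives $\mathbb{E}_{Y\sim P_0}[u_{\theta_*}(x,Y)] = 0$ for a.e.\ $x$ --- this is exactly the computation underlying Proposition~\ref{proposition:u_statistic_nksd}, cf.\ \citet{Liu2016-bp}. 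For a first-order degenerate degree-two U-statistic with $L^2(P_0\times P_0)$ kernel one has $\mathrm{Var}(U_N(\theta_*)) = O(N^{-2})$, so $U_N(\theta_*) = O_{P_0}(N^{-1})$ by Chebyshev, and since $K_N\to K>0$ a.s., $f_N(\theta_*) - f(\theta_*) = \tfrac{1}{T} U_N(\theta_*)/K_N = O_{P_0}(N^{-1})$, which is \eqref{eqn:fN_O1_thetastar}.

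The main obstacle is the well-specified case: one must (i) verify the degeneracy identity $\mathbb{E}_{Y\sim P_0}[u_{\theta_*}(x,Y)] = 0$ carefully from the Stein-class condition --- in particular, justifying that the $x$-Stein operator may be pulled outside $\mathbb{E}_{Y\sim P_0}$ --- and (ii) confirm that the assumed $L^2$ integrability of the scores (plus boundedness of $k$ and its derivatives) is enough to place $u_{\theta_*}$ in $L^2(P_0\times P_0)$, so that the classical $O(N^{-2})$ variance bound for first-order degenerate U-statistics applies. A secondary technical point is establishing the uniform convergence $f''_N \to f''$ near $\theta_*$ used to control $H_N$ and $f''_N(\xi_N)$; this should be extracted from the domination hypotheses in Condition~\ref{condition:marginal_ksd} by a uniform strong law, exactly as in the proof of Theorem~\ref{thm:marginal_ksd}.
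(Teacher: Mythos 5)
Your proposal is correct and follows essentially the same route as the paper's proof: a $O_{P_0}(N^{-1/2})$ bound on $f_N'(\theta_*)$ from U-statistic variance bounds combined with a Taylor expansion around $\theta_N$ to get \eqref{eqn:fN_O1}, the ratio decomposition with $O_{P_0}(N^{-1/2})$ fluctuations for \eqref{eqn:fN_OsN}, and the first-order degeneracy of $u_{\theta_*}$ (via the Stein-class property of $k$, as in Theorem 4.1 of \citet{Liu2016-bp}) giving the $O_{P_0}(N^{-1})$ rate in the well-specified case. The only cosmetic differences are that you use the integral form of the mean-value theorem where the paper uses a point on the segment, and you invoke Chebyshev directly where the paper cites Theorems 5.5.1A and 5.5.2 of \citet{Serfling2009-dq}.
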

\noindent The proof is in Section~\ref{sec:proof_laplace_scaling}. 
Remarkably, Equation~\ref{eqn:fN_O1_thetastar} shows that $f_N(\theta_{*})$ converges to $f(\theta_{*})$ more rapidly when the model is well-specified, specifically, at a $1/N$ rate instead of $1/\sqrt{N}$. This is unusual and is crucial for our results in Section~\ref{sec:theory_ds_ms}.  The standard log likelihood does not exhibit this rapid convergence; see Section~\ref{sec:asymptotics_setup}. 
This property of the \textsc{nksd} derives from similar properties exhibited by the standard \textsc{ksd} \citep[Theorem 4.1]{Liu2016-bp}.
Combined with Theorem~\ref{thm:marginal_ksd} (part 3),
Theorem~\ref{prop:laplace_scaling} implies that when the model is misspecified, the leading order term of $\log z_N$ is $-N f(\theta_*)$, whereas when the model is well-specified, the leading order term is $-\frac{1}{2}\, m \log N$.

\subsection{Data and model selection consistency of the SVC} \label{sec:theory_ds_ms}

In this section, we establish the asymptotic consistency of the Stein volume criterion (SVC) when used for data selection, nested data selection, model selection, and nested model selection; see Theorem~\ref{thm:selection_consistency}.  This provides rigorous justification for the claims in Section~\ref{sec:method_asymptotics}. These results are all in the context of pairwise comparisons between two models or two model projections, $M_1$ and $M_2$.
Before proving the results, we formally define the consistency properties discussed in Section~\ref{sec:method_asymptotics}. 
Each property is defined in terms of a pairwise score $\rho(M_1,M_2)$, such as $\rho(M_1,M_2) = \log(\mathcal{K}_1 / \mathcal{K}_2)$.  For simplicity, we assume $\rho(M_1,M_2) = -\rho(M_2,M_1)$; this is satisfied for all of the cases we consider.
Let $\dim(\cdot)$ denote the dimension of a real space.
\begin{definition}[Data selection consistency] \label{def:ds}
Consider foreground model projections $M_j := \{q(x_{\F_j}|\theta): \theta \in \Theta\}$ for $j \in \{1,2\}$. We say that $\rho$ satisfies ``data selection consistency" if $\rho(M_1, M_2) \to \infty$ as $N\to\infty$ when $M_1$ is well-specified with respect to $p_0(x_{\F_1})$ and $M_2$ is misspecified with respect to $p_0(x_{\F_{2}})$.
\end{definition}
\begin{definition}[Nested data selection consistency] \label{def:nds}
Consider foreground model projections $M_j := \{q(x_{\F_j}|\theta): \theta \in \Theta\}$ for $j \in \{1,2\}$. We say that $\rho$ satisfies ``nested data selection consistency" if $\rho(M_1, M_2) \to \infty$ as $N\to\infty$ when $M_1$ is well-specified with respect to $p_0(x_{\F_1})$, $\X_{\F_2} \subset \X_{\F_{1}}$, and $\dim(\X_{\F_2}) < \dim(\X_{\F_1})$.
\end{definition}
\begin{definition}[Model selection consistency] \label{def:ms}
Consider foreground models $M_j := \{q_j(x_\F|\theta_j): \theta_j \in \Theta_j\}$ for $j \in \{1, 2\}$. We say that $\rho$ satisfies ``model selection consistency" if $\rho(M_1, M_2) \rightarrow \infty$ as $N\to\infty$ when $M_1$ is well-specified with respect to $p_0(x_\F)$ and $M_2$ is misspecified.
\end{definition}
\begin{definition}[Nested model selection consistency] \label{def:nms}
Consider foreground models $M_j := \{q_j(x_\F|\theta_j): \theta_j \in \Theta_j\}$ for $j \in \{1, 2\}$. We say that $\rho$ satisfies ``nested model selection consistency" if $\rho(M_1, M_2) \rightarrow \infty$ as $N\to\infty$ when $M_1$ is well-specified with respect to $p_0(x_\F)$, $M_1 \subset M_2$, and $\dim(\Theta_1) < \dim(\Theta_2)$.
\end{definition}

In each case, $\rho$ may diverge almost surely (``strong consistency'') or in probability (``weak consistency''). 
Note that in Definitions~\ref{def:ds}--\ref{def:nds}, the difference between $M_1$ and $M_2$ is the choice of foreground data space $\F$, whereas in Definitions~\ref{def:ms}--\ref{def:nms}, $M_1$ and $M_2$ are over the same foreground space but employ different model spaces.

In Theorem~\ref{thm:selection_consistency}, we show that the SVC has the asymptotic properties outlined in Section~\ref{sec:method_asymptotics}.
In combination with the subsystem independence properties of the NKSD (Propositions~\ref{prop:subsystem_indep} and \ref{prop:approx_subsystem_indep}), Theorem~\ref{thm:selection_consistency} also leads to the conclusion that the SVC approximates the NKSD marginal likelihood of the augmented model (Equation~\ref{eqn:full_svc_approx}). 
Our proof is similar in spirit to previous results for model selection with the standard marginal likelihood, notably those of~\citet{Hong2005-kf} and~\citet{Huggins2020-bu}, but relies on the special properties of the \textsc{nksd} marginal likelihood in Theorem~\ref{prop:laplace_scaling}. 

\begin{theorem}
\label{thm:selection_consistency}
For $j\in\{1,2\}$, assume the conditions of Theorem~\ref{prop:laplace_scaling} hold for model $M_j$ defined on $\X_{\F_j}$, with density $q_j(x_{\F_j}|\theta_j)$ for $\theta_j\in\Theta_j\subseteq \mathbb{R}^{m_{\F_j,j}}$.
Let $\mathcal{K}_{j,N}$ be the Stein volume criterion for $M_j$, with background model penalty $m_{\B_j}=m_{\B_j}(N)$, and let $\theta_{j,*} := \argmin_{\theta_j} \textsc{nksd}(p_0(x_{\F_j})\| q_j(x_{\F_j}|\theta_j))$.
Then:
\begin{enumerate}
	\item If $m_{\B_j} = o(N/\log N)$ for $j\in\{1,2\}$, then
	\begin{equation*}
	\frac{1}{N}\log \frac{\mathcal{K}_{1,N}}{\mathcal{K}_{2,N}} \xrightarrow[N \to \infty]{P_0} \frac{1}{T}\textsc{nksd}(p_0(x_{\F_2})\| q_2(x_{\F_2}|\theta_{2,*})) - \frac{1}{T}\textsc{nksd}(p_0(x_{\F_1})\| q_1(x_{\F_1}|\theta_{1,*})).
	\end{equation*}
	
	\item If $\textsc{nksd}(p_0(x_{\F_1})\| q_1(x_{\F_1}|\theta_{1,*})) = \textsc{nksd}(p_0(x_{\F_2})\| q_2(x_{\F_2}|\theta_{2,*})) = 0$ and $m_{\B_2} - m_{\B_1}$ does not depend on $N$, then
	\begin{equation*}
		\frac{1}{\log N}\log \frac{\mathcal{K}_{1,N}}{\mathcal{K}_{2,N}} \xrightarrow[N \to \infty]{P_0} \frac{1}{2}(m_{\F_2,2} + m_{\B_2} - m_{\F_1,1} - m_{\B_1}).
	\end{equation*}
	
	\item If $\textsc{nksd}(p_0(x_{\F_1})\| q_1(x_{\F_1}|\theta_{1,*})) = \textsc{nksd}(p_0(x_{\F_2})\| q_2(x_{\F_2}|\theta_{2,*}))$, $m_{\B_1} = c_{\B_1} \sqrt{N}$, and $m_{\B_2} = c_{\B_2} \sqrt{N}$, where $c_{\B_1}$ and $c_{\B_2}$ are positive and constant in $N$, then
	\begin{equation*}
		\frac{1}{\sqrt{N}\log N}\log \frac{\mathcal{K}_{1,N}}{\mathcal{K}_{2,N}} \xrightarrow[N \to \infty]{P_0} \frac{1}{2}(c_{\B_2} - c_{\B_1}).
	\end{equation*}
\end{enumerate}
\end{theorem}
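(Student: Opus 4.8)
The plan is to reduce the data- and model-selection consistency claims for the SVC to the asymptotic expansion of $\log z_N$ furnished by Theorems~\ref{thm:marginal_ksd} and \ref{prop:laplace_scaling}, and then to keep careful track of which terms dominate in each of the three regimes. Writing $\mathcal{K}_{j,N} = (2\pi/N)^{m_{\B_j}/2} z_{j,N}$, where $z_{j,N} = \int \exp(-N f_{j,N}(\theta_j))\pi_j(\theta_j)\,d\theta_j$ and $f_{j,N}(\theta_j) = \tfrac{1}{T}\widehat{\textsc{nksd}}(p_0(x_{\F_j})\|q_j(x_{\F_j}|\theta_j))$, the first step is to expand
\begin{equation*}
\log \mathcal{K}_{j,N} = -\frac{m_{\B_j}}{2}\log\frac{N}{2\pi} + \log z_{j,N},
\end{equation*}
and then to substitute the Laplace expansion of $z_{j,N}$. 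By Theorem~\ref{thm:marginal_ksd}(3), $\log z_{j,N} = -N f_{j,N}(\theta_{j,N}) - \tfrac{m_{\F_j,j}}{2}\log\frac{N}{2\pi} - \tfrac12\log|\det f_j''(\theta_{j,*})| + \log\pi_j(\theta_{j,*}) + o(1)$ almost surely. Decomposing $N f_{j,N}(\theta_{j,N}) = N f_j(\theta_{j,*}) + N\big(f_{j,N}(\theta_{j,*}) - f_j(\theta_{j,*})\big) + N\big(f_{j,N}(\theta_{j,N}) - f_{j,N}(\theta_{j,*})\big)$, Theorem~\ref{prop:laplace_scaling} controls the last two pieces: the third is $O_{P_0}(1)$ always, while the second is $O_{P_0}(\sqrt N)$ in the misspecified case and $O_{P_0}(1)$ in the well-specified case. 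Putting these together yields the master expansion
\begin{equation*}
\log \mathcal{K}_{j,N} = -N f_j(\theta_{j,*}) - \frac{m_{\F_j,j}+m_{\B_j}}{2}\log N + R_{j,N},
\end{equation*}
where $R_{j,N}$ absorbs the constants, the $o(1)$, and the $O_{P_0}$ terms identified above; the precise order of $R_{j,N}$ depends on whether $f_j(\theta_{j,*})$ vanishes.

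The second step is to take the difference $\log(\mathcal{K}_{1,N}/\mathcal{K}_{2,N})$ and divide by the appropriate normalizer in each of the three cases. For part~1, $f_j(\theta_{j,*}) = \tfrac1T\textsc{nksd}(p_0(x_{\F_j})\|q_j(x_{\F_j}|\theta_{j,*}))$; dividing by $N$, the $-N f_j(\theta_{j,*})$ terms survive and give the stated limit, while the $\tfrac{m_{\F_j,j}+m_{\B_j}}{2N}\log N$ terms vanish precisely because $m_{\B_j} = o(N/\log N)$ (and $m_{\F_j,j}$ is fixed), and $R_{j,N}/N \to 0$ since $R_{j,N} = O_{P_0}(\sqrt N)$ at worst. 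For part~2, both $f_j(\theta_{j,*}) = 0$, so the leading terms cancel and one divides by $\log N$; the $\log N$ coefficient is $-\tfrac12(m_{\F_1,1}+m_{\B_1}) + \tfrac12(m_{\F_2,2}+m_{\B_2})$, which is exactly the claimed limit, and one needs $R_{j,N} = O_{P_0}(1) = o(\log N)$ — which holds in the well-specified case by Equation~\ref{eqn:fN_O1_thetastar} together with Theorem~\ref{thm:marginal_ksd}(1), and crucially requires $m_{\B_2}-m_{\B_1}$ constant so that the difference of the $\log(N/2\pi)$ prefactors does not contribute a spurious constant that would have to be tracked (actually it is fine even if it does, since a constant is $o(\log N)$; the constancy assumption is what makes the \emph{coefficient} clean). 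For part~3, again $f_j(\theta_{j,*})$ are equal so their contributions cancel in the difference, and with $m_{\B_j} = c_{\B_j}\sqrt N$ the dominant term is $-\tfrac12 c_{\B_j}\sqrt N\log N$; dividing by $\sqrt N\log N$ gives $\tfrac12(c_{\B_2}-c_{\B_1})$. One must check that the other contributions — the fixed $m_{\F_j,j}\log N$ term, the $R_{j,N}$ remainder (of order $O_{P_0}(\sqrt N)$ in the misspecified case, $O_{P_0}(1)$ in the well-specified case), and the difference $N(f_{1,N}(\theta_{1,*}) - f_{2,N}(\theta_{2,*}))$ — are all $o(\sqrt N\log N)$; the first two are clear, and the third requires the $\textsc{nksd}$'s being equal combined with the fluctuation bounds of Theorem~\ref{prop:laplace_scaling}, so that $N\cdot O_{P_0}(N^{-1/2}) = O_{P_0}(\sqrt N) = o(\sqrt N\log N)$.

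A subtlety worth handling explicitly is that Theorem~\ref{thm:marginal_ksd} requires the background dimension to be fixed in $N$, so the Laplace expansion of $\log z_{j,N}$ that I quoted does not literally cover the $m_{\B_j}(N) \to \infty$ regimes — but it doesn't need to, because $z_{j,N}$ as written in the SVC \emph{only} involves the foreground model (the $m_{\B_j}$ enters solely through the explicit $(2\pi/N)^{m_{\B_j}/2}$ prefactor, not through the integral), so Theorem~\ref{thm:marginal_ksd} applies to $z_{j,N}$ with $m = m_{\F_j,j}$ regardless of how $m_{\B_j}$ behaves, and the $m_{\B_j}(N)$ dependence is then handled purely algebraically in the prefactor. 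The main obstacle is bookkeeping rather than any deep new idea: one must be scrupulous about which remainder terms are $O_{P_0}(\sqrt N)$ versus $O_{P_0}(1)$ (this is exactly the distinction in Theorem~\ref{prop:laplace_scaling} between misspecified and well-specified foregrounds, Equations~\ref{eqn:fN_OsN} versus \ref{eqn:fN_O1_thetastar}), and must confirm in each of the three parts that the chosen normalizer isolates the intended term and sends all others to zero in $P_0$-probability. A secondary point requiring a line or two is that convergence in probability (rather than a.s.) is the right mode of the conclusion because the $O_{P_0}$ bounds from Theorem~\ref{prop:laplace_scaling} are stated in probability; where the remainders are instead $o(1)$ a.s.\ (the well-specified subcases), one in fact gets strong consistency, which I would remark on.
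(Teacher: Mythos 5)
Your proposal is correct and follows essentially the same route as the paper's proof: apply the Laplace expansion of Theorem~\ref{thm:marginal_ksd}(3) to the foreground integral $z_{j,N}$, use Theorem~\ref{prop:laplace_scaling} to control $f_{j,N}(\theta_{j,N})-f_j(\theta_{j,*})$ at the appropriate rate in each regime, and then divide the difference of the resulting expansions by $N$, $\log N$, or $\sqrt{N}\log N$ as required. Your additional remarks---that $m_{\B_j}(N)$ enters only through the explicit prefactor so the fixed-dimension Laplace theorem still applies, and that the well-specified subcases in fact yield almost-sure statements---are correct refinements of the same argument rather than a different method.
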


The proof is in Section~\ref{sec:si_proof_selection}.
In particular, assuming the conditions of Theorem~\ref{prop:laplace_scaling}, we obtain the following consistency results in terms of convergence in probability.
Let $D_j := \textsc{nksd}(p_0(x_{\F_j})\| q_j(x_{\F_j}|\theta_{j,*}))$ for $j\in\{1,2\}$.
\begin{itemize}
\item If $m_{\B_j} = o(N/\log N)$ then the SVC exhibits data selection consistency and model selection consistency.
This holds by Theorem~\ref{thm:selection_consistency} (part 1) since $D_2 > D_1 = 0$.
\item If $m_{\B_1} = m_{\B_2}$ then the SVC exhibits nested model selection consistency.
This holds by Theorem~\ref{thm:selection_consistency} (part 2) since $D_1 = D_2 = 0$, $m_{\B_2} - m_{\B_1} = 0$, and $m_{\F_2,2} > m_{\F_1,1}$.
\item Consider a nested data selection problem with $\X_{\F_2} \subset \X_{\F_1}$.
If (A) $m_{\B_2} - m_{\B_1}$ does not depend on $N$ and $m_{\F_2,2} + m_{\B_2}  > m_{\F_1,1} + m_{\B_1}$ 
or (B) $m_{\B_j} = c_{\B_j} \sqrt{N}$ and $c_{\B_2} > c_{\B_1} > 0$, 
then the SVC exhibits nested data selection consistency.
Cases A and B hold by Theorem~\ref{thm:selection_consistency} (parts 2 and 3, respectively) since $D_1 = D_2 = 0$.
\end{itemize}

\section{Application: Probabilistic PCA} \label{sec:ppca}

Probabilistic principal components analysis (pPCA) is a commonly used tool for modeling and visualization. The basic idea is to model the data as linear combinations of $k$ latent factors plus Gaussian noise. The inferred weights on the factors are frequently used to provide low-dimensional summaries of the data, while the factors themselves describe major axes of variation in the data. 
In practice, pPCA is often applied in settings where it is likely to be misspecified -- for instance, the weights are often clearly non-Gaussian.
In this section, we show how data selection can be used to uncover sources of misspecification and to analyze how this misspecification affects downstream inferences.

The generative model used in pPCA is
\begin{equation}
\begin{split}
	Z^{(i)} &\sim \mathcal{N}(0, I_k),\\
	X^{(i)}|Z^{(i)} &\sim \mathcal{N}(H Z^{(i)}, v I_d),
\end{split}
\end{equation}
independently for $i=1,\ldots,N$, where $I_k$ is the $k$-dimensional identity matrix, $Z^{(i)}\in\mathbb{R}^k$ is the weight vector for datapoint $i$, $H \in \mathbb{R}^{d \times k}$ is the unknown matrix of latent factors, and $v > 0$ is the variance of the noise. To form a Laplace approximation for the Stein volume criterion, we follow the approach developed by \citet{Minka2000-uw} for the standard marginal likelihood. Specifically, we parameterize $H$ as
\begin{equation}
    H = U (L - v I_k)^{1/2}
\end{equation}
where $U$ is a $d \times k$ matrix with orthonormal columns (that is, it lies on the Stiefel manifold) and $L$ is a $k \times k$ diagonal matrix. 
We use the priors suggested by \citet{Minka2000-uw},
\begin{equation}
\begin{split}
U &\sim \mathrm{Uniform}(\mathcal{U}),\\
L_{i i} &\sim \mathrm{InverseGamma}(\alpha/2,\, \alpha/2),\\
v &\sim \mathrm{InverseGamma}\big((\alpha/2 + 1)(d-k) - 1,\; (\alpha/2)(d - k)\big),
\end{split}
\end{equation}
where $\mathcal{U}$ is the set of $d\times k$ matrices with orthonormal columns and $L_{i i}$ is the $i$th diagonal entry of $L$.
We set $\alpha = 0.1$ in the following experiments, and we use pymanopt~\citep{Townsend2016-qk} to optimize $U$ over the Stiefel manifold (Section~\ref{sec:si_ppca_details}).

\subsection{Simulations}

In simulations, we evaluate the ability of the SVC to detect partial misspecification. We set $d = 6$, draw the first four dimensions from a pPCA model with $k = 2$ and
\begin{equation}
\begin{split}
	H = \left(\begin{matrix}
		\phantom{-}1 & \phantom{-}0\\
		-1 & \phantom{-}1\\
		\phantom{-}0 & \phantom{-}1\\
		-1 & -1
	\end{matrix}\right),
\end{split}
\end{equation}
and generate dimensions 5 and 6 in such a way that pPCA is misspecified. 
We consider two misspecified scenarios: scenario A (Figure~\ref{fig:spike_viz}) is that
\begin{equation}
\begin{split}
	W^{(i)} &\sim \mathrm{Bernoulli}(0.5),\\
	X^{(i)}_{5:6}\mid W^{(i)} & \sim \mathcal{N}\left(0, \Sigma_{W^{(i)}} \right),
\end{split}
\label{eqn:spike_corrupt}
\end{equation}
where $\Sigma_{W^{(i)}} = (0.05)^{W^{(i)}} I_2$.
Scenario B (Figure~\ref{fig:ptfooler_viz}) is the same but with 
\begin{equation}
\Sigma_{W^{(i)}} = \begin{pmatrix} 1 & (-1)^{W^{(i)}} 0.99 \\ (-1)^{W^{(i)}} 0.99 & 1 \end{pmatrix}.
\label{eqn:ptfooler_corrupt}
\end{equation}
Scenario B is more challenging because the marginals of the misspecified dimensions are still Gaussian,
and thus, misspecification only comes from the dependence between $X_{5}$ and $X_{6}$. As illustrated in Figures~\ref{fig:spike_scat} and \ref{fig:ptfooler_scat}, both kinds of misspecification are very hard to see in the lower-dimensional latent representation of the data.

\begin{figure}[t!]
    \centering
    ~
    \begin{subfigure}[t]{0.31\textwidth}
        \centering
        \includegraphics[height=2in]{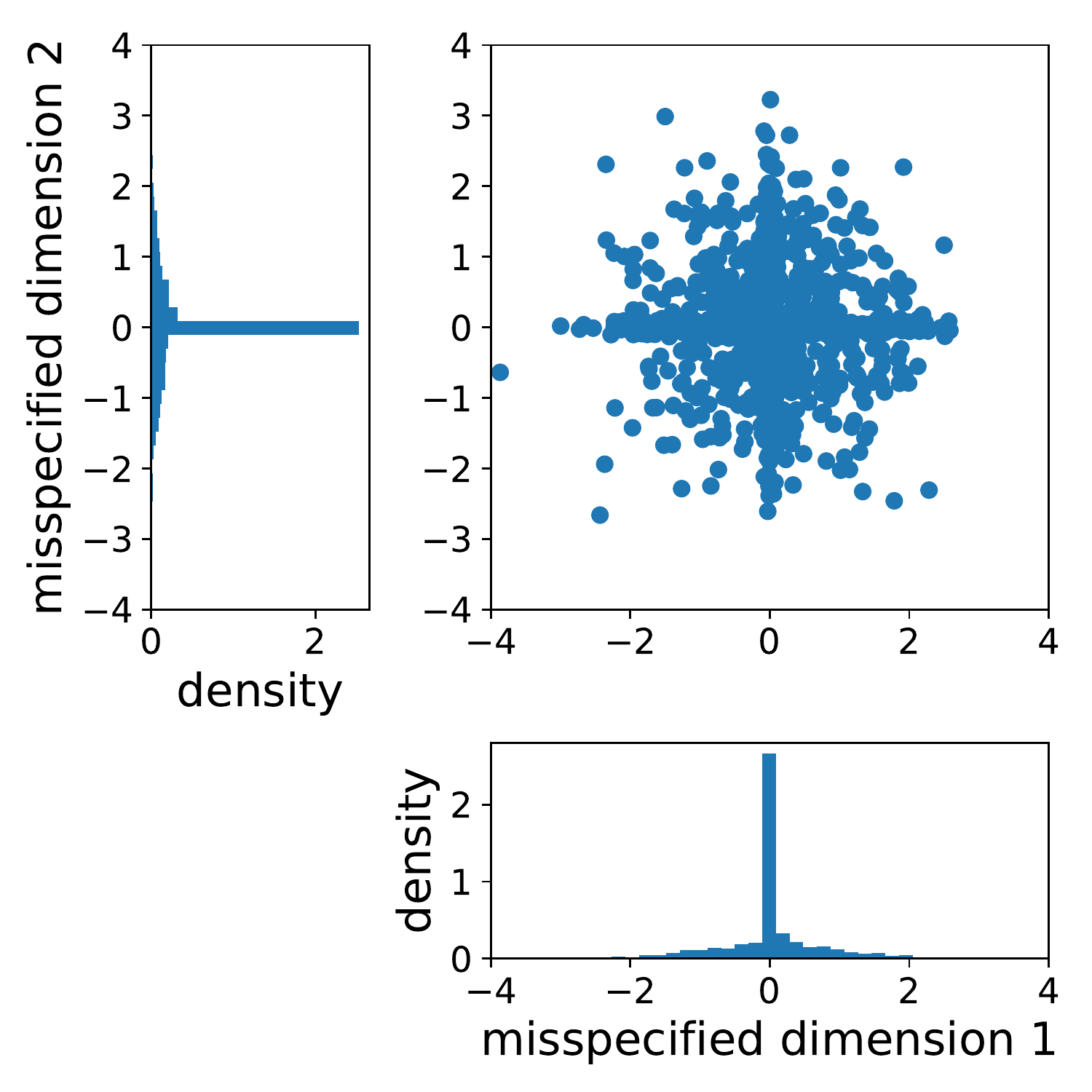}
        \caption{\footnotesize Scenario A, misspecified dimensions.}
        \label{fig:spike_viz}
    \end{subfigure}
    ~ 
    \begin{subfigure}[t]{0.31\textwidth}
        \centering
        \includegraphics[height=2in]{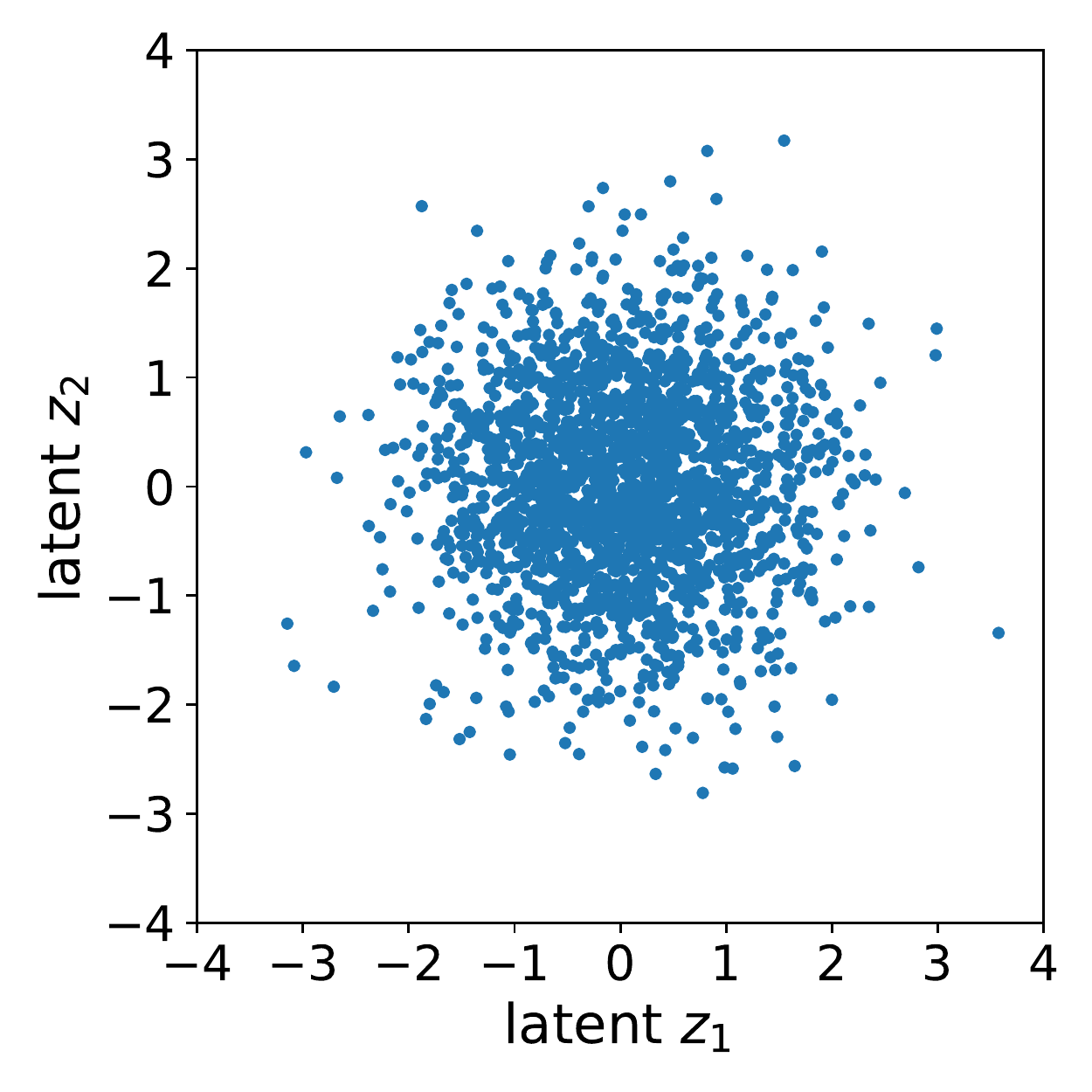}
        \caption{\footnotesize Scenario A, pPCA latent space.}
        \label{fig:spike_scat}
    \end{subfigure}
    ~ 
    \begin{subfigure}[t]{0.31\textwidth}
        \centering
        \includegraphics[height=2in]{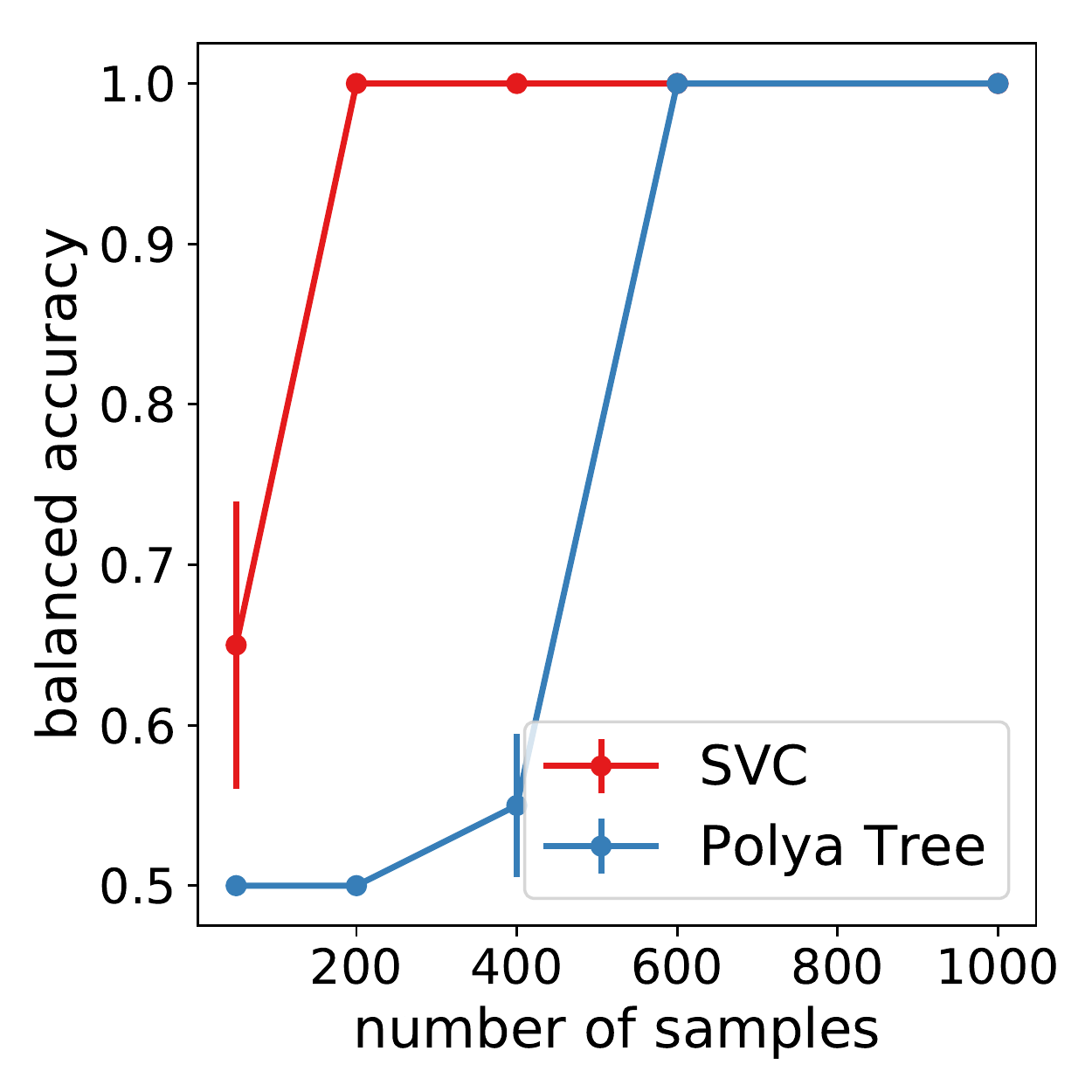}
        \caption{\footnotesize Scenario A, accuracy in detecting misspecified dimensions.}
        \label{fig:spike_acc}
    \end{subfigure}
    \\
    ~
    \begin{subfigure}[t]{0.31\textwidth}
        \centering
        \includegraphics[height=2in]{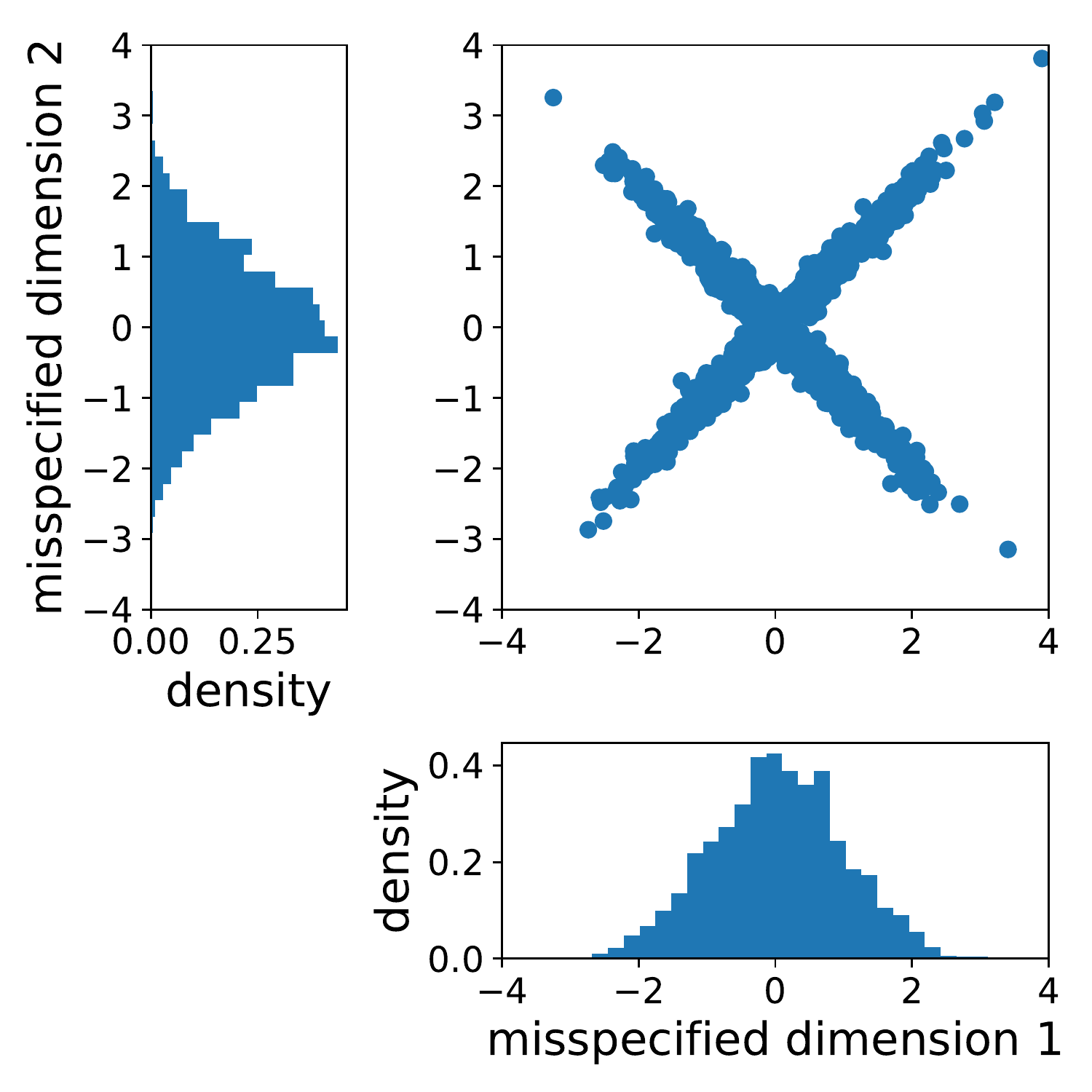}
        \caption{\footnotesize Scenario B, misspecified dimensions.}
        \label{fig:ptfooler_viz}
    \end{subfigure}
    ~
    \begin{subfigure}[t]{0.31\textwidth}
        \centering
        \includegraphics[height=2in]{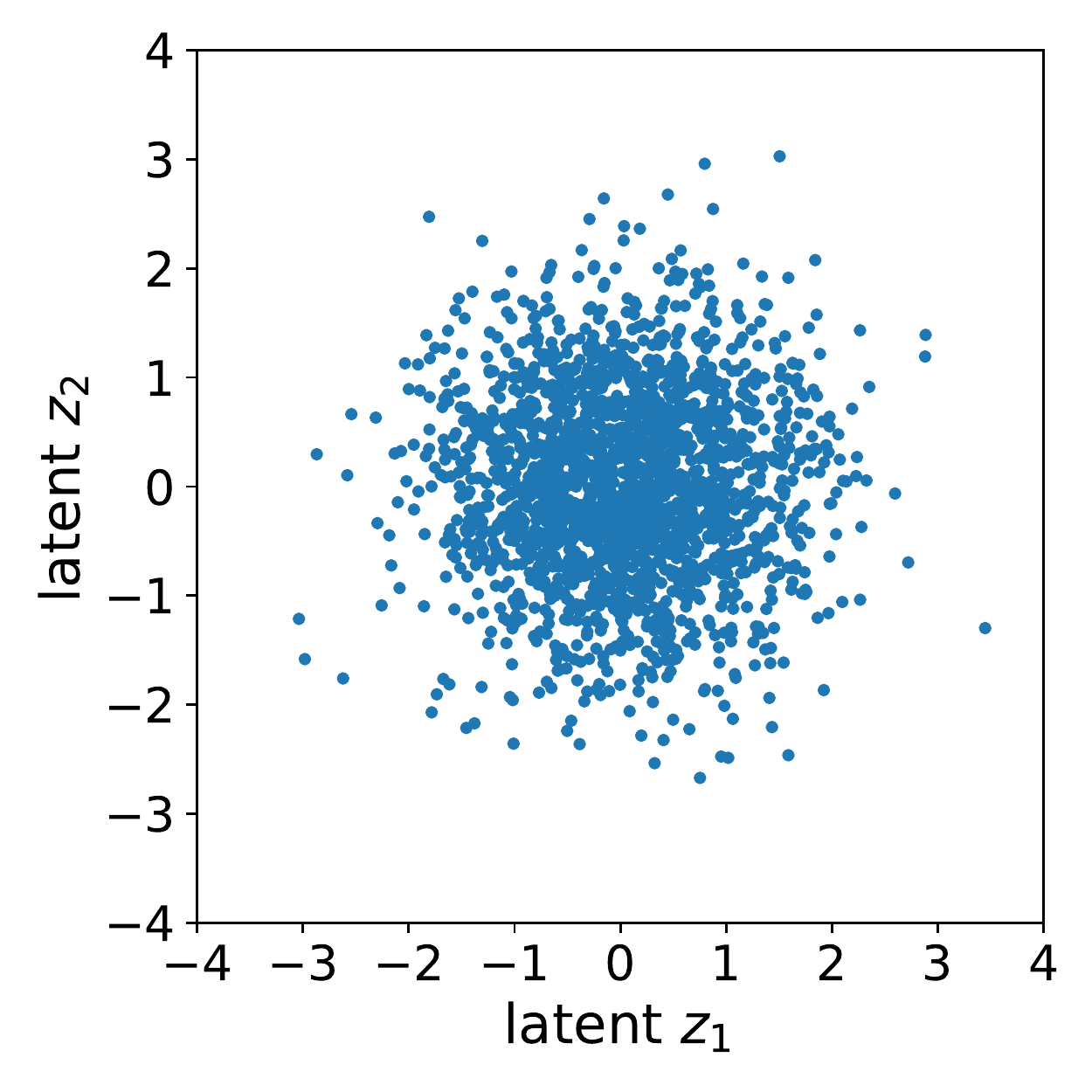}
        \caption{\footnotesize Scenario B, pPCA latent space.}
        \label{fig:ptfooler_scat}
    \end{subfigure}%
    ~ 
    \begin{subfigure}[t]{0.31\textwidth}
        \centering
        \includegraphics[height=2in]{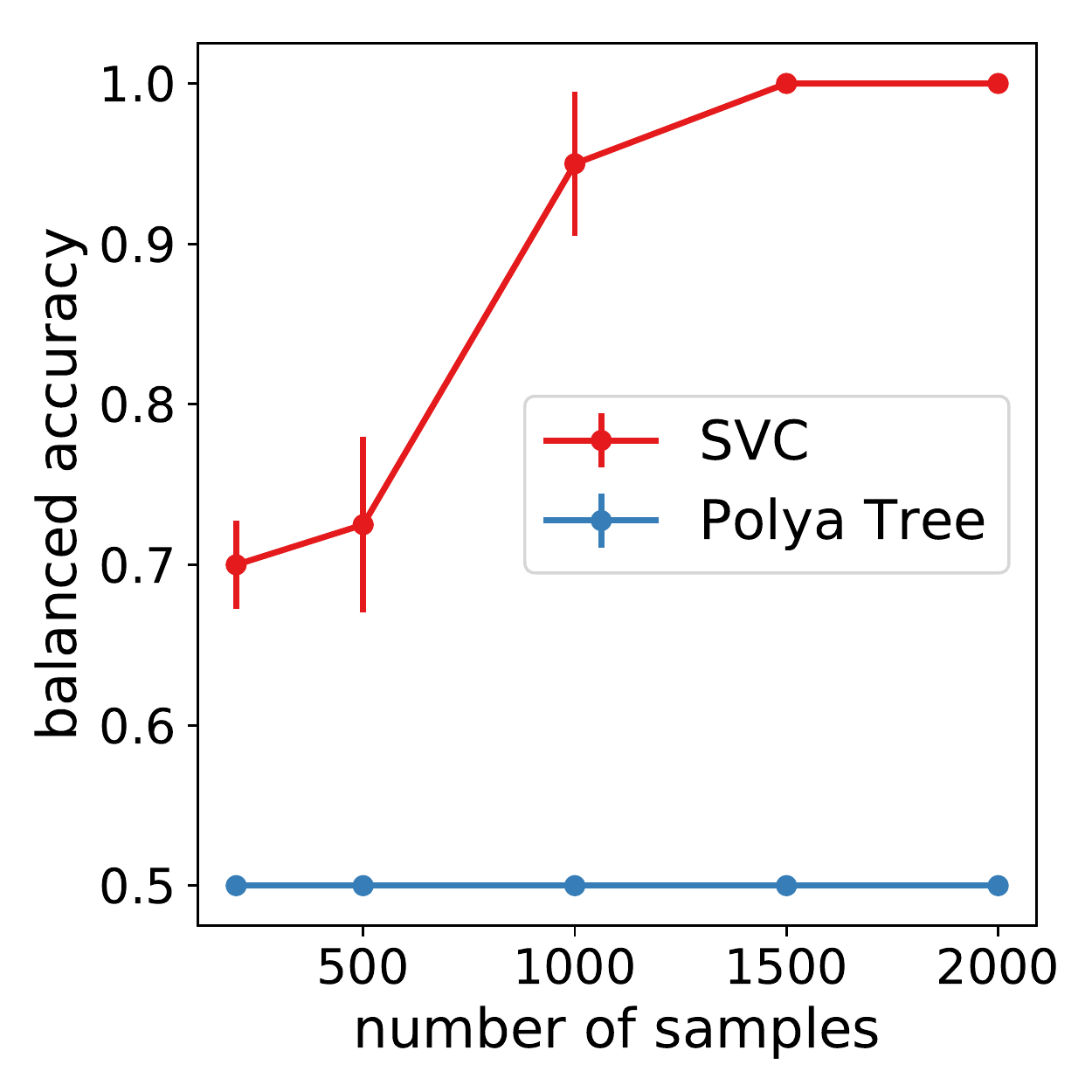}
        \caption{\footnotesize Scenario B, accuracy in detecting misspecified dimensions.}
        \label{fig:ptfooler_acc}
    \end{subfigure}
    \\
    \begin{subfigure}[t]{0.31\textwidth}
        \centering
        \includegraphics[height=2in]{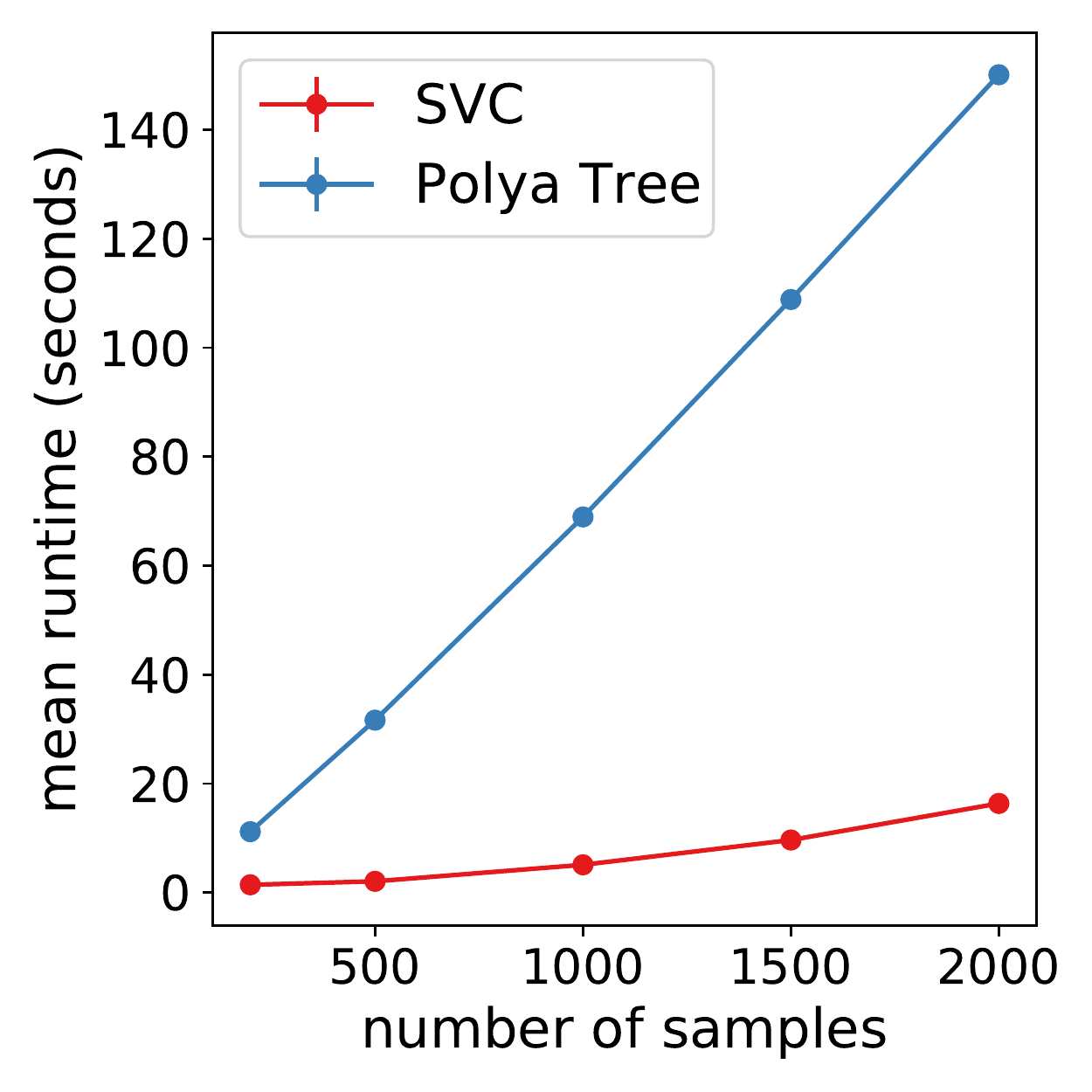}
        \caption{\footnotesize Mean runtime over 5 repeats.}
        \label{fig:spike_timing}
    \end{subfigure}
    \caption{Data selection in the probabilistic PCA model.}
\end{figure}

Our method can be used to both (i) detect misspecified subsets of dimensions, and (ii) conversely, find a maximal subset of dimensions for which the pPCA model provides a reasonable fit to the data.
We set $T = 0.05$ in the SVC, based on the calibration procedure in Section~\ref{sec:calibrate_t} (Section~\ref{sec:si_ppca_calibration}).
We use the Pitman-Yor mixture model expression for the background model dimension (Equation~\ref{eqn:pymm}), with $\alpha=0.5$, $\theta=1$, and $D=0.2$. 
This value of $D$ ensures that the number of background model parameters per data dimension is greater than the number of foreground model parameters per data dimension except for at very small $N$, since there are two foreground parameters for each additional data dimension in the pPCA model, and $m_\B > 2\, r_\B$ for $N \ge 20$. 
We performed leave-one-out data selection, comparing the foreground space $\X_{\F_0} = \X$ to foreground spaces $\X_{\F_j}$ for $j \in \{1, \ldots, d\}$, which exclude the $j$th dimension of the data. We computed the log SVC ratio $\log(\mathcal{K}_j / \mathcal{K}_0) = \log \mathcal{K}_j - \log \mathcal{K}_0$ using the BIC approximation to the SVC (Section~\ref{sec:lapl_bic}) and the approximate optima technique (Section~\ref{sec:approximate_optima}).
We quantify the performance of the method in detecting misspecified dimensions in terms of the balanced accuracy, defined as $(TN/N + TP/P)/2$ where $TN$ is the number of true negatives (dimension by dimension), $N$ is the number of negatives, $TP$ is the number of true positives, and $P$ is the number of positives. Experiments were repeated independently five times. Figures~\ref{fig:spike_acc} and \ref{fig:ptfooler_acc} show that as the sample size increases, the SVC correctly infers that dimensions 1 through 4 should be included and dimensions 5 and 6 should be excluded.

\subsection{Comparison with a nonparametric background model}

To benchmark our method, we compare with an alternative approach that uses an explicit augmented model. The P\'{o}lya tree is a nonparametric model with a closed-form marginal likelihood that is tractable for one-dimensional data~\citep{Lavine1992-fu}. 
We define a flexible background model by sampling each dimension $j$ of the background space independently as
\begin{equation}
X_{j} \sim \mathrm{PolyaTree}(F, \tilde{F}, \eta),
\end{equation}
with the P\'{o}lya tree constructed as by \citet{Berger2001-li} (Section~\ref{sec:si_pt}).
We set $F=\mathcal{N}(0, 10)$, $\tilde{F}=\mathcal{N}(0, 10)$, and $\eta = 1000$ so that the model is weighted only very weakly towards the base distribution. 

We performed data selection using the marginal likelihood of the P\'{o}lya tree augmented model, computing the marginal of the pPCA foreground model using the approximation of~\citet{Minka2000-uw}. The accuracy results for data selection are in Figures~\ref{fig:spike_acc} and~\ref{fig:ptfooler_acc}.
On scenario A (Equation~\ref{eqn:spike_corrupt}), the P\'{o}lya tree augmented model requires significantly more data to detect which dimensions are misspecified.
On scenario B (Equation~\ref{eqn:ptfooler_corrupt}) the P\'{o}lya tree augmented model fails entirely, preferring the full data space $\X_{\F_0} = \X$ which includes all dimensions (Figure~\ref{fig:ptfooler_acc}). The reason is that the background model is misspecified due to the assumption of independent dimensions, and thus, the asymptotic data selection results (Equations~\ref{eqn:data_select_kl} and \ref{eqn:nested_data_kl}) do not hold.
This could be resolved by using a richer background model that allows for dependence between dimensions, however, computing the marginal likelihood under such a model would be computationally challenging.
Even with the independence assumption, the P\'{o}lya tree approach is already substantially slower than the SVC (Figure~\ref{fig:spike_timing}).

\subsection{Application to pPCA for single-cell RNA sequencing}~\label{sec:scrna_ppca}

Single-cell RNA sequencing (scRNAseq) has emerged as a powerful technology for high-throughput characterization of individual cells. It provides a snapshot of the transcriptional state of each cell by measuring the number of RNA transcripts from each gene. PCA is widely used to study scRNAseq datasets, both as a method for visualizing different cell types in the dataset and as a pre-processing technique, where the latent embedding is used for downstream tasks like clustering and lineage reconstruction~\citep{qiu2017reversed,Van_Dijk2018-jw}.
We applied data selection to answer two practical questions in the application of probabilistic PCA to scRNAseq data: (1) Where is the pPCA model misspecified? (2) How does partial misspecification of the pPCA model affect downstream inferences?

\subsubsection*{Model criticism}

Our first goal was to verify that the SVC provides reasonable inferences of partial model misspecification in practice. 
We examined two different scRNAseq datasets, focusing for illustration on a dataset from human peripheral blood mononuclear cells taken from a healthy donor, and pre-processed the data following standard procedures in the field (Section~\ref{sec:si_ppca_datasets}).
We subsampled each dataset to 200 genes (selected randomly from among the 2000 most highly expressed) and 2000 cells (selected randomly) for computational tractability, then mean-subtracted and standardized the variance of each gene, again following standard practice in the field. The number of latent components $k$ was set to 3, based on the procedure of~\citet{Minka2000-rx}.
We performed leave-one-out data selection, comparing the foreground space $\X_{\F_0} := \X$ to foreground spaces $\X_{\F_j}$ that exclude the $j$th gene. We computed the log SVC ratio $\log \mathcal{K}_j - \log \mathcal{K}_0$ using the BIC approximation to the SVC (Section~\ref{sec:lapl_bic}) and the approximate optima technique (Section~\ref{sec:approximate_optima}). 
We used the same setting of $T$ and of $m_\B$ as was used in simulation, resulting in a background model complexity of $\dimB = 20\, r_\B$ for datasets of this size. 
Based on the SVC criterion, 162 out of 200 genes should be excluded from the foreground pPCA model, suggesting widespread partial misspecification. 
Figure~\ref{fig:ppca_exclude_ex} compares the histogram of individual genes to their estimated density under the pPCA model inferred for $\X_{\F_0} = \X$. Those genes most favored to be excluded (namely, UBE2V2 and IRF8) show extreme violations of normality, in stark contrast to those genes most favored to be included (MT-CO1 and RPL6).

\begin{figure}[t!]
    \centering
    \begin{subfigure}[t]{0.4\textwidth}
        \centering
        \includegraphics[height=2in]{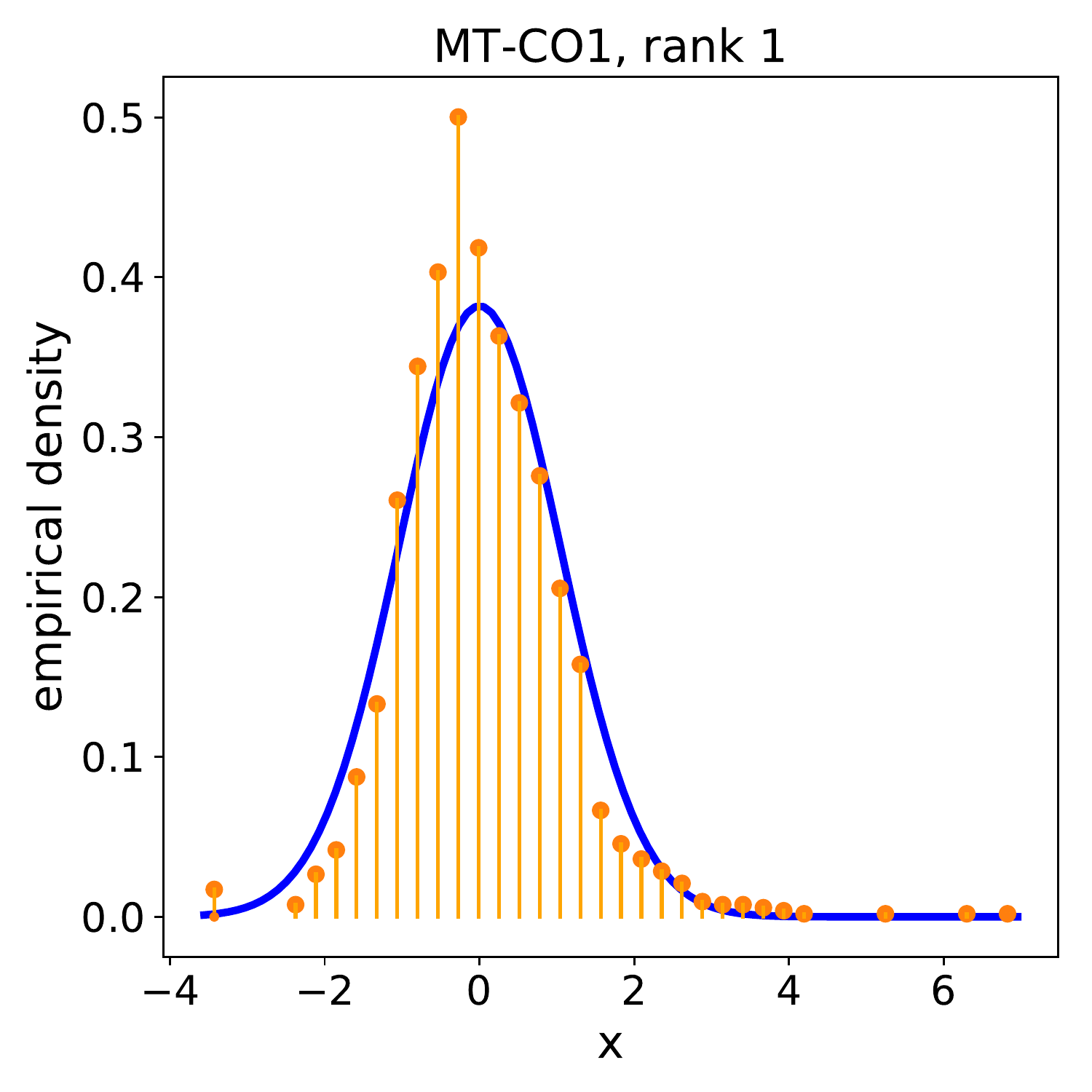}
        \caption{}
    \end{subfigure}
    ~
    \begin{subfigure}[t]{0.4\textwidth}
        \centering
        \includegraphics[height=2in]{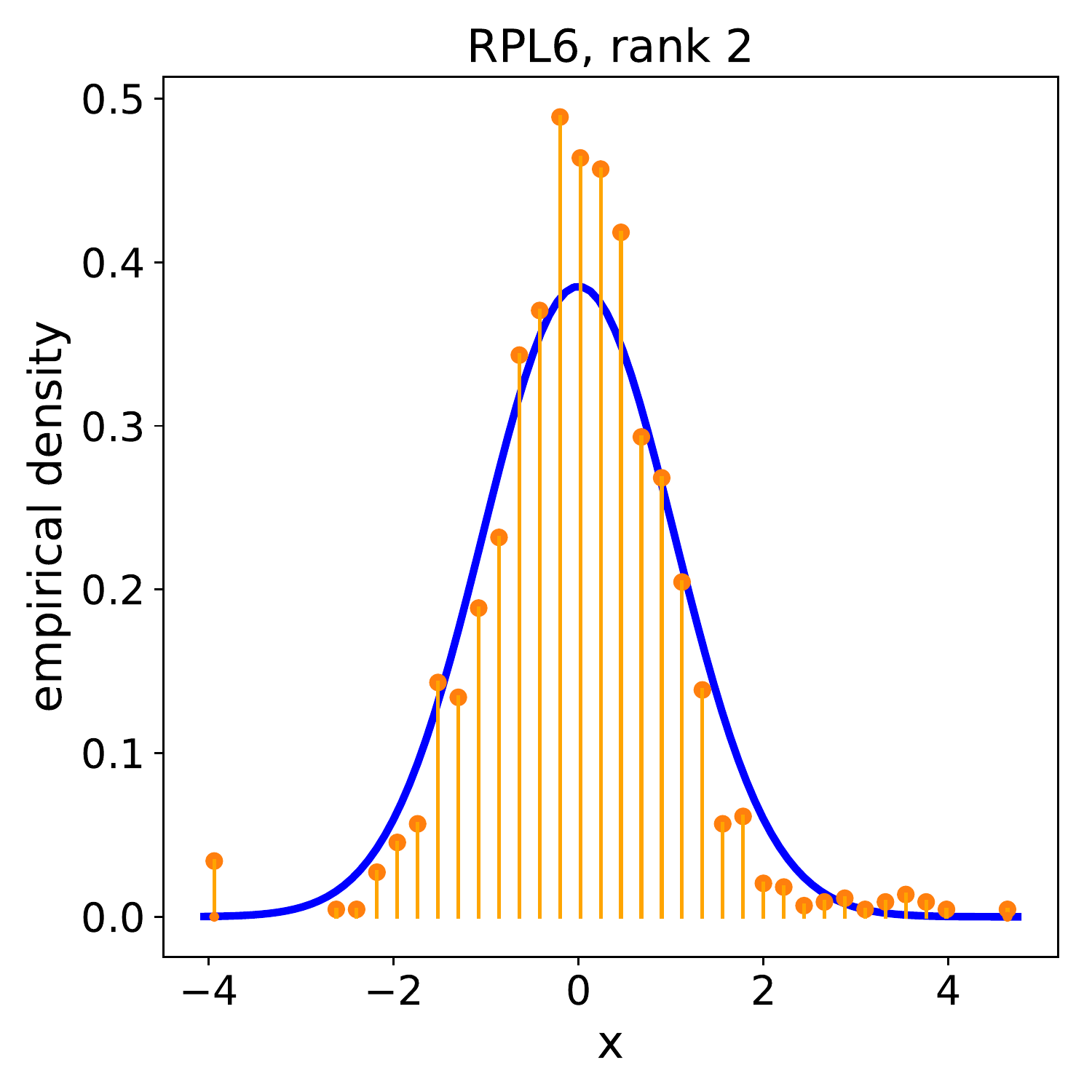}
        \caption{}
    \end{subfigure}
    ~\\
    \begin{subfigure}[t]{0.4\textwidth}
        \centering
        \includegraphics[height=2in]{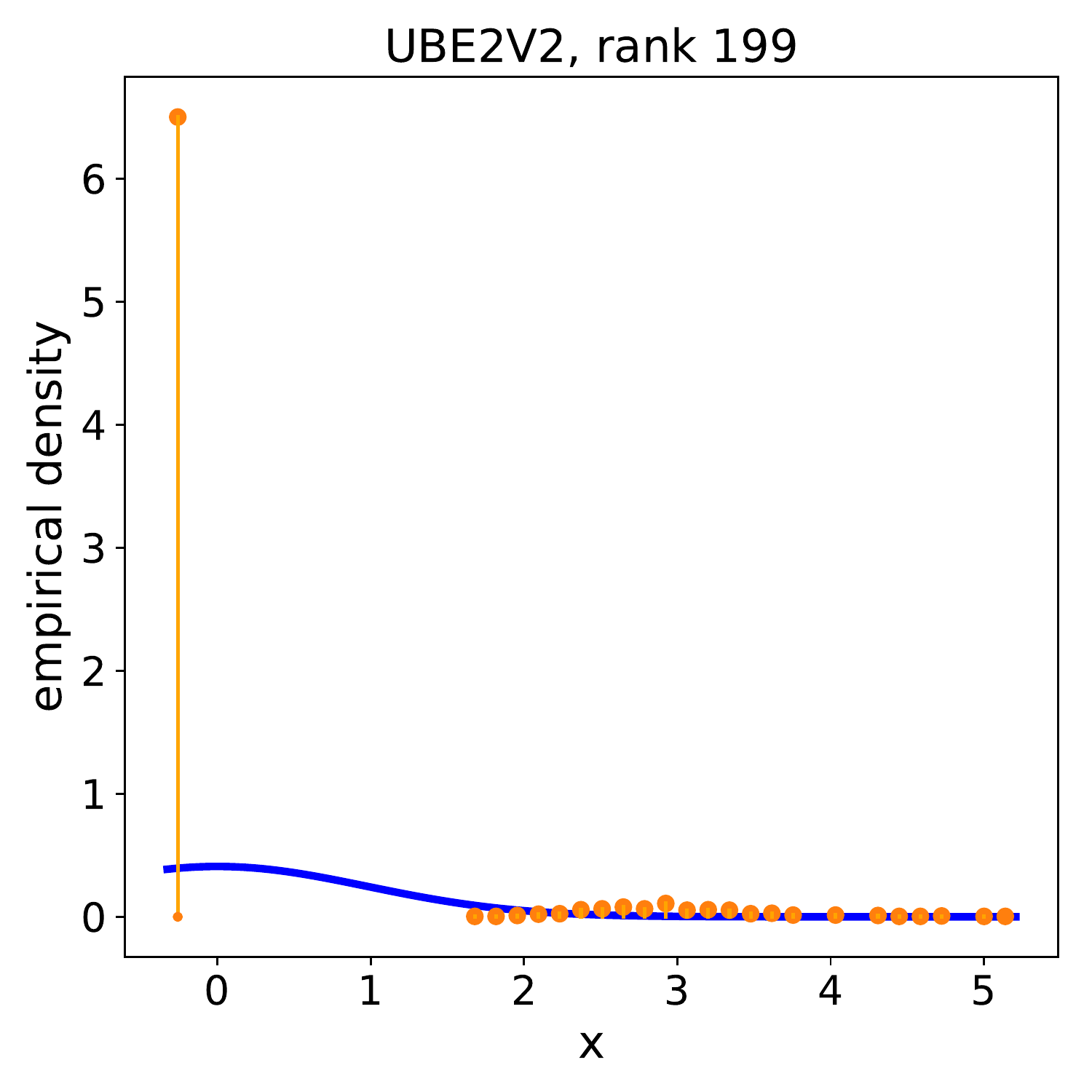}
        \caption{}
    \end{subfigure}
    ~
    \begin{subfigure}[t]{0.4\textwidth}
        \centering
        \includegraphics[height=2in]{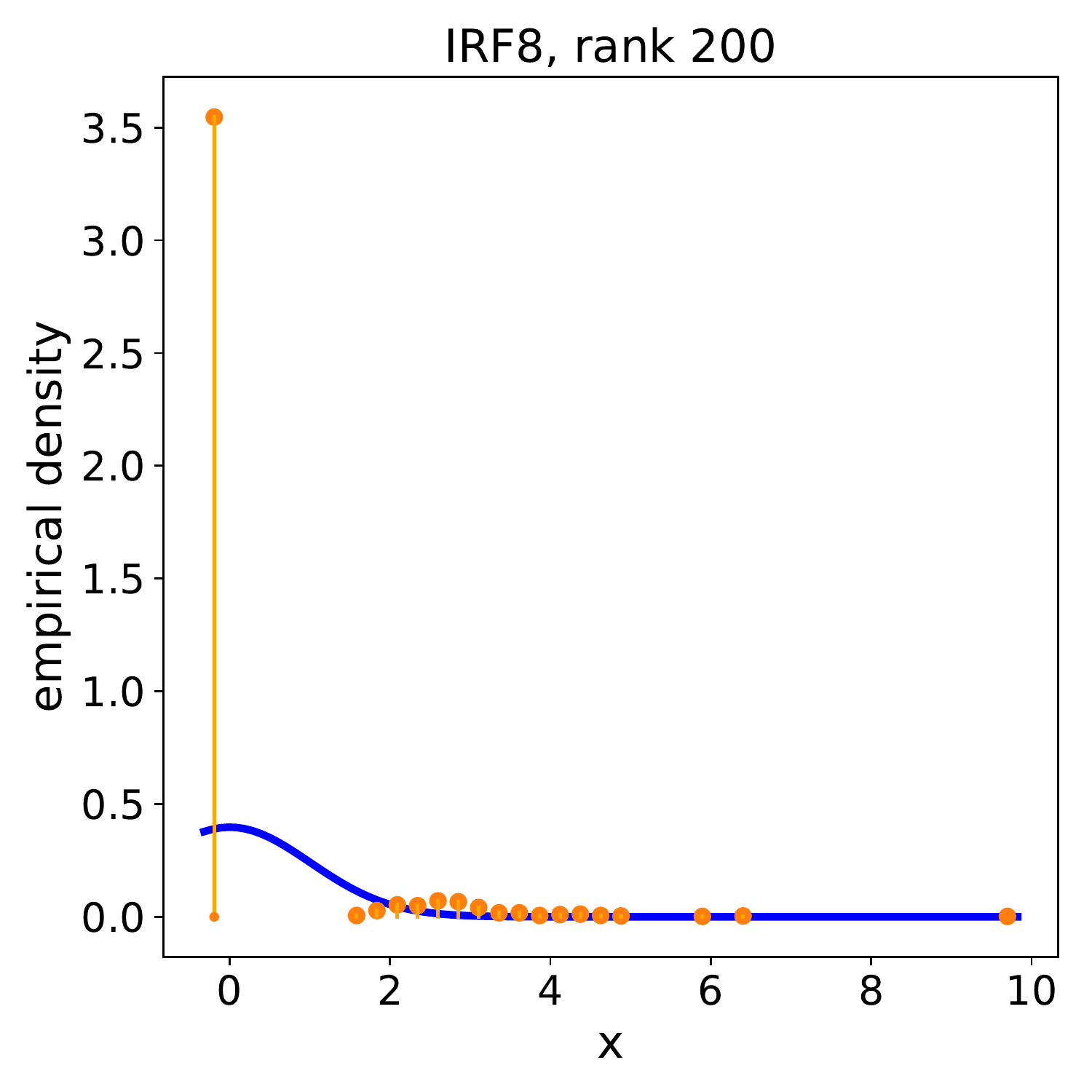}
        \caption{}
    \end{subfigure}
    \caption{(a,b) Histograms of example genes (after pre-processing) selected to be included in the foreground space based on the log SVC ratio, $\log \mathcal{K}_j - \log \mathcal{K}_0$. The estimated density under the pPCA model is shown in blue. (c,d) Histograms of example genes selected to be excluded. Higher ranks (in each title) correspond to larger log SVC ratios.}
    \label{fig:ppca_exclude_ex}
\end{figure}

Next, we compared the results of our data selection approach to a more conventional strategy for model criticism.
Criticism of partially misspecified models can be challenging in practice because misspecification of the model over some dimensions of the data can lead to substantial model-data mismatch in dimensions for which the model is indeed well-specified~\citep{Jacob2017-hu}. The standard approach to model criticism---first fit a model, then identify aspects of the data that the model poorly explains---can therefore be misleading if our aim is to determine how the model might be improved (e.g., in the context of ``Box's loop", \citealp{Blei2014-vn}). 
In particular, standard approaches such as posterior predictive checks will be expected to overstate problems with components of the model that are well-specified and understate problems with components of the model that are misspecified. 
Bayesian data selection circumvents this issue by evaluating augmented models, which replace potentially misspecified components of the model by well-specified components.
To illustrate the difference between these approaches in practice, we compared the SVC to a closely analogous measurement of error for the full foreground model (inferred from $\X_{\F_0} = \X$),
\begin{equation}
	\log \mathcal{E}_j - \log \mathcal{E}_0 := -\frac{N}{T} \widehat{\textsc{nksd}}(p_0(x_{\F_j})\| q(x_{\F_j}|\theta_{0,N})) + \frac{N}{T} \widehat{\textsc{nksd}}(p_0(x)\| q(x|\theta_{0,N}))
\label{eqn:alternative_criticism}
\end{equation}
where $\theta_{0,N} := \argmin \widehat{\textsc{nksd}}(p_0(x)\| q(x|\theta))$ is the minimum \textsc{nksd} estimator for the foreground model when including all dimensions. This model criticism score evaluates the amount of model-data mismatch contributed by the subspace $\X_{\B_j}$ when modeling all data dimensions with the foreground model. For comparison, the BIC approximation to the log SVC ratio is
\begin{equation}
\begin{split}
	\log \mathcal{K}_j - \log \mathcal{K}_0 \approx -&\frac{N}{T} \widehat{\textsc{nksd}}(p_0(x_{\F_j})\| q(x_{\F_j}|\theta_{j,N}) + \frac{N}{T} \widehat{\textsc{nksd}}(p_0(x)\| q(x|\theta_{0,N}))\\
	&+ \frac{m_{\B_j} + m_{\F_j} - m_{\F_0}}{2} \log\bigg(\frac{2\pi}{N}\bigg)
\end{split}
\label{eqn:delta_svc}
\end{equation}
where $\theta_{j,N} := \argmin \widehat{\textsc{nksd}}(p_0(x_{\F_j})\| q(x_{\F_j}|\theta))$ is the minimum \textsc{nksd} estimator for the projected foreground model applied to the restricted dataset, which we approximate as $\theta_{0,N}$ plus the implicit function correction derived in Section~\ref{sec:approximate_optima}. 
Figure~\ref{fig:svc_v_naive} illustrates the differences between the conventional criticism approach ($\log \mathcal{E}_j - \log \mathcal{E}_0$) and the log SVC ratio on an scRNAseq dataset. To enable direct comparison of the two methods, we focus on the lower order terms of Equation~\ref{eqn:delta_svc}, that is, we set $m_{\B_j} = m_{\F_0} - m_{\F_j}$. 
We see that the amount of error contributed by $\X_{\B_j}$, as judged by the SVC, is often substantially higher than the amount indicated by the conventional criticism approach, implying that the conventional criticism approach understates the problems caused by individual genes and, conversely, overstates the problems with the rest of the model. 
 
\begin{figure}[t!]
\centering
\includegraphics[height=3in]{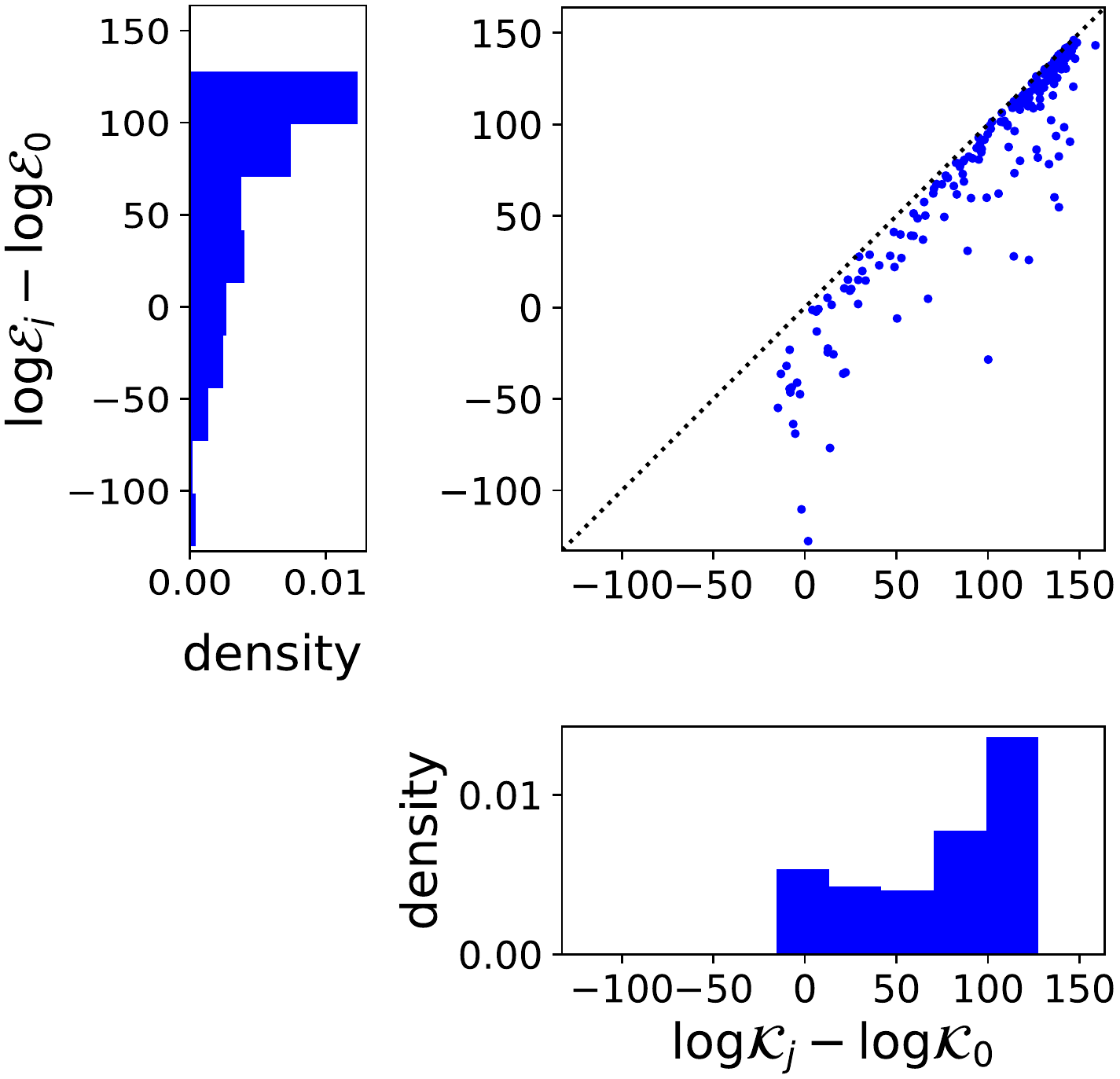}
\caption{Scatterplot comparison and projected marginals of the leave-one-out log SVC ratio, $\log \mathcal{K}_j - \log \mathcal{K}_0$ (with $m_{\B_j} = m_{\F_0} - m_{\F_j}$), and the conventional full model criticism score, $\log \mathcal{E}_j - \log \mathcal{E}_0$, for each gene.
}
\label{fig:svc_v_naive}
\end{figure}

Using the SVC instead of a standard criticism approach can also help clarify trends in where the proposed model fails. 
A prominent concern in scRNAseq data analysis is the common occurrence of cells that show exactly zero expression of a certain gene~\citep{Pierson2015-fv,Hicks2018-zk}. 
We found a Spearman correlation of $\rho=0.89$ between the conventional criticism $\log \mathcal{E}_j - \log \mathcal{E}_0$ for a gene $j$ and the fraction of cells with zero expression of that gene $j$, suggesting that this is an important source of model-data mismatch in this scRNAseq dataset, but not necessarily the only source (Figure~\ref{fig:pbmc_nzeros_v_naive}). 
However, the log SVC ratio yields a Spearman correlation of $\rho=0.98$, suggesting instead that the amount of model-data mismatch can be entirely explained by the fraction of cells with zero expression (Figure~\ref{fig:pbmc_nzeros_v_ij}). 
These observations are repeatable across different scRNAseq datasets (Figure~\ref{fig:malt_nzeros_v_naive}, \ref{fig:malt_nzeros_v_ij}).

 \begin{figure}[t!]
\centering
\begin{subfigure}[t]{0.45\textwidth}
    \centering
    \includegraphics[height=2in]{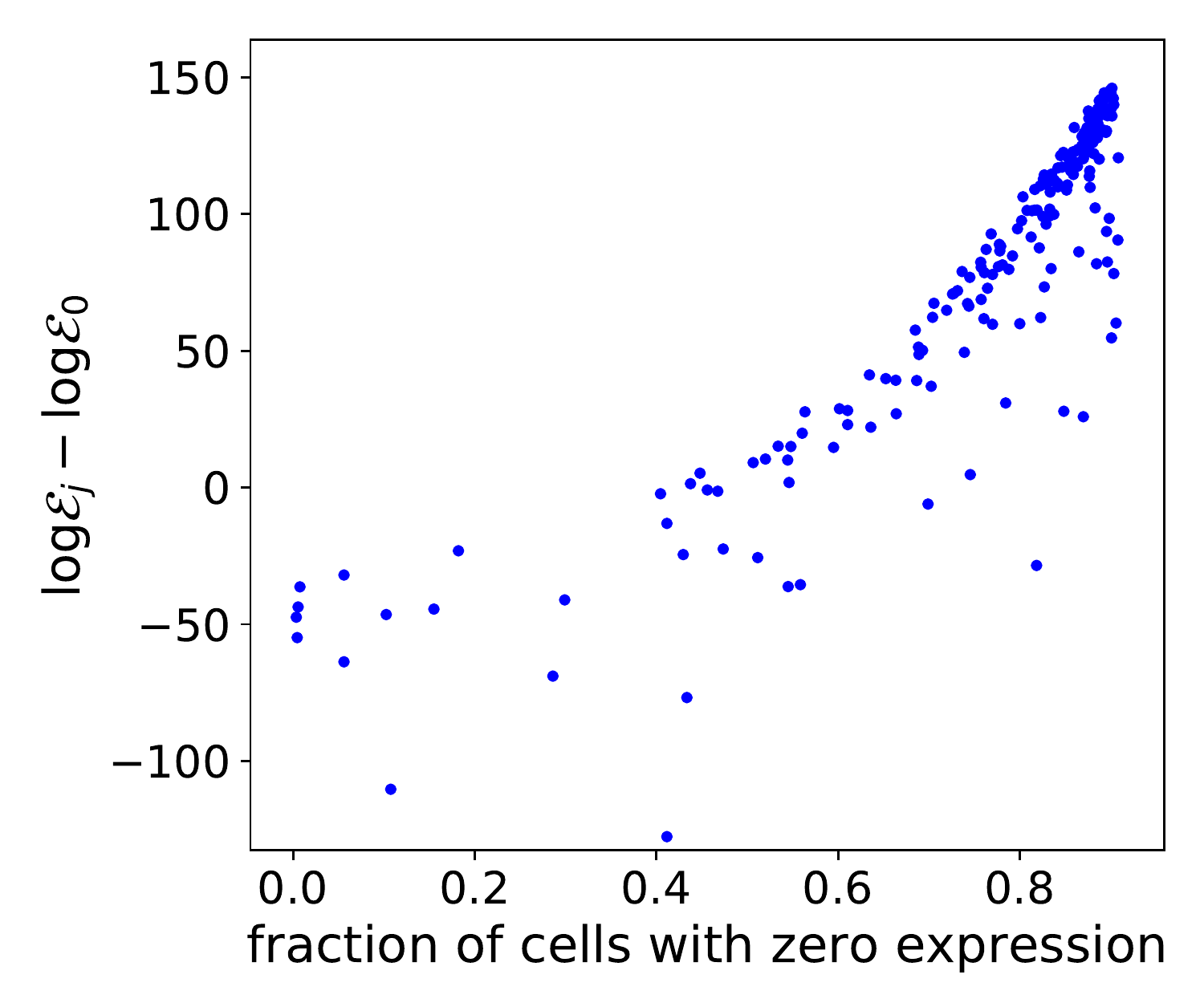}
    \caption{}
    \label{fig:pbmc_nzeros_v_naive}
\end{subfigure}
~
\begin{subfigure}[t]{0.45\textwidth}
    \centering
    \includegraphics[height=2in]{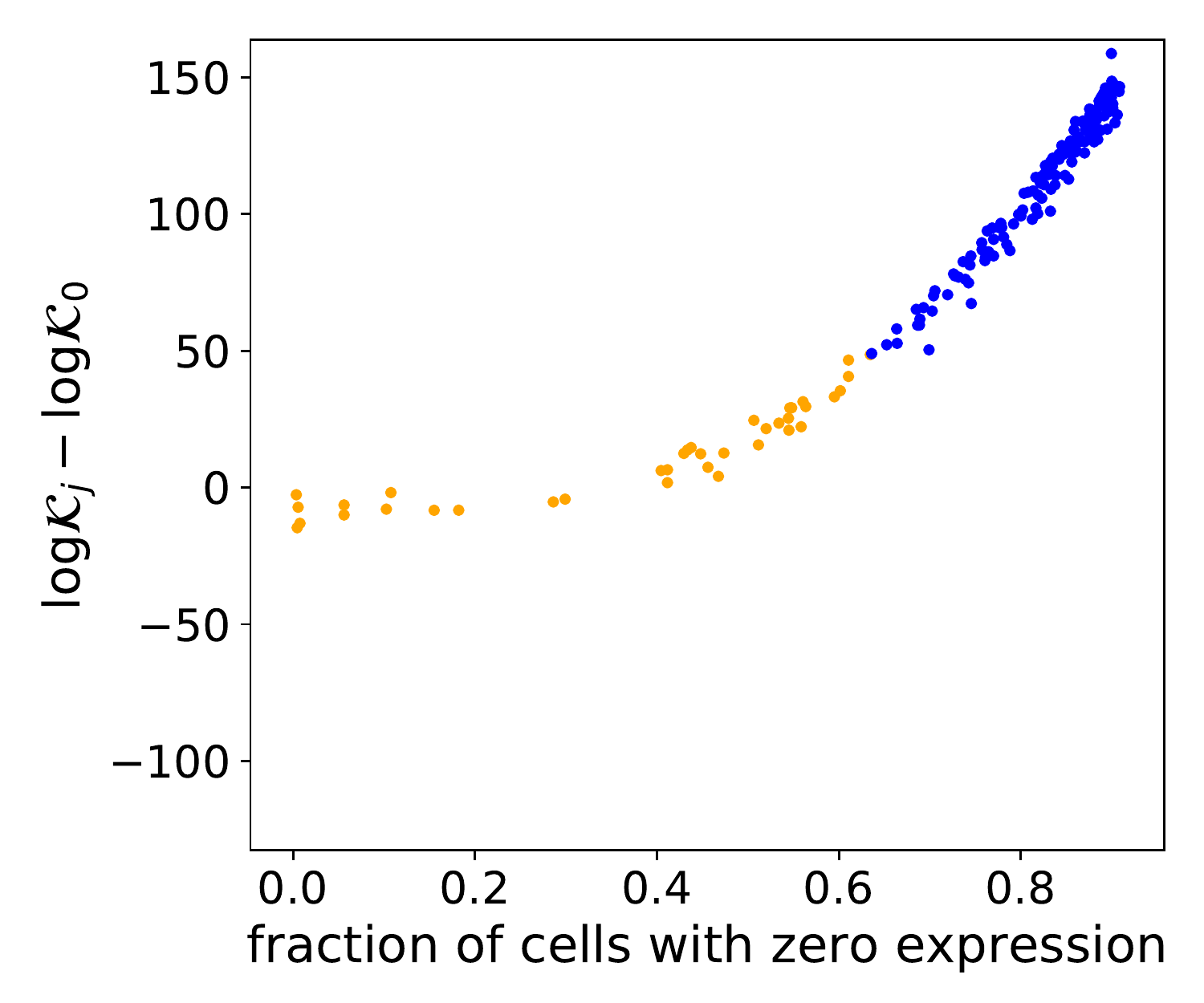}
    \caption{}
    \label{fig:pbmc_nzeros_v_ij}
\end{subfigure}
\begin{subfigure}[t]{0.45\textwidth}
    \centering
    \includegraphics[height=2in]{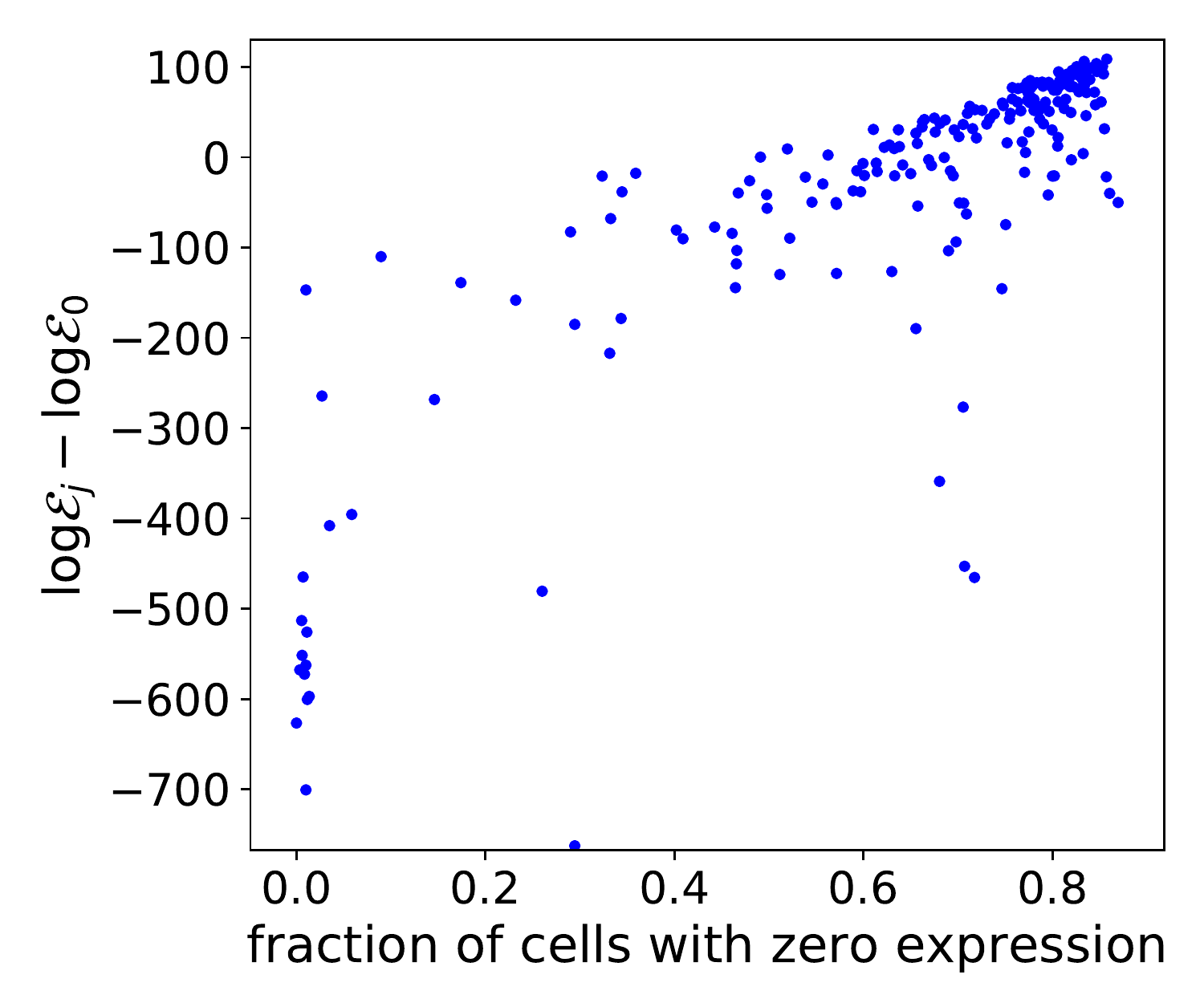}
    \caption{}
    \label{fig:malt_nzeros_v_naive}
\end{subfigure}
~
\begin{subfigure}[t]{0.45\textwidth}
    \centering
    \includegraphics[height=2in]{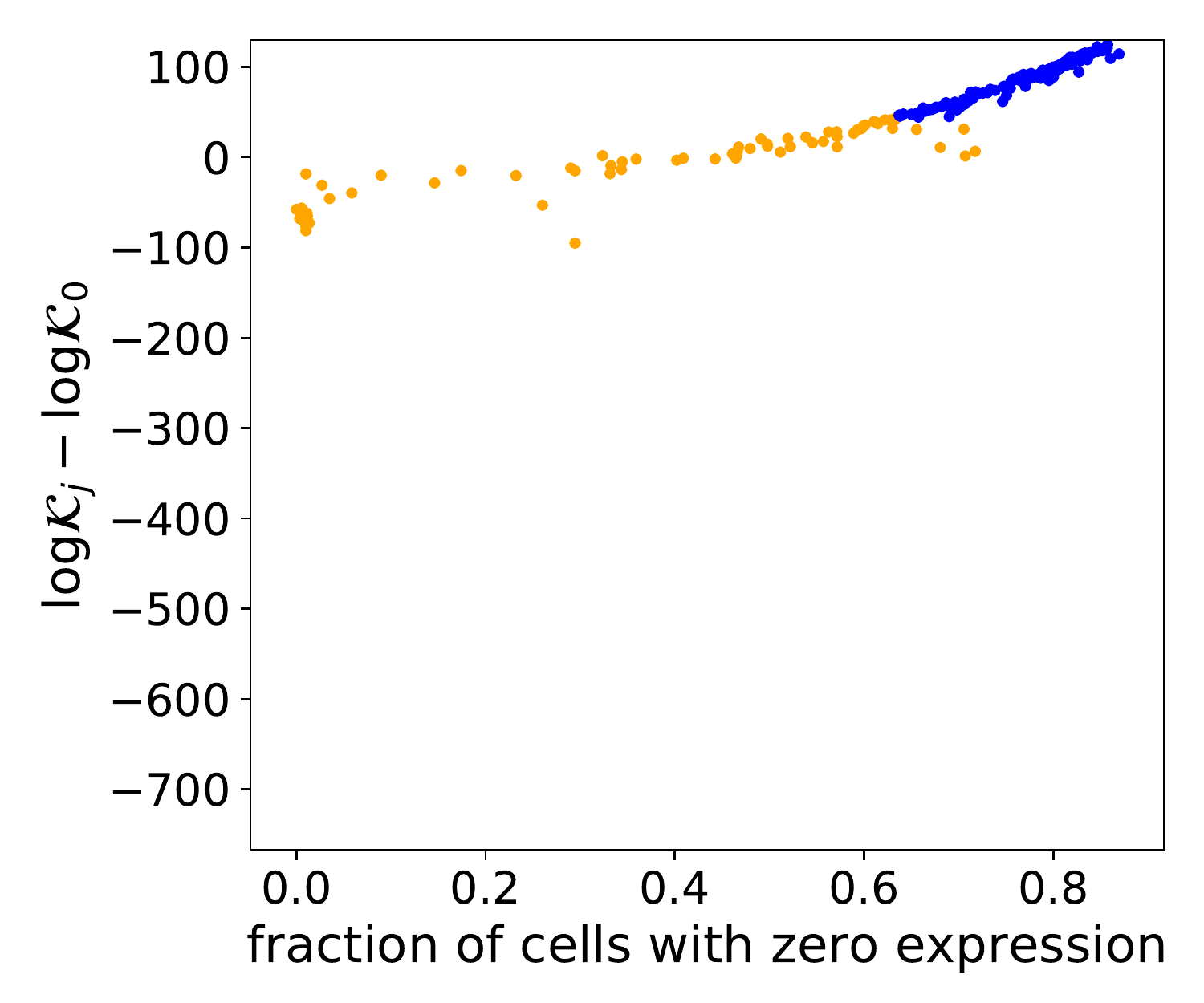}
    \caption{}
    \label{fig:malt_nzeros_v_ij}
\end{subfigure}
\caption{(a) Comparison of the conventional criticism score, for each gene $j$, and the fraction of cells that show zero expression of that gene $j$ in the raw data. Spearman $\rho = 0.89$, $p < 0.01$. (b) Same as (a) but with the log SVC ratio. Spearman $\rho = 0.98$, $p < 0.01$. In orange are genes that would be included when using a background model with $c_\mathcal{B} = 20$ and in blue are genes that would be excluded. 
(c) Same as (a) for a dataset taken from a MALT lymphoma (Section~\ref{sec:si_ppca_datasets}). Spearman $\rho = 0.81$, $p < 0.01$. (d) Same as (b) for the MALT lymphoma dataset. Spearman $\rho = 0.99$, $p < 0.01$.}
\label{fig:ppca_error_explore}
\end{figure}
    
\subsubsection*{Evaluating robustness}

Data selection can also be used to evaluate the robustness of the foreground model to partial model misspecification. This is particularly relevant for pPCA on scRNAseq data, since the inferred latent embeddings of each cell are often used for downstream tasks such as clustering, lineage reconstruction, and so on.
Misspecification may produce spurious conclusions, or alternatively, misspecification may be due to structure in the data that is scientifically interesting.
To understand how partial misspecification of the pPCA model affects the latent representation of cells (and thus, downstream inferences), we performed data selection with a sequence of background model complexities $c_\B$, where $\dimB = c_\B\, r_\B$ (Figure~\ref{fig:PCA_IJ_mean}). We inferred the pPCA parameters based only on genes that the SVC selects to include in the foreground subspace. 
Figures~\ref{fig:PCA_color_0_cut_3}-\ref{fig:PCA_color_0_cut_0} visualize how the latent representation changes as $c_\B$ grows and fewer genes are selected. We can observe the representation morphing into a standard normal distribution, as we would expect in the case where the pPCA model is well-specified. However, the relative spatial organization of cells in the latent space remains fairly stable, suggesting that this aspect of the latent embedding is robust to partial misspecification.
We can conclude that, at least in this example, misspecification strongly contributes to the non-Gaussian shape of the latent representation of the dataset, but not to the distinction between subpopulations.

\begin{figure}[t!]
\centering
\begin{subfigure}[t]{0.4\textwidth}
\centering
\includegraphics[height=2.5in]{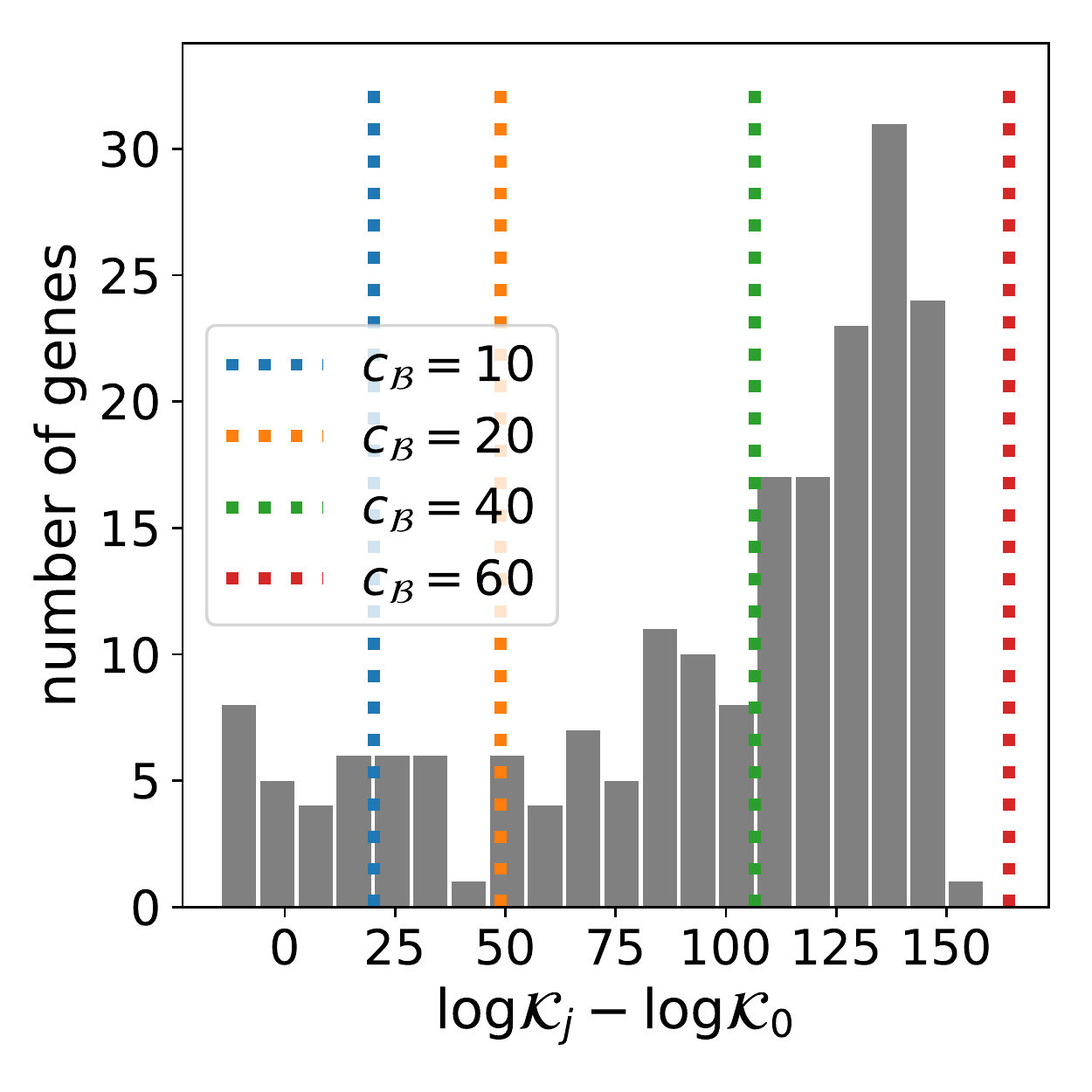}
    \caption{}
\label{fig:PCA_IJ_mean}
\end{subfigure}
~\\
\begin{subfigure}[t]{0.23\textwidth}
    \centering
    \includegraphics[height=1.6in]{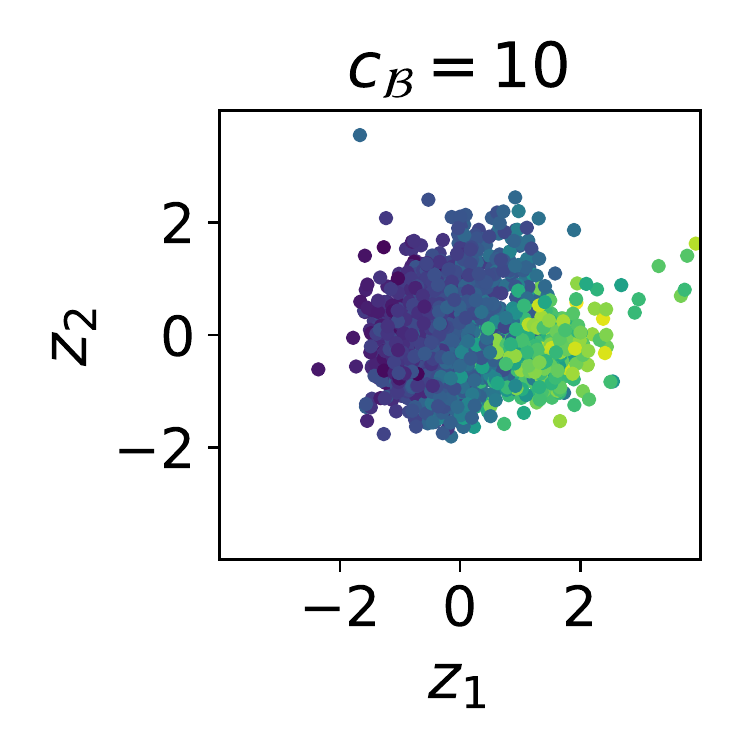}
    \caption{}
\label{fig:PCA_color_0_cut_0}
\end{subfigure}
~
\begin{subfigure}[t]{0.23\textwidth}
    \centering
    \includegraphics[height=1.6in]{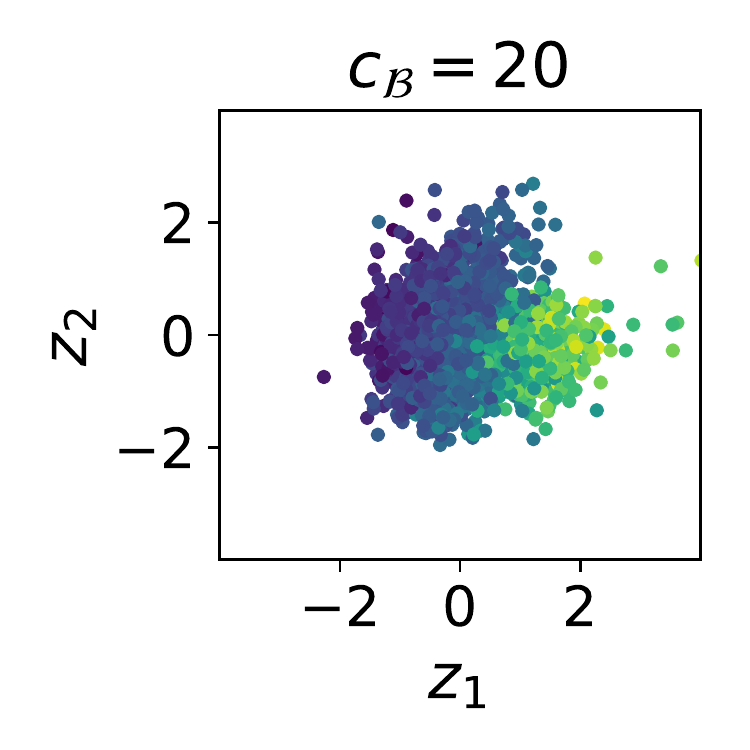}
    \caption{}
\label{fig:PCA_color_0_cut_1}
\end{subfigure}
~
\begin{subfigure}[t]{0.23\textwidth}
    \centering
    \includegraphics[height=1.6in]{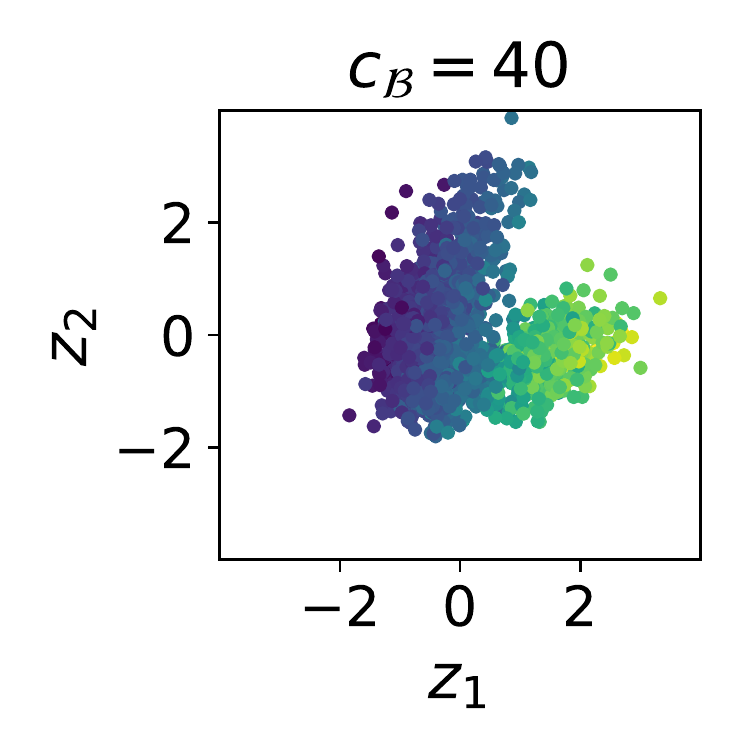}
    \caption{}
\label{fig:PCA_color_0_cut_2}
\end{subfigure}
~
\begin{subfigure}[t]{0.23\textwidth}
    \centering
    \includegraphics[height=1.6in]{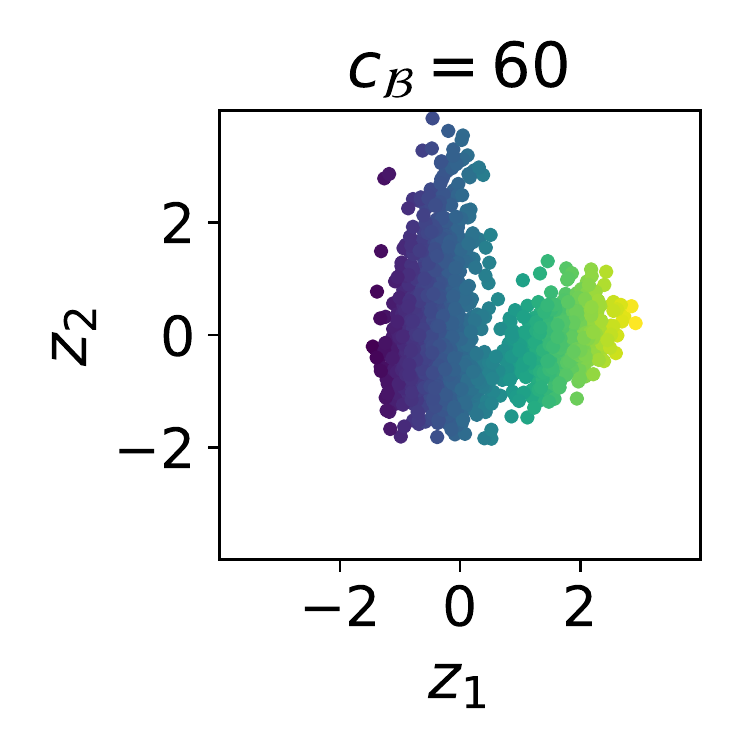}
    \caption{}
\label{fig:PCA_color_0_cut_3}
\end{subfigure}
\caption{(a) Histogram of log SVC ratios $\log \mathcal{K}_j - \log \mathcal{K}_0$ for all 200 genes in the dataset (with $m_{\B_j} = m_{\F_0} - m_{\F_j}$). Dotted lines show the value of the volume correction term in the SVC for different choices of background model complexity $c_\B$; for each choice, genes with $\log \mathcal{K}_j - \log \mathcal{K}_0$ values above the dotted line would be excluded from the foreground subspace based on the SVC.  (b) Posterior mean of the first two latent variables ($z_1$ and $z_2$), with the pPCA model applied to the genes selected with a background model complexity of $c_\B = 10$ (keeping 23 genes in the foreground). (c-e) Same as (b), but with $c_\B = 20$ (keeping 38 genes), $c_\B = 40$ (keeping 87 genes) and $c_\B = 60$ (keeping all 200 genes). In (a)-(d), the points are colored using the $z_1$ value when $c_\B = 60$.
}
\label{fig:ppca_robust}
\end{figure}

\section{Application: Glass model of gene regulation} \label{sec:on_off}
 
A central goal in the study of gene expression is to discover how individual genes regulate one another other's expression. Early studies of single cell gene expression noted the prevalence of genes that were bistable in their expression level~\citep{Shalek2013-gb,Singer2014-lg}. This suggests a simple physical analogy: if individual gene expression is a two-state system, we might study gene regulation with the theory of interacting two-state systems, namely spin glasses. 
We can consider for instance a standard model of this type in which each cell $i$ is described by a vector of spins $z_i = (z_{i 1},\ldots,z_{i d})^\top$ drawn from an Ising model, specifying whether each gene $j \in \{1, \ldots, d\}$ is ``on'' or ``off''.
In reality, gene expression lies on a continuum, so we use a continuous relaxation of the Ising model and parameterize each spin using a logistic function, setting $z_{i j 1}(x_{i j}, \mu, \tau) = 1/(1+\exp(-\tau (x_{i j} - \mu)))$ and $z_{i j 2}(x_{i j}, \mu, \tau) = 1 - z_{i j 1}(x_{i j}, \mu, \tau)$. Here, $x_{i j}$ is the observed expression level of gene $j$ in cell $i$, 
the unknown parameter $\mu$ controls the threshold for whether the expression of a gene is ``on'' (such that $z_{i j} \approx (1, 0)^\top$) or ``off'' (such that $z_{i j} \approx (0, 1)^\top$), and the unknown parameter $\tau > 0$ controls the sharpness of the threshold. 
The complete model is then given by
\begin{equation*}
X^{(i)} \sim p(x_i|H, J, \mu, \tau) := \frac{1}{\mathcal{Z}_{H,J,\mu,\tau}} \exp\!\Big(\sum_j H_j^\top z_{i j}(x_{i j}, \tau, \mu) + \sum_{j'>j} z^\top_{i j}(x_{i j}, \tau, \mu) J_{j j'} z_{i j'}(x_{i j'}, \tau, \mu)\Big)
\end{equation*}
where $\mathcal{Z}_{H,J,\mu,\tau}$ is the unknown normalizing constant of the model, and the vectors $H_j \in \mathbb{R}^2$ and matrices $J_{jj'} \in \mathbb{R}^{2 \times 2}$ are unknown parameters.
This model is motivated by experimental observations and is closely related to RNAseq analysis methods that have been successfully applied in the past \citep{Friedman2000-fd,Friedman2004-vp,Ding2005-bw,Chen2015-ig,Banerjee2008-nh,Duvenaud2008-ir,Liu2009-kx,Huynh-Thu2010-ee,Moignard2015-lc,Matsumoto2017-xi}. 
However, from a biological perspective we can expect that serious problems may occur when applying the model naively to an scRNAseq dataset.
Genes need not exhibit bistable expression: it is straightforward in theory to write down models of gene regulation that do not have just one or two steady states---gene expression may fall on a continuum, or oscillate, or have three stable states---and many alternative patterns have been well-documented empirically~\citep{Alon2019-ni}.
Interactions between genes may also be more complex than the model assumes, involving for instance three-way dependencies between genes.
All of these biological concerns can potentially produce severe violations of the proposed two-state glass model's assumptions.
Data selection provides a method for discovering where the proposed model applies. 

Applying standard Bayesian inference to the glass model is intractable, since the normalizing constant is unknown (it is an energy-based model). However, the normalizing constant does not affect the SVC, so we can still perform data selection.
We used a variational approximation to the SVC (Section~\ref{sec:vi}).  
We placed a Gaussian prior on $H$ and a Laplace prior on each entry of $J$ to encourage sparsity in the pairwise gene interactions; we also used Gaussian priors for $\mu$ and $\tau$ after applying an appropriate transform to remove constraints (Section~\ref{sec:si_glass_inference}).
Following the logic of stochastic variational inference, we optimized the variational approximation using minibatches of the data and a reparameterization gradient estimator~\citep{Hoffman2013-cq,Kingma2014-sp,Kucukelbir2017-ir}. 
We also simultaneously stochastically optimized the set of genes included in the foreground subspace, using the Leave-One-Out REINFORCE estimator~\citep{Kool2019-hl,Dimitriev2021-to}.
We implemented the model and inference strategy within the probabilistic programming language Pyro by defining a new distribution with log probability given by the negative NKSD~\citep{Bingham2019-aa}. 
Pyro provides automated, GPU-accelerated stochastic variational inference, requiring less than an hour for inference on datasets with thousands of cells.  

We examined three scRNAseq datasets, taken from (i) peripheral blood monocytes (PBMCs) from a healthy donor (2,428 cells), (ii) a MALT lymphoma (7,570 cells), and (iii) mouse neurons (10,658 cells) (Section~\ref{sec:si_glass_datasets}).
We preprocessed the data following standard protocols and focused on 200 high expression, high variability genes in each dataset, based on the metric of~\citet{Gigante2020-sc}.
We set $T=0.05$ as in Section~\ref{sec:ppca}, and used the Pitman-Yor expression for $m_\B$ (Equation~\ref{eqn:pymm}) with $\alpha=0.5$, $\theta=1$ and $D = 100$. 
This value of $D$ ensures that the number of background model parameters per data dimension is larger than the number of foreground model parameters per data dimension except for at very small $N$; in particular, there are 798 foreground model parameter dimensions associated with each data dimension (from the 199 interactions $J_{jj'}$ that each gene has with each other gene, plus the contribution of $H_j$), and $m_\B > 798\, r_\B$ for $N \ge 13$.
Our data selection procedure selects 65 genes (32.5\%) in the PBMC dataset, 0 genes in the neuron dataset, and 187 genes (93.5\%) in the MALT dataset; note that for a lower value of $m_\B$, in particular using $D = 10$, no genes are selected in the MALT dataset.
These results suggest substantial partial misspecification in the PBMC and neuron datasets, and more moderate partial misspecification in the MALT dataset.

\begin{figure}[t!]
    \centering
    \includegraphics[height=6.2in]{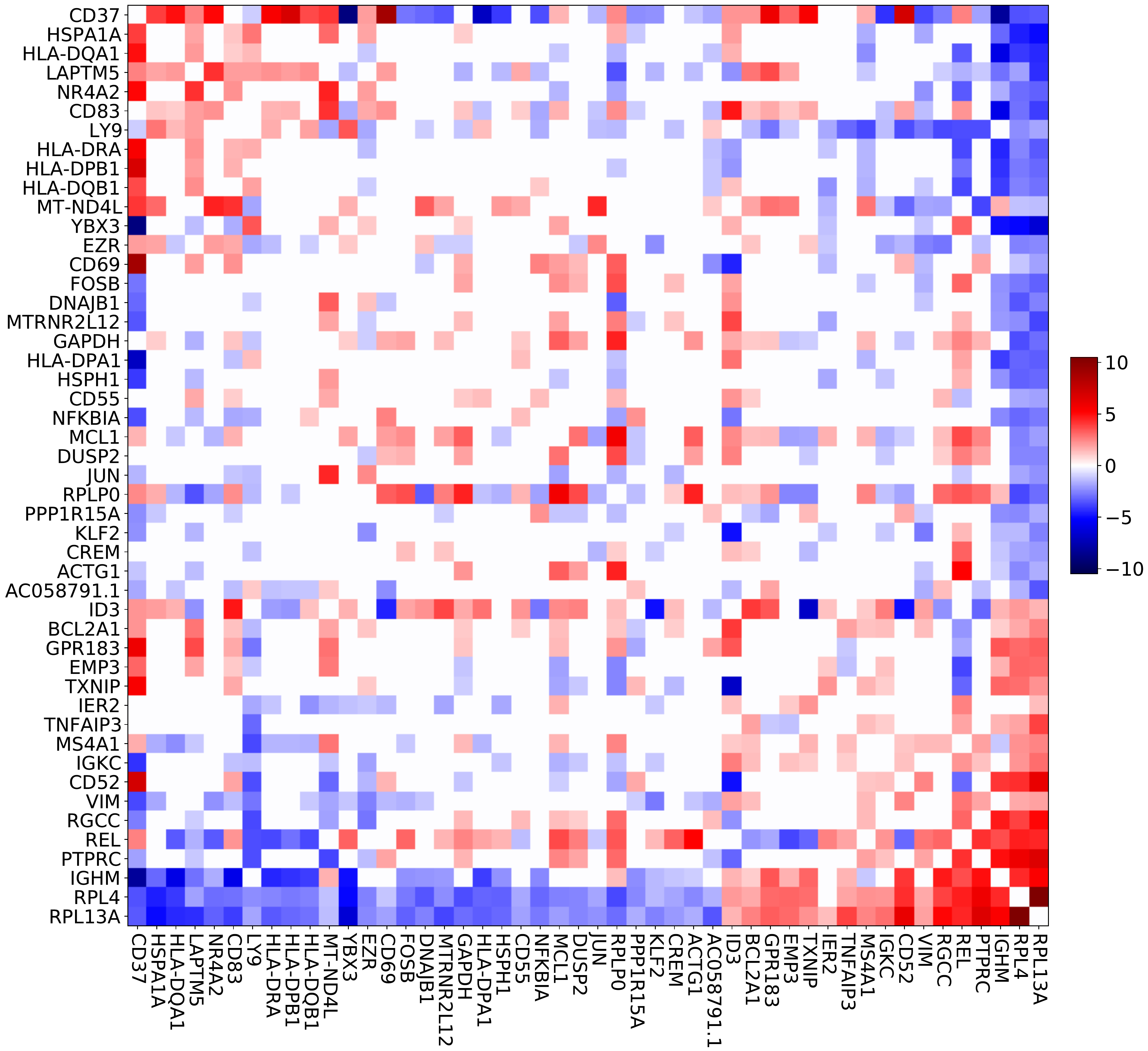}
    \caption{Posterior mean interaction energies $\Delta E_{j j'} := J_{j j' 2 1} + J_{j j' 1 2} - J_{j j' 2 2} - J_{j j' 1 1}$ for a subset of the selected genes. For visualization purposes, weak interactions ($|\Delta E_{j j'}| \le 1$) are set to zero, and genes with less than 10 total strong connections are not shown.
    Genes are sorted based on their (signed) projection onto the top principal component of the $\Delta E$ matrix.}
    \label{fig:interaction_map_selected}
\end{figure}

We investigated the biological information captured by the foreground model on the MALT dataset.
In particular, we looked at the approximate NKSD posterior for the selected 187 genes, and compared it to the approximate NKSD posterior for the model when applied to all 200 genes.
(Note that, since the glass model lacks a tractable normalizing constant, we cannot compare standard Bayesian posteriors.)
Figure~\ref{fig:interaction_map_selected} shows, for a subset of selected genes, the posterior mean of the interaction energy $\Delta E_{j j'} := J_{j j' 2 1} + J_{j j' 1 2} - J_{j j' 2 2} - J_{j j' 1 1}$, that is, the total difference in energy between two genes being in the same state versus in opposite states.
We focused on strong interactions with $|\Delta E_{j j'}| > 1$, corresponding to just 5\% of all possible gene-gene interactions (Figure~\ref{fig:DeltaEjjp_ranks_selected}).

One foreground gene with especially large loading onto the top principal component of the $\Delta E$ matrix is CD37 (Figure \ref{fig:interaction_map_selected}).
In B-cell lymphomas, of which MALT lymphoma is an example, CD37 loss is known to be associated with decreased patient survival~\citep{Xu-Monette2016-so}. 
Further, previous studies have observed that CD37 loss leads to high NF-$\kappa$B pathway activation~\citep{Xu-Monette2016-so}. Consistent with this observation, the estimated interaction energies in our model suggest that decreasing CD37 will lead to higher expression of REL, an NF-$\kappa$B transcription factor ($\Delta E_\textsc{CD37,REL} = 2.5$), decreased expression of NKFBIA, an NF-$\kappa$B inhibitor ($\Delta E_\textsc{CD37,NKFBIA} = -3.6$), and higher expression of BCL2A1, a downstream target of the NF-$\kappa$B pathway ($\Delta E_\textsc{CD37,BCL2A1} = 2.1$).
 Separately, a knockout study of Cd37 in B-cell lymphoma in mice does not show IgM expression~\citep{De_Winde2016-pk}, consistent with our model ($\Delta E_\textsc{CD37,IGHM} = -8.2$). The same study does show MHC-II expression, and our model predicts the same result, for HLA-DQ in particular ($\Delta E_\textsc{CD37,HLA-DQA1} = 5.0$, $\Delta E_\textsc{CD37,HLA-DQB1} = 3.7$).
These results suggest that the data selection procedure can successfully find systems of interacting genes that can plausibly be modeled as a spin glass, and which, in this case, are relevant for cancer.
 
\begin{figure}[t!]
    \centering
    \includegraphics[height=3in]{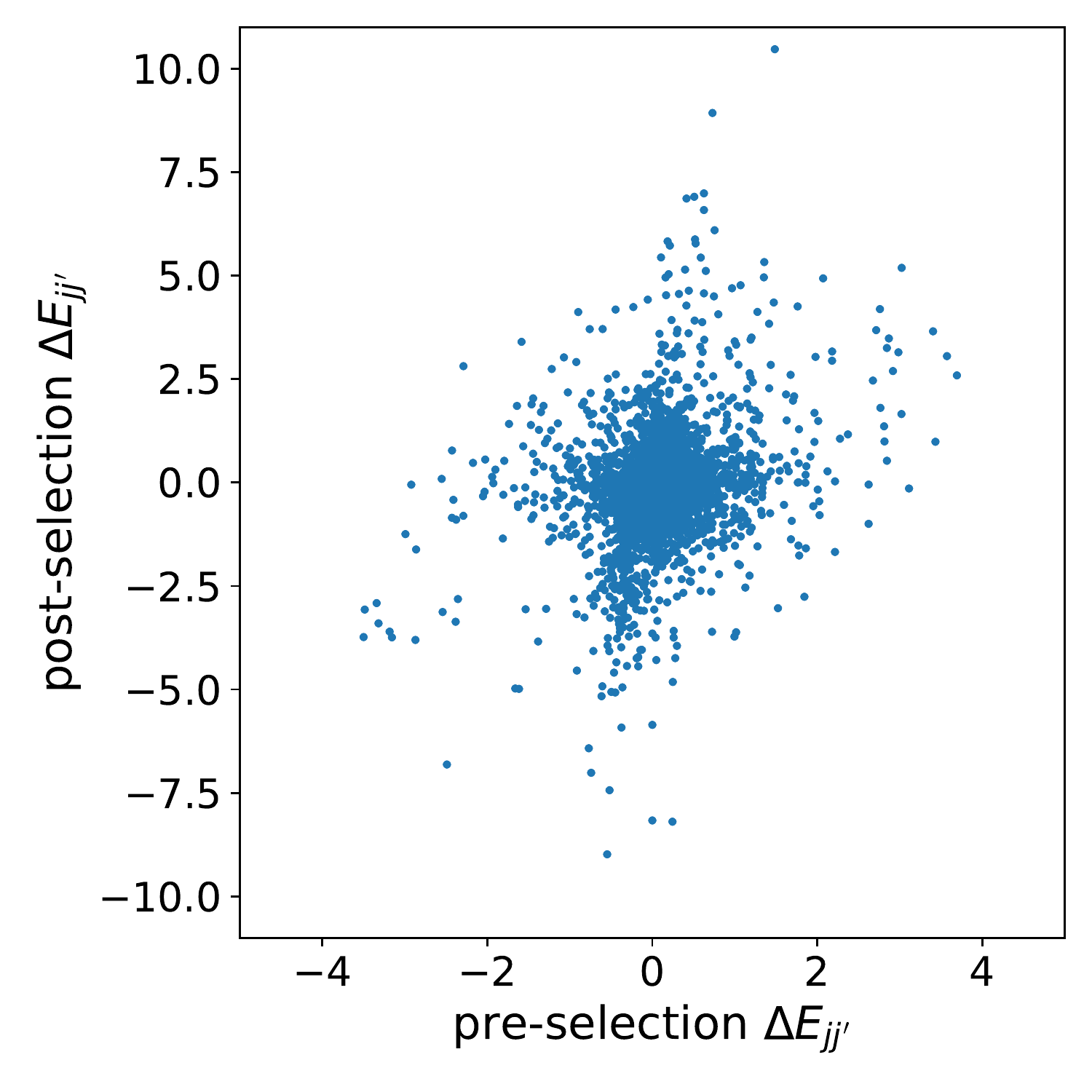}
    \caption{
    Comparison of posterior mean interaction energies $\Delta E_{j j'}$ for a model applied to all 200 genes (pre-data selection) to those learned from a model applied to the selected foreground subspace (post-data selection). 
    Each point corresponds to a pairwise interaction between two of the selected 187 genes.}
    \label{fig:interaction_correlation}
\end{figure}

To investigate whether data selection provided a benefit in this analysis, we compare with the results obtained by applying the foreground model to the full dataset of all 200 genes.
All but one of the interactions listed above have $|\Delta E| < 1$ in the full foreground model, and three have opposite signs ($\Delta E_\textsc{CD37,NFKBIA} = +0.7$, $\Delta E_\textsc{CD37,IGHM} = +0.0$, $\Delta E_\textsc{CD37,HLA-DQB1} = -0.6$); see Figure~\ref{fig:interaction_map_all}. 
Across all 187 selected genes, we find only a moderate correlation between the interaction energies estimated when using the full foreground model compared with the data selection-based model (Spearman's rho $= 0.30$, $p < 0.01$; Figure~\ref{fig:interaction_correlation}).
These results show that using data selection can lead to substantially different, and arguably more biologically plausible, downstream conclusions as compared to naive application of the foreground model to the full dataset.

As a simple alternative, one might wonder whether genes that are poorly fit by the model could be identified simply by looking their posterior uncertainty under the full foreground model.
This simple approach does not work well, however, since it is possible for parameters to have low uncertainty even when the model poorly describes the data.
Indeed, we found that examining uncertainty in the glass model does not lead to the same conclusions as performing data selection: the genes excluded by our data selection procedure are not the ones with the highest uncertainty in their interactions (as measured by the mean posterior standard deviation of $\Delta E_{j j'}$ under the NKSD posterior), though they do have above average uncertainty (Figure~\ref{fig:interaction_uncertainty}).
Instead, the genes excluded by our data selection procedure are the ones with the highest fraction of cells with zero expression, violating the assumptions of the foreground model (Figure~\ref{fig:fract_zero_selected}).
These results show how data selection provides a sound, computationally tractable approach to criticizing and evaluating complex Bayesian models.

\section{Discussion} \label{sec:discussion} 

Statistical modeling is often described as an iterative process, where we design models, infer hidden parameters, critique model performance, and then use what we have learned from the critique to design new models and repeat the process~\citep{Gelman2013-al}. This process has been called ``Box's loop''~\citep{Blei2014-vn}. From one perspective, data selection offers a new criticism approach. It goes beyond posterior predictive checks and related methods by changing the model itself, replacing potentially misspecified components with a flexible background model. This has important practical consequences: since misspecification can distort estimates of model parameters in unpredictable ways, predictive checks are likely to indicate mismatch between the model and the data across the entire space $\X$ even when the proposed parametric model is only partially misspecified. Our method, by contrast, reveals precisely those subspaces of $\X$ where model-data mismatch occurs.

From another perspective, data selection is outside the design-infer-critique loop. An underlying assumption of Box's loop is that scientists want to model the entire dataset. As datasets get larger, and measurements get more extensive, this desire has led to more and more complex (and often difficult to interpret) models. In experimental science, however, scientists have often followed the opposite trajectory: faced with a complicated natural phenomenon, they attempt to isolate a simpler example of the phenomenon for close study. 
Data selection offers one approach to formalizing this intuitive idea in the context of statistical analysis: we can propose a simple parametric model and then isolate a piece of the whole dataset---a subspace $\XXF$---to which this model applies. When working with large, complicated datasets, this provides a method of searching for simpler phenomena that are hypothesized to exist.

\section{Acknowledgments}

The authors wish to thank Jonathan Huggins, Pierre Jacob, Andre Nguyen, and Elizabeth Wood for helpful discussions and suggestions.
We would like to thank Debora S.\ Marks in particular for suggesting the use of a Potts model in RNAseq analysis.
E.N.W.\ is supported by the Fannie and John Hertz Fellowship.
J.W.M.\ is supported by the National Institutes of Health grant 5R01CA240299-02.

\bibliographystyle{abbrvnat}
\small
\bibliography{references}
\normalsize

\newpage 
\beginsupplement

\bigskip
\bigskip
\begin{center}
\textbf{\LARGE Supplementary material for\\``Bayesian data selection''}
\end{center}
\medskip

\section{Methods details}

\subsection{Calibrating $T$} \label{sec:calibrate_t}

The SVC contains a hyperparameter $T > 0$.
To choose an appropriate value of $T$, we aim, roughly, to match the coverage of the generalized posterior
\begin{equation*}
	\pi_N^{\textsc{svc}}(\theta)d\theta = \frac{1}{z_N}\exp\!\Big(-\frac{N}{T} \widehat{\textsc{nksd}}(p_0(x))\| q(x|\theta))\Big) \pi(\theta)d\theta
\end{equation*}
to the coverage of the standard Bayesian posterior
\begin{equation*}
	\pi_N^{\textsc{kl}}(\theta)d\theta = \frac{1}{q(X^{(1:N)})} \exp\!\Big(\sum_{i=1}^N \log q(X^{(i)}|\theta)\Big) \pi(\theta)d\theta
\end{equation*}
when the model is well-specified.
 
Let $\theta_{*}$ be the true parameter value, such that $p_0(x) = q(x|\theta_{*})$ almost everywhere. Let $G^{\textsc{kl}}(\theta) := \nabla^2_\theta \mathbb{E}_{X \sim p_0}[-\log q(X|\theta)]$ and let $\theta_N^{\textsc{kl}} := \argmax \sum_{i=1}^N \log q(X^{(i)}|\theta)$ be the maximum likelihood estimator. Let $h_N^{\textsc{kl}}$ be the density of $\sqrt{N}(\theta - \theta_N^{\textsc{kl}})$ when $\theta \sim \pi_N^{\textsc{kl}}$. Under regularity conditions \citep{Miller2019-ur}, according to the Bernstein--von Mises theorem, $h_N^{\textsc{kl}}$ converges to a normal distribution in total variation,
\begin{equation*}
\int_{\mathbb{R}^m} \Big| h_N^{\textsc{kl}}(x) - \mathcal{N}\big(x \mid 0, G^{\textsc{kl}}(\theta_{*})^{-1}\big) \Big| dx \xrightarrow[N \to \infty]{\textup{a.s.}} 0.
\end{equation*}
According to Theorem~\ref{thm:marginal_ksd}, the generalized posterior associated with the SVC has analogous behavior. Let $G^{\textsc{svc}}(\theta) := \nabla^2_\theta \frac{1}{T}\textsc{nksd}(p_0(x)\| q(x|\theta))$ and let $\theta_N^{\textsc{svc}} := \argmin \widehat{\textsc{nksd}}(p_0(x)\| q(x|\theta))$.
Let $h_N^{\textsc{svc}}$ be the density of $\sqrt{N}(\theta - \theta_N^{\textsc{svc}})$ when $\theta \sim \pi_N^{\textsc{svc}}$. Then by Theorem~\ref{thm:marginal_ksd}, $h_N^{\textsc{svc}}$ converges to a normal distribution in total variation,
\begin{equation*}
\int_{\mathbb{R}^m} \Big| h_N^{\textsc{svc}}(x) - \mathcal{N}\big(x \mid 0, G^{\textsc{svc}}(\theta_{*})^{-1}\big) \Big| dx \xrightarrow[N \to \infty]{\textup{a.s.}} 0.
\end{equation*}
For the uncertainty in each posterior to be roughly the same order of magnitude, we want
\begin{equation*}
	\det G^{\textsc{kl}}(\theta_{*}) \approx \det G^{\textsc{svc}}(\theta_{*}),
\end{equation*}
or equivalently,
\begin{equation*}
	T \approx \left(\frac{ \det \big[\nabla^2_\theta\big\vert_{\theta=\theta_{*}} \textsc{nksd}(p_0(x)\| q(x|\theta))\big]}{\det \big[\nabla^2_\theta\big\vert_{\theta=\theta_{*}} \mathbb{E}_{X \sim p_0}[-\log q(X|\theta)]\big]}\right)^{1/m}.
\end{equation*}

To choose a single $T$ value, we simulate true parameters from the prior, generate data from each simulated true parameter, and take the median of the estimated $T$ values.
That is, we use the median $\hat{T}$ across samples drawn as
\begin{equation}
\begin{split}
	\theta_{*} &\sim \pi(\theta)\\
	X^{(i)} &\overset{\mathrm{iid}}{\sim} q(x|\theta_{*})\\
	\hat{T} & = \left(\frac{ |\det \big[\nabla^2_\theta\big\vert_{\theta=\theta_{*}} \widehat{\textsc{nksd}}(p_0(x)\| q(x|\theta))\big]|}{|\det \big[\nabla^2_\theta\big\vert_{\theta=\theta_{*}} \frac{1}{N} \sum_{i=1}^N -\log q(X^{(i)}|\theta)\big]|}\right)^{1/m}.
\end{split}
\label{eqn:T_estimator}
\end{equation}
In practice, we find that the order of magnitude of $\hat{T}$ is stable across samples $\theta_{*}$ from $\pi(\theta)$. See Section~\ref{sec:si_ppca_calibration} for an example.

\subsection{Kernel recommendations} \label{sec:si_kernel_choice}

To obtain subsystem independence (Proposition~\ref{prop:subsystem_indep}), we suggest using a kernel that factors across subspaces, such that $k(X, Y) = k_\F(X_\F, Y_\F) k_\B(X_\B, Y_\B)$ where $k_\F$ and $k_\B$ are integrally strictly positive definite kernels.
In the applications in Sections~\ref{sec:ppca} and \ref{sec:on_off}, we use the following kernel.
\begin{definition}
The \emph{factored inverse multiquadric (IMQ) kernel} is defined as
	\begin{equation*}
	k(x, y) = \prod_{i=1}^d \big(c^2 + (x_i - y_i)^2\big)^{\beta/d}
    \end{equation*}
for $x,y \in \mathbb{R}^d$, where $\beta \in [-1/2, 0)$ and $c > 0$.
\end{definition}
\noindent Note that this kernel factors across any subset of dimensions, that is, if $S \subseteq \{1, \ldots, d\}$ and $S^c = \{1, \ldots, d\} \setminus S$, then we can write $k(x, y) = k_S(x_S, y_S) k_{S^c}(x_{S^c}, y_{S^c})$. 
Thus, if the foreground subspace $\X_\F$ is the span of a subset of the standard basis, such that $x_\F = V^\top x = x_S$ for some $S \subseteq \{1, \ldots, d\}$, then it follows that $k$ factors as $k(x,y) = k_\F(x_\F, y_\F) k_\B(x_\B, y_\B)$. 
Along with this observation,
the next result shows that the factored IMQ satisfies the conditions of Theorem~\ref{thm:marginal_ksd} that pertain to $k$ alone.
\begin{proposition}
The factored IMQ kernel is symmetric, positive, bounded, integrally strictly positive definite, and has continuous and bounded partial derivatives up to order~2.
\end{proposition}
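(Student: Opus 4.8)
The plan is to verify the five claimed properties in turn, writing $k(x,y) = \prod_{i=1}^d \phi(x_i - y_i)$ with $\phi(r) := (c^2 + r^2)^{\beta/d}$, and observing that only integral strict positive definiteness requires real work. Symmetry is immediate since $\phi$ is an even function. Positivity holds because each factor $\phi(r) = (c^2 + r^2)^{\beta/d}$ is a strictly positive base raised to a real power, so $\phi(r) > 0$. Boundedness holds because $\beta/d < 0$ makes $t \mapsto t^{\beta/d}$ decreasing, so $0 < \phi(r) \le (c^2)^{\beta/d}$ and hence $0 < k(x,y) \le c^{2\beta}$ for all $x,y$. Smoothness: since $c^2 + r^2 \ge c^2 > 0$ and $t \mapsto t^{\beta/d}$ is $C^\infty$ on $(0,\infty)$, we get $\phi \in C^\infty(\mathbb{R})$, hence $k \in C^\infty(\mathbb{R}^d\times\mathbb{R}^d)$; in particular all partial derivatives of order at most $2$ are continuous, so only their boundedness remains to be checked.

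For integral strict positive definiteness, the key device is the Gaussian scale-mixture representation obtained from the identity $a^{-\nu} = \Gamma(\nu)^{-1}\int_0^\infty s^{\nu-1}e^{-as}\,ds$ (valid for $\nu>0$, $a>0$) applied to each factor with $\nu = -\beta/d > 0$ and $a = c^2 + (x_i-y_i)^2$:
\begin{equation*}
k(x,y) = \frac{1}{\Gamma(-\beta/d)^{d}} \int_{(0,\infty)^d} \Big(\prod_{i=1}^d s_i^{-\beta/d - 1} e^{-c^2 s_i}\Big)\, \exp\!\Big(-\sum_{i=1}^d s_i (x_i - y_i)^2\Big)\, ds.
\end{equation*}
Given any $g$ with $0 < \int_{\mathbb{R}^d}|g| < \infty$, the boundedness of $k$ and of $\exp(-\sum_i s_i(x_i-y_i)^2)$, together with $\int_{(0,\infty)^d}\prod_i s_i^{-\beta/d-1}e^{-c^2 s_i}\,ds = \Gamma(-\beta/d)^d c^{2\beta} < \infty$ and $g\in L^1$, justify applying Fubini to obtain
\begin{equation*}
\int\int g(x) k(x,y) g(y)\,dx\,dy = \frac{1}{\Gamma(-\beta/d)^{d}}\int_{(0,\infty)^d}\Big(\prod_{i=1}^d s_i^{-\beta/d-1}e^{-c^2 s_i}\Big)\Big(\int\int g(x)\,e^{-\sum_i s_i (x_i - y_i)^2}\,g(y)\,dx\,dy\Big) ds.
\end{equation*}
For each fixed $s\in(0,\infty)^d$, the anisotropic Gaussian kernel $(x,y)\mapsto\exp(-\sum_i s_i(x_i-y_i)^2)$ is integrally strictly positive definite (by a coordinate rescaling reducing it to the isotropic Gaussian kernel, which is standard; cf.\ the RBF kernel of Section~\ref{sec:toy_example}), so the inner integral is strictly positive; since the weight $\prod_i s_i^{-\beta/d-1}e^{-c^2 s_i}$ is strictly positive on $(0,\infty)^d$, the outer integral is strictly positive, proving $k$ is integrally strictly positive definite.

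Finally, for boundedness of the partials up to order $2$, an elementary computation shows that $\phi$, $\phi'$, and $\phi''$ are each continuous on $\mathbb{R}$ and decay like $|r|^{2\beta/d}$, $|r|^{2\beta/d-1}$, and $|r|^{2\beta/d-2}$ as $|r|\to\infty$ (all exponents negative since $\beta<0$), hence are bounded. Because $k=\prod_{i=1}^d\phi(x_i-y_i)$ is a product over $i$ of factors each depending only on the pair $(x_i,y_i)$, any partial derivative of $k$ of total order at most $2$ in the $2d$ variables is, by the product rule, a finite sum of products in which at most two factors have been replaced by $\pm\phi'$ or $\phi''$ and the rest remain $\phi$; each such product is a product of bounded continuous functions and is therefore bounded and continuous. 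The main (indeed only) obstacle is the integral strict positive definiteness, and the scale-mixture representation reduces it to the well-known positive definiteness of Gaussian kernels; the remaining properties are routine.
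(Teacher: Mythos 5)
Your proof is correct, and for the one property that requires real work---integral strict positive definiteness---you take a genuinely different route from the paper. The paper argues via Fourier analysis: it cites the fact that the one-dimensional standard IMQ kernel is integrally strictly positive definite and characteristic, invokes Bochner's theorem to write each factor $\psi(x_i-y_i)$ as the Fourier transform of a finite nonnegative measure $\Lambda^0$ with full support on $\mathbb{R}$, concludes that the product kernel is the Fourier transform of the product measure $\Lambda=\Lambda^0\times\cdots\times\Lambda^0$ with full support on $\mathbb{R}^d$, and then uses results of Sriperumbudur et al.\ (characteristic $+$ translation-invariant $\Rightarrow$ integrally strictly positive definite). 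You instead use the Gamma-integral identity $a^{-\nu}=\Gamma(\nu)^{-1}\int_0^\infty s^{\nu-1}e^{-as}\,ds$ to exhibit $k$ as a scale mixture of anisotropic Gaussian kernels with a strictly positive, finite-mass mixing density on $(0,\infty)^d$, justify Fubini by the uniform bound $e^{-\sum_i s_i(x_i-y_i)^2}\le 1$ together with $g\in L^1$, and reduce the claim to the integral strict positive definiteness of the Gaussian kernel. Your argument is more self-contained and avoids the machinery of characteristic kernels and spectral supports, at the cost of taking the Gaussian case as known (which is itself usually proved by a Fourier computation, so the two proofs are cousins at bottom); the paper's route generalizes more readily to non-product translation-invariant kernels, while yours exploits the specific complete-monotonicity structure of the IMQ profile. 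Your treatment of the remaining properties (symmetry, positivity, boundedness, and boundedness of the partials up to order two via the decay rates of $\phi$, $\phi'$, $\phi''$ and the product rule) matches the paper's explicit computation in substance and is fine.
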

\begin{proof}
It is clear that $k(x,y) = k(y,x)$ and $k(x,y) > 0$. 
Next, we show that $k$ has continuous and bounded partial derivatives up to order 2. Note that we can write $k(x,y) = \prod_{i=1}^d \psi(x_i - y_i)$ where $\psi(r) = (c^2 + r^2)^{\beta/d}$ for $r \in \mathbb{R}$.
Differentiating, we have 
\begin{align*}
\psi'(r) &= \frac{\beta}{d} \frac{2 r}{c^2 + r^2} \psi(r) \\
\psi''(r) &= \Big(\frac{\beta^2}{d^2} - \frac{\beta}{d}\Big) \Big(\frac{2 r}{c^2 + r^2}\Big)^2 \psi(r) + \frac{\beta}{d}\frac{2}{c^2 + r^2} \psi(r).
\end{align*}
Since $r^2 \geq 0$ and $\beta < 0$, $|\psi(r)| \leq c^{2 \beta/d}$ for all $r \in \mathbb{R}$.
Further, it is straightforward to verify that $|\psi'(r)|$ and $|\psi''(r)|$ are bounded on $\mathbb{R}$
by using the fact that $|r|/(c^2 + r^2) \leq 1/(2 c)$.
By the chain rule, it follows that for all $i,j$, the functions $k(x,y)$, $|\partial k / \partial x_i|$, 
and $|\partial^2 k / \partial x_i \partial y_j|$ are bounded.
Thus, we conclude that $k$, $\|\nabla k\|$, and $\|\nabla^2k\|$ are bounded.

Finally, we show that $k$ is integrally strictly positive definite. First, for any $d$, for $x,y\in\mathbb{R}^d$, the function $(x,y)\mapsto (c^2 + \|x - y\|_2^2)^{\beta/d}$ is an integrally strictly positive definite kernel (see, for example, Section 3.1 of \citealp{Sriperumbudur2010-ew}); we refer to this as the standard IMQ kernel.
Since the factored IMQ is a product of one-dimensional standard IMQ kernels, it defines a kernel on $\mathbb{R}^d$ (Lemma 4.6 of~\citealp{Steinwart2008-sz}) and is positive definite (Theorem 4.16 of~\citealp{Steinwart2008-sz}).
By Bochner's theorem (Theorem 3 of~\citealp{Sriperumbudur2010-ew}), a continuous positive definite kernel can be expressed in terms of the Fourier transform of a finite nonnegative Borel measure. In particular, applying Bochner's theorem to $\psi(r)$, we have 
\begin{equation*}
\begin{split}
	k(x,y) &= \Psi(x - y) := \prod_{i=1}^d \psi(x_i - y_i) = \prod_{i=1}^d \int_{\mathbb{R}} \exp\!\big(-\sqrt{-1} (x_i - y_i) \omega_i\big) d\Lambda^0(\omega_i) \\
	&= \int_{\mathbb{R}^d} \exp\!\big(-\sqrt{-1} (x - y)^\top \omega\big) d\Lambda(\omega)
\end{split}
\end{equation*}
by Fubini's theorem, where $\Lambda^0$ is the finite nonnegative Borel measure on $\mathbb{R}$ associated with $\psi(r)$
and $\Lambda = \Lambda^0 \times \cdots \times \Lambda^0$ is the resulting product measure on $\mathbb{R}^d$.
Applying Bochner's theorem in the other direction, we see that $\Psi$ is a positive definite function. 
Moreover, since the standard IMQ kernel is characteristic (Theorem 7 of~\citealp{Sriperumbudur2010-ew}), it follows that the support of $\Lambda^0$ is $\mathbb{R}$ (Theorem 9 of~\citealp{Sriperumbudur2010-ew}), and thus the support of $\Lambda$ is $\mathbb{R}^d$.
This implies that the factored IMQ kernel $k$ is characteristic (Theorem 9 of~\citealp{Sriperumbudur2010-ew}) and, since $k$ is also translation invariant, $k$ must be integrally strictly positive definite (Section 3.4 of~\citealp{Sriperumbudur2011-lv}).
\end{proof}

Our choice of the factored IMQ kernel is motivated by the analysis of \citet{Gorham2017-sd}, which suggests that the standard IMQ is a good default choice for the kernelized Stein discrepancy, particularly when working with distributions that are (roughly speaking) very spread out. 
In particular, it is straightforward to show that the factored IMQ kernel, like the standard IMQ kernel, meets the conditions of Theorem 3.2 of~\citet{Huggins2018-lr}.
However, we do not pursue further the question of whether the \textsc{nksd} with the factored IMQ detects convergence and non-convergence since our statistical setting is different from that of \citet{Gorham2017-sd}, and we are assuming the data consists of i.i.d.\ samples from some underlying distribution rather than correlated samples from an MCMC chain which may or may not converge.

\subsection{Exact solution for exponential families} \label{sec:si_conjugacy}

Here, we show that when $q(x|\theta)$ is an exponential family, the estimated \textsc{nksd} has the form
\begin{equation}
\label{eqn:nksd_expfam}
	\widehat{\textsc{nksd}}(p_0(x)\|q(x|\theta)) = \theta^\top A\, \theta + B^\top \theta + C
\end{equation}
where $A$, $B$, and $C$ depend on the data but not on $\theta$.
Since $q_\theta(x) = q(x|\theta) = \lambda(x) \exp(\theta^\top t(x) - \kappa(\theta))$, we have $s_{q_\theta}(x) = \nabla_x \log \lambda(x) + (\nabla_x t(x))^\top \theta$ where $(\nabla_x t(x))_{i j} = \partial t_i / \partial x_j$. Thus, we can write
\begin{align}
\label{eqn:u_quadratic_form}
u_\theta (x, y) &:= s_{q_\theta}(x)^\top s_{q_\theta}(y) k(x, y) + s_{q_\theta}(x)^\top \nabla_y k(x, y) + s_{q_\theta}(y)^\top \nabla_x k(x, y) + \Tr(\nabla_x \nabla_y^\top k(x, y))\notag\\
&= \theta^\top [(\nabla_x t(x)) (\nabla_y t(y))^\top k(x, y)] \theta\notag\\
&~~~ + [(\nabla_x \log \lambda(x))^\top (\nabla_y t(y))^\top k(x, y) + (\nabla_y \log \lambda(y))^\top (\nabla_x t(x))^\top k(x, y)\notag\\
&~~~~~~ + (\nabla_x k(x, y))^\top (\nabla_y t(y))^\top + (\nabla_y k(x, y))^\top (\nabla_x t(x))^\top] \theta\notag\\
&~~~ + [(\nabla_x \log \lambda(x))^\top (\nabla_y \log \lambda(y)) k(x, y) + (\nabla_y \log \lambda(y))^\top (\nabla_x k(x, y))\notag\\
&~~~~~~ + (\nabla_x \log \lambda(x))^\top (\nabla_y k(x, y)) + \Tr(\nabla_x \nabla_y^\top k(x, y))].
\end{align}
Then the estimated \textsc{nksd} takes the form in Equation~\ref{eqn:nksd_expfam} if we choose
\begin{equation*}
\begin{split}
	A := \frac{1}{\sum_{i\neq j} k(X^{(i)}, X^{(j)})} \sum_{i\neq j} & \nabla_x t(X^{(i)}) \nabla_x t(X^{(j)})^\top k(X^{(i)}, X^{(j)})\\
B^\top := \frac{1}{\sum_{i\neq j} k(X^{(i)}, X^{(j)})} \sum_{i\neq j} \big[& (\nabla_x \log \lambda(X^{(i)}))^\top \nabla_x t(X^{(j)})^\top k(X^{(i)}, X^{(j)})\\ & + (\nabla_x \log \lambda(X^{(j)}))^\top \nabla_x t(X^{(i)})^\top k(X^{(i)}, X^{(j)})\\ & + (\nabla_x k(X^{(i)}, X^{(j)}))^\top \nabla_x t(X^{(j)})^\top + (\nabla_y k(X^{(i)}, X^{(j)}))^\top \nabla_x t(X^{(i)})^\top\big]\\
C := \frac{1}{\sum_{i\neq j} k(X^{(i)}, X^{(j)})} \sum_{i\neq j} \big[ & (\nabla_x \log \lambda(X^{(i)}))^\top (\nabla_x \log \lambda(X^{(j)})) k(X^{(i)}, X^{(j)})\\ & + (\nabla_x \log \lambda(X^{(j)}))^\top \nabla_x k(X^{(i)}, X^{(j)}) \\
&+ (\nabla_x \log \lambda(X^{(i)}))^\top \nabla_y k(X^{(i)}, X^{(j)}) + \Tr(\nabla_x \nabla_y^\top k(X^{(i)}, X^{(j)})) \big].
\end{split}
\end{equation*}
If the prior on $\theta$ is $\mathcal{N}(\mu,\Sigma_0)$, then the SVC is
\begin{equation*}
\begin{split}
	\mathcal{K} = & \left(\frac{2\pi}{N}\right)^{m_\B/2} (2\pi)^{-m_\F/2} (\det \Sigma_0)^{-1/2} \\  & \times \int \exp\!\Big(-\frac{N}{T} [\theta^\top A\, \theta + B^\top \theta + C]\Big) \exp\!\Big(-\frac{1}{2} (\theta - \mu)^\top \Sigma_0^{-1} (\theta - \mu)\Big) d\theta\\
	= & \left(\frac{2\pi}{N}\right)^{m_\B/2} (2\pi)^{-m_\F/2} (\det \Sigma_0)^{-1/2} \\  
	& \times \int \exp\!\bigg(-\frac{1}{2} \theta^\top \Big(\frac{2N}{T} A + \Sigma_0^{-1}\Big) \theta + \Big(-\frac{N}{T} B^\top + \mu^\top \Sigma_0^{-1}\Big) \theta - \frac{N}{T} C - \frac{1}{2} \mu^\top \Sigma_0^{-1} \mu \bigg) d\theta\\
	=& \left(\frac{2\pi}{N}\right)^{m_\B/2} (\det \Sigma_0)^{-1/2} \bigg(\det \Big(\frac{2 N}{T} A + \Sigma_0^{-1}\Big) \bigg)^{-1/2} \\ 
	&\times\exp\!\bigg(\frac{1}{2} \Big(-\frac{N}{T} B^\top + \mu^\top \Sigma_0^{-1}\Big)^\top \Big(\frac{2N}{T} A + \Sigma_0^{-1}\Big)^{-1} \Big(-\frac{N}{T} B^\top + \mu^\top \Sigma_0^{-1}\Big) -\frac{N}{T} C - \frac{1}{2} \mu^\top \Sigma_0^{-1} \mu \bigg).
\end{split}
\end{equation*}
Meanwhile, if $q(x|\theta) = \mathcal{N}(\theta,\Sigma)$ where $\Sigma$ is a fixed covariance matrix, then we have $\nabla_x \log \lambda(x) = -\Sigma^{-1} x$ and $\nabla_x t(x) = \Sigma^{-1}$.

\subsection{Comparing many foregrounds using approximate optima} \label{sec:si_approx_optima}

Here, we justify the technique described in Section~\ref{sec:approximate_optima}.
As in Section~\ref{sec:approximate_optima}, define
$\ell_j(\theta) = \widehat{\textsc{nksd}}(p_0(x_{\F_j})\| q(x_{\F_j}|\theta))$ for $j \in \{1, 2\}$,
and let $\theta_N(w) = \argmin_\theta \mathcal{L}(w, \theta)$ where
\begin{equation*}
	\mathcal{L}(w, \theta) := \ell_1(\theta) + w (\ell_2(\theta) - \ell_1(\theta))
\end{equation*}
for $w\in[0,1]$. 
We assume that the conditions of Theorem~\ref{thm:marginal_ksd} are met, over both $\X_{\F_1}$ and $\X_{\F_2}$.
Since $(\partial\mathcal{L}/\partial\theta_i)(w, \theta_N(w)) = 0$, we have
\begin{equation*}
	0 = \frac{\partial}{\partial w} \Big(\frac{\partial\mathcal{L}}{\partial \theta_i}(w, \theta_N(w))\Big)
	= \frac{\partial^2\mathcal{L}}{\partial w \partial \theta_i}(w,\theta_N(w)) + \sum_{j} \frac{\partial^2\mathcal{L}}{\partial\theta_i \partial\theta_j}(w,\theta_N(w)) \Big(\frac{\partial}{\partial w}\theta_{N,j}(w)\Big),
\end{equation*}
or equivalently, in matrix/vector notation,
\begin{equation*}
	0 = \nabla_w(\nabla_\theta \mathcal{L}(w, \theta_N(w)))
	= \nabla_\theta \nabla_w \mathcal{L}(w, \theta_N) + \nabla_\theta^2 \mathcal{L}(w, \theta_N) \nabla_w(\theta_N(w)).
\end{equation*}
Rearranging, we have
\begin{equation*}
	\nabla_w \theta_N(w) = -\big(\nabla_\theta^2 \mathcal{L}(w, \theta_N)\big)^{-1} \nabla_\theta \nabla_w \mathcal{L}(w, \theta_N).
\end{equation*}
At $w = 0$ we find, plugging back in the definition of $\mathcal{L}$,
\begin{equation*}
\begin{split}
	\nabla_w \theta_N(0) &= -\nabla_\theta^2 \ell_1(\theta_N(0))^{-1} (\nabla_\theta \ell_2(\theta_N(0)) - \nabla_\theta\ell_1(\theta_N(0)))\\
	&= -\nabla_\theta^2 \ell_1(\theta_N(0))^{-1} \nabla_\theta \ell_2(\theta_N(0)).
\end{split}
\end{equation*}
Applying a first-order Taylor series expansion gives us $\theta_N(1) \approx \theta_N(0) + \nabla_w \theta_N(0)$, which yields Equation~\ref{eqn:IJ_grad}.

\section{Asymptotics of the alternative selection criteria} \label{sec:asymptotics}

Theorem~\ref{thm:selection_consistency} shows that the SVC exhibits all four types of consistency: data selection, nested data selection, model selection, and nested model selection.  In this section, we establish the consistency properties of the alternative criteria considered in Section~\ref{sec:method_asymptotics}.

\begin{figure}[t!]
    \centering
    \begin{subfigure}[t!]{0.48\textwidth}
        \centering
        \includegraphics[height=2.5in]{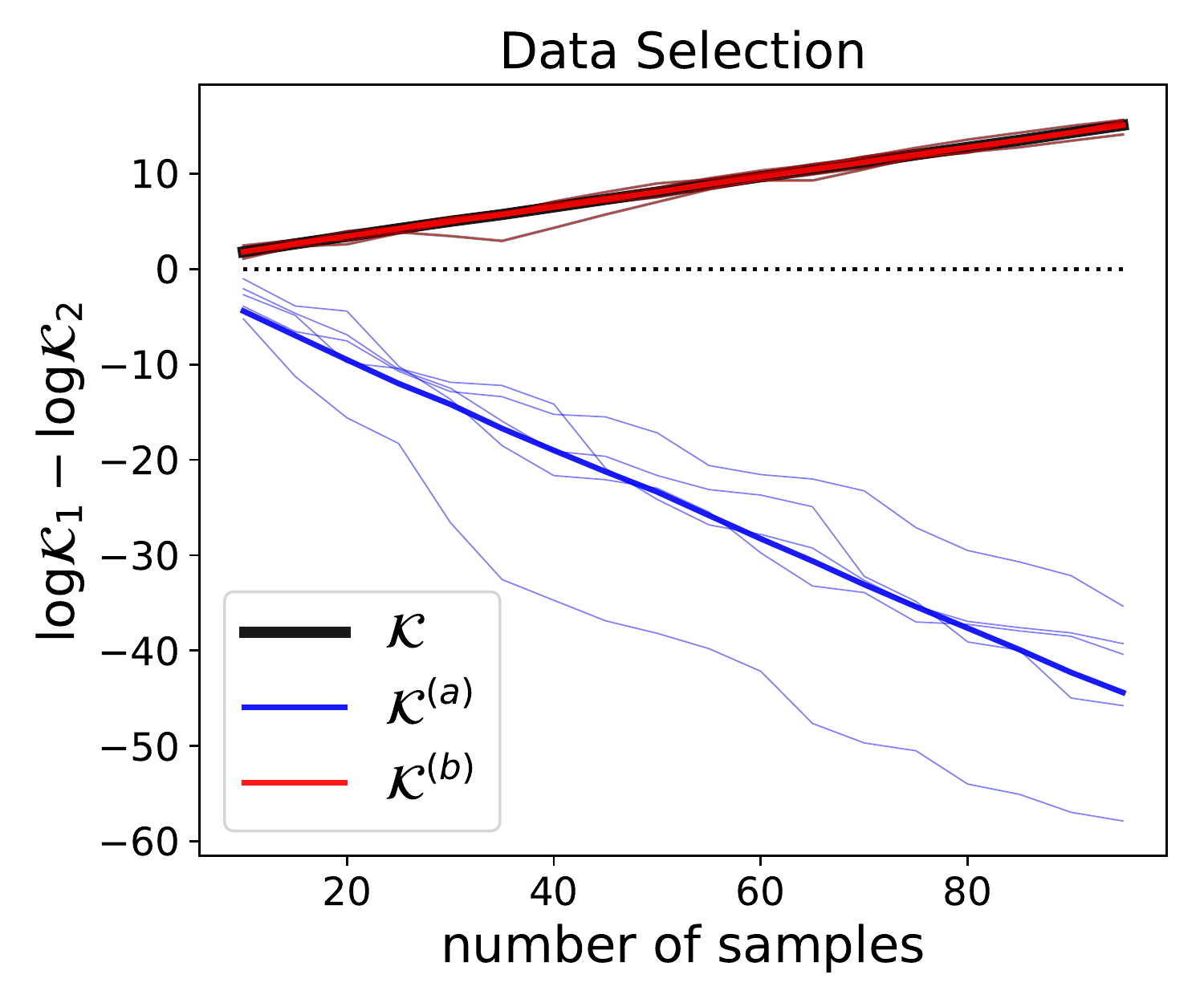}
        \caption{}
        \label{fig:py_data_select}
    \end{subfigure}%
    ~ 
    \begin{subfigure}[t!]{0.48\textwidth}
        \centering
        \includegraphics[height=2.5in]{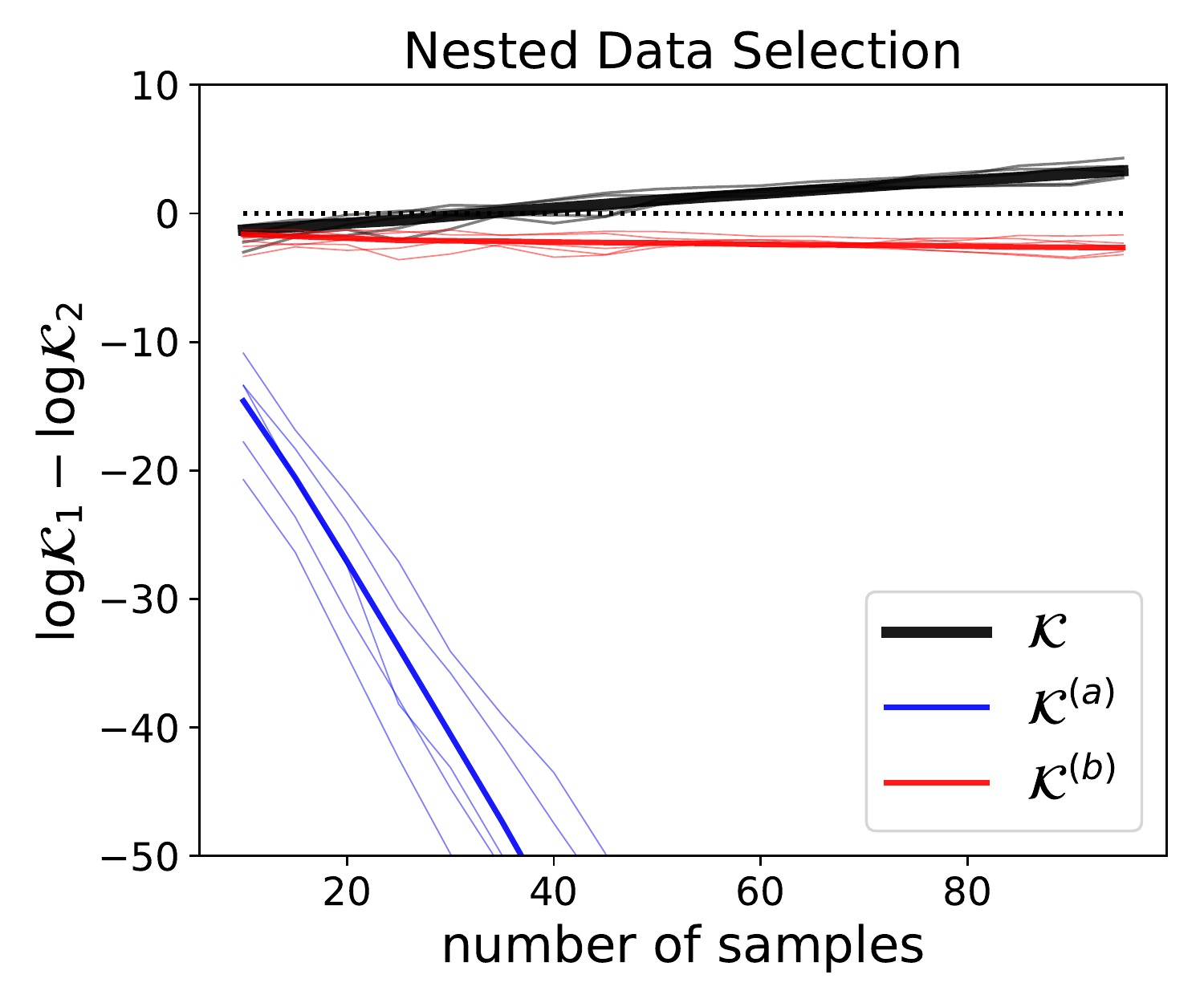}
        \caption{}
        \label{fig:py_nest_data_select}
    \end{subfigure}
    \\
    \begin{subfigure}[t!]{0.48\textwidth}
        \centering
        \includegraphics[height=2.5in]{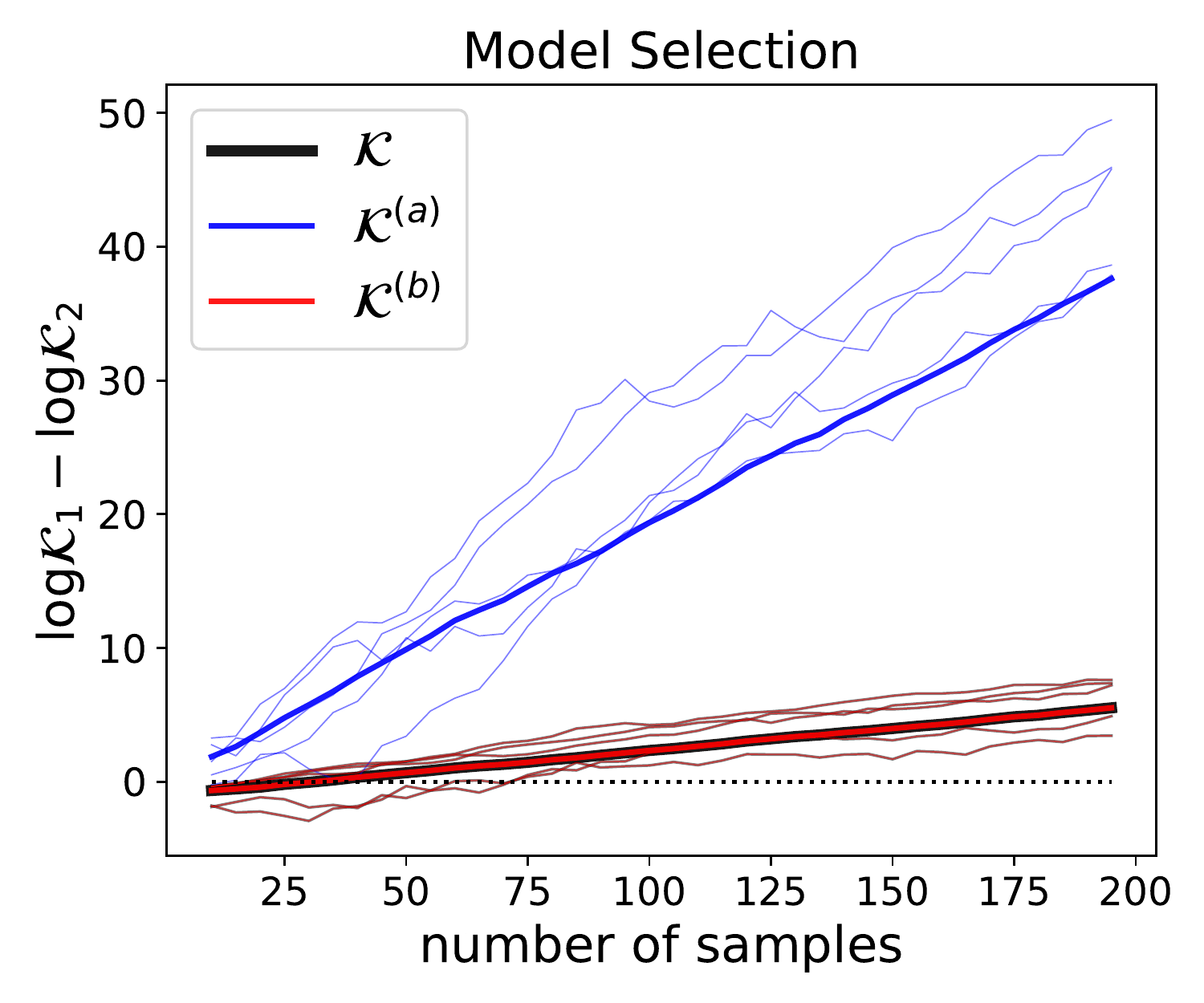}
        \caption{}
        \label{fig:py_model_select}
    \end{subfigure}
    ~ 
    \begin{subfigure}[t!]{0.48\textwidth}
        \centering
        \includegraphics[height=2.5in]{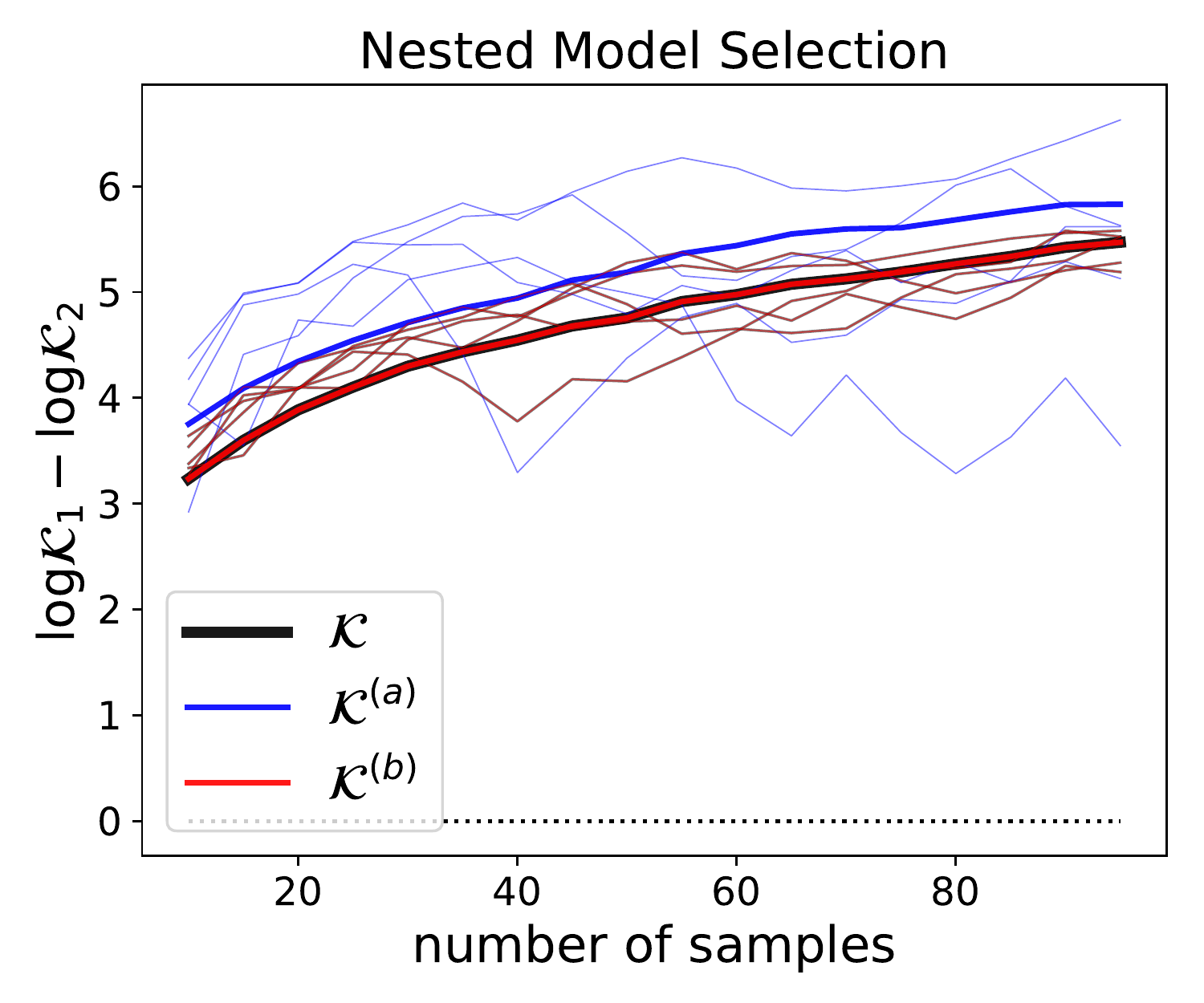}
        \caption{}
        \label{fig:py_nest_model_select}
    \end{subfigure}
    \caption{Behavior of the Stein volume criterion $\mathcal{K}$, the foreground marginal likelihood with a background volume correction $\mathcal{K}^{(\mathrm{a})}$, and the foreground marginal \textsc{nksd} $\mathcal{K}^{(\mathrm{b})}$ on toy examples. The plots show the results for 5 randomly generated datasets (thin lines) and the average over 100 random datasets (bold lines). Here, unlike Figure~\ref{fig:toy}, the Pitman-Yor expression for $m_\B$ is used, with $\alpha=0.5$, $\theta=1$, and $D=0.2$.}
    \label{fig:toy_py}
\end{figure}

\subsection{Setup} \label{sec:asymptotics_setup}

We first review the asymptotics of the standard marginal likelihood, discussed in depth by \citet{Dawid2011-kb} and \citet{Hong2005-kf}, for example. 
Define
\begin{align*}
f_N^{\textsc{kl}}(\theta) &:= -\frac{1}{N}\sum_{i=1}^N \log q(X^{(i)}|\theta), \qquad & \theta_N^{\textsc{kl}} &:= \argmin_\theta f_N^{\textsc{kl}}(\theta),\\
f^{\textsc{kl}}(\theta) &:= -\mathbb{E}_{X \sim p_0} [\log q(X|\theta)],  \qquad & \theta_{*}^{\textsc{kl}} &:= \argmin_\theta f^{\textsc{kl}}(\theta).
\end{align*}
Let $m$ be the dimension of the parameter space.
Under suitable regularity conditions \citep{Miller2019-ur}, the Laplace approximation to the marginal likelihood is
\begin{equation}
\label{eqn:standard_marginal_laplace}
	q(X^{(1:N)}) = \int q(X^{(1:N)}|\theta)\pi(\theta)d\theta \sim \frac{\exp\big(-N f_N^{\textsc{kl}}(\theta_N^{\textsc{kl}})\big) \pi(\theta_{*}^{\textsc{kl}})}{\big|\det \nabla_\theta^2\, f^{\textsc{kl}}(\theta_{*}^{\textsc{kl}})\big|^{1/2}} \left(\frac{2\pi}{N}\right)^{m/2} 
\end{equation}
almost surely, where $a_N \sim b_N$ indicates that $a_N/b_N \to 1$ as $N\to\infty$.
We can rewrite this as
\begin{equation}
\begin{split}
	\log q(X^{(1:N)}) &+  N(f_N^{\textsc{kl}}(\theta_N^{\textsc{kl}}) - f_N^{\textsc{kl}}(\theta_{*}^{\textsc{kl}}))\\ & + N (f_N^{\textsc{kl}}(\theta_{*}^{\textsc{kl}}) - f^{\textsc{kl}}(\theta_{*}^{\textsc{kl}})) + N f^{\textsc{kl}}(\theta_{*}^{\textsc{kl}})\\ &
	+ \frac{m}{2} \log N - \log\left(\frac{\pi(\theta_{*}^{\textsc{kl}}) (2\pi)^{m/2}}{\big|\det \nabla_\theta^2 f^{\textsc{kl}}(\theta_{*}^{\textsc{kl}})\big|^{1/2}} \right) \xrightarrow[N \to \infty]{\textup{a.s.}} 0.
\end{split}
\label{eqn:standard_marginal_asymptotic_decomposition}
\end{equation}
As shown by \citet{Dawid2011-kb} and \citet{Hong2005-kf}, under regularity conditions,
\begin{equation}
\begin{split}
N(f_N^{\textsc{kl}}(\theta_N^{\textsc{kl}}) - f_N^{\textsc{kl}}(\theta_{*}^{\textsc{kl}})) &= O_{P_0}(1)\\
N (f_N^{\textsc{kl}}(\theta_{*}^{\textsc{kl}}) - f^{\textsc{kl}}(\theta_{*}^{\textsc{kl}})) &= O_{P_0}(\sqrt{N})\\
N f^{\textsc{kl}}(\theta_{*}^{\textsc{kl}}) &= O_{P_0}(N)\\
\log\left(\frac{\pi(\theta_{*}^{\textsc{kl}}) (2\pi)^{m/2}}{\big|\det \nabla_\theta^2 f^{\textsc{kl}}(\theta_{*}^{\textsc{kl}})\big|^{1/2}} \right) &= O_{P_0}(1).
\end{split}
\label{eqn:standard_marginal_asymptotic_scaling}
\end{equation}
The \textsc{nksd} marginal likelihood has a similar decomposition. Following Section~\ref{sec:theory}, define
\begin{align*}
	f_N^{\textsc{nksd}}(\theta) &:= \frac{1}{T} \widehat{\textsc{nksd}}(p_0(x)\|q(x|\theta)), \qquad & \theta_N^{\textsc{nksd}} &:= \argmin_\theta f^{\textsc{nksd}}_N(\theta),\\
	f^{\textsc{nksd}}(\theta) &:= \frac{1}{T} \textsc{nksd}(p_0(x)\|q(x|\theta)), \qquad & \theta_{*}^{\textsc{nksd}} &:= \argmin_\theta f^{\textsc{nksd}}(\theta).
\end{align*}
As shown in Theorem~\ref{thm:marginal_ksd},
\begin{equation*}
	z_N :=  \int \exp(-N f_N^{\textsc{nksd}}(\theta))\pi(\theta)d\theta \sim  \frac{\exp(-N f_N^{\textsc{nksd}}(\theta_N^{\textsc{nksd}})) \pi(\theta_{*}^{\textsc{nksd}})}{\big|\det \nabla_\theta^2 f^{\textsc{nksd}}(\theta_{*}^{\textsc{nksd}}) \big|^{1/2}} \left(\frac{2\pi}{N}\right)^{m/2}
\end{equation*}
almost surely as $N\to\infty$. As above, we can rewrite this as
\begin{equation}
\begin{split}
	\log z_N &+ N(f_N^{\textsc{nksd}}(\theta_N^{\textsc{nksd}}) - f_N^{\textsc{nksd}}(\theta_{*}^{\textsc{nksd}}))\\ & + N (f_N^{\textsc{nksd}}(\theta_{*}^{\textsc{nksd}}) - f^{\textsc{nksd}}(\theta_{*}^{\textsc{nksd}})) + N f^{\textsc{nksd}}(\theta_{*}^{\textsc{nksd}})\\ &
	+ \frac{m}{2} \log N - \log\left(\frac{\pi(\theta_{*}^{\textsc{nksd}}) (2\pi)^{m/2}}{\big|\det \nabla_\theta^2 f^{\textsc{nksd}}(\theta_{*}^{\textsc{nksd}})\big|^{1/2}} \right) \xrightarrow[N \to \infty]{\textup{a.s.}} 0.
\end{split}
\label{eqn:svc_asymptotic_decomposition}
\end{equation}
By Theorem~\ref{prop:laplace_scaling}, we have
\begin{equation}
\begin{split}
N(f_N^{\textsc{nksd}}(\theta_N^{\textsc{nksd}}) - f_N^{\textsc{nksd}}(\theta_{*}^{\textsc{nksd}})) &= O_{P_0}(1),\\
N (f_N^{\textsc{nksd}}(\theta_{*}^{\textsc{nksd}}) - f^{\textsc{nksd}}(\theta_{*}^{\textsc{nksd}})) &= O_{P_0}(\sqrt{N}),\\
N f^{\textsc{nksd}}(\theta_{*}^{\textsc{nksd}}) &= O_{P_0}(N),\\
\log\left(\frac{\pi(\theta_{*}^{\textsc{nksd}}) (2\pi)^{m/2}}{\big|\det \nabla_\theta^2 f^{\textsc{nksd}}(\theta_{*}^{\textsc{nksd}})\big|^{1/2}} \right) &= O_{P_0}(1),
\end{split}
\label{eqn:svc_asymptotic_scaling}
\end{equation}
and further, when the model is well-specified, such that $\textsc{nksd}(p_0(x)\|q(x|\theta_{*}^{\textsc{nksd}})) = 0$, 
\begin{equation}
	N (f_N^{\textsc{nksd}}(\theta_{*}^{\textsc{nksd}}) - f^{\textsc{nksd}}(\theta_{*}^{\textsc{nksd}})) = O_{P_0}(1).
\label{eqn:svc_asymptotic_spec}
\end{equation}

For ease of reference, here are the various scores that we consider for model/data selection.\\ 
\textit{Marginal likelihood of the augmented model (foreground+background):}
\begin{equation*}
\tilde{q}(X^{(1:N)}|\F) = \int \int q(X^{(1:N)}_{\F}|\theta)\, \tilde{q}(X^{(1:N)}_{\B} | X^{(1:N)}_{\F}, \phi_\B) \pi(\theta) \pi_\mathcal{B}(\phi_\B) d\theta d\phi_\B.
\end{equation*}
\textit{Foreground marginal \textsc{nksd}, background volume correction (a.k.a.\ the SVC):}
\begin{equation*}
\mathcal{K} := \left(\frac{2\pi}{N}\right)^{\dimB/2} \int \exp\!\Big(-\frac{N}{T} \widehat{\textsc{nksd}}(p_0(x_\F)\| q(x_\F|\theta))\Big) \pi(\theta) d\theta.
\end{equation*}
\textit{Foreground marginal likelihood, background volume correction:}
\begin{equation*}
\mathcal{K}^{(\mathrm{a})} := \left(\frac{2\pi}{N}\right)^{m_\B/2} q(X_\F^{(1:N)}).
\end{equation*}
\textit{Foreground marginal \textsc{nksd}:}
\begin{equation*}
	\mathcal{K}^{(\mathrm{b})} := \int \exp \left(-\frac{N}{T} \widehat{\textsc{nksd}}(p_0(x_\F)\| q(x_\F|\theta))\right) \pi(\theta) d\theta.
\end{equation*}
\textit{Foreground marginal \textsc{kl}, background volume correction:}
\begin{equation*}
	\mathcal{K}^{(\mathrm{c})} := \left(\frac{2\pi}{N}\right)^{\dimB/2} \int \exp \left(-N \widehat{\textsc{kl}}(p_0(x_\F)\| q(x_\F|\theta))\right)\pi(\theta)d\theta.
\end{equation*}
\textit{Foreground \textsc{nksd}, background volume correction:}
\begin{equation*}
	\mathcal{K}^{(\mathrm{d})} := \left(\frac{2\pi}{N}\right)^{m_\B/2} \exp\!\left(-\frac{N}{T} \min_\theta \widehat{\textsc{nksd}}(p_0(x_{\F})\| q(x_{\F}|\theta))\right).
\end{equation*}
\textit{Foreground \textsc{nksd}, foreground and background volume correction (a.k.a.\ BIC for SVC)}
\begin{equation*}
	\mathcal{K}^{\textsc{BIC}} := \left(\frac{2\pi}{N}\right)^{(\dimF + \dimB)/2}\exp\!\Big(-\frac{N}{T} \min_\theta \widehat{\textsc{nksd}}(p_0(x_\F)\| q(x_\F|\theta))\Big).
\end{equation*}

\subsection{Data selection} \label{sec:si_ds}

Assume $m_{\B_j} = o(N/\log N)$ for $j\in\{1,2\}$.
By Equations~\ref{eqn:standard_marginal_asymptotic_decomposition} and \ref{eqn:standard_marginal_asymptotic_scaling},
\begin{align} \label{eqn:Ka}
	\frac{1}{N} \log \frac{\mathcal{K}^{(\mathrm{a})}_1}{ \mathcal{K}^{(\mathrm{a})}_2} \xrightarrow[N \to \infty]{P_0} &\mathbb{E}_{X \sim p_0} [-\log q(X_{\F_2}|\theta_{2,*}^\textsc{kl})] - \mathbb{E}_{X \sim p_0} [-\log q(X_{\F_1}|\theta_{1,*}^\textsc{kl})]\\
	&= \textsc{kl}(p_0(x_{\F_2})\|q(x_{\F_2}|\theta_{2,*}^\textsc{kl})) + H_{\F_2} - \textsc{kl}(p_0(x_{\F_1})\|q(x_{\F_1}|\theta_{1,*}^\textsc{kl})) - H_{\F_1}, \notag
\end{align}
so $\mathcal{K}^{(\mathrm{a})}$ does not satisfy data selection consistency.
The SVC satisfies data selection consistency by Theorem~\ref{thm:selection_consistency} (part 1).
We show that the other scores also satisfy data selection consistency.
Since $\mathcal{K}^{(\mathrm{b})} = (2\pi/N)^{-m_\B/2} \mathcal{K}$ where $\mathcal{K}$ is the SVC, by Theorem~\ref{thm:selection_consistency} (part 1),
\begin{equation}
\begin{split}
	\frac{1}{N} \log \frac{\mathcal{K}^{(\mathrm{b})}_1}{ \mathcal{K}^{(\mathrm{b})}_2} \xrightarrow[N\to\infty]{P_0} \frac{1}{T}\textsc{nksd}(p_0(x_{\F_2})\| q(x_{\F_2}|\theta_{2,*}^\textsc{nksd})) - \frac{1}{T}\textsc{nksd}(p_0(x_{\F_1})\| q(x_{\F_1}|\theta_{1,*}^\textsc{nksd})).
\end{split}
\end{equation}
By Equation~\ref{eqn:Ka} and the fact that $\mathcal{K}^{(\mathrm{c})} = \exp(N H_\F) \mathcal{K}^{(\mathrm{a})}$, we have
\begin{equation}
	\frac{1}{N} \log \frac{\mathcal{K}^{(\mathrm{c})}_1}{ \mathcal{K}^{(\mathrm{c})}_2} \xrightarrow[N\to\infty]{P_0}
	\textsc{kl}(p_0(x_{\F_2})\|q(x_{\F_2}|\theta_{2,*}^\textsc{kl})) - \textsc{kl}(p_0(x_{\F_1})\|q(x_{\F_1}|\theta_{1,*}^\textsc{kl})).
\end{equation}
Since $\mathcal{K}^{(\mathrm{d})} = (2\pi/N)^{m_\B/2} \exp(-N f_N^\textsc{nksd}(\theta_N^\textsc{nksd}))$, then by Equation~\ref{eqn:svc_asymptotic_scaling},
\begin{equation}
	\frac{1}{N} \log \frac{\mathcal{K}^{(\mathrm{d})}_1}{ \mathcal{K}^{(\mathrm{d})}_2} \xrightarrow[N \to \infty]{P_0}  \frac{1}{T}\textsc{nksd}(p_0(x_{\F_2})\|q(x_{\F_2}|\theta_{2,*}^\textsc{nksd})) - \frac{1}{T}\textsc{nksd}(p_0(x_{\F_1})\|q(x_{\F_1}|\theta_{1,*}^\textsc{nksd})).
\end{equation}
Similarly, since $\mathcal{K}^{\textsc{BIC}} = (2\pi/N)^{m_\F/2} \mathcal{K}^{(\mathrm{d})}$,
\begin{equation}
	\frac{1}{N} \log \frac{\mathcal{K}^{\textsc{BIC}}_1}{\mathcal{K}^{\textsc{BIC}}_2} \xrightarrow[N \to \infty]{P_0}  \frac{1}{T}\textsc{nksd}(p_0(x_{\F_2})\|q(x_{\F_2}|\theta_{2,*}^\textsc{nksd})) - \frac{1}{T}\textsc{nksd}(p_0(x_{\F_1})\|q(x_{\F_1}|\theta_{1,*}^\textsc{nksd})).
\end{equation}
These methods therefore satisfy data selection consistency.
For the marginal likelihood of the augmented model, suppose $m_{\B_1}$ and $m_{\B_2}$ do not depend on $N$. Then by Equation~\ref{eqn:standard_marginal_asymptotic_decomposition},
\begin{align} \label{eqn:svc_asymptotic_difflimit}
	\frac{1}{N}\log \frac{\tilde{q}(X^{(1:N)}|\F_1)}{\tilde{q}(X^{(1:N)}|\F_2) } \xrightarrow[N\to\infty]{P_0}
	\; & \mathbb{E}_{X_{\F_2} \sim p_0}[-\log q(X_{\F_2}|\theta_{2,*}^\textsc{kl})] + \mathbb{E}_{X \sim p_0}[-\log \tilde{q}(X_{\B_2}|X_{\F_2}, \phi_{2,*}^\textsc{kl})] \\
	&- \mathbb{E}_{X_{\F_1} \sim p_0}[-\log q(X_{\F_1}|\theta_{1,*}^\textsc{kl})] - \mathbb{E}_{X \sim p_0}[-\log \tilde{q}(X_{\B_1}|X_{\F_1}, \phi_{1,*}^\textsc{kl})]\big]\notag
\end{align}
We can rewrite this in terms of the KL divergence. First note the decomposition,
\begin{equation*}
	H = -\int p_0(x) \log p_0(x) dx = - \int p_0(x_{\F_j}) \log p_0(x_{\F_j}) dx_{\F_j} - \int p_0(x) \log p_0(x_{\B_j}|x_{\F_j}) dx
\end{equation*}
for $j \in \{1,2\}$.
Adding and subtracting the entropy $H$ in Equation~\ref{eqn:svc_asymptotic_difflimit}, and using the fact that the background model is well-specified,
\begin{align}
	\frac{1}{N}\log \frac{\tilde{q}(X^{(1:N)}|\F_1)}{\tilde{q}(X^{(1:N)}|\F_2) } &\xrightarrow[N\to\infty]{P_0} \textsc{kl}(p_0(x_{\F_2})\|q(x_{\F_2}|\theta_{2,*}^\textsc{kl})) + \textsc{kl}(p_0(x_{\B_2}|x_{\F_2})\| \tilde{q}(x_{\B_2}|x_{\F_2}, \phi_{2,*}^\textsc{kl})) \notag \\
	& ~~~~~~~~~~ - \textsc{kl}(p_0(x_{\F_1})\|q(x_{\F_1}|\theta_{1,*}^\textsc{kl})) - \textsc{kl}(p_0(x_{\B_1}|x_{\F_1})\| \tilde{q}(x_{\B_1}|x_{\F_1}, \phi_{1,*}^\textsc{kl})) \notag \\
	& = \textsc{kl}(p_0(x_{\F_2})\|q(x_{\F_2}|\theta_{2,*}^\textsc{kl})) - \textsc{kl}(p_0(x_{\F_1})\|q(x_{\F_1}|\theta_{1,*}^\textsc{kl})).
\end{align}

\subsection{Nested data selection} \label{sec:si_nds}

In nested data selection, we are concerned with situations in which $\X_{\F_2} \subset \X_{\F_1}$ 
and the model is well-specified over both $\X_{\F_1}$ and $\X_{\F_2}$.  Assume further that $m_{\B_2} - m_{\B_1}$ does not depend on $N$.
First, consider $\mathcal{K}^{(\mathrm{d})}$ and $\mathcal{K}^{\textsc{BIC}}$.
Since $\mathcal{K}^{(\mathrm{d})} = (2\pi/N)^{m_\B/2} \exp(-N f_N^\textsc{nksd}(\theta_N^\textsc{nksd}))$
and by Theorem~\ref{prop:laplace_scaling}, $f_N^\textsc{nksd}(\theta_N^\textsc{nksd}) = O_{P_0}(1/N)$,
we have
\begin{equation}
	\frac{1}{\log N}\log \frac{\mathcal{K}^{(\mathrm{d})}_1}{ \mathcal{K}^{(\mathrm{d})}_2} \xrightarrow[N \to \infty]{P_0} \frac{m_{\B_2} - m_{\B_1}}{2}.
\end{equation}
Likewise, since $\mathcal{K}^{\textsc{BIC}} = (2\pi/N)^{m_\F/2} \mathcal{K}^{(\mathrm{d})}$, it follows that
\begin{equation}
	\frac{1}{\log N}\log \frac{\mathcal{K}^{\textsc{BIC}}_1}{\mathcal{K}^{\textsc{BIC}}_2} \xrightarrow[N \to \infty]{P_0} \frac{m_{\F_2} + m_{\B_2} - m_{\F_1} - m_{\B_1}}{2}.
\end{equation}
As in Section~\ref{sec:theory_ds_ms}, it is natural to assume $m_{\B_2} > m_{\B_1}$ and $m_{\F_2} + m_{\B_2} > m_{\F_1} + m_{\B_1}$,
in which case these criteria satisfy nested data selection consistency.

None of $\mathcal{K}^{(\mathrm{a})}$, $\mathcal{K}^{(\mathrm{b})}$, and $\mathcal{K}^{(\mathrm{c})}$ are guaranteed to satisfy nested data selection consistency, because the contribution of background model complexity is negligible or nonexistent.
To see this, note that assuming $m_{\B_j} = o(N/\log N)$, by Equation~\ref{eqn:Ka} we have
\begin{equation}
	\frac{1}{N}\log \frac{\mathcal{K}^{(\mathrm{a})}_1}{ \mathcal{K}^{(\mathrm{a})}_2} \xrightarrow[N \to \infty]{P_0} H_{\F_2} - H_{\F_1}.
\end{equation}
Meanwhile, since $\mathcal{K}^{(\mathrm{b})} = (2\pi/N)^{-m_\B/2} \mathcal{K}$ then by Theorem~\ref{thm:selection_consistency} (part 2),
\begin{equation}
\label{eqn:nds_Kb}
	\frac{1}{\log N} \log \frac{\mathcal{K}^{(\mathrm{b})}_1}{ \mathcal{K}^{(\mathrm{b})}_2} \xrightarrow[N\to\infty]{P_0} \frac{m_{\F_2} - m_{\F_1}}{2}.
\end{equation}
Since $\X_{\F_2} \subset \X_{\F_1}$, we have $m_{\F_2} \leq m_{\F_1}$ except perhaps in highly contrived scenarios.
If $m_{\F_2} < m_{\F_1}$ then Equation~\ref{eqn:nds_Kb} shows that $\log(\mathcal{K}^{(\mathrm{b})}_1 / \mathcal{K}^{(\mathrm{b})}_2) \xrightarrow[]{P_0} -\infty$.
On the other hand, if $m_{\F_2} = m_{\F_1}$, then by Equations~\ref{eqn:svc_asymptotic_decomposition} and \ref{eqn:svc_asymptotic_scaling},
$\log(\mathcal{K}^{(\mathrm{b})}_1 / \mathcal{K}^{(\mathrm{b})}_2) = O_{P_0}(1)$,
so it is not possible to have $\log(\mathcal{K}^{(\mathrm{b})}_1 / \mathcal{K}^{(\mathrm{b})}_2) \xrightarrow[]{P_0} \infty$.
Therefore, $\mathcal{K}^{(\mathrm{b})}$ does not satisfy nested data selection consistency.

Since $\mathcal{K}^{(\mathrm{c})} = e^{N H_\F}\mathcal{K}^{(\mathrm{a})} = e^{N H_\F}(2\pi/N)^{m_\B/2} q(X_\F^{(1:N)})$, then 
by Equations~\ref{eqn:standard_marginal_asymptotic_decomposition} and \ref{eqn:standard_marginal_asymptotic_scaling},
\begin{equation} \label{eqn:kc_ds}
\begin{split}
\frac{1}{\sqrt{N}}\log \frac{\mathcal{K}^{(\mathrm{c})}_1}{ \mathcal{K}^{(\mathrm{c})}_2}
= \sqrt{N} \bigg(\frac{1}{N}\sum_{i=1}^N \log \frac{p_0(X_{\F_1}^{(i)})}{p_0(X_{\F_2}^{(i)})} - \mathbb{E}\Big(\log\frac{p_0(X_{\F_1})}{p_0(X_{\F_2})}\Big)\bigg) + O_{P_0}(N^{-1/2}\log N).
\end{split}	 
\end{equation}
If $\sigma^2 := \mathbb{V}_{P_0}(\log p_0(X_{\F_1})/p_0(X_{\F_2}))$ is positive and finite, then by the central limit theorem and Slutsky's theorem,
$N^{-1/2}\log(\mathcal{K}^{(\mathrm{c})}_1/\mathcal{K}^{(\mathrm{c})}_2) \xrightarrow[]{D} \mathcal{N}(0,\sigma^2)$.
Thus, $\mathcal{K}^{(\mathrm{c})}$ randomly selects $\F_1$ or $\F_2$ with equal probability, 
and therefore, it does not satisfy nested data selection consistency.

For the marginal likelihood of the augmented model, suppose $m_{\B_1}$ and $m_{\B_2}$ do not depend on $N$.
The marginal likelihood achieves nested data selection consistency because the augmented models are both well-specified and describe the complete data space $\X$; this guarantees that the $O_{P_0}(\sqrt{N})$ terms in the marginal likelihood decomposition cancel.
Specifically, $p_0(x) = q(x \mid\theta_{j,*}^\textsc{kl}, \phi_{j,*}^\textsc{kl}, \F_j)$ for $j\in\{1,2\}$, and thus, by Equations~\ref{eqn:standard_marginal_asymptotic_decomposition} and \ref{eqn:standard_marginal_asymptotic_scaling} applied to the augmented model,
\begin{equation}
	\frac{1}{\log N}\log \frac{\tilde{q}(X^{(1:N)}|\F_1)}{\tilde{q}(X^{(1:N)}|\F_2) } \xrightarrow[N\to\infty]{P_0} \frac{m_{\F_2} + m_{\B_2} - m_{\F_1} - m_{\B_1}}{2}.
\end{equation}
Nested data selection consistency follows assuming $m_{\F_2} + m_{\B_2} > m_{\F_1} + m_{\B_1}$ as before.
This can be contrasted with Equation~\ref{eqn:kc_ds}, where although both foreground models are well-specified, they describe different data ($X^{(1:N)}_{\F_1}$ versus $X^{(1:N)}_{\F_2}$), so the $O_{P_0}(\sqrt{N})$ terms remain.

\subsection{Model selection} \label{sec:si_ms}

All of the criteria we consider satisfy model selection consistency.  To see this, we apply the same asymptotic analysis as used for data selection in Section~\ref{sec:si_ds}, under the same conditions on $m_\B$, obtaining
\begin{equation}
\begin{split}
	\frac{1}{N}\log \frac{\tilde{q}_1(X^{(1:N)}|\F)}{\tilde{q}_2(X^{(1:N)}|\F) } \xrightarrow[N\to\infty]{P_0} & \textsc{kl}(p_0(x_{\F})\|q_2(x_{\F}|\theta_{2,*}^\textsc{kl})) - \textsc{kl}(p_0(x_{\F})\|q_1(x_{\F}|\theta_{1,*}^\textsc{kl})),
\end{split}
\end{equation}
\begin{equation}
\begin{split}
	\frac{1}{N} \log \frac{\mathcal{K}^{(\mathrm{a})}_1}{ \mathcal{K}^{(\mathrm{a})}_2} \xrightarrow[N \to \infty]{P_0} & \textsc{kl}(p_0(x_{\F})\|q_2(x_{\F}|\theta_{2,*}^\textsc{kl})) - \textsc{kl}(p_0(x_{\F})\|q_1(x_{\F}|\theta_{1,*}^\textsc{kl})),
\end{split}
\end{equation}
\begin{equation}
\begin{split}
	\frac{1}{N} \log \frac{\mathcal{K}^{(\mathrm{b})}_1}{ \mathcal{K}^{(\mathrm{b})}_2} \xrightarrow[N\to\infty]{P_0} \frac{1}{T}\textsc{nksd}(p_0(x_{\F})\| q_2(x_{\F}|\theta_{2,*}^\textsc{nksd})) - \frac{1}{T}\textsc{nksd}(p_0(x_{\F})\| q_1(x_{\F}|\theta_{1,*}^\textsc{nksd})),
\end{split}
\end{equation}
\begin{equation}
	\frac{1}{N} \log \frac{\mathcal{K}^{(\mathrm{c})}_1}{ \mathcal{K}^{(\mathrm{c})}_2} \xrightarrow[N\to\infty]{P_0}
	\textsc{kl}(p_0(x_{\F})\|q_2(x_{\F}|\theta_{2,*}^\textsc{kl})) - \textsc{kl}(p_0(x_{\F})\|q_1(x_{\F}|\theta_{1,*}^\textsc{kl})),
\end{equation}
\begin{equation}
	\frac{1}{N} \log \frac{\mathcal{K}^{(\mathrm{d})}_1}{ \mathcal{K}^{(\mathrm{d})}_2} \xrightarrow[N \to \infty]{P_0}  \frac{1}{T}\textsc{nksd}(p_0(x_{\F})\|q_2(x_{\F}|\theta_{2,*}^\textsc{nksd})) - \frac{1}{T}\textsc{nksd}(p_0(x_{\F})\|q_1(x_{\F}|\theta_{1,*}^\textsc{nksd})),
\end{equation}
\begin{equation}
	\frac{1}{N} \log \frac{\mathcal{K}^{\textsc{BIC}}_1}{\mathcal{K}^{\textsc{BIC}}_2} \xrightarrow[N \to \infty]{P_0}  \frac{1}{T}\textsc{nksd}(p_0(x_{\F})\|q_2(x_{\F}|\theta_{2,*}^\textsc{nksd})) - \frac{1}{T}\textsc{nksd}(p_0(x_{\F})\|q_1(x_{\F}|\theta_{1,*}^\textsc{nksd})).
\end{equation}
Note that in contrast to the data selection case, $\mathcal{K}^{(\mathrm{a})}$ satisfies model selection consistency
since the entropy terms $H_{\F_j}$ cancel due to the fact that $\F$ is fixed.
We can think of this as a consequence of the \textsc{kl} divergence's subsystem independence; if we are just interested in modeling a fixed foreground space, there is no problem considering the foreground marginal likelihood alone~\citep{Caticha2003-gb,Caticha2010-fm,Rezende2018-rp}. 

\subsection{Nested model selection} \label{sec:si_nms}

In nested model selection, since both models are well-specified, we have $q_j(x_\F|\theta_{j,*}^\textsc{kl}) = p_0(x_\F) = q_j(x_\F|\theta_{j,*}^\textsc{nksd})$ for $j\in\{1,2\}$. Thus, the estimated divergences cancel:
\begin{equation*}
\begin{split}
	\widehat{\textsc{nksd}}(p_0(x_\F)\| q_1(x_\F|\theta_{1,*}^\textsc{nksd})) &= \widehat{\textsc{nksd}}(p_0(x_\F)\| q_2(x_\F|\theta_{2,*}^\textsc{nksd})),\\
	\sum_{i=1}^N\log q_1(X_\F^{(i)}|\theta_{1,*}^\textsc{kl}) &= \sum_{i=1}^N\log q_2(X_\F^{(i)}|\theta_{2,*}^\textsc{kl}),\\
	\widehat{\textsc{kl}}(p_0(x_\F)\| q_1(x_\F|\theta_{1,*}^\textsc{kl})) &= \widehat{\textsc{kl}}(p_0(x_\F)\| q_2(x_\F|\theta_{2,*}^\textsc{kl})).
\end{split}
\end{equation*}   
Using this along with Equations~\ref{eqn:standard_marginal_asymptotic_decomposition}--\ref{eqn:svc_asymptotic_spec}, under the same conditions on $m_\B$ as in Section~\ref{sec:si_ds},
\begin{equation}
\frac{1}{\log N} \log \frac{\tilde{q}_1 (X^{(1:N)}|\F)}{\tilde{q}_2 (X^{(1:N)}|\F)}\xrightarrow[N\to\infty]{P_0} \frac{m_{\F,2} - m_{\F,1}}{2},
\end{equation}
\begin{equation}
\frac{1}{\log N} \log \frac{\mathcal{K}^{(\mathrm{a})}_1}{\mathcal{K}^{(\mathrm{a})}_2} \xrightarrow[N\to\infty]{P_0} \frac{m_{\F,2} - m_{\F,1}}{2},
\end{equation}
\begin{equation}
	\frac{1}{\log N} \log \frac{\mathcal{K}^{(\mathrm{b})}_1}{\mathcal{K}^{(\mathrm{b})}_2} \xrightarrow[N\to\infty]{P_0} \frac{m_{\F,2} - m_{\F,1}}{2},
\end{equation}
\begin{equation}
	\frac{1}{\log N} \log \frac{\mathcal{K}^{(\mathrm{c})}_1}{\mathcal{K}^{(\mathrm{c})}_2} \xrightarrow[N\to\infty]{P_0} \frac{m_{\F,2} - m_{\F,1}}{2},
\end{equation}
\begin{equation}
	\log \frac{\mathcal{K}^{(\mathrm{d})}_1}{\mathcal{K}^{(\mathrm{d})}_2} = O_{P_0}(1),
\end{equation}
\begin{equation}
	\frac{1}{\log N}\log \frac{\mathcal{K}^{\textsc{BIC}}_1}{\mathcal{K}^{\textsc{BIC}}_2} \xrightarrow[N \to \infty]{P_0} \frac{m_{\F,2} - m_{\F,1}}{2},
\end{equation}
where we are using the assumption that the background model is the same in the two augmented models $\tilde{q}_1$ and $\tilde{q}_2$ and so $m_{\B,1} = m_{\B,2}$. Only $\mathcal{K}^{(\mathrm{d})}$ fails to satisfy nested model selection consistency.

\section{Proofs} \label{sec:proofs}

\subsection{Proofs of NKSD properties} \label{sec:proofs_nksd}

\begin{proof}[\textup{\textbf{Proof of Proposition~\ref{proposition:u_statistic_nksd}}}]
By assumption, the kernel is bounded, say $|k(x, y)| \le B$, and $s_p,s_q \in L^1(P)$. Thus, by the Cauchy--Schwarz inequality,
\begin{align*}
\bigg\vert\int_{\mathcal{X}} \int_{\mathcal{X}} &(s_{q}(x) - s_{p}(x))^\top (s_{q}(y) - s_{p}(y)) k(x, y) p(x) p(y)dx dy\bigg\vert \\
&\le B \Big( \textstyle{\int_{\mathcal{X}}} \| s_{q}(x) - s_{p}(x) \| p(x) dx \Big)^2 < \infty.
\end{align*}
Since the kernel is integrally strictly positive definite and $|k(x,y)|\le B$,
\begin{equation}
	0 < \int_{\mathcal{X}} \int_{\mathcal{X}} k(x, y) p(x) p(y) dx dy \le B < \infty.
\end{equation}
Thus, the \textsc{nksd} is finite.
Equation~\ref{eqn:nksd_u_rep} follows from Theorem 3.6 of \citet{Liu2016-bp}.
\end{proof}

\begin{proof}[\textup{\textbf{Proof of Proposition~\ref{thm:nksd_divergence}}}]
The denominator of the \textsc{nksd} is positive since $k$ is integrally strictly positive definite.
Defining $\delta(x) = s_{q}(x) - s_{p}(x)$, the numerator of the \textsc{nksd} is
\begin{equation}
\int_{\mathcal{X}} \int_{\mathcal{X}} \delta(x)^\top \delta(y) k(x,y) p(x) p(y) dx dy = \sum_{i=1}^d \int_{\mathcal{X}} \int_{\mathcal{X}} \delta_i(x) \delta_i(y) k(x, y) p(x) p(y) dx dy.
\end{equation}
If $\delta_i(x)p(x) = 0$ almost everywhere with respect to Lebesgue measure on $\mathcal{X}$,
then the $i$th term on the right-hand side is zero.
Meanwhile, if $\delta_i(x)p(x)$ is not a.e.\ zero, then $\int_{\mathcal{X}} |\delta_i(x)|p(x) d x > 0$,
and hence, the $i$th term is positive 
since $k$ is integrally strictly positive definite and $\delta_i\in L^1(P)$ by assumption.
Hence, the \textsc{nksd} is nonnegative, and equals zero if and only if $\delta(x) p(x) = 0$ almost everywhere.

Suppose $\delta(x) p(x) = 0$ almost everywhere.  Since $p(x) > 0$ on $\mathcal{X}$ by assumption, this implies $s_{p}(x) = s_{q}(x)$ a.e.,
and in fact, $s_{p}(x) = s_{q}(x)$ for all $x\in\mathcal{X}$ by continuity.
Since $\mathcal{X}$ is open and connected, then by the gradient theorem (that is, the fundamental theorem of calculus for line integrals),
$p(x) \propto q(x)$, and hence, $p(x) = q(x)$ on $\mathcal{X}$.
Conversely, if $p(x) = q(x)$ almost everywhere, then $\delta(x)p(x) = 0$ almost everywhere.
\end{proof}

\begin{proof}[\textup{\textbf{Proof of Proposition~\ref{prop:subsystem_indep}}}]
Define
\begin{equation*}
\begin{split}
	\delta_1(x_1) &:= \nabla_{x_1} \log q(x) - \nabla_{x_1} \log p(x) = \nabla_{x_1} \log q(x_1) - \nabla_{x_1} \log p(x_1)\\
	\delta_2(x_2) &:= \nabla_{x_2} \log q(x) - \nabla_{x_2} \log p(x) = \nabla_{x_2} \log q(x_2) - \nabla_{x_2} \log p(x_2).
\end{split}
\end{equation*}
Let $X,Y\sim p(x)$ independently.  Note that $\mathbb{E}[k_1(X_1, Y_1)] > 0$ and $\mathbb{E}[k_2(X_2, Y_2)] > 0$ since $k_1$ and $k_2$ are integrally strictly positive definite by assumption.  Therefore,
\begin{equation*}
\begin{split}
	\textsc{nksd}&(p(x)\|q(x)) = \frac{\mathbb{E}[(\nabla_x \log q(X) - \nabla_x \log p(X))^\top (\nabla_x \log q(Y) - \nabla_x \log p(Y)) k(X, Y)]}{\mathbb{E}[k(X, Y)]}\\
	&= \frac{\mathbb{E}[\delta_1(X_1)^\top \delta_1(Y_1) k_1(X_1, Y_1)]\mathbb{E}[k_2(X_2, Y_2)]}{\mathbb{E}[k_1(X_1, Y_1)]\mathbb{E}[k_2(X_2, Y_2)]}
	 + \frac{\mathbb{E}[\delta_2(X_2)^\top \delta_2(Y_2) k_2(X_2, Y_2)]\mathbb{E}[k_1(X_1, Y_1)]}{\mathbb{E}[k_1(X_1, Y_1)]\mathbb{E}[k_2(X_2, Y_2)]}\\
	&= \frac{\mathbb{E}[\delta_1(X_1)^\top \delta_1(Y_1) k_1(X_1, Y_1)]}{\mathbb{E}[k_1(X_1, Y_1)]} + \frac{\mathbb{E}[\delta_2(X_2)^\top \delta_2(Y_2) k_2(X_2, Y_2)]}{\mathbb{E}[k_2(X_2, Y_2)]}\\
	&= \textsc{nksd}(p(x_1)\|q(x_1)) + \textsc{nksd}(p(x_2)\|q(x_2)).
\end{split}
\end{equation*}
\end{proof}

The following modified version applies to the estimator $\widehat{\textsc{nksd}}(p\|q)$ (Equation~\ref{eqn:est_nksd}).

\begin{proposition} \label{prop:approx_subsystem_indep}
	\begin{equation} \label{eqn:est_nksd_decomp}
		\widehat{\textsc{nksd}}(p(x)\|q(x)) = \overline{\textsc{nksd}}(p(x_1)\|q(x_1)) + \overline{\textsc{nksd}}(p(x_2)\|q(x_2))
	\end{equation}
	where
	\begin{equation*}
	\begin{split}
	\overline{\textsc{nksd}}(p(x_1)\|q(x_1)) :=& \frac{\sum_{i\neq j} u_1(X_1^{(i)}, X_1^{(j)}) k_2(X^{(i)}_2, X^{(j)}_2)}{\sum_{i\neq j} k_1(X^{(i)}_1, X^{(j)}_1) k_2(X^{(i)}_2, X^{(j)}_2)}\\
	 u_1(x_1, y_1) :=& s_{q}(x_1)^\top s_{q}(y_1) k_1(x_1, y_1) + s_{q}(x_1)^\top \nabla_{y_1} k_1(x_1, y_1) + s_{q}(y_1)^\top \nabla_{x_1} k_1(x_1, y_1)\\ & + \Tr(\nabla_{x_1} \nabla_{y_1}^\top k_1(x_1, y_1))\\
	 s_q(x_1) :=& \nabla_{x_1} \log q(x_1),\\
	\end{split}
	\end{equation*}
	and vice versa for $\overline{\textsc{nksd}}(p(x_2)\|q(x_2))$ with the roles of 1 and 2 swapped.
\end{proposition}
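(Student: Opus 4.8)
The plan is to reduce the claim to a single pointwise algebraic identity for the integrand $u$, and then push it through the ratio defining $\widehat{\textsc{nksd}}$. Specifically, I will show that whenever $q(x) = q(x_1)q(x_2)$ and $k(x,y) = k_1(x_1,y_1)k_2(x_2,y_2)$, we have
\[
u(x,y) = u_1(x_1,y_1)\,k_2(x_2,y_2) + u_2(x_2,y_2)\,k_1(x_1,y_1)
\]
for every pair $(x,y)$, where $u_2$ is obtained from $u_1$ by swapping the roles of the two blocks. Summing this over all $i\neq j$, dividing by $\sum_{i\neq j}k(X^{(i)},X^{(j)}) = \sum_{i\neq j}k_1(X_1^{(i)},X_1^{(j)})k_2(X_2^{(i)},X_2^{(j)})$, and then splitting the fraction into its two summands yields Equation~\ref{eqn:est_nksd_decomp} at once. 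Note that, unlike Proposition~\ref{prop:subsystem_indep}, this version does not require $p$ to factor, since $\widehat{\textsc{nksd}}$ depends on $p$ only through the samples; only the factorizations of $q$ and $k$ from Condition~\ref{condition:independence} are used.

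To prove the pointwise identity I will use the block structure. Because $q(x) = q(x_1)q(x_2)$, the Stein score splits as $s_q(x) = (s_q(x_1)^\top, s_q(x_2)^\top)^\top$ with $s_q(x_\ell) = \nabla_{x_\ell}\log q(x_\ell)$; because $k = k_1 k_2$, its gradients split as $\nabla_y k(x,y) = \big(k_2(x_2,y_2)\nabla_{y_1}k_1(x_1,y_1),\ k_1(x_1,y_1)\nabla_{y_2}k_2(x_2,y_2)\big)$ and likewise for $\nabla_x k$. Substituting these into the first three terms of $u(x,y)$, each inner product $s_q(x)^\top s_q(y)\,k$, $\,s_q(x)^\top \nabla_y k$, and $s_q(y)^\top \nabla_x k$ separates into a block-$1$ contribution carrying a factor $k_2(x_2,y_2)$ and a block-$2$ contribution carrying a factor $k_1(x_1,y_1)$.

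The only term needing slight care is $\Tr(\nabla_x\nabla_y^\top k(x,y)) = \sum_a \partial^2 k/\partial x_a\partial y_a$. Splitting the index set into block-$1$ and block-$2$ coordinates, a diagonal entry with $a$ in block $\ell$ differentiates only $k_\ell$, leaving the other factor intact, so the trace equals $k_2(x_2,y_2)\Tr(\nabla_{x_1}\nabla_{y_1}^\top k_1) + k_1(x_1,y_1)\Tr(\nabla_{x_2}\nabla_{y_2}^\top k_2)$; the mixed second derivatives $\partial^2 k/\partial x_{1,a}\partial y_{2,b}$, which are products of first derivatives of $k_1$ and $k_2$, are off-diagonal and never enter the trace. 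Collecting all the $k_2$-multiples and all the $k_1$-multiples reproduces $u_1(x_1,y_1)\,k_2(x_2,y_2) + u_2(x_2,y_2)\,k_1(x_1,y_1)$, completing the identity, after which the conclusion is immediate. There is no genuine analytic difficulty here; the only thing to watch is the bookkeeping of the block decomposition of the gradients and of the trace, and---exactly as in Condition~\ref{condition:restrict_k} and Proposition~\ref{prop:subsystem_indep}---recording that $\sum_{i\neq j}k_1 k_2 > 0$ so that the fractions defining $\overline{\textsc{nksd}}$ are well defined.
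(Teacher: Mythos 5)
Your proposal is correct and follows essentially the same route as the paper: the paper likewise verifies the pointwise identity $u(x,y) = u_1(x_1,y_1)k_2(x_2,y_2) + u_2(x_2,y_2)k_1(x_1,y_1)$ by expanding each of the four terms of $u$ using the factorizations of $q$ and $k$, and then splits the resulting ratio. Your side remarks---that the factorization of $p$ is not needed for the estimator, and that positivity of the denominator keeps the fractions well defined---are accurate but do not change the argument.
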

\begin{proof}
    Recall the definition of $\widehat{\textsc{nksd}}(p(x)\|q(x))$ in Equation~\ref{eqn:est_nksd}.
    Note that $\nabla_{x_1} k(x,y) = k_2(x_2, y_2) \nabla_{x_1} k_1(x_1, y_1)$ and $\nabla_{x_1} \log q(x) = \nabla_{x_1} \log q(x_1)$. Examining $u(x,y)$ term-by-term,
\begin{equation*}
\begin{split}
	\nabla_x \log q(x)^\top \nabla_y \log q(y) k(x, y) =& \big[\nabla_{x_1} \log q(x_1)^\top \nabla_{y_1} \log q(y_1) k_1(x_1, y_1)\big] k_2(x_2, y_2)\\
	&+\big[\nabla_{x_2} \log q(x_2)^\top \nabla_{y_2} \log q(y_2) k_2(x_2, y_2)\big] k_1(x_1, y_1), \\
	\nabla_x \log q(x)^\top \nabla_y k(x, y) =& [\nabla_{x_1} \log q(x_1)^\top \nabla_{y_1} k_1(x_1,y_1)] k_2(x_2, y_2)\\
	&+[\nabla_{x_2} \log q(x_2)^\top \nabla_{y_2} k_2(x_2, y_2)] k_1(x_1, y_1),\\
	\nabla_x k(x, y)^\top \nabla_y \log q(y) =& [\nabla_{x_1} k_1(x_1, y_1)^\top \nabla_{y_1} \log q(y_1)] k_2(x_2,y_2),\\
	&+[\nabla_{x_2} k_2(x_2, y_2)^\top \nabla_{y_2} \log q(y_2)] k_1(x_1, y_1)\\
	\Tr(\nabla_x \nabla_y^\top k(x, y)) =& \Tr(\nabla_{x_1} \nabla_{y_1}^\top k_1(x_1, y_1)) k_2(x_2, y_2),\\
	&+\Tr(\nabla_{x_2} \nabla_{y_2}^\top k_2(x_2, y_2)) k_1(x_1, y_1).
\end{split}
\end{equation*}
Thus, defining $u_1$ and $u_2$ as in Proposition~\ref{prop:approx_subsystem_indep}, we have
\begin{equation*}
\begin{split}
	u(x, y) &= u_1(x_1, y_1) k_2(x_2, y_2) + u_2(x_2, y_2) k_1(x_1, y_1),\\
	k(x, y) &= k_1(x_1, y_1) k_2(x_2, y_2).
\end{split}
\end{equation*}
The result follows.
\end{proof}

To interpret Proposition~\ref{prop:approx_subsystem_indep}, note that
\begin{equation*}
\begin{split}
	\frac{\mathbb{E}_{X,Y\sim p}[u_1(X_1, Y_1) k_2(X_2, Y_2)]}{\mathbb{E}_{X,Y\sim p}[k_1(X_1, Y_1) k_2(X_2, Y_2)]} = \frac{\mathbb{E}_{X_1,Y_1\sim p(x_1)}[u_1(X_1, Y_1)]}{\mathbb{E}_{X_1,Y_1\sim p(x_1)}[k_1(X_1, Y_1)]}= \textsc{nksd}(p(x_1) \| q(x_1)),
\end{split}
\end{equation*} 
so $\overline{\textsc{nksd}}(p(x_1) \| q(x_1))$ is an estimator of $\textsc{nksd}(p(x_1) \| q(x_1))$, and likewise for $\overline{\textsc{nksd}}(p(x_2)\|q(x_2))$.

\subsection{Proof of Theorems~\ref{thm:marginal_ksd} and \ref{thm:expo_marginal_ksd}} \label{sec:si_bvm_proof}

Our proofs in this section build on the proof of Theorem 3 of \citet{Barp2019-ut}.

\begin{proposition} \label{proposition:uniform_convergence}
Under the assumptions of Theorem~\ref{thm:marginal_ksd},
for any compact convex $C \subseteq \Theta$,
\begin{equation}
\label{eqn:nksd_uniform_convergence}
	\sup_{\theta \in C} |f_N(\theta) - f(\theta)| \xrightarrow{\textup{a.s.}} 0.
\end{equation}
\end{proposition}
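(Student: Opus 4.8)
The plan is to view $f_N(\theta)$ as a ratio of two U-statistics and reduce the uniform convergence to (i) pointwise strong convergence at a finite net of parameters together with (ii) a uniform, integrable Lipschitz control in $\theta$. Write $f_N(\theta) = \tfrac{1}{T}\,U_N^{u}(\theta)/U_N^{k}$ where
\begin{equation*}
U_N^{u}(\theta) := \frac{1}{N(N-1)}\sum_{i\neq j} u_\theta(X^{(i)},X^{(j)}), \qquad U_N^{k} := \frac{1}{N(N-1)}\sum_{i\neq j} k(X^{(i)},X^{(j)}),
\end{equation*}
and correspondingly $f(\theta) = \tfrac{1}{T}\,\mathbb{E}[u_\theta(X,Y)]/\mathbb{E}[k(X,Y)]$ for $X,Y$ independent with law $P_0$. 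Since $k$ is bounded (Condition~\ref{condition:apply_nksd}), Hoeffding's strong law for U-statistics gives $U_N^{k}\to\mathbb{E}[k(X,Y)]$ a.s., and this limit is strictly positive because $k$ is integrally strictly positive definite; crucially it does not depend on $\theta$. Thus it suffices to show $\sup_{\theta\in C}|U_N^{u}(\theta)-\mathbb{E}[u_\theta(X,Y)]|\to 0$ a.s.\ and then combine the two limits (using $U_N^k$ eventually bounded below).

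Next I would build an integrable Lipschitz envelope for $\theta\mapsto u_\theta(x,y)$ on the convex compact set $C$. Differentiating the expression for $u_\theta$ in Proposition~\ref{proposition:u_statistic_nksd} with respect to $\theta$, and using that $k$ and its partial derivatives up to second order are bounded (Condition~\ref{condition:apply_nksd}) together with the envelopes $\|s_{q_\theta}\|\le g_{0,C}$ and $\|\nabla_\theta s_{q_\theta}\|\le g_{1,C}$ from Condition~\ref{condition:marginal_ksd}, one gets
\begin{equation*}
\|\nabla_\theta u_\theta(x,y)\| \;\le\; G(x,y) := c\big(g_{1,C}(x)\,g_{0,C}(y) + g_{0,C}(x)\,g_{1,C}(y) + g_{1,C}(x) + g_{1,C}(y)\big)
\end{equation*}
for all $\theta\in C$, with $c$ depending only on the kernel bounds. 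Since $X,Y$ are independent with law $P_0$ and $g_{0,C},g_{1,C}\in L^1(P_0)$, we have $\mathbb{E}[G(X,Y)]<\infty$; and by the mean value theorem on the convex set $C$, $|u_\theta(x,y)-u_{\theta'}(x,y)|\le G(x,y)\,\|\theta-\theta'\|$. In particular $\theta\mapsto\mathbb{E}[u_\theta(X,Y)]$, and hence $f$, is Lipschitz on $C$. Also, for each fixed $\theta$, Hoeffding's strong law gives $U_N^{u}(\theta)\to\mathbb{E}[u_\theta(X,Y)]$ a.s.\ (the kernel is in $L^1(P_0\times P_0)$ by Proposition~\ref{proposition:u_statistic_nksd}), and applied to $G$ it gives $\tfrac{1}{N(N-1)}\sum_{i\neq j}G(X^{(i)},X^{(j)})\to\mathbb{E}[G(X,Y)]$ a.s.

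Finally I would run the standard covering argument. Fix $\delta>0$ and pick $\theta_1,\dots,\theta_M\in C$ whose $\delta$-balls cover $C$. For $\theta\in C$, choosing $\theta_\ell$ with $\|\theta-\theta_\ell\|<\delta$,
\begin{equation*}
|f_N(\theta)-f(\theta)| \le |f_N(\theta)-f_N(\theta_\ell)| + |f_N(\theta_\ell)-f(\theta_\ell)| + |f(\theta_\ell)-f(\theta)|.
\end{equation*}
The middle term tends to $0$ a.s.\ for each of the finitely many $\ell$. The first term is bounded by $\tfrac{1}{T\,U_N^{k}}\big(\tfrac{1}{N(N-1)}\sum_{i\neq j}G(X^{(i)},X^{(j)})\big)\,\delta$, which is eventually at most $C'\delta$ a.s.\ since the two U-statistics converge to a positive and a finite limit; the third term is at most $(\mathrm{Lip}(f))\,\delta$. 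Taking $\limsup_{N\to\infty}$ and then $\delta\downarrow 0$ gives $\sup_{\theta\in C}|f_N(\theta)-f(\theta)|\to 0$ a.s. The main obstacle is simply verifying the envelope bound and that it lies in $L^1(P_0\times P_0)$ — once the equi-Lipschitz property is in hand, the passage from pointwise to uniform convergence is routine. The one subtlety to watch is that the Lipschitz modulus of $f_N$ is itself random, which is why the first term is routed through the a.s.-convergent $G$-U-statistic rather than through a deterministic bound.
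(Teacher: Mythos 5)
Your proof is correct, and it reaches the conclusion by a genuinely different (more self-contained) route than the paper. The paper's proof uses the same two ingredients you do --- the a.s.\ convergence of the kernel U-statistic in the denominator, and the integrable envelopes $h_{0,C}(x,y)$ and $h_{1,C}(x,y)$ on $|u_\theta|$ and $\|\nabla_\theta u_\theta\|$ built from $g_{0,C},g_{1,C}$ and the kernel bounds --- but then delegates the pointwise-to-uniform step to Theorem~1 of Yeo and Johnson (2001), a uniform strong law for U-statistics indexed by a parameter, which it verifies by checking equicontinuity of $(\theta\mapsto u_\theta(x,y))$ and of $(\theta\mapsto\int u_\theta(x,y)\,P_0(dy))$ on an exhausting sequence of compact subsets of $\mathcal{X}$. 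You instead run an explicit finite $\delta$-net argument, controlling the oscillation of $U_N^u$ between net points by the a.s.-convergent U-statistic of the Lipschitz envelope $G$; this avoids the external uniform law entirely and also sidesteps the need to check equicontinuity of the conditional expectation family. What the paper's route buys is a ready-made theorem (plus, as a byproduct, continuity of $\theta\mapsto\mathbb{E}[u_\theta(X,Y)]$, which you recover anyway from the equi-Lipschitz property); what yours buys is elementarity and transparency. Two small polish points: the integrability of $u_\theta$ needed for Hoeffding's SLLN at each net point comes from the envelope $h_{0,C}\in L^1(P_0\times P_0)$ rather than from Proposition~\ref{proposition:u_statistic_nksd} itself (which only asserts finiteness of the NKSD), and to let $\delta\downarrow 0$ after the $\limsup$ you should take $\delta=1/m$ and intersect the countably many probability-one events --- both are routine and do not affect correctness.
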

\begin{proof}
First, we establish almost sure convergence for the denominator of $f_N(\theta)$. Since $k$ is assumed to be bounded and to have bounded derivatives up to order two, we can choose $B < \infty$ such that $B \ge |k| + \|\nabla_x k\| + \|\nabla_x \nabla_y^\top k\|$.
In particular, the expected value of the kernel is finite:
\begin{equation}
\label{eqn:kernel_integrable}
    \int_\mathcal{X} \int_\mathcal{X} |k(x, y)| P_0(dx) P_0(dy) \le B < \infty.
\end{equation}
By the strong law of large numbers for U-statistics (Theorem 5.4A of \citealp{Serfling2009-dq}), 
\begin{equation}
\label{eqn:nksd_denom_convergence}
    \frac{1}{N(N-1)} \sum_{i \neq j} k(X^{(i)}, X^{(j)}) \xrightarrow[N \to \infty]{\textup{a.s.}} \int_\mathcal{X} \int_\mathcal{X} k(x, y) P_0(dx) P_0(dy).
\end{equation}
Note that the limit is positive since $k(x,y)>0$ for all $x,y\in\mathcal{X}$. 
For the numerator, we establish bounds on $u_\theta$ and $\nabla_\theta u_\theta$. Let $C\subseteq\Theta$ be compact and convex.
By Equation~\ref{eqn:est_nksd}, for all $\theta\in C$ and all $x,y\in\mathcal{X}$,
\begin{equation}
\label{eqn:u_bound}
\begin{split}
|u_\theta(x,y)| &\le |s_{q_\theta}(x)^\top s_{q_\theta}(y) k(x, y)| + |s_{q_\theta}(x)^\top \nabla_y k(x, y)| \\
   &\phantom{\le ~~} + |s_{q_\theta}(y)^\top \nabla_x k(x, y)| + |\Tr(\nabla_x \nabla_y^\top k(x, y))|\\
   &\le \|s_{q_\theta}(x)\| \|s_{q_\theta}(y)\| B + \|s_{q_\theta}(x)\| B + \|s_{q_\theta}(y)\| B + B d\\
   &\le g_{0,C}(x) g_{0,C}(y) B + g_{0,C}(x) B + g_{0,C}(y) B + B d\\
   &=: h_{0,C}(x, y).
\end{split}
\end{equation}
Similarly, for all $\theta\in C$ and all $x,y\in\mathcal{X}$,
\begin{equation}
\label{eqn:grad_u_bound}
\begin{split}
\|\nabla_\theta u_\theta(x, y) \| &\le \|\nabla_\theta(s_{q_\theta}(x)^\top s_{q_\theta}(y)) k(x, y)\| + \|\nabla_\theta (s_{q_\theta}(x)^\top \nabla_y k(x, y))\| \\
    &\phantom{\le ~~} + \|\nabla_\theta(s_{q_\theta}(y)^\top \nabla_x k(x, y))\| + \|\nabla_\theta\Tr(\nabla_x \nabla_y^\top k(x, y))\|\\
    &\le g_{0,C}(x) g_{1,C}(y) B + g_{0,C}(y) g_{1,C}(x) B + g_{1,C}(x) B + g_{1,C}(y) B \\
&=: h_{1,C}(x, y).
\end{split}
\end{equation}
Note that $h_{0,C}$ and $h_{1,C}$ are continuous and belong to $L^1(P_0\times P_0)$. 

Let $S_1\subseteq S_2 \subseteq \cdots\subseteq\mathcal{X}$ be a sequence of compact sets such that $\cup_{M=1}^\infty S_M = \mathcal{X}$. Note that this implies $\cup_{M=1}^\infty S_M \times S_M = \mathcal{X}\times \mathcal{X}$.
Suppose for the moment that, for each $M$, the following collections of functions are equicontinuous on $C$:
(A) $(\theta \mapsto u_\theta(x,y) : x,y\in S_M)$ and (B) $\big(\theta \mapsto \int u_\theta (x, y) P_0(dy) : x \in S_M\big)$.
Assuming this, Theorem 1 of \citet{Yeo2001-jb} shows that
\begin{equation}
    \sup_{\theta \in C} \bigg\vert\frac{1}{N(N-1)} \sum_{i \neq j} u_\theta(X^{(i)}, X^{(j)}) - \int_\mathcal{X} \int_\mathcal{X} u_\theta (x, y) P_0(dx) P_0(dy)\bigg\vert\xrightarrow[N \to \infty]{\textup{a.s.}} 0,
\label{eqn:ksd_uniform_converge}
\end{equation}
and that $\theta \mapsto \int_\mathcal{X} \int_\mathcal{X} u_\theta (x, y) P_0(dx) P_0(dy)$ is continuous.
(Note that although \citet{Yeo2001-jb} assume $\mathcal{X} = \mathbb{R}$, their proof goes through without further modification for any nonempty $\mathcal{X} \subseteq \mathbb{R}^d$.)
Combining Equations~\ref{eqn:nksd_denom_convergence} and \ref{eqn:ksd_uniform_converge}, we have
\begin{equation*}
\frac{\sup_{\theta \in C} \big\vert\frac{1}{N(N-1)} \sum_{i \neq j} u_\theta(X^{(i)}, X^{(j)}) - \int \int u_\theta (x, y) P_0(dx) P_0(dy)\big\vert}{\frac{1}{N(N-1)} \sum_{i \neq j} k(X^{(i)}, X^{(j)})} \xrightarrow[N \to \infty]{\textup{a.s.}} 0.
\end{equation*}
Thus, it follows that $\sup_{\theta\in C} |f_N(\theta) - f(\theta)| \to 0$ a.s.\ by Equations~\ref{eqn:nksd_denom_convergence} and \ref{eqn:u_bound}.
To complete the proof, we must show that (A) and (B) are equicontinuous on $C$.

(A) Since $\theta \mapsto u_\theta(x, y)$ is differentiable on $C$, then by the mean value theorem, we have that for all $\theta_1, \theta_2 \in C$ and all $x,y\in S_M$,
\begin{align*}
    |u_{\theta_1}(x, y) - u_{\theta_2}(x, y)| &\le \|\nabla_\theta|_{\theta = \tilde{\theta}\,} u_{\theta}(x, y)\| \|\theta_1 - \theta_2\| \\
	&\le h_{1,C}(x, y) \|\theta_1 - \theta_2\| \\
	&\le  \Big(\sup_{x, y \in S_M} h_{1,C}(x, y)\Big) \|\theta_1 - \theta_2\| < \infty
\end{align*}
where $\tilde{\theta} = \gamma \theta_1 + (1-\gamma) \theta_2$ for some $\gamma \in [0, 1]$. 
Here, the second inequality holds since $\tilde{\theta} \in C$ by the convexity of $C$,
and the supremum is finite because a continuous function on a compact set attains its maximum.
Therefore, $(\theta \mapsto u_\theta(x,y) : x,y\in S_M)$ is equicontinuous on $C$.

(B) To see that $\big(\theta \mapsto \int u_\theta (x, y) P_0(dy) : x \in S_M\big)$ is equicontinuous on $C$, first note that 
\begin{equation*}
    \int |u_\theta(x, y)| P_0(dy) \le \int h_{0,C}(x, y) P_0(dy) < \infty.
\end{equation*}
Further, due to Equations~\ref{eqn:u_bound} and \ref{eqn:grad_u_bound}, we can apply the Leibniz integral rule \citep[Theorem 2.27]{folland1999real} and find that $\nabla_\theta \int u_\theta(x, y) P_0(dy)$ exists and is equal to $\int \nabla_\theta u_\theta(x, y) P_0(dy)$.
Now we apply the mean value theorem and the same reasoning as before to find that for all $\theta_1, \theta_2 \in C$ and all $x\in S_M$,
\begin{equation*}
\begin{split}
\Big|{\textstyle\int} u_{\theta_1} (x, y) P_0(dy) - {\textstyle\int} u_{\theta_2} (x, y) P_0(dy)\Big| &\le \big\|\nabla_\theta|_{\theta=\tilde{\theta}}\, {\textstyle\int} u_\theta(x, y) P_0(dy)\big\| \|\theta_1 - \theta_2\|\\
& \le \|\theta_1 - \theta_2\|\int \big\| \nabla_\theta|_{\theta=\tilde{\theta}}\, u_\theta (x, y) \big\| P_0(dy) \\
& \le \|\theta_1 - \theta_2\| \sup_{x \in S_M} \int h_{1,C}(x, y) P_0(dy) < \infty
\end{split}
\end{equation*}
where $\tilde{\theta} = \gamma \theta_1 + (1-\gamma) \theta_2$ for some $\gamma \in [0, 1]$. 
The supremum is finite since $x\mapsto\int h_{1,C}(x, y) P_0(dy)$ is continuous, which can easily be seen by plugging in the definition of $h_{1,C}$.
Therefore, $\big(\theta \mapsto \int u_\theta (x, y) P_0(dy) : x \in S_M\big)$ is equicontinuous on $C$. 
\end{proof}

\begin{proposition} \label{proposition:nksd_uniform_bound}
Under the assumptions of Theorem~\ref{thm:marginal_ksd}, $(f'''_N : N \in \mathbb{N})$ is uniformly bounded on $E$.
\end{proposition}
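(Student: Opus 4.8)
The plan is to differentiate $f_N$ three times in $\theta$ and dominate every term that appears, exploiting that the denominator $D_N := \frac{1}{N(N-1)}\sum_{i\neq j} k(X^{(i)},X^{(j)})$ is $\theta$-free and strictly positive. Writing $f_N(\theta) = \frac{1}{T}\,A_N(\theta)/D_N$ with $A_N(\theta) := \frac{1}{N(N-1)}\sum_{i\neq j} u_\theta(X^{(i)},X^{(j)})$, and noting that $f_N$ is a finite sum of functions that are $C^3$ in $\theta$ by Condition~\ref{condition:marginal_ksd}, I would differentiate term by term to obtain $f_N'''(\theta) = \frac{1}{T D_N}\frac{1}{N(N-1)}\sum_{i\neq j}\nabla_\theta^3 u_\theta(X^{(i)},X^{(j)})$; since $f_1\equiv 0$ by convention, it suffices to bound this over $N\ge 2$ and $\theta\in E$. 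Thus the claim reduces to two sub-bounds: (i) $\sup_{N\ge 2} 1/D_N < \infty$ almost surely, and (ii) $\sup_{N\ge 2,\,\theta\in E}\frac{1}{N(N-1)}\sum_{i\neq j}\|\nabla_\theta^3 u_\theta(X^{(i)},X^{(j)})\| < \infty$ almost surely.

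For (i), I would invoke Equation~\ref{eqn:nksd_denom_convergence} from the proof of Proposition~\ref{proposition:uniform_convergence}, which gives $D_N \to \int\!\int k(x,y)\,P_0(dx)P_0(dy) > 0$ a.s.; since each $D_N > 0$ (as $k>0$ by Condition~\ref{condition:apply_nksd}) and the limit is strictly positive, $\inf_{N\ge 2} D_N > 0$ a.s. For (ii), I would take $C := \bar E$ (compact, convex, and contained in $\Theta$), use the $L^1(P_0)$ dominating functions $g_{0,C}, g_{1,C}$ of Condition~\ref{condition:marginal_ksd} for the zeroth- and first-order $\theta$-derivatives of $s_{q_\theta}$, use a constant $B$ bounding $|k|$, $\|\nabla_x k\|$, $\|\nabla_y k\|$, and $\|\nabla_x\nabla_y^\top k\|$, and use the a.s.\ finite constants $M_2,M_3$ bounding the averages in Equations~\ref{eqn:score_d2_bound}--\ref{eqn:score_d3_bound} uniformly over $N\in\mathbb{N}$ and $\theta\in E$. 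The next step is the product-rule expansion of $\nabla_\theta^3 u_\theta(x,y)$: the $\Tr(\nabla_x\nabla_y^\top k)$ term drops (no $\theta$-dependence), and every remaining term is, up to dimension-dependent constants, $B$ times a product of a $\theta$-derivative of $s_{q_\theta}(x)$ and a $\theta$-derivative of $s_{q_\theta}(y)$ with orders summing to at most $3$ (and with one order forced to $0$ for the two middle terms of $u_\theta$). The structural observation that makes this work is that in each such monomial at most one factor has $\theta$-order $\ge 2$; the other factor has order $0$ or $1$ and is pointwise dominated by $g_{0,C}$ or $g_{1,C}$. Hence every term is pointwise at most $B$ times one of $\|\nabla_\theta^k s_{q_\theta}(x)\|\,\psi(y)$, $\psi(x)\,\|\nabla_\theta^k s_{q_\theta}(y)\|$, $\|\nabla_\theta^3 s_{q_\theta}(x)\|$, or $\|\nabla_\theta^3 s_{q_\theta}(y)\|$, where $k\in\{2,3\}$ and $\psi\in\{g_{0,C},g_{1,C}\}$.

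To close, I would sum any such bound over $i\neq j$ and divide by $N(N-1)$, using $\sum_{i\neq j}a_i b_j \le (\sum_i a_i)(\sum_j b_j)$ for nonnegative $a_i,b_i$: a product term like $\|\nabla_\theta^2 s_{q_\theta}(x)\|\,g_{1,C}(y)$ contributes at most $\tfrac{N}{N-1}\big(\tfrac1N\sum_i\|\nabla_\theta^2 s_{q_\theta}(X^{(i)})\|\big)\big(\tfrac1N\sum_j g_{1,C}(X^{(j)})\big) \le 2 M_2 \sup_N \tfrac1N\sum_j g_{1,C}(X^{(j)})$, while a single-sum term like $\|\nabla_\theta^3 s_{q_\theta}(x)\|$ contributes $\tfrac1N\sum_i\|\nabla_\theta^3 s_{q_\theta}(X^{(i)})\| \le M_3$, and the remaining terms are handled identically with $g_{0,C}$ or $M_3$. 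By the strong law of large numbers, $\tfrac1N\sum_j g_{0,C}(X^{(j)})$ and $\tfrac1N\sum_j g_{1,C}(X^{(j)})$ converge a.s.\ to finite expectations, so their suprema over $N$ are a.s.\ finite; adding the finitely many terms gives a bound for (ii) valid for all $N\ge 2$ and $\theta\in E$, and combining with (i) yields $\sup_{N\in\mathbb{N},\,\theta\in E}\|f_N'''(\theta)\| < \infty$ a.s.

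I expect the main obstacle to be purely organizational rather than conceptual: carefully enumerating the terms in the product-rule expansion of $\nabla_\theta^3 u_\theta$ and checking that every monomial has at most one factor of $\theta$-order $2$ or $3$, so that the averaged bounds in Condition~\ref{condition:marginal_ksd} suffice for that factor while the $L^1(P_0)$ pointwise bounds $g_{0,C},g_{1,C}$ together with the SLLN handle the remaining order-$\le 1$ factor. One should also state explicitly that the resulting bound is finite with probability $1$ but depends on the realized sample path (through $\sup_N D_N^{-1}$, $M_2$, $M_3$, and the averages of $g_{0,C},g_{1,C}$), consistent with the ``with probability $1$'' convention used in Condition~\ref{condition:marginal_ksd}.
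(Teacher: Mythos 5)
Your proposal is correct and follows essentially the same route as the paper's proof: expand $\nabla_\theta^3 u_\theta$ by the product rule, observe that each monomial contains at most one factor of $\theta$-order $\ge 2$ (controlled by the a.s.\ bounded averages in Equations~\ref{eqn:score_d2_bound}--\ref{eqn:score_d3_bound}), bound the double sums by products of single-sum averages, and handle the denominator via its a.s.\ convergence to a positive constant. The only (minor) difference is in controlling $\frac1N\sum_i\|s_{q_\theta}(X^{(i)})\|$ and $\frac1N\sum_i\|\nabla_\theta s_{q_\theta}(X^{(i)})\|$ uniformly over $\theta\in E$: the paper applies a uniform strong law (Theorem 1.3.3 of \citealp{Ghosh2003-yv}) with envelopes $g_{0,\bar E},g_{1,\bar E}$, whereas you first bound the summands pointwise by those $\theta$-free envelopes and then invoke the ordinary SLLN, which is slightly more elementary and equally valid.
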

\begin{proof}
First, for any $x,y\in\mathcal{X}$, if we define $g(\theta) = s_{q_\theta}(x)$ and $h(\theta) = s_{q_\theta}(y)$ then
$u_\theta = (g^\top h) k + g^\top (\nabla_y k) + h^\top (\nabla_x k) + \Tr(\nabla_x\nabla_y^\top k)$.
By differentiating, applying Minkowski's inequality to the resulting sum of tensors, and applying the Cauchy--Schwarz inequality to each term, we have
\begin{align*}
\|\nabla_\theta^3 u_\theta(x,y)\| &\le \|\nabla^3 g\| \|h\| k + 3 \|\nabla^2 g\| \|\nabla h\| k + 3 \|\nabla g\| \|\nabla^2 h\| k 
+ \|g\| \|\nabla^3 h\| k \\ & ~~~ + \|\nabla^3 g\| \|\nabla_y k\| + \|\nabla^3 h\| \|\nabla_x k\|.
\end{align*}
Using the symmetry of the kernel to combine like terms, this yields that
\begin{equation*}
\begin{split}
    \Big\|\sum_{i \neq j} & \nabla_\theta^3 u_\theta(X^{(i)}, X^{(j)}) \Big\| \\
    \le \sum_{i \neq j} & \Big(2 \| \nabla^3_\theta s_{q_\theta} (X^{(i)}) \| \|s_{q_\theta}(X^{(j)})\| B
    + 6 \|\nabla^2_\theta s_{q_\theta}(X^{(i)})\| \|\nabla_\theta s_{q_\theta}(X^{(j)})\| B
    + 2 \|\nabla^3_\theta s_{q_\theta}(X^{(i)})\| B\Big)
\end{split}
\end{equation*}
where $B < \infty$ such that $B \ge |k| + \|\nabla_x k\| + \|\nabla_x \nabla_y^\top k\|$.
Since $f_N(\theta) = 0$ when $N=1$ by definition, 
we can assume without loss of generality that $N \ge 2$, so $\frac{1}{N-1} = \frac{1}{N} (1 + \frac{1}{N-1}) \le 2/N$. Since each term is non-negative, we can add in the $i=j$ terms,
\begin{align} \label{eqn:kppp_bound}
\Big\|&\frac{1}{N(N-1)} \sum_{i \neq j} \nabla_\theta^3 u_\theta(X^{(i)}, X^{(j)}) \Big\| \notag\\
&\le \frac{2 B}{N^2} \sum_{i, j} \Big(2 \| \nabla^3_\theta s_{q_\theta} (X^{(i)}) \| \|s_{q_\theta}(X^{(j)})\|
    + 6 \|\nabla^2_\theta s_{q_\theta}(X^{(i)})\| \|\nabla_\theta s_{q_\theta}(X^{(j)})\| + 2 \|\nabla^3_\theta s_{q_\theta}(X^{(i)})\|\Big) \notag \\
&= 4 B \Big(\frac{1}{N} \sum_i \| \nabla^3_\theta s_{q_\theta}(X^{(i)}) \| \Big) \Big(\frac{1}{N} \sum_j \|s_{q_\theta}(X^{(j)})\| \Big) \\
&~~~~ +12 B \Big(\frac{1}{N} \sum_i \| \nabla_\theta^2 s_{q_\theta}(X^{(i)}) \| \Big) \Big(\frac{1}{N} \sum_j \|\nabla_\theta s_{q_\theta}(X^{(j)})\| \Big)\notag\\
&~~~~ +4 B \Big(\frac{1}{N} \sum_i \|\nabla^3_\theta s_{q_\theta}(X^{(i)})\| \Big).\notag
\end{align}
By assumption, $\big\{\frac{1}{N} \sum_i \|\nabla^2_\theta s_{q_\theta}(X^{(i)})\|: N \in \mathbb{N},\theta \in E\big\}$ is bounded with probability 1, and similarly for $\big\{\frac{1}{N} \sum_i \|\nabla^3_\theta s_{q_\theta}(X^{(i)})\|: N \in \mathbb{N},\theta \in E\big\}$. We show the same for $\frac{1}{N} \sum_i \|s_{q_\theta}(X^{(i)})\|$ and $\frac{1}{N} \sum_i \|\nabla_\theta s_{q_\theta}(X^{(i)})\|$. By Equation~\ref{eqn:score_bounds}, we have
\begin{equation*}
    \int \sup_{\theta \in \bar{E}} \|s_{q_\theta}(x)\| P_0(dx) \le \int g_{0,\bar{E}} (x) P_0(dx) < \infty.
\end{equation*}
Hence, by Theorem 1.3.3 of \citet{Ghosh2003-yv}, $\frac{1}{N} \sum_i \|s_{q_\theta}(X^{(i)})\|$ converges uniformly on $\bar{E}$, almost surely. In particular, $\frac{1}{N} \sum_i \|s_{q_\theta}(X^{(i)})\|$ is uniformly bounded on $E$, almost surely. The same argument holds for $\frac{1}{N} \sum_i \|\nabla_\theta s_{q_\theta}(X^{(i)})\|$ using $g_{1,\bar{E}}(x)$.
Therefore, by Equation~\ref{eqn:kppp_bound}, it follows that $\|\frac{1}{N(N-1)} \sum_{i \neq j} \nabla_\theta^3 u_\theta(X^{(i)}, X^{(j)}) \|$ is uniformly bounded on $E$. Since $k$ is positive by assumption, $\frac{1}{N(N-1)} \sum_{i \neq j} k(X^{(i)}, X^{(j)}) > 0$ for all $N \ge 2$  
and by Equations~\ref{eqn:kernel_integrable} and \ref{eqn:nksd_denom_convergence}, $\frac{1}{N(N-1)} \sum_{i \neq j} k(X^{(i)}, X^{(j)})$ converges a.s.\ to a finite quantity greater than 0. We conclude that almost surely,
\begin{equation*}
\| f'''_N(\theta) \| = \frac{1}{T}\frac{\|\frac{1}{N(N-1)} \sum_{i \neq j} \nabla_\theta^3 u_\theta(X^{(i)}, X^{(j)}) \|}{\frac{1}{N(N-1)} \sum_{i \neq j} k(X^{(i)}, X^{(j)})}
\end{equation*}
is uniformly bounded on $E$, for $N \in \{2, 3, \ldots\}$. 
Recall that for $N = 1$, $f_N(\theta) = 0$ by definition.
Therefore, almost surely, $(f'''_N : N \in \mathbb{N})$ is uniformly bounded on $E$.
\end{proof}

\begin{proof}[\textup{\textbf{Proof of Theorem~\ref{thm:marginal_ksd}}}]

We show that the conditions of Theorem 3.2 of \citet{Miller2019-ur} are met, from which the conclusions of this theorem follow immediately. 

By Condition~\ref{condition:marginal_ksd} and Equation~\ref{eqn:f_N}, $f_N$ has continuous third-order partial derivatives on $\Theta$.
Let $E$ be the set from Condition~\ref{condition:marginal_ksd}.
With probability 1, $f_N \to f$ uniformly on $E$ (by Proposition~\ref{proposition:uniform_convergence} with $C = \bar{E}$)
and $(f'''_N)$ is uniformly bounded on $E$ (by Proposition~\ref{proposition:nksd_uniform_bound}).
Note that $f$ is finite on $\Theta$ by Proposition~\ref{proposition:u_statistic_nksd}.
Thus, by Theorem 3.4 of \citet{Miller2019-ur}, $f'$ and $f''$ exist on $E$ and $f_N''\to f''$ uniformly on $E$ with probability 1.
Since $\theta_{*}$ is a minimizer of $f$ and $\theta_{*}\in E$, we know that $f'(\theta_{*}) = 0$ and $f''(\theta_{*})$ is positive semidefinite; thus, $f''(\theta_{*})$ is positive definite since it is invertible by assumption.

Case (a): Now, consider the case where $\Theta$ is compact. Then almost surely, $f_N \to f$ uniformly on $\Theta$ 
by Proposition~\ref{proposition:uniform_convergence} with $C = \Theta$.
Since $\theta_{*}$ is a unique minimizer of $f$, we have $f(\theta) > f(\theta_{*})$ for all $\theta \in \Theta \setminus \{\theta_{*}\}$.
Let $H\subseteq E$ be an open set such that $\theta_{*}\in H$ and $\bar{H} \subseteq E$.
We show that $\liminf_N \inf_{\theta \in \Theta \setminus \bar{H}} f_N(\theta) > f(\theta_{*})$. Since $\Theta\setminus H$ is compact,
\begin{equation*}
\inf_{\theta \in \Theta \setminus \bar{H}} f(\theta) - f(\theta_{*})  =: \epsilon > 0.
\end{equation*}
By uniform convergence, with probability 1, there exists $N$ such that for all $N' > N$,
$\sup_{\theta\in\Theta} |f_{N'}(\theta) - f(\theta)| \le \epsilon/2$, and thus,
\begin{equation*}
\begin{split}
    \inf_{\theta \in \Theta \setminus \bar{H}} f_{N'}(\theta)
    \ge \inf_{\theta \in \Theta \setminus \bar{H}} f(\theta) - \epsilon/2 
    = f(\theta_{*}) + \epsilon/2.
\end{split}
\end{equation*}
Hence, $\liminf_{N} \inf_{\theta \in \Theta \setminus \bar{H}} f_N(\theta) > f(\theta_{*})$ almost surely. Applying Theorem 3.2 of \citet{Miller2019-ur}, the conclusion of the theorem follows.
Note that $f_N''(\theta_N) \to f''(\theta_{*})$ a.s.\ since $\theta_N \to \theta_{*}$ and $f_N''\to f''$ uniformly on $E$.

Case (b): Alternatively, consider the case where $\Theta$ is open and $f_N$ is convex on $\Theta$. 
For all $\theta \in \Theta$, with probability 1, $f_N(\theta) \to f(\theta)$ (by Proposition~\ref{proposition:uniform_convergence} with $C = \{\theta\}$). However, we need to show that with probability 1, for all $\theta \in \Theta$, $f_N(\theta) \to f(\theta)$. We follow the argument in the proof of Theorem 6.3 of \citet{Miller2019-ur}. Let $W$ be a countable dense subset of $\Theta$. Since $W$ is countable, with probability 1, for all $\theta \in W$, $f_N(\theta) \to f(\theta)$. Since $f_N$ is convex, then with probability 1, for all $\theta \in \Theta$, the limit $\tilde{f}(\theta) := \lim_N f_N(\theta)$ exists and is finite, and $\tilde{f}$ is convex (Theorem 10.8 of \citealp{Rockafellar1970-al}).
Since $f_N$ is convex and $f(\theta)$ is finite, $f(\theta)$ is also convex.
Since $f$ and $\tilde{f}$ are convex, they are also continuous (Theorem 10.1 of \citealp{Rockafellar1970-al}). 
Continuous functions that agree on a dense subset of points must be equal. Thus, with probability 1, for all $\theta \in \Theta$, $f_N(\theta) \to f(\theta)$. Applying Theorem 3.2 of \citet{Miller2019-ur}, the conclusion of the theorem follows.
\end{proof}

\begin{proof}[\textup{\textbf{Proof of Theorem~\ref{thm:expo_marginal_ksd}}}]
Our proof builds on Appendix D.3 of \citet{Barp2019-ut}, which establishes a central limit theorem for the \textsc{ksd} when the model is an exponential family.
The outline of the proof is as follows.
First, we establish bounds on $s_{q_\theta}$ and its derivatives, using the assumed bounds on $\nabla_x t(x)$ and $\nabla_x \log \lambda(x)$.
Second, we establish that $f''(\theta)$ is positive definite and independent of $\theta$, and that $f''_N(\theta)$ converges to it almost surely; from this, we conclude that $f''(\theta_{*})$ is invertible and $f_N(\theta)$ is convex. These results rely on the convergence properties of U-statistics and on Sylvester's criterion.

The assumption that $\log \lambda(x)$ is continuously differentiable on $\mathcal{X}$ implies that $\lambda(x) > 0$ for $x\in\mathcal{X}$.
Since $q_\theta(x) = \lambda(x) \exp(\theta^\top t(x) - \kappa(\theta))$, we have
\begin{equation*}
\begin{split}
    s_{q_\theta}(x) &= \nabla_x \log \lambda(x) + (\nabla_x t(x))^\top \theta \\
    \nabla_\theta s_{q_\theta}(x) &= (\nabla_x t(x))^\top \in \mathbb{R}^{d\times m}\\
    \nabla_\theta^2 s_{q_\theta}(x) &= 0  \in \mathbb{R}^{d\times m\times m}
\end{split}
\end{equation*}
where $(\nabla_x t(x))_{i j} = \partial t_i / \partial x_j$.
Thus, $s_{q_\theta}(x)$ has continuous third-order partial derivatives with respect to $\theta$,
and Equations~\ref{eqn:score_d2_bound} and \ref{eqn:score_d3_bound} are trivially satisfied.
Equation~\ref{eqn:score_bounds} holds for all compact $C\subseteq\Theta$ since 
$\|\nabla_x \log \lambda(x)\|$ and $\|\nabla_x t(x)\|$ are continuous functions in $L^1(P_0)$ and
\begin{align*}
    \|s_{q_\theta}(x)\| &= \|\nabla_x \log \lambda(x) + (\nabla_x t(x))^\top \theta\| \le \|\nabla_x \log \lambda(x)\| + \|\nabla_x t(x)\|\|\theta\|,\\
	\|\nabla_\theta s_{q_\theta}(x)\| &= \|\nabla_x t(x)\|.
\end{align*}
Hence, Condition~\ref{condition:marginal_ksd} holds.
By Equation~\ref{eqn:f} and Proposition~\ref{proposition:u_statistic_nksd}, 
\begin{equation}
f(\theta) = \frac{1}{T}\textsc{nksd}(p_0(x)\| q(x|\theta)) = \frac{1}{T K} \int_{\mathcal{X}} \int_{\mathcal{X}} u_\theta(x,y) P_0(d x) P_0(d y)
\end{equation}
where $K :=\int \int k(x,y) P_0(d x) P_0(d y)$.  By Equation~\ref{eqn:u_quadratic_form},
\begin{equation}
\label{eqn:u_quadratic_form2}
u_\theta(x, y) = \theta^\top B_2(x, y) \theta + B_1(x, y)^\top \theta + B_0(x, y)
\end{equation}
where
\begin{equation*}
\begin{split}
B_2(x,y) &= (\nabla_x t(x)) (\nabla_y t(y))^\top k(x, y), \\
B_1(x,y) &= (\nabla_y t(y))(\nabla_x \log \lambda(x)) k(x, y) + (\nabla_x t(x))(\nabla_y \log \lambda(y)) k(x, y)\\
&~~~ + (\nabla_y t(y))(\nabla_x k(x, y)) + (\nabla_x t(x))(\nabla_y k(x, y)),\\
B_0(x,y) &= (\nabla_x \log \lambda(x))^\top (\nabla_y \log \lambda(y)) k(x, y) + (\nabla_y \log \lambda(y))^\top (\nabla_x k(x, y))\\
&~~~ + (\nabla_x \log \lambda(x))^\top (\nabla_y k(x, y)) + \Tr(\nabla_x \nabla_y^\top k(x, y)).
\end{split}
\end{equation*}
By Condition~\ref{condition:apply_nksd}, $|k(x,y)|$, $\|\nabla_x k(x,y)\|$, and $\|\nabla_x \nabla_y^\top k(x,y)\|$ are bounded by a constant, say, $B < \infty$.
Thus, it is straightforward to check that $B_2$, $B_1$, and $B_0$ belong to $L^1(P_0\times P_0)$ 
since $\|\nabla_x t(x)\|$ and $\|\nabla_x \log \lambda(x)\|$ are in $L^1(P_0)$.
Further, $0 < K < \infty$ since $0 < k(x,y) \le B < \infty$ by assumption. Thus, 
\begin{equation*}
    f(\theta) = \frac{1}{T K}\int \int \big(\theta^\top B_2(x, y) \theta + B_1(x, y)^\top \theta + B_0(x, y)\big) P_0(dx) P_0(dy) \in \mathbb{R}.
\end{equation*}
Since $k$ is symmetric, $B_2(x,y)^\top = B_2(y,x)$. Hence, $\nabla_\theta (\theta^\top B_2(x,y) \theta) = (B_2(x,y) + B_2(y,x)) \theta$,
so by Fubini's theorem,
\begin{align*}
    f'(\theta) &= \frac{1}{T K}\int \int \big(2 B_2(x, y) \theta + B_1(x, y)\big) P_0(dx) P_0(dy) \in \mathbb{R}^m, \\
    f''(\theta) &= \frac{2}{T K}\int \int B_2(x, y)P_0(dx) P_0(dy) \in \mathbb{R}^{m\times m}.
\end{align*}
Here, differentiating under the integral sign is justified simply by linearity of the expectation.
Note that $f''(\theta)$ is a symmetric matrix since $B_2(x,y)^\top = B_2(y,x)$.
Next, to show $f''(\theta)$ is positive definite, let $v \in \mathbb{R}^m\setminus\{0\}$.
By assumption, the rows of $\nabla_x t(x)$ are linearly independent with positive probability under $P_0$.
Thus, there is a set $E \subseteq \mathcal{X}$ such that $P_0(E) > 0$ and $(\nabla_x t(x))^\top v \neq 0$ for all $x\in E$.
Define $g(x) = (\nabla_x t(x))^\top v\, p_0(x) \in \mathbb{R}^d$.
Then $\int_{\mathcal{X}} |g_i(x)| d x > 0$ for at least one $i$, and
$\int_{\mathcal{X}} |g_i(x)| d x \le \|v\|\int_{\mathcal{X}}\|\nabla_x t(x)\| p_0(x) d x < \infty$ for all $i$.
Thus,
\begin{equation*}
v^\top f''(\theta) v = \frac{2}{T K} \int \int g(x)^\top g(y) k(x, y) d x d y = \frac{2}{T K} \sum_{i=1}^d \int \int g_i(x) g_i(y) k(x, y) d x d y > 0
\end{equation*}
since $k$ is integrally strictly positive definite.  Therefore, $f''(\theta)$ is positive definite.
In particular, $f''(\theta_{*})$ is invertible.

Finally, we show that with probability 1, for all $N$ sufficiently large, $f_N(\theta)$ is convex. By Equations~\ref{eqn:f_N} and \ref{eqn:u_quadratic_form2},
\begin{equation*}
f_N(\theta) = \frac{1}{T}\frac{\sum_{i \neq j} \big[\theta^\top B_2(X^{(i)}, X^{(j)}) \theta + B_1(X^{(i)}, X^{(j)})^\top \theta + B_0(X^{(i)}, X^{(j)})\big]}{\sum_{i \neq j} k (X^{(i)}, X^{(j)})}.
\end{equation*}
Thus,
\begin{equation*}
\begin{split}
f''_N(\theta) = \frac{2}{T}\frac{\sum_{i \neq j} B_2(X^{(i)}, X^{(j)})}{{\sum_{i \neq j} k (X^{(i)}, X^{(j)})}}.
\end{split}
\end{equation*}
By the strong law of large numbers for U-statistics (Theorem 5.4A of \citealp{Serfling2009-dq}), we have $f''_N(\theta) \to f''(\theta)$ almost surely, since $\int_\mathcal{X} \int_\mathcal{X} \|B_2(x, y)\| P_0(dx)  P_0(dy) < \infty$ and $0 < K < \infty$.
For a symmetric matrix $A$, let $\lambda_{*}(A)$ denote the smallest eigenvalue.
Since $\lambda_*(A)$ is a continuous function of the entries of $A$, we have $\lambda_{*}(f_N''(\theta)) \to \lambda_{*}(f''(\theta))$ a.s.\ as $N\to\infty$.
Thus, with probability 1, for all $N$ sufficiently large, $f''_N(\theta)$ is positive definite, and hence, $f_N$ is convex.
Further, for such $N$, since $f_N$ is a quadratic function with positive definite Hessian, 
we have $M_N := \inf_{\theta\in\Theta} f_N(\theta) > -\infty$ and 
$z_N = \int_\Theta \exp(-N f_N(\theta))\pi(\theta)d\theta \leq \exp(-N M_N) < \infty$.
\end{proof}

\subsection{Proof of Theorem~\ref{prop:laplace_scaling}} \label{sec:proof_laplace_scaling}

To establish Theorem~\ref{prop:laplace_scaling}, we use the properties of U-statistics described in Chapter~5.5 of \citet{Serfling2009-dq}. 
When the data distribution matches the model distribution, $\widehat{\textsc{nksd}}$ converges more quickly than when it does not match; this same property was used by \citet{Liu2016-bp} to develop a goodness-of-fit test based on the \textsc{ksd}.

\begin{proof}
We first study the asymptotics of $f_N'(\theta_{*})$. 
Denoting $\nabla_\theta\big\vert_{\theta=\theta_{*}} u_\theta$ by $\nabla_\theta u_{\theta_{*}}$ for brevity,
\begin{equation*}
f'_N(\theta_{*}) = \frac{1}{T}\frac{\frac{1}{N(N-1)} \sum_{i \neq j} \nabla_\theta u_{\theta_{*}}(X^{(i)}, X^{(j)})}{\frac{1}{N(N-1)} \sum_{i \neq j} k(X^{(i)}, X^{(j)})}.
\end{equation*}
The denominator converges a.s.\ to a finite positive constant, as in the proof of Proposition~\ref{proposition:uniform_convergence}.
It is straightforward to verify that $\mathbb{E}_{X, Y \sim P_0}[\|\nabla_\theta u_{\theta_{*}}(X, Y)\|^2] < \infty$ since 
$s_{q_{\theta_{*}}}$ and $\nabla_\theta \big\vert_{\theta=\theta_{*}} s_{q_\theta}$ are in $L^2(P_0)$ by assumption.
By Theorems 5.5.1A and 5.5.2 of \citet{Serfling2009-dq},
\begin{equation*}
	\frac{1}{N(N-1)} \sum_{i \neq j} \nabla_\theta u_{\theta_{*}}(X^{(i)}, X^{(j)}) - \mathbb{E}_{X, Y \sim P_0}[\nabla_\theta u_{\theta_{*}}(X, Y)] = O_{P_0}(N^{-1/2}).
\end{equation*}
Further, by the Leibniz integral rule \citep[Theorem 2.27]{folland1999real},
\begin{equation*}
	\mathbb{E}_{X, Y \sim P_0}[\nabla_\theta u_{\theta_{*}}(X, Y)] = \nabla_\theta\big\vert_{\theta=\theta_{*}} \mathbb{E}_{X, Y \sim P_0}[u_\theta(X, Y)] = T\,\mathbb{E}_{X, Y \sim P_0}[k(X,Y)]f'(\theta_{*}) = 0,
\end{equation*}
using the fact that $f'(\theta_{*}) = 0$ since $\theta_{*}$ is a minimizer of $f$. Thus,
\begin{equation}
\label{eqn:nksd_gradient_convergence}
	f'_N(\theta_{*}) = O_{P_0}(N^{-1/2}).
\end{equation}

Next, we examine the convergence of $\theta_N$ to $\theta_{*}$. 
For all $N$ sufficiently large, $f_N'(\theta_N) = 0$ by Theorem~\ref{thm:marginal_ksd} (part 1),
and thus, by Taylor's theorem,
\begin{equation*}
	0 = f_N'(\theta_N) = f_N'(\theta_{*}) + f_N''(\theta_N^{+})(\theta_N - \theta_{*}),
\end{equation*}
where $\theta_N^{+}$ is on the line between $\theta_N$ and $\theta_{*}$.  
As in the proof of Theorem~\ref{thm:marginal_ksd}, $f''_N \to f''$ uniformly on the set $E$ defined in Condition~\ref{condition:marginal_ksd}. Thus, since $f''_N$ is continuous on $E$ and $\theta_N^+ \to \theta_{*}$,
\begin{equation} \label{eqn:fpp_N_scaling}
 	f''_N(\theta_N^+) \xrightarrow[N \to \infty]{\textup{a.s.}}  f''(\theta_{*}).
\end{equation}
In particular, $f''_N(\theta_N^+)$ is invertible for all $N$ sufficiently large, since $f''(\theta_{*})$ is invertible by assumption.
Hence,
\begin{equation} \label{eqn:theta_N_star_taylor}
	\theta_N - \theta_{*} = - f_N''(\theta_N^{+})^{-1} f_N'(\theta_{*}),
\end{equation}
and therefore, by Equation~\ref{eqn:nksd_gradient_convergence},
 \begin{equation} \label{eqn:theta_N_star_taylor_scaling}
 \|\theta_N - \theta_{*}\| \le \|f_N''(\theta_N^{+})^{-1}\| \|f_N'(\theta_{*})\| = O_{P_0}(N^{-1/2}).
 \end{equation}
This result matches Theorem 4 in \citet{Barp2019-ut}.
By Taylor's theorem,
 \begin{align}
 	f_N(\theta_{*}) - f_N(\theta_N) &= f_N'(\theta_N)^\top (\theta_{*} - \theta_N) + \frac{1}{2}(\theta_{*} - \theta_N)^\top f_N''(\theta_N^{++})(\theta_{*} - \theta_N)\notag\\
&= \frac{1}{2}(\theta_{*} - \theta_N)^\top f_N''(\theta_N^{++})(\theta_{*} - \theta_N) \notag
 \end{align}
for all $N$ sufficiently large, where $\theta_N^{++}$ is on the line between $\theta_N$ and $\theta_{*}$.
Therefore, using the same reasoning as for Equations~\ref{eqn:fpp_N_scaling} and \ref{eqn:theta_N_star_taylor_scaling},
\begin{equation}
	|f_N(\theta_{*}) - f_N(\theta_N)| \le \frac{1}{2} \|f_N''(\theta_N^{++})\| \|\theta_{*}-\theta_N\|^2 = O_{P_0}(N^{-1}).
\end{equation}
This proves the first part of the theorem (Equation~\ref{eqn:fN_O1}).
Next, consider $f_N(\theta_{*}) - f(\theta_{*})$. Recall that
\begin{equation*}
	f_N(\theta_{*}) = \frac{1}{T}\frac{\frac{1}{N(N-1)} \sum_{i \neq j} u_{\theta_{*}}(X^{(i)}, X^{(j)})}{\frac{1}{N(N-1)} \sum_{i \neq j} k(X^{(i)}, X^{(j)})}.
\end{equation*}
It is straightforward to verify that $\mathbb{E}_{X, Y \sim P_0}[| u_{\theta_{*}}(X, Y)|^2] < \infty$ 
since $s_{q_{\theta_{*}}}$ is in $L^2(P_0)$.
By Theorems 5.5.1A and 5.5.2 of \citet{Serfling2009-dq},
\begin{equation*}
	\frac{1}{N(N-1)} \sum_{i \neq j} u_{\theta_{*}}(X^{(i)}, X^{(j)}) - \mathbb{E}_{X, Y \sim P_0}[u_{\theta_{*}}(X, Y)] = O_{P_0}(N^{-1/2}).
\end{equation*}
Similarly, since $k$ is bounded, 
\begin{equation*}
	\frac{1}{N(N-1)} \sum_{i \neq j} k(X^{(i)}, X^{(j)}) - \mathbb{E}_{X, Y \sim P_0}[k(X, Y)] = O_{P_0}(N^{-1/2}).
\end{equation*}
It is straightforward to check that the second part of the theorem (Equation~\ref{eqn:fN_OsN}) follows.

For the third part, our argument follows that of the proof of Theorem 4.1 of \citet{Liu2016-bp}. 
Suppose $\textsc{nksd}(p_0(x)\| q(x|\theta_{*})) = 0$, 
and note that $P_0(x) = Q_{\theta_{*}}(x)$ by Proposition~\ref{thm:nksd_divergence}.
Given a differentiable function $g:\mathbb{R}^d\to \mathbb{R}^d$, define $\nabla_x^\top g(x) := \sum_{i=1}^d \partial g_i(x) / \partial x_i$.
Then
\begin{align}
\label{eqn:u_degenerate}
\mathbb{E}_{X \sim P_0}[u_{\theta_{*}}(X, y)] &=
s_{p_0}(y)^\top \int_\mathcal{X} \Big((\nabla_x p_0(x)) k(x, y) + p_0(x)(\nabla_x k(x,y)) \Big) d x \notag \\
& ~~~ + \int_\mathcal{X} \Big((\nabla_x p_0(x))^\top \nabla_y k(x,y) + p_0(x) (\nabla_x^\top \nabla_y k(x,y))\Big) dx \notag \\
&= s_{p_0}(y)^\top \int_\mathcal{X} \nabla_x \big(p_0(x)k(x,y)\big) dx + \int_\X \nabla_x^\top \nabla_y  (p_0(x) k(x, y)) dx.
\end{align}
The first term on the right-hand side of Equation~\ref{eqn:u_degenerate} is zero since, by assumption, $k$ is in the Stein class of $P_0$ (Condition~\ref{condition:restrict_k}).
The second term is also zero since, by the Leibniz integral rule \citep[Theorem 2.27]{folland1999real}, 
$\int \nabla_y^\top \nabla_x  (p_0(x) k(x, y)) d x = \nabla_y^\top \int \nabla_x  (p_0(x) k(x, y)) d x$, 
which again equals zero because $k$ is in the Stein class of $P_0$.
Therefore, $\mathbb{E}_{X \sim P_0}[u_{\theta_{*}}(X, y)] = 0$ for all $y\in\mathcal{X}$, and in particular, 
the variance of this expression is also zero: $\mathbb{V}_{Y \sim P_0}[\mathbb{E}_{X \sim P_0}[u_{\theta_{*}}(X, Y)]] = 0$. 
By Theorem 5.5.2 of \citet{Serfling2009-dq}, it follows that
\begin{equation}
\label{eqn:u_OPN}
	\frac{1}{N(N-1)} \sum_{i \neq j} u_{\theta_{*}}(X^{(i)}, X^{(j)}) = O_{P_0}(N^{-1})
\end{equation}
since $\mathbb{E}_{X, Y \sim P_0}[u_{\theta_{*}}(X, Y)] = 0$.
Although \citet{Serfling2009-dq} requires $\mathbb{V}_{X,Y \sim P_0}[u_{\theta_{*}}(X, Y)] > 0$, 
Equation~\ref{eqn:u_OPN} holds trivially if $\mathbb{V}_{X,Y \sim P_0}[u_{\theta_{*}}(X, Y)] = 0$.
As before, since the denominator of $f_N(\theta_{*})$ converges a.s.\ to a finite positive constant, we have that $f_N(\theta_{*}) = O_{P_0}(N^{-1})$.
Equation~\ref{eqn:fN_O1_thetastar} follows since $f(\theta_{*}) = 0$ when $\textsc{nksd}(p_0(x)\| q(x|\theta_{*})) = 0$.
\end{proof}

\subsection{Proof of Theorem~\ref{thm:selection_consistency}} \label{sec:si_proof_selection}

\begin{proof}
Applying Theorem~\ref{thm:marginal_ksd} (part 3) to each foreground model $j\in\{1,2\}$, we have
\begin{align*}
\log z_{j,N} + N f_{j,N}(\theta_{j,N}) - \log\pi(\theta_{j,*}) + \log|\det f_j''(\theta_{j,*})|^{1/2} - \frac{1}{2}m_{\F_j,j}\log(2\pi/N)
\xrightarrow[N \to \infty]{\textup{a.s.}} 0.
\end{align*}
Since $\mathcal{K}_{j,N} = (2\pi/N)^{m_{\B_j}/2} z_{j,N}$, this implies
\begin{align*}
\log \mathcal{K}_{j,N} + N f_{j,N}(\theta_{j,N}) - \frac{1}{2}(m_{\F_j,j}+m_{\B_j})\log(2\pi/N) + C_j
\xrightarrow[N \to \infty]{\textup{a.s.}} 0
\end{align*}
where $C_j$ is a constant that does not depend on $N$.  Hence,
\begin{align} \label{eqn:logKratio}
\log \frac{\mathcal{K}_{1,N}}{\mathcal{K}_{2,N}} &+ N (f_{1,N}(\theta_{1,N}) - f_{2,N}(\theta_{2,N})) \notag\\
& - \frac{1}{2}(m_{\F_1,1} + m_{\B_1} - m_{\F_2,2} - m_{\B_2})\log(2\pi/N) + C_1 - C_2
\xrightarrow[N \to \infty]{\textup{a.s.}} 0.
\end{align}
By Theorem~\ref{prop:laplace_scaling}, $f_{j,N}(\theta_{j,N}) \xrightarrow[]{P_0} f_j(\theta_{j,*})$, and therefore,
\begin{align*}
\frac{1}{N}\log \frac{\mathcal{K}_{1,N}}{\mathcal{K}_{2,N}} + f_1(\theta_{1,*}) - f_2(\theta_{2,*})
\xrightarrow[N \to \infty]{P_0} 0.
\end{align*}
Plugging in the definition of $f_j$ (Equation~\ref{eqn:f}), this proves part 1 of the theorem.

For part 2, suppose $f_1(\theta_{1,*}) = f_2(\theta_{2,*}) = 0$ and $m_{\B_2} - m_{\B_1}$ does not depend on $N$.
Then by Theorem~\ref{prop:laplace_scaling}, $f_{j,N}(\theta_{j,N}) = O_{P_0}(N^{-1})$.
Using this in Equation~\ref{eqn:logKratio}, we have
\begin{align}
\frac{1}{\log N}\log \frac{\mathcal{K}_{1,N}}{\mathcal{K}_{2,N}} 
+ \frac{1}{2}(m_{\F_1,1} + m_{\B_1} - m_{F_2,2} - m_{\B_2}) \xrightarrow[N \to \infty]{P_0} 0.
\end{align}

For part 3, suppose $f_1(\theta_{1,*}) = f_2(\theta_{2,*})$ and $m_{\B_j} = c_{\B_j}\sqrt{N}$.
Then by Theorem~\ref{prop:laplace_scaling}, $f_{j,N}(\theta_{j,N}) = f_j(\theta_{j,*}) + O_{P_0}(N^{-1/2})$.
Using this in Equation~\ref{eqn:logKratio}, we have
\begin{align}
\frac{1}{\sqrt{N}\log N}\log \frac{\mathcal{K}_{1,N}}{\mathcal{K}_{2,N}} 
+ \frac{1}{2}(c_{\B_1} - c_{\B_2}) \xrightarrow[N \to \infty]{P_0} 0.
\end{align}
\end{proof}

\section{Additional probabilistic PCA details} \label{sec:si_ppca_details}

\subsection{Optimizing the NKSD}

Computing the Laplace or BIC approximation to the SVC requires finding the minimizer of $\widehat{\textsc{nksd}}(p_0(x)\|q(x|\theta))$ with respect to $\theta$. In this section, we describe how components of the \textsc{nksd} can be pre-computed to speed up this optimization process.
The generative model used for pPCA can be rewritten using the properties of multivariate normal distributions as
\begin{equation}
	X \sim \mathcal{N}(0, H H^\top + v I_d).
\label{eqn:ppca_mvn}
\end{equation}
The Stein score function for the pPCA model is then
\begin{equation*}
s_{q_\theta}(x) = \nabla_x \log q(x| H, v) = -(H H^\top + v I_d)^{-1} x.
\end{equation*}
Define the matrices
\begin{equation*}
\begin{split}
K_{i j} &:= \mathbb{I}(i \neq j)\, k(X^{(i)}, X^{(j)}),\\
\dot{K}_{j b} &:=  \sum_{i=1}^N \mathbb{I}(i \neq j)\, \frac{\partial k}{\partial x_b}(X^{(i)}, X^{(j)}),
\end{split}
\end{equation*}
where $\mathbb{I}(E)$ is the indicator function, which equals 1 when $E$ is true and is 0 otherwise. Define the scalars
\begin{equation*}
\begin{split}
\bar{K} &:= \sum_{i,j=1}^N K_{i j},\\
\ddot{K} &:= \sum_{i,j = 1}^N \sum_{b=1}^d \mathbb{I}(i \neq j)\, \frac{\partial^2 k}{\partial x_b \partial y_b}(X^{(i)}, X^{(j)}).
\end{split}
\end{equation*}
Letting $X \in \mathbb{R}^{N \times d}$ be the data matrix, the NKSD can be written as
\begin{equation*}
\begin{split}
	\widehat{\textsc{nksd}}(p_0(x)\|q(x|H, v)) = \frac{1}{\bar{K}} \big[&\Tr(X^\top K X (H H^\top + v I_d)^{-1} (H H^\top + v I_d)^{-1}) \\
& -2 \,\Tr(X^\top \dot{K} (H H^\top + v I_d)^{-1}) + \ddot{K}\big],
\end{split}
\end{equation*}
where we have used the fact that the kernel is symmetric. The terms $X^\top K X$ and $X^\top \dot{K}$ are the only ones that include sums over the entire dataset; these can be pre-computed, before optimizing the parameters $H$ and $v$.

To compute the matrix inversion $(H H^\top + v I_d)^{-1}$ we follow the strategy of \citet{Minka2000-uw},
\begin{equation*}
\begin{split}
(H H^\top + v I_d)^{-1} - v^{-1} I_d &= (H H^\top + v I_d)^{-1} (I_d - v^{-1}(H H^\top + v I_d))\\
&= - (H H^\top + v I_d)^{-1} H H^\top v^{-1}\\
&= - (U (L - v I_k) U^\top + v I_d)^{-1} U (L - v I_k) U^\top v^{-1}.
\end{split} \notag
\end{equation*}
Thus, applying the Woodbury matrix identity and using $I_d U = U = U I_k I_k = U I_k U^\top U$,
\begin{align*}
(H H^\top + v I_d)^{-1} - v^{-1} I_d &= - \big[v^{-1} I_d - v^{-2} U \big((L - v I_k)^{-1} + v^{-1}\big)^{-1} U^\top\big] U (L - v I_k) U^\top v^{-1}\\
&= - U [v^{-1} I_k - v^{-2} ((L - v)^{-1} + v^{-1})^{-1}] (L - v I_k) U^\top v^{-1}\\
&= - U L^{-1} (L - v I_k) U^\top v^{-1}\\
&= U (L^{-1} - v^{-1} I_k) U^\top.
\end{align*}
Therefore,
\begin{equation*}
    (H H^\top + v I_d)^{-1} = U (L^{-1} - v^{-1} I_k) U^\top + v^{-1} I_d.
\end{equation*}
Computing $L^{-1}$ is trivial since the matrix is diagonal. Returning to the \textsc{nksd} we have
\begin{equation*}
\begin{split}
    \widehat{\textsc{nksd}}(p_0(x)\|q&(x|U, L, v))\\
    = \frac{1}{\bar{K}} \bigg[&\Tr\big(X^\top K X [U(L^{-1} - v^{-1}I_k)^2 U^\top + 2 v^{-1} U (L^{-1} - v^{-1} I_k) U^\top + v^{-2} I_d]\big)\\
    & - 2\,\Tr\big(X^\top \dot{K} [U (L^{-1} - v^{-1} I_k) U^\top + v^{-1} I_d]\big) + \ddot{K}\bigg]\\
    = \frac{1}{\bar{K}} \bigg[& \Tr\big(U^\top X^\top K X U (L^{-1} - v^{-1} I_k)^2\big)\\
    & + \Tr\big(U^\top [ 2 v^{-1} X^\top K X - 2 X^\top \dot{K}] U (L^{-1} - v^{-1} I_k)\big)\\
    & + v^{-1} \Tr\big(v^{-1} X^\top K X - 2 X^\top \dot{K}\big) + \ddot{K}\bigg].
\end{split}
\end{equation*}
We optimized $U$, $L$ and $v$ using the trust region method implemented in pymanopt~\citep{Townsend2016-qk}.

\subsection{Data selection with the SVC} \label{sec:ppca_data_selection_details}

We used the approximate optimum technique in Section~\ref{sec:approximate_optima} to estimate the SVC for different foreground subspaces. Following Section~\ref{sec:si_kernel_choice}, we used the factored IMQ kernel with $\beta = -0.5$ and $c = 1$.

We focused on foreground subspaces that correspond to subsets of the data dimensions. More specifically, recall that $X_\F = V^\top X$; then, we impose the restriction that each column of $V$ is a standard basis vector $e^{(b)} \in \mathbb{R}^d$, where $e^{(b)}_b = 1$ and $e^{(b)}_{b'} = 0$ for $b'\neq b$.
A subspace $\X_\F$ is then characterized by the set of included dimensions $S_\F \subseteq \{1, \ldots, d\}$. 
The marginal distribution of the model $q(x_\F|H, v)$ is now straightforward to compute based on Equation~\ref{eqn:ppca_mvn} and the properties of multivariate normals:
\begin{equation*}
	X_\F \sim \mathcal{N}(0, H_{S_\F} H_{S_\F}^\top + v I_{|S_\F|})
\end{equation*}
where $H_{S_\F}$ is the submatrix consisting of rows of $H$ indexed by $S_\F$, and $|S_\F|$ is the size of the set $S_\F$. 

In the projected model, some of the parameters are nuisance variables with no contribution to the likelihood. Since the dimension of a $d \times k$ matrix on the Stiefel manifold is $d k - k (k+1)/2$, the total dimension of the foreground model (including contributions from parameters $U$, $L$ and $v$) is $m_\F = |S_\F| k - k(k+1)/2 + k + 1$, assuming $|S_\F| \ge k$.

Code is available at \url{https://github.com/EWeinstein/data-selection}.

\subsection{Calibration} \label{sec:si_ppca_calibration}

The $T$ hyperparameter was calibrated as in Section~\ref{sec:calibrate_t}. In detail, we sampled 10 independent true parameter values from the prior, with $\alpha = 1$ and $d = 6$. (We used a slightly less disperse prior than during inference, where we set $\alpha = 0.1$, to avoid numerical instabilities in the $\hat{T}$ estimate.) Then, for each of the true parameter values, we simulated $N = 2000$ datapoints. For each simulated true parameter value, we tracked the trend in the $\hat{T}$ estimator (Equation~\ref{eqn:T_estimator}) with increasing $N$ (Figure~\ref{fig:ppca_calibration}). The median estimated $T$ value at $N = 2000$ was $0.052$ across the 10 runs.

\begin{figure}[t!]
	\centering
	\includegraphics[width=0.5\textwidth]{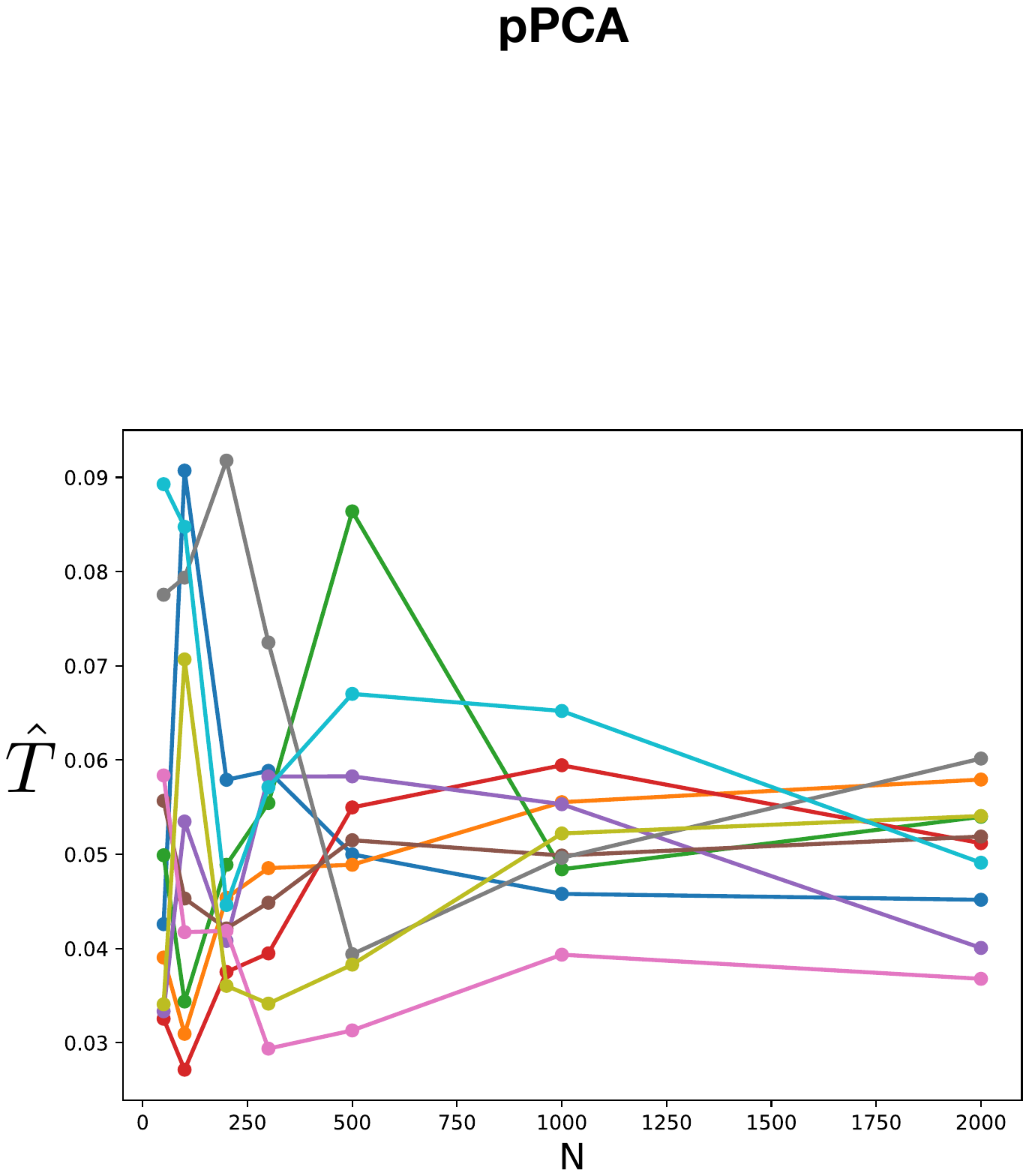}
	\caption{Estimated $T$ for increasing number of data samples, for 10 independent parameter samples from the prior. The median value at $N = 2000$ is $\hat{T} = 0.052$.}
	\label{fig:ppca_calibration}
\end{figure}

\subsection{ P\'{o}lya tree model} \label{sec:si_pt}

In this section, we describe the P\'{o}lya tree model~\citep{Ferguson1974-qn,Mauldin1992-pd,Lavine1992-fu} following the construction of \citet{Berger2001-li}.
Let $\munderbar{\epsilon}_n := (\epsilon_1, \ldots, \epsilon_n)$ denote a vector of length $n$, where each $\epsilon_j \in \{0, 1\}$. Each $\munderbar{\epsilon}_n$ vector indexes an interval in $\mathbb{R}$, given by
\begin{equation*}
   \textstyle B_{\munderbar{\epsilon}_n} := \Big( \tilde{F}^{-1} \big(\sum_{j=1}^n \epsilon_j/2^j\big),\; \tilde{F}^{-1}\big(\sum_{j=1}^n \epsilon_j/2^j + 1/2^n\big) \Big],
\end{equation*}
where $\tilde{F}^{-1}$ is the inverse c.d.f.\ of some probability distribution. 
For all $n\in\{0,1,2,\ldots\}$ and all $\munderbar{\epsilon}_n\in\{0,1\}^n$, let
\begin{equation*}
	Y_{\munderbar{\epsilon}_n} \sim \mathrm{Beta}(\xi_{\munderbar{\epsilon}_n 0}, \xi_{\munderbar{\epsilon}_n 1}),
\end{equation*}
where the $\xi$'s are hyperparameters. We say that a random variable $X\in\mathbb{R}$ is distributed according to a P\'{o}lya tree model if
\begin{equation*}
	P(X \in B_{\munderbar{\epsilon}_n}) = \prod_{j=1}^n (Y_{\munderbar{\epsilon}_{j-1}})^{\mathbb{I}(\epsilon_j = 0)} (1 - Y_{\munderbar{\epsilon}_{j-1}})^{\mathbb{I}(\epsilon_j = 1)},
\end{equation*}
where $\mathbb{I}(E)$ is the indicator function, which equals 1 when $E$ is true and is 0 otherwise. We follow \citet{Berger2001-li} and use
\begin{equation*}
\begin{split}
\mu (B_{\munderbar{\epsilon}_n}) &:= {\textstyle F\big(\tilde{F}^{-1}\big(\sum_{j=1}^n \epsilon_j/2^j + 1/2^n\big)\big) - F\big(\tilde{F}^{-1}\big(\sum_{j=1}^n \epsilon_j/2^j\big)\big)},\\
\rho(\munderbar{\epsilon}_n) &:= \frac{1}{\eta} \bigg(\frac{f(\tilde{F}^{-1}(\sum_{j=1}^n \epsilon_j/2^j + 1/2^{n+1}))}{\mu(B_{\munderbar{\epsilon}_n})}\bigg)^2,\\
\xi_{\munderbar{\epsilon}_n 0} &:= \rho(\munderbar{\epsilon}_n) \sqrt{\frac{\mu (B_{\munderbar{\epsilon}_n 0})}{\mu (B_{\munderbar{\epsilon}_n 1})}},\\
\xi_{\munderbar{\epsilon}_n 1} &:= \rho(\munderbar{\epsilon}_n) \sqrt{\frac{\mu (B_{\munderbar{\epsilon}_n 1})}{\mu (B_{\munderbar{\epsilon}_n 0})}},
\end{split}
\end{equation*}
where $F$ and $f$ are the c.d.f.\ and p.d.f.\ respectively of some probability distribution, and $\eta > 0$ is a scale hyperparameter. We denote this complete model as $X \sim \mathrm{PolyaTree}(F, \tilde{F}, \eta)$. 

\subsection{Datasets and preprocessing} \label{sec:si_ppca_datasets}

We downloaded two publicly available datasets. The first dataset was taken from human peripheral blood mononuclear cells (PBMCs): \url{https://support.10xgenomics.com/single-cell-gene-expression/datasets/1.1.0/pbmc3k}.
This is a standard dataset used in the tutorials for Seurat~\citep{Stuart2019-gc} and Scanpy \citep{Wolf2018-xz}, for example.
The second was taken from a dissociated extranodal marginal zone B-cell tumor, specifically a mucosa-associated lymphoid tissue (MALT) tumor: \url{https://support.10xgenomics.com/single-cell-gene-expression/datasets/3.0.0/malt_10k_protein_v3}.
 
We pre-processed the data using Scprep \citep{Gigante2020-sc}, following its example: we normalized the total expression of each cell to match the median total expression in the dataset, to account for variability in library size, and then square-root transformed the resulting normalized counts.

\section{Additional glass model details} \label{sec:si_glass}

\subsection{Glass model inference} \label{sec:si_glass_inference}

We place a standard normal prior on each entry of $H_j$ and a Laplace prior on each entry of $J_{j j'}$ with scale $0.1$ to encourage sparsity. To enforce that $\mu \geq 0$ (since scRNAseq counts are nonnegative) and $\tau > 0$, we place priors on a transformed version of these parameters, as follows:
\begin{align*}
\begin{split}
	\tilde{\mu} &\sim \mathcal{N}(0, 1)\\
	\mu &= \log(1 + \exp(\tilde{\mu}))\\
	\tilde{\tau} &\sim \mathcal{N}(0, 1)\\
	\tau &= \log(1 + \exp(\tilde{\tau})) + 1.
\end{split}
\end{align*}
For posterior inference, we employ a mean-field variational approximation: independent normal distributions for the entries of $H_j$, normal distributions for $\tilde{\mu}$ and $\tilde{\tau}$, and Laplace distributions for each entry of $J_{j j'}$.
We use the factored IMQ kernel for the NKSD, with $\beta = -0.5$ and $c = 1$.

To optimize the variational approximation (Equation~\ref{eqn:svc_vi}), we construct stochastic estimates of its gradient. 
At each optimization step, the expectation $\mathbb{E}_{r_\zeta}\left[\widehat{\textsc{nksd}}(p_0(x_\F)\| q(x_\F|\theta))\right]$ is estimated using a minibatch of 200 randomly selected datapoints and a single sample from the variational approximation $r_\zeta$.
The rest of the variational inference algorithm follows standard practice in stochastic variational inference, as implemented in Pyro: automatic differentiation to compute gradients, reparameterization estimators for Monte Carlo expectations over the variational distribution, and the Adam optimizer~\citep{Kingma2015-ej,Bingham2019-aa}. 

We also used stochastic optimization to perform data selection, as follows. Let $I = (I_1, \ldots, I_d)^\top$ be an indicator variable that specifies for each gene $j$ whether it is included in the foreground subspace ($I_j = 1$) or not ($I_j = 0$).
We place a distribution on $I$ such that $I_j \sim \mathrm{Bernoulli}(1/(1 + \exp(-\phi_j)))$ for $j=1,\ldots,d$ independently.
Then, to perform data selection over all possible subsets of genes, we optimize
\begin{equation}
	\mathrm{argmax}_{\phi}\, \mathbb{E}(\mathcal{K}(I) \mid \phi)
\end{equation}
where the expectation is taken with respect to $I$, where $\mathcal{K}(I)$ is the (estimated) SVC when genes with $I_j = 1$ are included in the foreground space, and $\phi = (\phi_1, \ldots, \phi_d)^\top\in \mathbb{R}^d$ is a vector of log-odds.
This stochastic approach to discrete optimization has been used extensively in reinforcement learning and related fields.
We use the Leave-One-Out REINFORCE (LOORF) estimator as described in Section 2.1 of \citet{Dimitriev2021-to} to estimate gradients of $\phi$, using 8 samples per step. 

We interleave updates to the variational approximation and to $\phi$, using the Adam optimizer with step size 0.01 for each.
We ran the procedure with 4 random initial seeds, taking the result with the largest final estimated SVC.
We halt optimization using the stopping rule proposed in \citet{Grathwohl2020-pw}, stopping when the estimated mean minus the estimated variance of the SVC begins to decrease, based on the average over 2000 steps.

Code is available at \url{https://github.com/EWeinstein/data-selection}.

\subsection{Datasets and preprocessing} \label{sec:si_glass_datasets}

In addition to the two datasets in~\ref{sec:si_ppca_datasets}, we also explored a dataset of E18 mouse neurons: \url{https://support.10xgenomics.com/single-cell-gene-expression/datasets/3.0.0/neuron_10k_v3}.

We preprocessed each dataset using Scprep \citep{Gigante2020-sc} in the same way as in Section~\ref{sec:si_ppca_datasets}. After preprocessing, we used the top 200 most highly expressed genes from among the top 500 most variable genes, according to the Scprep variability score.
We log transform the counts, that is we define $x_{i j} = \log(1 + c_{i j})$ where $c_{i j}$ is the expression count for gene $j$ in cell $i$.

\begin{figure}[t!]
    \centering
    \includegraphics[height=3in]{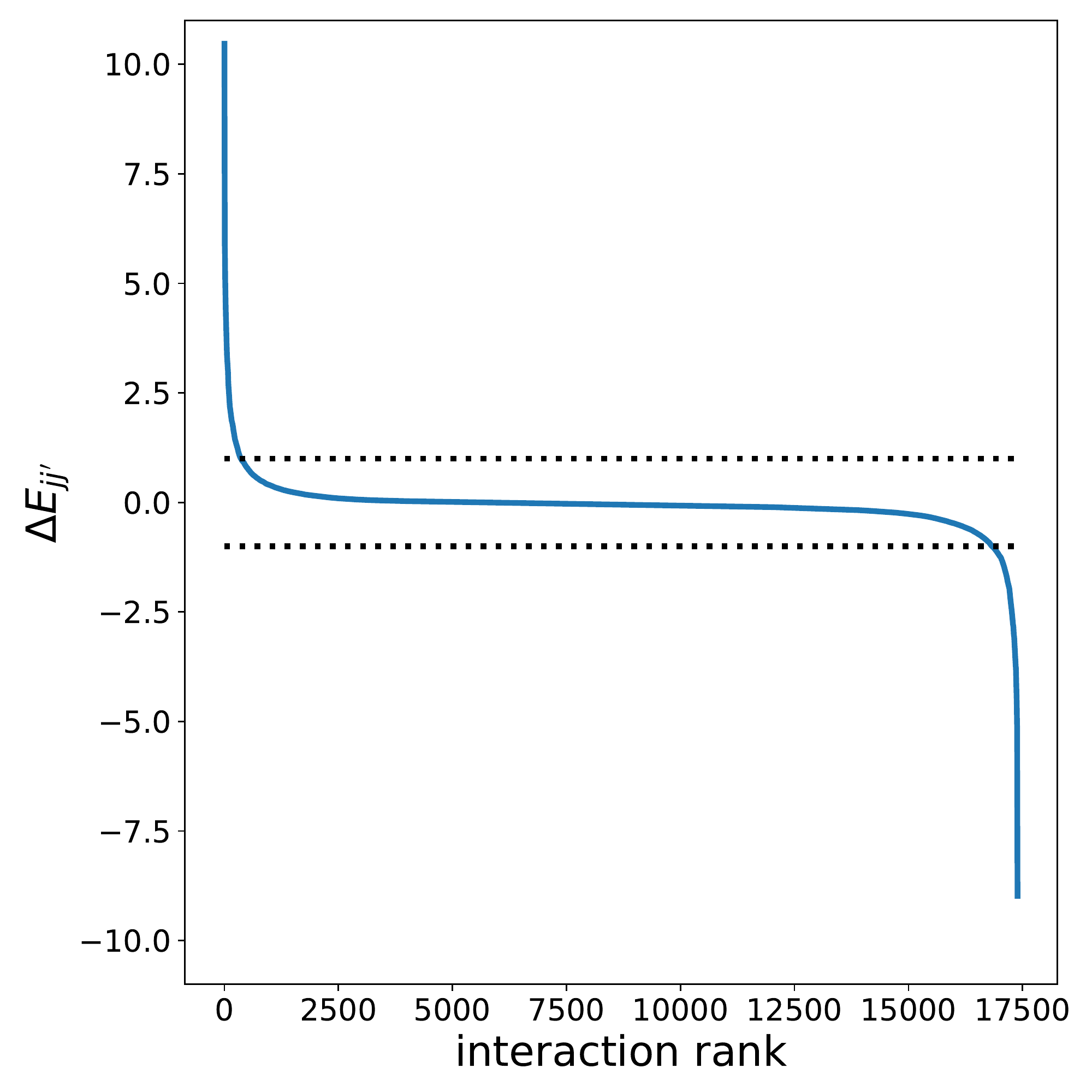}
    \caption{Posterior mean interaction energies $\Delta E_{j j'}$ for all selected genes, sorted. Dotted lines show the thresholds for strong interactions (set by visual inspection).}
    \label{fig:DeltaEjjp_ranks_selected}
\end{figure}

\begin{figure}[t!]
    \centering
    \includegraphics[height=6.2in]{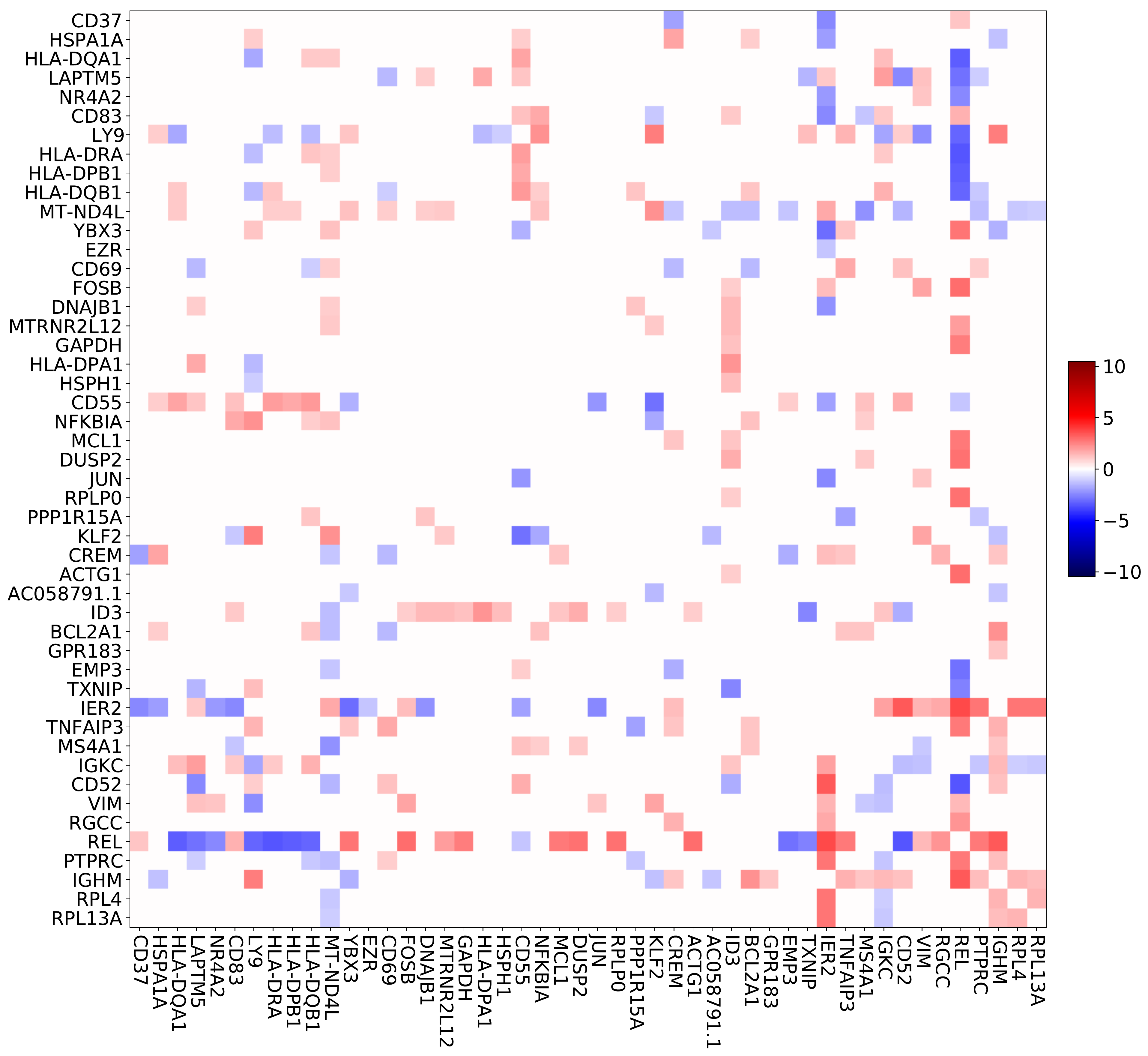}
    \caption{Posterior mean interaction energies $\Delta E_{j j'}$ for the glass model applied to all 200 genes in the MALT dataset (rather than the selected 187). Genes shown are the same as in Figure~\ref{fig:interaction_map_selected}, for visual comparison.}
    \label{fig:interaction_map_all}
\end{figure}

\begin{figure}[t!]
\centering
\begin{subfigure}[t]{0.49\textwidth}
    \centering
    \includegraphics[height=3in]{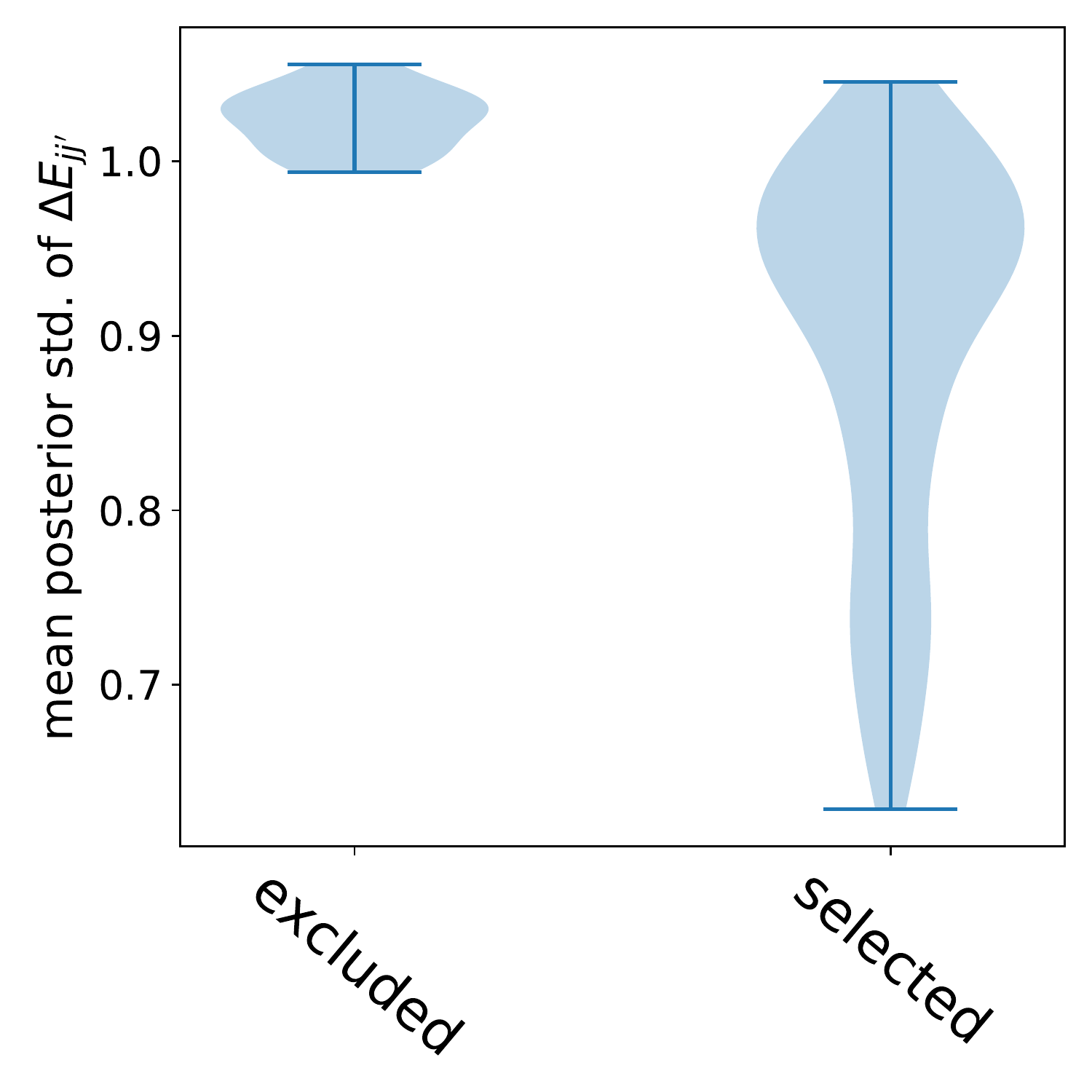}
    \caption{}
    \label{fig:interaction_uncertainty}
\end{subfigure}
\begin{subfigure}[t]{0.49\textwidth}
    \centering
    \includegraphics[height=3in]{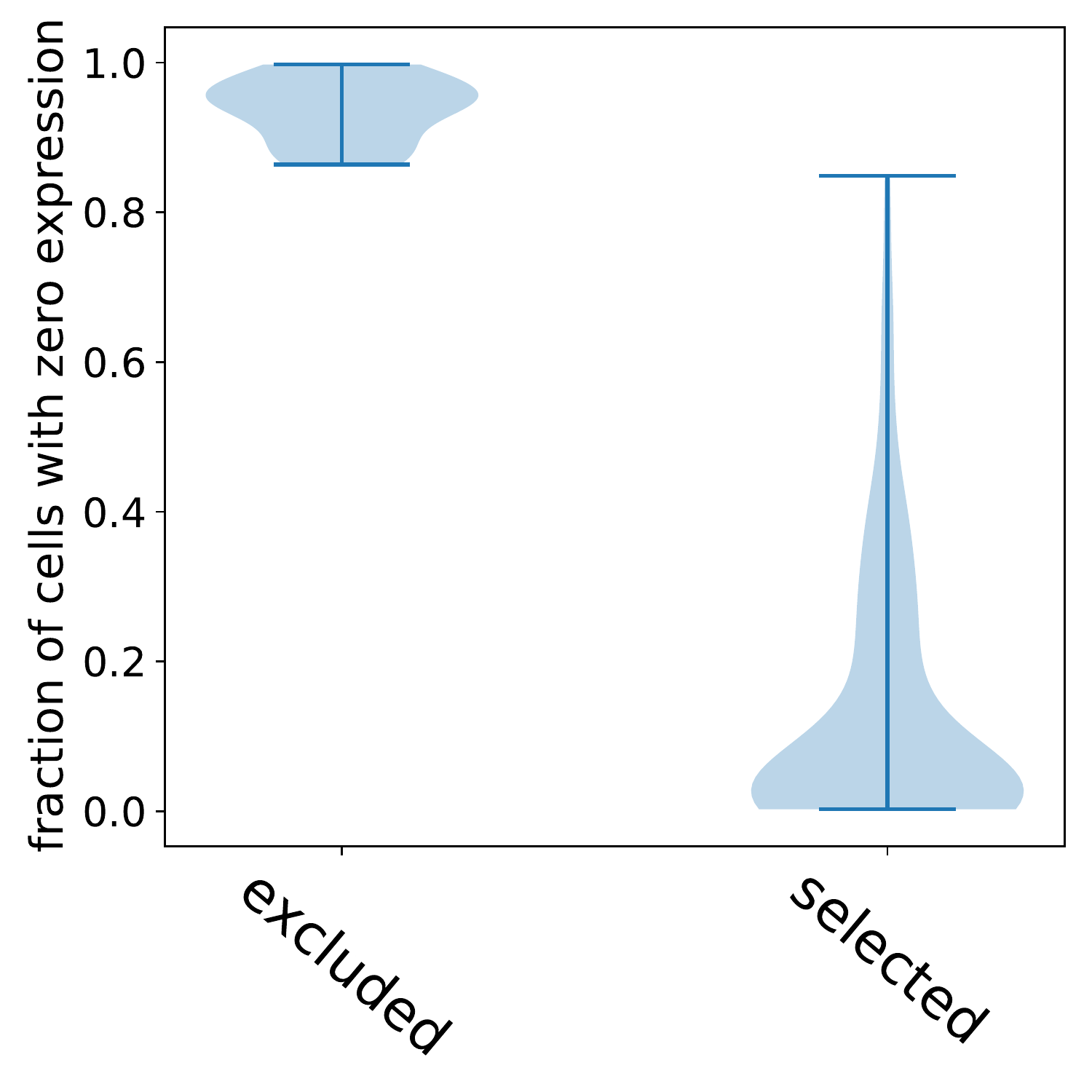}
    \caption{}
    \label{fig:fract_zero_selected}
\end{subfigure}
\caption{Comparison of the 187 selected genes and 13 excluded genes using data selection. (a) Violin plot of $\bar{\sigma}_j$ over all excluded and selected genes $j$, respectively, when applying the model to all 200 genes, where $\bar{\sigma}_j$ is the mean posterior standard deviation of the interaction energies $\Delta E_{j j'}$ for gene $j$, that is, $\bar{\sigma}_j := \frac{1}{d-1} \sum_{j' \neq j} \mathrm{std}(\Delta E_{j j'}\mid\mathrm{data})$.
(b) Violin plot of $f_j$ over all excluded and selected genes $j$, respectively, where $f_j$ is the fraction of cells with count equal to zero for gene $j$. The data selection procedure excluded all genes with more than 85\% zeros and selected all genes with fewer than 85\% zeros.}
\end{figure}

\newpage
\bibliographystyle{abbrvnat}
\small
\bibliography{references}
\normalsize

\end{document}